\def\cqedsymbol{\ifmmode$\lrcorner$\else{\unskip\nobreak\hfil
\penalty50\hskip1em\null\nobreak\hfil$\lrcorner$
\parfillskip=0pt\finalhyphendemerits=0\endgraf}\fi}
\crefname{lemma}{Lemma}{Lemmas}
\crefname{theorem}{Theorem}{Theorems}
\crefname{proposition}{Proposition}{Propositions}
\crefname{question}{Question}{Questions}
\crefname{definition}{Definition}{Definitions}
\crefname{conjecture}{Conjecture}{Conjectures}
\crefname{observation}{Observation}{Observations}
\crefname{corr}{Corollary}{Corollaries}
\crefname{remark}{Remark}{Remarks}
\crefname{scenario}{Scenario}{Scenarios}
\crefname{resultish}{Result-ish}{Results-ish}
\newenvironment{subproof}[1][\proofname]{%
  \begin{proof}[#1]%
}{%
  \end{proof}%
}
\newcommand{\cE}{\mathcal{{E}}}
\def\T{\mathcal{T}} 
\let\ge\geqslant
\let\leq\leqslant
\let\geq\geqslant
\newcommand{\say}[1]{``#1''} 
\definecolor{cobalt}{RGB}{0,71,171}
\definecolor{brightpink}{RGB}{255,0,204} 
\definecolor{bordeaux}{RGB}{100,0,50}
\definecolor{grun}{rgb}{0, 0.5, 0.5}
\definecolor{violet}{RGB}{177, 0.5, 255}
\definecolor{CornflowerBlue}{rgb}{0.39, 0.58, 0.93}
\definecolor{DarkGoldenrod}{rgb}{0.72, 0.53, 0.04}
\definecolor{BritishRacingGreen}{rgb}{0.0, 0.26, 0.15}
\definecolor{DarkMagenta}{rgb}{0.55, 0.0, 0.55}
\definecolor{AO}{rgb}{0.0, 0.5, 0.0}
\definecolor{BostonUniversityRed}{rgb}{0.8, 0.0, 0.0}
\definecolor{myRed}{rgb}{0.8, 0.0, 0.0}
\definecolor{DarkMidnightBlue}{rgb}{0.0, 0.2, 0.4}
\definecolor{DarkTangerine}{rgb}{1.0, 0.66, 0.07}
\definecolor{AppleGreen}{rgb}{0.55, 0.71, 0.0}
\definecolor{BrightUbe}{rgb}{0.82, 0.62, 0.91}
\definecolor{Amethyst}{rgb}{0.6, 0.4, 0.8}
\definecolor{DarkGray}{rgb}{0.52, 0.52, 0.51}
\definecolor{Gray}{rgb}{0.66, 0.66, 0.66}
\definecolor{BananaYellow}{rgb}{1.0, 0.88, 0.21}
\definecolor{Amber}{rgb}{1.0, 0.75, 0.0}
\definecolor{LightGray}{rgb}{0.83, 0.83, 0.83}
\definecolor{PrincetonOrange}{rgb}{1.0, 0.56, 0.0}
\definecolor{DeepCarrotOrange}{rgb}{0.91, 0.41, 0.17}
\definecolor{nblue}{rgb}{0.38, 0.51, 0.71} 
\definecolor{darkblue}{RGB}{17, 42, 60} 
\definecolor{nred}{RGB}{175, 49, 39} 
\definecolor{norange}{RGB}{217, 156, 55} 
\definecolor{ngreen}{RGB}{144, 169, 84} 
\definecolor{palegreen}{RGB}{197, 184, 104} 
\definecolor{nyellow}{RGB}{250, 199, 100} 
\definecolor{brokenwhite}{RGB}{218, 192, 166} 
\definecolor{brokengrey}{rgb}{0.77, 0.76, 0.82} 
\newcommand\dd{\hbox{-}} 
\newcommand{\spanningtree}{\mathsf{SpanningTree}}
\newcommand{\tab}{\mathsf{Table}}
\newcommand{\comp}{\mathsf{Components}}
\newcommand{\pieces}{\mathsf{Pieces}}
\newcommand{\htab}{\mathsf{HTable}}
\newcommand{\lp}{\mathsf{LongestPaths}}
\newcommand{\lcp}{\mathsf{LongestConstrainedPath}}
\DeclareMathOperator{\id}{Id}
\newcounter{regle}
\declaretheorem[name=Question, sibling=theorem]{introquestion}
\title{Local certification of forbidden subgraphs}
\author{Nicolas Bousquet}{CNRS, INSA Lyon, UCBL, LIRIS, UMR5205, F-69622 Villeurbanne, France }{}{}{}{}
\author{Linda Cook}{Discrete Mathematics Group, Institute for Basic Science (IBS), Republic of Korea}{}{}{}{}
\author{Laurent Feuilloley}{CNRS, INSA Lyon, UCBL, LIRIS, UMR5205, F-69622 Villeurbanne, France }{}{}{}{}
\author{Théo Pierron}{CNRS, INSA Lyon, UCBL, LIRIS, UMR5205, F-69622 Villeurbanne, France }{}{}{}{}
\author{Sébastien Zeitoun}{CNRS, INSA Lyon, UCBL, LIRIS, UMR5205, F-69622 Villeurbanne, France }{}{}{}{}
\authorrunning{N. Bousquet, L. Cook, L. Feuilloley, T. Pierron, S. Zeitoun}
\keywords{Local Certification, Proof Labeling Scheme, Locally Checkable Proofs, Subgraph detection, Forbidden subgraphs, Polynomial regime}
\begin{document}

\maketitle
\begin{abstract}

Detecting specific structures in a network has been a very active theme of research in distributed computing for at least a decade. In this paper, we investigate the topic of subgraph detection from the perspective of local certification.
Remember that a local certification is a distributed mechanism enabling the nodes of a network to check the correctness of the current configuration, thanks to small pieces of information called certificates. Our main question is: for a given graph $H$, what is the minimum certificate size that allows checking that the network does not contain $H$ as a (possibly induced) subgraph?

We show a variety of lower and upper bounds, uncovering an interesting interplay between the optimal certificate size, the size of the forbidden subgraph, and the locality of the verification. Along the way we introduce several new technical tools, in particular what we call the \emph{layered map}, which is not specific to forbidden subgraphs and that we expect to be useful for certifying other properties.  
\end{abstract}

 \thispagestyle{empty}

\maketitle

\newpage{}

\clearpage
\setcounter{page}{1}



\section{Introduction}

\subsection{Context}

Finding some given small structures in a graph, triangles for example, has become a major theme in the area of distributed graph algorithms. 
A lot of effort has been put recently in understanding various versions of this problem (detection, listing, counting, and testing) in several congested models of distributed computing.\footnote{We refer to the recent survey by Censor-Hillel~\cite{Censor-Hillel22} 
for an introduction to the topic and a full bibliography.}

Substructure detection has also become an important research topic in the more specific field of \emph{local certification}. 
In local certification, one is interested in how much information has to be given to each node to allow them to collectively verify some given property. 
More precisely, for a given property, a local certification is an algorithm performed independently by all the vertices of the graph. Every vertex $v$ takes as input its neighbors, each coming with a piece of information called \emph{certificate}, and outputs a binary decision, accept or reject. 
A certification scheme is correct if the following holds: there exists an assignment of certificates such that every node accepts when running the algorithm, if and only if, the graph satisfies the property. 
Actually, the notion of certification originates from self-stabilization~\cite{KormanKP10}, where one certifies the output of an algorithm, but in this paper we will focus on properties of the network itself.
We refer to~\cite{Feuilloley21} for an introduction to local certification and to Section~\ref{sec:model} for a formal definition of the model.

The classical measure of quality of a local certification is the size of the certificates, as a function of~$n$, the size of the graph. 
For example, certifying that a graph is acyclic can be done with certificates of size $O(\log n)$, and this is optimal. 
One often separates graph properties into two broad categories: the ones that have a compact certification, that is of (poly)logarithmic size, and the ones that do not, and typically polynomial size certificates.\footnote{Any property can be certified with a quadratic number of bits in this model, with certificates encoding the adjacency matrix, see \emph{e.g.}~\cite{Feuilloley21}.} 
Understanding which properties fall in each of these regimes has been a major direction in the area in recent years. While many tools have been designed towards certification schemes of (poly)logarithmic size, not so much is known for polynomial subquadratic schemes. One of the goal of this paper is to propose a very generic technique, called layered map partitions, that can be used as a blackbox in this regime.

When it comes to local certification of network substructures, the positive case, that is certifying that a given substructure exists in the network, is easy.
For example, if we wish to certify that a graph $G$ contains a graph $H$ as an (induced) subgraph, we can simply put the names of the vertices involved in the structure
in the certificates of all vertices, and use a locally-encoded spanning tree to point to a node of the structure. In this case certificates are of size $\mathcal{O}_H(\log n)$.
(Similar ideas allow us to certify that $G$ contains $H$ as a minor, induced subdivision, etc with $\mathcal{O}_H(\log n)$ bits). 
The real challenge is then to certify that some structure is \emph{absent} from the network, which is intuitively a much more global property.
In that direction, the community initiated the study of several substructures in the last years. 
For instance, one of the key conjectures of the area states that, for every graph $H$, certifying  $H$-minor-freeness can be done with certificates of size $O_H(\log n)$. This conjecture, which has been proved for many restricted graphs $H$~\cite{FeuilloleyFMRRT21, FeuilloleyF0RRT23, EsperetL22,BousquetFP21,FeuilloleyBP22,FraigniaudMRT22} and in a relaxed version~\cite{EsperetN22}, remains widely open. 

Classically in local certification, vertices can only see the certificates of their neighbors before taking their decision. However, in the last few years, many works studied larger views, where each vertex now has access to the certificates of nodes at some fixed distance. This distance is called the \emph{verification radius}. The goal is to describe the behavior of certificates sizes as a function of the verification radius. This has been investigated for constant radius (larger than 1) in ~\cite{GoosSuomela2016}, and non-constant radius in \cite{OstrovskyPR17, FeuilloleyFHPP21} (see also~\cite{FeuilloleyJKS24}), with recent new developments motivated by the so-called trade-off conjecture~\cite{FischerOS21, BousquetFZ24}.

\paragraph*{Certifying forbidden subgraphs}
In this paper we focus on subgraphs and in particular on the following question.
\begin{introquestion}
   What is the optimal certificate size for certifying that the network does not contain a fixed graph $H$ as an induced subgraph or as a non-necessarily induced subgraph, respectively?
\end{introquestion}
Remember that a graph $H$ is a (non-induced) \emph{subgraph} of $G$ if it can be obtained from $G$ by removing vertices and edges, and an induced subgraph of $G$ if it can be obtained from $G$ by removing only vertices (and the edges adjacent to these removed vertices).\footnote{In the literature, these two cases are simply called \emph{subgraph} and \emph{induced subgraph}, but to better differentiate we often add \emph{non-induced} or \emph{non-necessarily induced} to qualify the first case.}
(Induced) subgraphs, which are very localized in the graph, might look easier to manage than minors, which can span a large part of the graph (because of the contraction operation). 
And indeed, if one allows the local verification algorithm to look at large enough (constant) distance, every node can see whether the forbidden subgraph appears or not in its neighborhood. 
But otherwise, the problem seems to become more challenging than for minors. 
Intuitively, since minors can appear in many ways in a graph, forbidding one constrains the graph structure a lot, and one can use this structure for certification. 
For example, forbidding a triangle as a minor implies that the graph is a tree (which can be certified with $O(\log n)$ bits), while forbidding a triangle as an induced subgraph still leaves a complex graph class.



\paragraph*{Adapting previous work for cliques and trees}

In related models (such as \textsc{Congest} or \textsc{Broadcast Congest}), the first studied cases of subgraph detection were cliques and trees. Some results can be directly adapted for local certification.
For cliques\footnote{Remember that $K_k$ denotes the clique on $k$ vertices, and we will also use $P_k$ for paths on $k$ vertices.} of size $k$, one can easily obtain a $O(n \log n)$ upper bound by writing in the certificate of each node $v$ the identifiers of every neighbor of $v$. The upper bound can then be reduced to $O(n)$ using a renaming argument~\cite{BousquetEFZ24}.
Combining an observation from~\cite{CrescenziFP19} with the known lower bound in the \textsc{Broadcast Congest} model from \cite{DruckerKO13} yields a matching linear lower bound on the certificate size when $k>3$, and an almost matching bound for triangles (see Subsection~\ref{app:additional-related work} for more on \textsc{Broadcast Congest} and its links with local certification). To summarize, the following holds:

\begin{restatable}[\cite{BousquetEFZ24}]{theorem}{ThmKkFree}
    \label{thm:Kk-free}
    For $k>3$, the optimal certificate size for $K_k$-free graphs with verification radius $1$ is $\Theta(n)$. For $k=3$, it lies between $\Omega(n/e^{O(\sqrt{\log n})})$ and $O(n)$ bits.
\end{restatable}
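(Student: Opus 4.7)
The plan is to handle the $O(n)$ upper bound by a direct construction and to derive both lower bounds by reduction from the one-round \textsc{Broadcast Congest} model, combined with known hardness results for clique detection in that model.

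For the upper bound, I would start by certifying a spanning tree $T$ rooted at an arbitrary vertex using parent pointers and distances (an $O(\log n)$ scheme). Using $T$, the prover then performs a DFS renaming that assigns to each vertex $v$ a fresh identifier $\pi(v) \in \{1, \dots, n\}$ together with the interval $[a_v, b_v]$ of identifiers in the subtree of $v$. Each vertex checks locally that the intervals of its children partition $[a_v, b_v] \setminus \{\pi(v)\}$, which forces $\pi$ to be a bijection onto $\{1, \dots, n\}$. Finally, $v$'s certificate contains the $n$-bit characteristic vector $\chi_v$ of its neighborhood expressed in the renaming $\pi$. Verification proceeds in two stages: (i) $v$ confirms that the support of $\chi_v$ is exactly $\{\pi(u) : u \in N(v)\}$, which is readable from its neighbors' certificates; (ii) it then inspects $\chi_u$ for each $u \in N(v)$ to reconstruct the induced adjacency relation on $N(v)$, and rejects if and only if $N(v)$ contains $K_{k-1}$. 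If every vertex accepts, then no $K_k$ can occur. The certificate size is dominated by the bit vector, giving the announced $O(n)$.

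For the lower bound, I would invoke the standard simulation of Crescenzi, Fraigniaud and Paz~\cite{CrescenziFP19}: any proof-labeling scheme for a property $\Pi$ with certificates of size $s$ and verification radius $1$ can be converted into a single-round \textsc{Broadcast Congest} algorithm for $\Pi$ with message size $O(s + \log n)$, simply by having each vertex broadcast its certificate plus its identifier and then run the local verifier. Consequently, every bandwidth lower bound $\Omega(f(n))$ for detecting $K_k$ in single-round \textsc{Broadcast Congest} transfers to a certificate lower bound of $\Omega(f(n)) - O(\log n)$. Plugging in the Drucker--Kuhn--Oshman bounds~\cite{DruckerKO13} -- essentially linear in $n$ for $K_k$ with $k \geq 4$, and of order $n/e^{O(\sqrt{\log n})}$ for triangles -- yields the two claimed lower bounds: in the case $k \geq 4$ the additive $\log n$ is absorbed, while for $k = 3$ all polylogarithmic factors fit inside the $e^{O(\sqrt{\log n})}$ term.

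The main subtlety in this plan is the translation step: the DKO lower bounds are typically phrased as per-edge bandwidth bounds against multi-round protocols, whereas what we need is a per-vertex message-size bound against a single round. However, in a \textsc{Broadcast} model every vertex sends the same message on each outgoing edge, so the cut-based argument underlying the DKO bound applies directly to the per-vertex message size in one round. Once this conversion is checked, combining the upper and lower bounds yields exactly $\Theta(n)$ for $k \geq 4$ and the stated near-linear bound for $k = 3$.
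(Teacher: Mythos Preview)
Your upper bound is correct and is exactly the paper's approach: rename identifiers into $\{1,\ldots,n\}$ (the paper cites \cite{BousquetEFZ24} for this as a black box; your DFS-interval construction is a valid concrete instantiation), then hand each vertex the $n$-bit characteristic vector of its neighborhood so that it can reconstruct $G[N(v)]$ and test for a $K_{k-1}$.

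For the lower bound you are invoking the same two references as the paper, but the reduction as you phrase it does not type-check, and the subtlety you isolate is not the right one. A proof-labeling scheme cannot be simulated by a \emph{deterministic} single-round \textsc{Broadcast Congest} algorithm: in that model there is no prover, so there is no certificate for a vertex to ``broadcast''. What a size-$s$ scheme actually yields is a \emph{nondeterministic} one-round broadcast protocol (each vertex is handed an $s$-bit advice string by an oracle and broadcasts it). Consequently, the contrapositive you need is a lower bound against \emph{nondeterministic} one-round broadcast protocols, not against deterministic ones. The DKO bounds do extend to this setting, because their underlying two-party reductions are from set disjointness and the nondeterministic complexity of disjointness (for the direction ``the sets are disjoint'', which is exactly what certifying $K_k$-freeness corresponds to) is linear; but this is something that has to be checked, and it is precisely why the paper says the DKO bound was ``adapted'' in \cite{CrescenziFP19} rather than applied verbatim. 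By contrast, the per-edge versus per-vertex bandwidth issue you worry about is moot in a broadcast model, where each vertex sends a single message per round anyway.
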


Note that for cliques being induced and non-necessarily induced does not make a difference. So Theorem~\ref{thm:Kk-free} suggests that certifying $H$-freeness for dense graphs $H$ is difficult. One can wonder what happens for very sparse graphs.
For trees, the situation is actually different (at least for non-induced case) because of the following result proved in Subsection~\ref{app:additional-related work} by combining techniques from previous work~\cite{KorhonenR17, EvenFFGLMMOORT17, FeuilloleyBP22}. 

\begin{restatable}{theorem}{ThmNonInducedTrees}
    \label{thm:non-induced-trees}
    For any tree $T$, certifying that a graph does not contain $T$ as a (non-necessarily induced) subgraph can be done with $O(\log n)$ bits, when verification radius is $1$, and this is tight for large enough trees.
\end{restatable}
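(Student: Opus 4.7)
For the upper bound, the plan is to combine a spanning-tree certificate with a derandomized color-coding subtree dynamic program, in the spirit of~\cite{KorhonenR17, EvenFFGLMMOORT17, FeuilloleyBP22}. Set $k=|V(T)|$ and root $T$ at some vertex $r$. Fix an explicit perfect hash family $\mathcal{H}$ of $k$-colorings of $[n]$ of size $|\mathcal{H}|=O(\log n)$, so that any $k$-subset of $V(G)$ is rainbow-colored by some $h\in\mathcal{H}$. The certificate of a vertex $v$ consists of (a) an $O(\log n)$-bit spanning-tree certificate of $G$, (b) the color $h(v)$ for every $h\in\mathcal{H}$, and (c) for every $h$, a constant-size Boolean table $\phi_v^h(u,S)$ asserting that the rooted subtree $T_u$ admits a rainbow embedding in $G$ starting at $v$ using exactly color set $S\subseteq[k]$. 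The verifier at $v$ checks spanning-tree consistency, the DP recursion for each $h$ (a leaf $u$ of $T$ forcing $\phi_v^h(u,\{h(v)\})=1$; an internal $u$ with children $u_1,\dots,u_c$ demanding a matching of the $u_i$'s to distinct neighbors of $v$ together with a partition of $S\setminus\{h(v)\}$ into supported color-sets), and finally that $\phi_v^h(r,S)=0$ for every $h$ and every $S\subseteq[k]$ of size $k$.

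The key obstacle is enforcing \emph{injectivity} of the embedding. A bare subtree DP captures only \emph{homomorphic} images, so spurious positives arise from foldings (e.g., $P_4$ folds into $K_3$). Perfect hashing converts injectivity into "rainbow", which the DP tracks through the color-set coordinate, while guaranteeing that any true copy of $T$ in $G$ is rainbow under at least one $h\in\mathcal{H}$. Soundness then follows from the forced leaf base case propagating upward: a lying prover cannot suppress a real embedding without creating a local DP inconsistency. Completeness is immediate since a $T$-free graph admits no rainbow copy under any $h$, so an honest prover sets $\phi^h \equiv 0$ above the leaves and all checks pass. The total certificate size is the spanning-tree part plus $|\mathcal{H}|\cdot k\cdot 2^k=O(\log n)$.

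For the matching lower bound, the plan is a standard identity-swapping argument. For any tree $T$ on sufficiently many vertices (e.g., any $T$ with $\ge 4$ vertices that is not a star), construct two graphs $G_{\mathrm{yes}},G_{\mathrm{no}}$ on the same vertex set, differing only in a constant-sized interface, such that $G_{\mathrm{yes}}$ is $T$-free while $G_{\mathrm{no}}$ contains $T$. Relabeling identifiers on the "free side" of the interface produces $\Theta(n)$ instance pairs whose correct certification depends on which identifier actually appears at the interface; any certificate of size $o(\log n)$ must, by pigeonhole, repeat across two such instances, and a cut-and-paste then fools the verifier into accepting a copy of $G_{\mathrm{no}}$. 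This yields the matching $\Omega(\log n)$ bound, hence tightness.
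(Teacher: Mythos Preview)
Your upper bound is correct and morally the same as the paper's. The paper simply invokes Observation~\ref{obs:relationship-to-broadcast-congest} as a black box: the $O(k2^k)$-round \textsc{Broadcast Congest} tree-detection algorithm of~\cite{KorhonenR17, EvenFFGLMMOORT17} yields $O(\log n)$-bit certificates automatically. You instead unpack that algorithm (which is itself a derandomized color-coding DP) directly into a certification scheme. That is fine, and arguably more self-contained. One small slip: for completeness you cannot have the honest prover ``set $\phi^h\equiv 0$ above the leaves''---partial rainbow embeddings of proper subtrees $T_u$ generally do exist even in $T$-free graphs, and your own recursion check would then reject. The honest prover must write the \emph{true} DP values; then the root entries vanish precisely because $G$ is $T$-free. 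This does not affect the overall argument.

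Your lower bound, however, is not a proof. You assert the existence of $G_{\mathrm{yes}},G_{\mathrm{no}}$ differing in a constant interface, and then say that relabeling identifiers on the ``free side'' and cut-and-pasting fools the verifier into accepting a copy of $G_{\mathrm{no}}$. But you never construct $G_{\mathrm{yes}}$ or $G_{\mathrm{no}}$, and the mechanism is unclear: relabeling a single $T$-free graph $G_{\mathrm{yes}}$ in many ways still gives isomorphic $T$-free graphs, and it is not explained how pasting two of them together yields $G_{\mathrm{no}}$. The generic acyclicity-style swap does not transfer to $T$-subgraph-freeness without a specific gadget. The paper takes a completely different, concrete route: it adapts the treedepth lower bound of~\cite{FeuilloleyBP22}, using a family where type~(1) instances are a universal vertex plus a collection of short cycles (hence $P_{20}$-free) and type~(2) instances have one long cycle (hence contain $P_{20}$), and the $\Omega(\log n)$ bound for distinguishing these is inherited from that prior work. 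You would need either this construction or an equally explicit one; the sketch you gave does not supply it.
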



One can then naturally wonder if the same holds for \emph{induced} trees. In particular:

\begin{introquestion}
    What is the optimal certificate size for induced trees? 
\end{introquestion}

For this question, the only non-trivial case known in the literature is for $P_4$-free graphs, which can be certified with $O(\log n)$ bits~\cite{FraigniaudM0RT23}. This proof is not helpful to understand the generic case, since it heavily uses the fact that $P_4$-free graphs, also known as co-graphs, have a very specific structure.
One of the main goals of this paper is to understand better the certification of $P_k$-free graphs, as a stepping stone to the more general case. 

\paragraph*{Trade-off between verification radius and certificate size}

In the case of cliques $K_k$, if we were allowed to have a verification radius $2$, we would not need any certificate since every vertex belonging to a $K_k$ subgraph (if one exists) would be able to see the whole clique. To get a similar certification scheme for paths, vertices require a verification radius larger than half the length of the forbidden path. A natural question is what happens when the verification radius is at least $1$, but not large enough to see the whole subgraph? Or in other words:

\begin{introquestion}
    What is the optimal size of a local certification for forbidden induced or non-induced subgraphs, as a function of the verification radius and of the subgraph size?
\end{introquestion}


In terms of techniques, increasing the verification radius makes proving lower bounds more difficult. Indeed, a vertex sees at most $n$ edges at radius 1, but may see a quadratic number of edges at radius 2 or more, which stops easy counting arguments from being useful. 
On the other hand, increasing the radius, even just from $1$ to $2$, allows for more sophisticated upper bound techniques, as we will demonstrate with our novel technique of \say{layered maps}. 
Note that, unlike the papers previously mentioned, our paper concerns verifications radii that are a fraction of the size of the subgraph $H$ not increase with the size of the entire graph. So, for fixed $H$ we consider upper and lower bounds for certifying $H$-freeness where the verification radius is a \emph{constant} depending only on $H$. 


The main topic of this paper is resolving the two questions above and some of their variants. In the next section we will detail our results and sketch some proof techniques. For interested readers, additional background is given in Subsection~\ref{app:additional-related work}.

\subsection{Our results and techniques}
\label{subsec:results-techniques}


\paragraph*{Paths as benchmark}

Before discussing our results and techniques in full generality, let us focus on our upper and lower bounds for paths, which appear to be amongst the hardest cases.\footnote{Intuitively because paths are connected graphs of maximum diameter given a fixed size.} See Table~\ref{table:paths-results}. 

\begin{table}[!h]
    \centering
    \begin{tabular}{c|c|c}
        Forbidden induced subgraph & Certificate size & Reference \\
        \hline{}
        
        $P_{2k-1}$ & 0 & Direct\\
        $P_{2k+1}$ & $O(n)$ & Observation~\ref{obs:distance-minus-1}\\
        $P_{3k-1}$ & $O(n\log^3n)$ & Theorem~\ref{thm:3k-quasilinear}    \\
        $P_{\left \lceil \frac{14}{3} k \right \rceil - 1}$ & $O(n^{3/2}\log^2n)$ & Theorem~\ref{thm:P_143k_free_subquadratic}\\
        $P_{4k+3}$ & $\Omega(\frac{n}{k})$ & Theorem~\ref{thm:LB-paths}
        \end{tabular}
    \caption{Bounds for induced paths verification when vertices have a verification radius $k\geq 2$.}
    \label{table:paths-results}
    \vspace{-0.5cm}
\end{table}


A first observation on Table~\ref{table:paths-results} is that, when paths are long enough, we get a polynomial lower bound. In particular, compact certifications are unlikely to exist for forbidden \emph{induced} subgraphs (contrarily to subgraphs), except for specific cases (\emph{e.g.} $P_4$-free graphs where the structure of the graph class is known) or for cases where we forbid several subgraphs (see the discussion in Appendix~\ref{app:additional-related work}). 

A second observation is that, for our upper bounds, the size increases with the ratio between the path length and the radius. We believe that it is not an artifact of our proof techniques but that certifying $P_\ell$-freeness indeed becomes more and more difficult when this ratio increases. Given these bounds and inspired by the complexities observed for subgraph detection in congested models (see \cite{Censor-Hillel22}), we make the following conjecture.

\begin{conjecture}
    For all $\alpha > 0$, the optimal size for the local certification of $P_{\alpha k}$-free graphs with verification radius $k$ is of the form $\Theta(n^{2-1/f(\alpha)})$, for some unbounded increasing function~$f$. 
\end{conjecture}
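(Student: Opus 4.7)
The plan is to attack this conjecture from both sides in parallel: constructing parametrised upper bound schemes that interpolate between the specific entries of Table~\ref{table:paths-results}, and designing matching lower bound instances whose adversarial power scales with~$\alpha$. Since only two genuinely non-trivial points on the curve are known so far ($n$ at $\alpha = 2$ and $n^{3/2}$ at $\alpha \approx 14/3$), the first step is to identify the conjectural function $f$ by carrying out the layered-map analysis with a tuneable number of layers and reading off the resulting exponent as a function of $\alpha$.

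For the upper bound, I would generalise the layered map partition technique behind Theorem~\ref{thm:3k-quasilinear} and Theorem~\ref{thm:P_143k_free_subquadratic}. The guiding principle is that any forbidden $P_{\alpha k}$ spans roughly $\alpha$ consecutive radius-$k$ balls, so the certification must glue the local views of these balls consistently. With a single layer we afford $\widetilde{O}(n)$ bits per node (enough for $\alpha \in \{2,3\}$); adding one more layer should buy an extra factor $n^{1/2}$ in the adversary's degrees of freedom and raise the affordable $\alpha$ linearly. Iterating this recursive decomposition, with each level handling pieces of size $n^{1-1/\ell}$ certified by the previous level, I would expect an upper bound of the form $\widetilde{O}(n^{2-1/\ell})$ for $\alpha$ roughly linear in $\ell$. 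This would already establish the upper bound direction of the conjecture with $f(\alpha) = \Theta(\alpha)$.

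For the lower bound, the strategy is to amplify the construction of Theorem~\ref{thm:LB-paths}, currently tight only at $\alpha = 4$ and giving $\Omega(n/k)$. A natural approach is a concatenation-style reduction from a communication complexity problem on $\alpha$ players arranged along the potential path: the graph is built from $\alpha$ gadgets, each living in one radius-$k$ ball, such that detecting a $P_{\alpha k}$ is equivalent to solving an $\alpha$-party set-disjointness-like task. Information-theoretically, each intermediate gadget contributes a factor $n^{1/f(\alpha)}$ worth of ambiguity invisible to any single radius-$k$ view, and multiplying across the $\alpha$ gadgets yields the target lower bound.

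The hard part will be matching the two sides tightly, and in particular proving the lower bound for every $\alpha$ rather than for a discrete sequence of values. Our current fooling-set arguments exploit ambiguity at constant depth, whereas the conjecture demands an adversary that accumulates locally-invisible information across a growing number of balls while still keeping the two instances indistinguishable at verification radius $k$. Controlling the interaction between consecutive gadgets (avoiding accidental short induced paths that would break the reduction) is where I expect the combinatorial obstacle to concentrate, and it is also what will ultimately pin down the precise shape of the function $f$.
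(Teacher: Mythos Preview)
This statement is presented in the paper as an open \emph{conjecture}, not a theorem: the paper offers no proof and explicitly frames it as motivated by the pattern of bounds in Table~\ref{table:paths-results} and by analogy with congested-model complexities. There is therefore no paper proof to compare against, and, appropriately, what you have written is a research programme rather than an argument.

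As a programme, your upper-bound side is broadly aligned with the paper's machinery: the layered-map technique of Section~\ref{app:eccs} already interpolates between a two-layer scheme ($\varepsilon=\tfrac12$, giving $\widetilde O(n^{3/2})$ for $\alpha$ up to $14/3$) and a $\log n$-layer scheme ($\varepsilon=1/\log n$, giving $\widetilde O(n)$ for $\alpha\le 3$), and your proposed hierarchy with $\ell$ levels yielding $\widetilde O(n^{2-1/\ell})$ for $\alpha=\Theta(\ell)$ is a natural extrapolation. Note, however, that in the paper's current analysis the number of layers and the reachable $\alpha$ move in \emph{opposite} directions (more layers buys smaller certificates but only for \emph{shorter} paths), so realising your hierarchy would require a genuinely new gluing argument across ECCs, not merely a reparametrisation of Theorems~\ref{thm:3k-quasilinear} and~\ref{thm:P_143k_free_subquadratic}.

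The lower-bound side is where the substantive gap lies, both in the paper and in your plan. The paper's only lower bound (Theorem~\ref{thm:LB-paths}) is $\Omega(n/k)$, with exponent~$1$ regardless of $\alpha$; nothing in the paper produces a super-linear lower bound, which is precisely what the conjecture asserts for large $\alpha$. Your $\alpha$-party communication reduction is an attractive direction, but you have not specified the target communication problem, the gadget structure, or how you control \emph{induced} (as opposed to subgraph) paths across $\alpha$ consecutive balls --- and Section~\ref{sec:lower-bounds} shows that even the two-party case required a delicate clique/matching/antimatching construction to rule out stray induced paths. Until those pieces are made concrete, the proposal remains a plan, and the conjecture remains open exactly as the paper leaves it.
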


If true, this conjecture would give us the first known \say{natural} set of properties that have optimal size certificates super-linear but sub-quadratic. 

\paragraph*{Lower bound} 
Let us now discuss our theorems one by one, starting with our lower bound.

\begin{restatable}{theorem}{ThmLBPaths}
    \label{thm:LB-paths}
    For every graph $T$ which is either $P_{4k+3}$, or a tree of diameter at least $4k+2$ without degree $2$ vertices, at least $\Omega\left(\frac{n}{k}\right)$ bits are needed to certify $T$-induced free graphs when the verification radius is $k$.
\end{restatable}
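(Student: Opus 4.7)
The plan is to apply the standard indistinguishability (cut-and-paste) technique for local-certification lower bounds. I would design a family $\mathcal{F} = \{G_x : x \in \{0,1\}^m\}$ of $T$-induced-free graphs on $n$ vertices with $m = \Theta(n/k)$, together with a mixing operation: for any distinct $x, y \in \{0,1\}^m$, one produces a graph $G^{\star}_{x,y}$ containing $T$ as an induced subgraph, such that the radius-$k$ view of every vertex of $G^{\star}_{x,y}$ coincides with its view in either $G_x$ or $G_y$. A pigeonhole argument then forces the certificate of some ``boundary'' vertex to distinguish $2^m$ possibilities, hence to carry $\Omega(m) = \Omega(n/k)$ bits: otherwise, two graphs $G_x, G_y$ would receive certificates that agree on the mixing cut, and splicing them along the cut would yield a valid-looking certification of $G^{\star}_{x,y}$, contradicting soundness.

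For $T = P_{4k+3}$, the construction relies on the numerical identity $2(2k+1) + 1 = 4k+3$: gluing two ``half-paths'' of length $2k+1$ through a central vertex produces exactly an induced $P_{4k+3}$. I would use such a backbone in every $G_x$ and plant $m$ independent ``bit gadgets'' at pairwise distance $> 2k$, each contributing a localised chord-style obstruction that kills the induced $P_{4k+3}$. The two configurations of a bit gadget (encoding $x_i = 0$ or $1$) differ only in a region of diameter $\leq 2k$, so that switching a single gadget changes only the views of vertices inside it. Designing the obstructions so that they are \emph{collectively} necessary (so that removing any one of them reopens an induced $P_{4k+3}$) guarantees that for any distinct $x, y$ the mixed graph $G^{\star}_{x,y}$ is not $P_{4k+3}$-free, while every radius-$k$ view remains consistent with $G_x$ or $G_y$.

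For a general tree $T$ of diameter at least $4k+2$ without degree-$2$ vertices, the diameter hypothesis provides an induced $P_{4k+3}$-subpath of $T$, and the no-degree-$2$ condition ensures that each non-leaf of $T$ has degree $\geq 3$, so the branches attached outside this spine are short. I would therefore decorate the $P_{4k+3}$ construction above by grafting, at each vertex of the embedded $P_{4k+3}$, the exact small subtrees prescribed by $T$, done once and for all, so that the same mixing operation now uncovers an induced copy of $T$ itself rather than merely of~$P_{4k+3}$.

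The main technical obstacle is step~(a) above: certifying that each $G_x$ is genuinely $T$-induced-free. One must rule out not only the obvious induced $P_{4k+3}$ lying along the backbone, but also any induced copy of $T$ that tries to circumvent the chord-style obstructions by shortcutting through multiple bit gadgets, or---in the tree case---by assembling $T$ from branches grafted onto different gadgets. This requires a careful case analysis exploiting both the $> 2k$ spacing between consecutive gadgets and the specific placement of each obstruction. Property~(b), in contrast, follows almost by construction, since each gadget is surrounded by an identical radius-$k$ collar in $G_x$, $G_y$, and $G^{\star}_{x,y}$.
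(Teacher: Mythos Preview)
Your counting does not yield the claimed bound. With $m = \Theta(n/k)$ independent bit gadgets spaced more than $2k$ apart, you produce only $2^{\Theta(n/k)}$ yes-instances on $n$ vertices; the fooling-set inequality $2^{n/k} \le 2^{s n}$ then gives merely $s = \Omega(1/k)$. Your assertion that ``the certificate of some boundary vertex'' must distinguish all $2^m$ possibilities is unjustified: because the gadgets are far apart, each one has its own local boundary, and bit $i$ need only be encoded in the certificates near gadget $i$; no single vertex or constant-size cut is forced to carry all $m$ bits simultaneously. To reach $\Omega(n/k)$ per vertex one needs on the order of $2^{\Theta(n^2/k)}$ pairwise-distinguishable yes-instances, i.e., a \emph{quadratic} amount of information packed into the construction, not a linear-in-$m$ amount.

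This is exactly what the paper does, and its construction looks nothing like a path with local chord gadgets. It builds $4k$ cliques of a common size $p$ (so the graph has $\Theta(kp)$ vertices), arranged in two rows of $2k$ columns; cliques in consecutive columns are joined by \emph{antimatchings} (complete bipartite minus a perfect matching), and the two cliques in each interior column by a perfect matching. At the leftmost column the two cliques are joined by an arbitrary symmetric bipartite graph $G_A$ encoding a set $A \subseteq \binom{[p]}{2}$, and symmetrically $G_B$ at the rightmost column; the whole graph is $T$-free iff $A \cap B = \emptyset$. This encodes $\Theta(p^2)$ bits on each side, which is what drives the bound. The antimatching structure is the key device that tames the induced-path analysis: the paper proves that any induced tree can meet the union of the $4k$ cliques in at most $4k$ vertices total, which forces a long induced path to thread all the way across and hit $G_A$ and $G_B$ in a way that reveals a common element of $A$ and $B$. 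Your ``chord-style obstruction'' idea does not survive the induced-subgraph setting---a chord kills one induced path but typically creates or fails to block others---and the paper explicitly flags this difficulty as the reason the standard path-subdivision reduction fails here.
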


While we designed our lower bound arguments to handle paths, we believe they can be adapted to all trees of diameter at least $4k+2$ with much more case analysis. To avoid excessive technicalities, we extend it only to trees without degree $2$ vertices and leave the general case as an open problem. 
Our technique to establish this lower bound is inspired by the now classic reduction from non-deterministic communication complexity~\cite{NisanKushilevitz}.
However, there are several challenges to overcome to adapt this technique to induced structures which will be discussed in Section~\ref{sec:lower-bounds}.

Note that when restricting to verification radius 1, this theorem implies that certifying $P_7$-free graphs requires $\Omega( n)$ bits. For contrast, $P_4$-free graphs can be certified in $\mathcal{O}(\log n)$ bits \cite{p4-montealegre2021compact}.
We leave bounds for the remaining path lengths open.

\paragraph*{Overview of our upper bound framework and a first application}

We prove a series of upper bound results, all using the same tool as a black-box, that we call a \emph{layered map}, which can be seen as a refinement of the so-called \emph{universal scheme} (that we remind below).
Our layered maps are not specific to one type of task, and we believe they can be used in various contexts beyond subgraph finding. 
We complete this general tool with involved tools specific to subgraph detection.

In the universal scheme, the certificate of every node contains the full map of the graph, and the vertices check that they have been given the same map as their neighbors, and that the map locally coincides with their neighborhoods. This takes $O(n\log n + \min(m \log n, n^2))$ bits (where $m$ is the number of edges), using adjacency matrix or adjacency lists. 
Now, to make use of the larger verification radius, we use an idea from~\cite{FeuilloleyFHPP21}, which is to spread the map.
More precisely: on correct instances, we cut the map into pieces and distribute them to the nodes of the graph in such a way that all the vertices can see all the pieces in their neighborhood. 
(One can prove that such a distribution exists via the probabilistic method.) Every vertex then simply reconstructs the map, and resumes the universal scheme. A larger verification radius allows for smaller pieces, which means smaller certificates.

This technique works well if the neighborhood at distance $k$ is large enough to drastically decrease the certificate size. This is the case for example when the minimum degree of the graph is polynomial, but in general this assumption is not met.
Our layered map is a relaxed version of this spread universal scheme, where the nodes are given different certificates depending on their degree, and are able to check ``partial maps'' of the graph. In other words, each node gets trustable information about some part of the graph. Note that partitioning graphs into low-degree / high-degree vertices has already been used for distributed subgraph detection, see~\cite{FischerGKO18, Censor-HillelFG20, ChangPZ19}, but the arguments are very different from ours. See discussion in Subsection~\ref{app:additional-related work}.

The generic framework we describe below is independent of the graph class and the considered problem. We believe it is the first time universal schemes are generalized that way, and that its interest goes way beyond subgraph detection since it can be easily used as a subroutine in future certification algorithms in the polynomial regime.

In this introduction, we describe the technique in the simplest setting with two groups of vertices, separating high and low degrees vertices with a threshold of $\sqrt{n}$, and we show how this can be used to prove the following theorem (which does not appear in Table~\ref{table:paths-results} since it is superseded by better results). 

\begin{restatable}{theorem}{thmPFourkFreeSubquadratic}
    We can certify $P_{4k-1}$-free graphs with verification radius~$k$ and certificates of size $O(n^{3/2} \log^2 n)$.
\label{thm:p_4k_free_subquadratic}
\end{restatable}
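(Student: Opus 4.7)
The plan is to apply the two-layer layered-map instantiation described just above the theorem, with degree threshold $\tau=\sqrt n$ (up to a $\log n$ factor absorbed later). Call a vertex \emph{low} if its $G$-degree is less than $\tau$, and \emph{high} otherwise. The key accounting is that at most $n\tau$ edges are incident to low vertices, so the subgraph induced on low vertices (together with its cross-edges to high vertices) can be written out explicitly in $O(n^{3/2}\log n)$ bits; meanwhile each high vertex has at least $\tau$ neighbors, which is enough to gather $\tau$ distinct pieces of a globally partitioned adjacency matrix $A$ at radius $1$. The certificate of each vertex $v$ therefore has three parts: (1) standard spanning-tree data, $O(\log n)$ bits; (2) an explicit list $\mathcal{A}_L$ of all low vertices and their neighborhoods, of size $O(n^{3/2}\log n)$; (3) a piece $P_v$ of $A$, tagged with an index $i_v\in\{1,\dots,\tau\}$, of size $O(n^{3/2}\log n)$. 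The piece assignment is chosen via the probabilistic method as in the base universal spread scheme, so that every high vertex has at least one radius-$1$ neighbor with each of the $\tau$ indices (this is where the extra $\log n$ factor in the final $O(n^{3/2}\log^2 n)$ bound comes from, since the union bound requires boosting $\tau$ by $\Theta(\log n)$).

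Verification at radius $k$ combines the generic layered-map checks (spanning tree, pairwise equality of same-indexed pieces visible in the $k$-ball, agreement of the reconstructed partial map with the local neighborhood) with two problem-specific ones. Every vertex checks that its copy of $\mathcal{A}_L$ matches those of its neighbors; every low vertex verifies that its own entry in $\mathcal{A}_L$ correctly lists its true neighbors and that its degree is indeed $<\tau$; every high vertex checks that it does not appear in $\mathcal{A}_L$. Then every low vertex rejects if $\mathcal{A}_L$ contains an induced $P_{4k-1}$ whose vertices are all low, and every high vertex $h$ reconstructs $\hat A$ from the $\tau$ pieces it sees among its neighbors, verifies that $\hat A$ restricted to low vertices equals $\mathcal{A}_L$ and that row $h$ of $\hat A$ matches $h$'s true neighborhood, and rejects if $\hat A$ contains any induced $P_{4k-1}$.

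Completeness is immediate with truthful certificates. For soundness, suppose $G$ contains an induced $P_{4k-1}$. If no high vertex exists, the verification forces $\mathcal{A}_L$ to encode all of $G$, and some low vertex detects the path. Otherwise, the reconstructed matrix $\hat A$ at any fixed high vertex $h$ is forced to equal $A$ by combining the cross-check $\hat A$ restricted to low vertices equals $\mathcal{A}_L$ with each high vertex verifying its own row of $\hat A$, so $h$ detects the path. The main obstacle, and exactly what the black-box layered-map framework is designed to handle, is global piece consistency: two distinct high vertices might a priori reconstruct two distinct matrices if the prover distributes inconsistent pieces, and both reconstructions could be $P_{4k-1}$-free even though $G$ is not. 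Pairwise equality checks among same-indexed $k$-neighbors, propagated across the graph via the spanning tree and combined with the row-by-row local verifications above, suffice to force global consistency.
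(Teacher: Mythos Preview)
Your proposal has a genuine gap at the soundness step. You claim that ``pairwise equality checks among same-indexed $k$-neighbors, propagated across the graph via the spanning tree and combined with the row-by-row local verifications above, suffice to force global consistency,'' i.e., that every high vertex $h$ reconstructs $\hat A = A$. This is false, and the paper explicitly explains why in the paragraph beginning ``The difficulty is that the maps reconstructed by two high-degree vertices might not be the same\ldots''. Concretely: if two high vertices $h,h'$ are separated by a chain of low vertices of length more than $2k$, then no vertex sees pieces from both sides, the spanning tree carries only $O(\log n)$ bits per vertex (not the pieces themselves), and low vertices cannot reconstruct $\hat A$ at all, so nothing forces $\hat A_h = \hat A_{h'}$. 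A cheating prover can hand the two clusters pieces of two \emph{different} $P_{4k-1}$-free matrices, each agreeing with $\mathcal{A}_L$ on low edges and each agreeing with the local high rows; neither $h$ nor $h'$ rejects, yet the true $G$ may well contain an induced $P_{4k-1}$ that straddles the two clusters.

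The paper does \emph{not} try to force global consistency; it accepts that high vertices in different extended connected components (ECCs) may reconstruct different maps, and it adds extra certificate content to cope with this. Specifically, for every low vertex $v$ close to some ECC $C_v$, the prover writes down $\lp[v]$, the length of the longest induced path from $v$ that goes strictly toward $C_v$. This field is globally shared and can be verified by any high vertex in $C_v$ using only its (locally correct) witnessed graph. Soundness then proceeds by a case analysis on how many ECCs the offending $P_{4k-1}$ touches (at most three, since ECCs are pairwise at distance $\ge 2k-1$): if zero or one, the path lies in some $V_{\preccurlyeq u}$ and is seen directly; if two or three, a counting argument produces a layer $C_v(d)$ that the path crosses only once, and the path is detected by gluing a visible prefix $P_{\text{start}}$ with the certified $\lp[v]$. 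Your scheme is missing exactly this $\lp$ ingredient (and the ECC case analysis it supports), which is the substantive content of the proof.
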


Intuitively, 
the high-degree vertices have enough neighbors to spread the whole map in their neighborhood (if we cut the map into parts of size $O(n^{3/2})$, we may give a few well-chosen parts to each node ensuring that each part appears on one of $\sqrt{n}$ neighbors of each high-degree node) while the subgraph restricted to the low degree vertices is sparse enough to be given to all vertices (without spreading). 

The difficulty is that the maps reconstructed by two high-degree vertices might not be the same, and if these vertices are separated by enough low-degree vertices, no vertex will be able to detect the inconsistency.  
This cannot be worked around even with certificates of size $o(n^2)$. Imagine for instance two very dense graphs separated by a long path: to be able to agree on the map of the graphs, one would intuitively need to move $\Theta(n^2)$ bits of information through the path, and this means $\Theta(n^2)$-bit certificates.

Nevertheless, since nodes can check the certificates at radius $k$, if two high-degree vertices are at distance at most $2k-2$, and the certification is accepted, then their maps must be identical (the vertex in the middle of any shortest path joining them being able to check the consistency). Therefore, we can define so-called \emph{extended connected components}, or \emph{ECC} for short, which partition the sets of high-degree vertices into groups that must share the same map (see Section~\ref{sub:ecci-dfns-app} for a more formal definition of ECC as well as a proof of its certification).  
Our first technical work is to propose a clean and generic framework to certify the ECCs, independently of the graph class and the considered problem. More precisely, we prove that we can certify exactly the list of the ECCs, and how they partition the vertex set, with linear-size certificates.\footnote{By certifying, we mean here that all the vertices, even low degree vertices, have access to this list.} Since high degree vertices can recover which edges lie within their own ECC, we may summarize the gathered information as follows: together with the set of ECCs, (i) low degree vertices have access to (and can certify) the set of edges incident to all the low-degree vertices and, (ii) high degree vertices have access to (and can certify) the same information plus the edges within their own ECC.

\medskip

We then build on this generic framework to certify the absence of induced paths. Let us sketch how to prove Theorem~\ref{thm:p_4k_free_subquadratic}, the most basic use of the framework. Suppose that $G$ contains a path $P$ on $4k-1$ vertices. Because different ECCs are at distance at least $2k-1$, such a path can touch at most three different ECCs.
In the case of zero or one ECCs, the inconsistency issue raised earlier does not appear, and either the maps given are incorrect (and detected as incorrect in the ECC building part above), or they are correct and some vertices of $P$ can see in their map that they belong to a long path. The case of three ECCs is very specific (each ECC contains exactly one vertex among the endpoints and the middle vertex of the path) and can be handled easily.
The case of two ECCs is more interesting.
Essentially, we prove that we can certify, for every low-degree vertex, the length of the longest path starting from it and going \emph{stictly} towards the closest ECC (and we provide this information to all the nodes, which takes $O(n \log n)$ bits in total). Along with other arguments, this ensures that at least one vertex will detect a long path touching two ECCs.
The full proof is deferred to Section~\ref{app:advanced-upper-bounds}.


\paragraph*{Advanced upper bounds}

Using layered maps and encoding additional information at the nodes for path detection as we did for Theorem~\ref{thm:p_4k_free_subquadratic} is a general canvas that we improve in several ways to get our three upper bound results. Namely, we generalize it to all graphs $H$ in Theorem~\ref{thm:H_4k_free_subquadratic} (with the same certificate size and radius), we reduce the certificate size to quasi-linear at the expense of shortening the path (or equivalently, increasing the radius) in Theorem~\ref{thm:3k-quasilinear} and finally, we certify longer paths, with same certificate size in Theorem~\ref{thm:P_143k_free_subquadratic}. We only very briefly describe the adaptations needed to prove them (with some imprecisions). The complete proofs of these results can be found in Section~\ref{app:advanced-upper-bounds}.

\begin{restatable}{theorem}{thmHFourkFreeSubquadratic}
\label{thm:H_4k_free_subquadratic}
    For every $k \ge 2$, we can certify $H$-free graphs with verification radius $k$ and certificates of size $O(n^{3/2} \log^2 n)$ for every $H$ which has at most $4k-1$ vertices.
\end{restatable}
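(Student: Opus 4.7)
My approach adapts the blueprint of Theorem~\ref{thm:p_4k_free_subquadratic} by replacing the path-specific arguments with a case analysis driven by a structural lemma about $H$. I would first invoke the layered map framework with high/low-degree threshold $\sqrt{n}$ together with the ECC certification, so that every low-degree vertex has a certified view of the subgraph induced by the low-degree set $L$, every high-degree vertex of an ECC $E$ has a certified view of the subgraph induced by $E \cup L$, and all vertices know the partition of high-degree vertices into ECCs, which are pairwise separated by distance at least $2k-1$ in $G$.

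The central structural observation, valid for $k \geq 2$, is that any connected subgraph of $G$ with at most $4k-1$ vertices meets at most three distinct ECCs. Indeed, four vertices lying in four different ECCs would be connected by a Steiner tree inside such a subgraph, hence on at most $4k-1$ vertices, with all pairwise distances at least $2k-1$. A short case analysis over the two possible Steiner tree topologies (a single median with four pendant arms, or two adjacent medians with two pendant arms each) forces the total vertex count to at least $4k$, a contradiction.

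Given this bound, any induced copy of $H$ in $G$ decomposes into connected components, each carrying an ECC-signature of size at most three. Components entirely contained in $L$, or in $E \cup L$ for a single ECC $E$, are detected immediately from the layered map information. For components touching two or three ECCs, I would extend the $\lp$ certificate from the path case to arbitrary $H$: for each low-degree vertex $v$ and each \emph{rooted fragment} of $H$ (a connected subgraph of $H$ together with a distinguished boundary vertex and a target ECC direction), certify whether $v$ realizes that fragment in $G$. Since the set of fragments depends only on $H$, the per-vertex overhead is $O_H(\log n)$ bits, and spreading this information through the layered map keeps the total certificate size within $O(n^{3/2} \log^2 n)$.

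The main obstacle is engineering the fragment language so that a single verifier vertex, sitting in one of the touched ECCs, can unambiguously glue together fragments coming from different ECCs across the low-degree chains that separate them; the disconnected case is then handled by enumerating the $O_H(1)$ ways the components of $H$ can be laid out across the ECCs, with each pattern verified separately.
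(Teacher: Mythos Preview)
Your plan is essentially the paper's: the same ECC framework at threshold $\sqrt{n}$, the same generalisation of $\lp$ to per-vertex ``which pieces of $H$ can be realised here, growing strictly towards the nearest ECC'', and the same case analysis on the number of ECCs touched. Your Steiner-tree argument for the at-most-three bound is in fact more explicit than what the paper writes (it simply invokes analogy with the path case).

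The one place your sketch needs adjusting is the fragment language. You require a fragment to be a \emph{connected} subgraph of $H$ rooted at a boundary vertex, and defer disconnected $H$ to an enumeration over layouts of its components. That does not suffice. Suppose $H=K_1\cup K_2$ with $K_1$ connected and spanning two ECCs $C_u,C_v$, while $K_2$ lies entirely inside $\bigcup_{d<d_v}C_v(d)$. The verifier sits in $C_u$ and cuts at the unique-layer vertex $v$ near $C_v$; what it must read from $v$'s certificate is the \emph{joint} realisability of ``$K_1$'s $C_v$-side piece together with $K_2$, disjoint and anticomplete, inside $\bigcup_{d<d_v}C_v(d)$''. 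A connected fragment at $v$ carries only the first part, and separate per-component certificates cannot enforce anticompleteness in a region the verifier does not see. The paper's fix is precisely this: its pointed subgraphs $(H',\{h\})$ take $H'$ to be $h$ together with an \emph{arbitrary} union of connected components of $H\setminus\{h\}$, so $H'$ may itself be disconnected, and a single $\htab$ bit at $v$ certifies the whole bundle at once. With that change your outline and the paper's proof coincide.
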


The first part of the certification consists in constructing the ECCs as in Theorem~\ref{thm:p_4k_free_subquadratic}. But to detect a general graph $H$ if it exists, we need to store more information than for paths. Indeed, for paths, we simply stored the longest path starting from each vertex $v$ going strictly towards the closest ECC. That information basically allowed us to know on which position of a potential path $v$ can be. For subgraphs $H$, in order to determine if a vertex $v$ can be at some position of a copy of $H$, we need to store a table saying which subgraphs $H'$ of $H$ can contain $v$ and vertices only strictly towards the closest ECC (the description here is coarse and incomplete, see Section~\ref{app:advanced-upper-bounds} for more details).

\begin{restatable}{theorem}{thmThreekQuasilinear}
\label{thm:3k-quasilinear}
For every $k \ge 2$, we can certify $P_{3k-1}$-free graphs with verification radius $k$ and certificates of size $O(n \log^3 n)$.
\end{restatable}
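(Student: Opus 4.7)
The plan is to refine the two-level layered map of Theorem~\ref{thm:p_4k_free_subquadratic} into an $O(\log n)$-level version. For each $i \in \{1, \dots, \log n\}$, I would use a geometric degree threshold $t_i = 2^i$ and build the level-$i$ ECCs that group vertices of degree at least $t_i$ lying within distance $2k-2$ of one another. At level $i$ we certify a partial map storing, for every vertex that is high at level $i$, the structure of its own ECC together with all edges incident to vertices of that ECC. Since such a vertex has at least $2^i$ neighbors, we can spread its partial map among them in pieces of size (partial map size)$/2^i$. A careful accounting should show that each level contributes $O(n \log^2 n)$ bits to every certificate, and summing over the $O(\log n)$ levels gives the target $O(n \log^3 n)$.

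The key structural observation that enables the improvement over Theorem~\ref{thm:p_4k_free_subquadratic} is that a copy of $P_{3k-1}$ has diameter $3k-2$, while ECCs at any level are pairwise at distance at least $2k-1$. Since $2(2k-1) > 3k-2$, such a path intersects at most two ECCs at any single level, ruling out the 3-ECC subcase that was the most delicate in the $P_{4k-1}$ analysis. Following the $P_{4k-1}$ scheme, I would augment the partial map at each level with the length of the longest induced path starting from each low-at-level-$i$ vertex and going strictly towards the closest level-$i$ ECC; this only adds $O(n \log n)$ bits per level.

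Each vertex then checks, at every level, that the partial map is consistent with its radius-$k$ view (using the generic ECC certification sketched in Section~\ref{sub:ecci-dfns-app}), that the longest-path annotations agree locally, and that no $P_{3k-1}$ is witnessed by combining its view with the partial map and the annotations. Correctness would split into three cases per level: the path lies entirely in low-at-level-$i$ vertices (handled by picking the level whose threshold matches the path's maximum degree, so that the whole path is visible inside a single partial map), touches exactly one ECC (visible inside that ECC's partial map), or connects two ECCs via a low-degree bridge of length at most $k-1$ (reconstructed by concatenating the longest-path annotation of the bridge with the two ECC maps).

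The main obstacle I anticipate is the 2-ECC case. With only two anchors and up to $k-1$ low-degree bridge vertices, one must argue that the $O(\log n)$-bit longest-path annotations fit together precisely enough to let some well-chosen vertex, likely at the boundary between an ECC and the bridge, detect the full $P_{3k-1}$ inside its radius-$k$ view. Getting this bookkeeping exactly right, together with the pigeonhole choice of level that makes the 0-ECC case go through without storing redundant information across levels, is where most of the technical effort will go.
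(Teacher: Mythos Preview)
Two concrete gaps.

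First, your partial map stores the wrong thing and the size analysis fails. You propose storing, for a high-at-level-$i$ vertex, ``the structure of its own ECC together with all edges incident to vertices of that ECC''; but an ECC at level $i$ may contain $\Theta(n)$ vertices each of degree $\Theta(n)$, so this map can have $\Theta(n^2)$ bits and spreading over $2^i$ neighbors does not save you. The paper does the opposite: at level $i$ it spreads the edges incident to the \emph{low} vertices $L_i$ (at most $n\cdot n^{i\varepsilon}$ of them) among the $\geq n^{(i-1)\varepsilon}$ neighbors of each high vertex, yielding pieces of size $n^{1+\varepsilon}\log n$. The resulting witnessed graph $G_{\preccurlyeq u}$ is then a \emph{cross-level} object: $u$ sees every edge incident to any $v$ that lies in $u$'s ECC at the level of $v$, and this transitivity of $\preccurlyeq$ across levels is essential to the correctness argument below.

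Second, your two-ECC case is wrong as stated. Two ECCs at the same level are at distance at least $2k-1$, so the low-degree bridge between them has at least $2k-2$ vertices, not at most $k-1$; the $\lp$ annotations (which reach only $k-1$ into each ECC's halo) cannot cover it on their own. The paper handles this differently: it splits the $P_{3k-1}$ into three consecutive blocks of sizes $k,\,k-1,\,k$, takes the maximum-degree vertex $x_1,x_2,x_3$ in each, and compares their levels $i_1,i_2,i_3$. If $i_2\geq\min(i_1,i_3)$, transitivity of $\preccurlyeq$ puts the whole path inside some $V_{\preccurlyeq x_j}$ and $x_j$ rejects. Otherwise $x_1,x_3$ are both high at level $i_3$; if they lie in distinct ECC$_{i_3}$'s, a pigeonhole on the $2k$ layers $C_{x_1}(d),C_{x_3}(d)$ finds a vertex $u$ on the path that sees, in its \emph{raw radius-$k$ view}, a subpath $P'$ whose endpoints $u_1,u_3$ are the unique path vertices in their respective layers, and rejection is triggered by $\lp[u_1,i_3]+\ell(P')+\lp[u_3,i_3]\geq 3k-1$. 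Your sketch has neither the three-block decomposition, nor the cross-level witnessed graph, nor this view-plus-two-$\lp$ detection step.
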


In the proof of this statement, we use the full power of our ECC machinery by cutting the graph into a logarithmic number of types of degrees instead of $2$ (high degree / low degree). This more refined partition allows us to get almost linear certificates, by better adapting to the density of each part of the graph. The downside is that it makes the path detection much more involved, and reduces the length of the paths that can be detected. 

\begin{restatable}{theorem}{thmPImprovedPathsFreeSubquadratic}
\label{thm:P_143k_free_subquadratic}
    For every $k \ge 2$, we can certify $P_{\left \lceil \frac{14}{3} k \right \rceil - 1}$-free graphs with verification radius~$k$ and certificates of size $O(n^{3/2}\log^2 n)$.
\end{restatable}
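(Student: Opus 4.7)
The plan is to refine the layered-map construction of Theorem~\ref{thm:p_4k_free_subquadratic}. We reuse the $\sqrt{n}$ threshold separating high- from low-degree vertices, the certification of the Extended Connected Components (ECCs), the global induced subgraph on low-degree vertices (plus all edges incident to them) distributed to every vertex, and the internal map of each ECC distributed to each of its high-degree vertices. These ingredients already cost $O(n^{3/2}\log^2 n)$ bits and can be checked as in Theorem~\ref{thm:p_4k_free_subquadratic}. The new ingredient is a richer annotation at each low-degree vertex: for each $v$ and each ECC $E$ lying within a bounded ``ECC-neighborhood'' of $v$ (the bound depending only on $k$), we store the length of the longest induced path from $v$ into a vertex of $E$ using only low-degree internal vertices. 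Since only a constant number of ECCs is relevant per low-degree vertex, this contributes at most $O(n\log n)$ additional bits.

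Suppose for contradiction that an accepted graph $G$ contains an induced path $P$ on $L = \lceil 14k/3\rceil - 1$ vertices, and let $t$ be the number of distinct ECCs met by $P$. Since distinct ECCs have pairwise graph-distance at least $2k-1$, any induced path visiting $t$ ECCs has at least $t+(t-1)(2k-2)$ vertices; the value $t=4$ would force $|P|\ge 6k-2 > L$ for every $k\ge 2$, so $t\le 3$. The cases $t\in\{0,1\}$ are handled exactly as in Theorem~\ref{thm:p_4k_free_subquadratic}: the path is witnessed either by a low-degree vertex via the global low-degree map, or by a vertex of the unique ECC using its internal map plus the low-degree map. When $t=2$, writing $P=P_1\,Q\,P_2$ with $P_i\subseteq\mathrm{ECC}_i$ and $Q$ in low-degree, a well-chosen vertex -- either a low-degree vertex of $Q$ that sees both ends of $Q$ within radius $k$, or a high-degree boundary vertex of $\mathrm{ECC}_1$ combining the enriched certificate of the low-degree vertex adjacent to $\mathrm{ECC}_2$ with its own ECC map -- reconstructs the three lengths $|P_1|,|Q|,|P_2|$ and rejects.

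The main obstacle is the case $t=3$, where we write $P=P_1\,Q_1\,P_2\,Q_2\,P_3$ with $P_i\subseteq\mathrm{ECC}_i$ and $Q_i$ in low-degree. From $|Q_1|,|Q_2|\ge 2k-2$ we obtain the key budget inequality $|P_1|+|P_2|+|P_3|\le L-(4k-4)=\lceil 2k/3\rceil + 3$. For $k\ge 3$, this in particular forces $|P_2|\le 2k-1$, so some vertex $u\in P_2$ lies at graph-distance at most $k$ from both endpoints of $P_2$, and hence also from the two low-degree vertices of $Q_1$ and $Q_2$ adjacent to $P_2$. From those two enriched certificates $u$ recovers $|P_1|$ and $|P_3|$; from its own $\mathrm{ECC}_2$-internal map and the global low-degree map, $u$ recovers $|P_2|,|Q_1|,|Q_2|$; summing the five lengths, $u$ detects an induced $P_L$ and rejects. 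The small values of $k$ (essentially $k=2$) are handled by a direct check after recomputing $L$. The precise constant $14/3$ is the largest for which the middle-ECC contribution $|P_2|$ is guaranteed to fit inside a single $k$-neighborhood; pushing further would require certifying information about induced paths that traverse an ECC entirely, which is incompatible with the per-ECC information currently carried -- circumventing this limitation is precisely what motivates the finer partition used in Theorem~\ref{thm:3k-quasilinear}.
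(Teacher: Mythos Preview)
Your proposal misidentifies where the difficulty lies. You treat the three-ECC case as the ``main obstacle'' and dispatch the two-ECC case in a single sentence, but in the paper it is exactly the reverse: when three ECCs meet $P$, so few vertices remain outside the two outer distance-shells that some level set $C_i(d)$ is hit only once, and the argument of Theorem~\ref{thm:p_4k_free_subquadratic} (via the global table $\lp$) already suffices. The genuinely hard case is $t=2$.

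Your block decomposition $P=P_1\,Q\,P_2$ with $P_i\subseteq\mathrm{ECC}_i$ is too coarse. The low-degree portion of $P$ can weave in and out of the shells $C_u(d)$ and $C_v(d)$ repeatedly; after the $m$-pathcheck has eliminated every configuration with a uniquely-hit shell, one is left with $|P\cap C_u(d)|\ge 2$ and $|P\cap C_v(d)|\ge 2$ for all $d\in\{1,\dots,k-1\}$. In this regime the low-degree stretch has length close to $4k$, far too long for any vertex to see both endpoints within radius $k$, and a boundary vertex of $C_u$ cannot simply read one number off a low-degree neighbour, because the portion of $P$ near $C_v$ may re-cross $C_v(d_v)$ at a second vertex $v'$. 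The paper's fix is a counting argument (this is where the constant $14/3$ actually comes from, via $\alpha<2k/3$ in Claim~\ref{claim:14/3k 4}) locating a low-degree $u\in C_u(k-1)$ within distance $k$ of some $v\in C_v(d_v)$ with $|P\cap C_v(d_v)|=2$, together with a per-vertex table $\lcp(v)[v',\cdot]$ storing \emph{four} differently constrained path lengths (paths strictly closer to $C_v$, possibly using $v'$, avoiding $N[v]$, ending at $v'$, or two anticomplete paths). Those constraints are precisely what guarantees the glued object is an \emph{induced} path; your single unconstrained ``longest path from $v$ into $E$ through low-degree vertices'' carries no such guarantee, so the pieces you assemble may have chords.

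Finally, your derivation of $14/3$ from the $t=3$ budget is off: your own inequality gives $|P_2|\le \lceil 2k/3\rceil+O(1)\ll 2k-1$, so the three-ECC case would tolerate much longer paths. The bottleneck is the two-ECC analysis you skipped.
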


The first part of the proof consists in constructing the ECCs as in Theorem~\ref{thm:p_4k_free_subquadratic}. But the information of the length of the longest path starting from a low-degree vertex and going strictly towards the closest ECC is not sufficient anymore. We prove that we can also certify the lengths of less constrained paths, and that this can be used to push a bit further the upper bound on the length of the paths detected. Essentially, we will be able to detect some type of paths that are not going strictly closer to the ECC, by storing at every vertex $v$, the length of four types of paths to every other vertex $w$, \emph{e.g.} one avoiding some neighborhood, one having its first step going strictly closer to the ECC etc. Note that since we now need some specific integers for each pair of vertices, to avoid a blow-up of the size of the certificates, we must restrict a node to know only the lengths that are relevant to it (whereas in the proof Theorem~\ref{thm:p_4k_free_subquadratic}, all vertices of the graph share the same table which contains, for every vertex, some information relevant to it).
The structure of the certified paths plus the locality of this certification makes the proof much more technical.


\medskip


A priori there are no obstacles to combining the three results above, to get \emph{e.g.} quasilinear bounds for more general subgraphs. But for the sake of readability, we explain the three extensions seperately.
The formal proofs can be found in Section~\ref{app:advanced-upper-bounds}. 

We note that our upper bounds theorems and technique also holds for non-necessarily induced graphs. This is not relevant for paths, since they are covered by Theorem~\ref{thm:non-induced-trees}, but it is relevant for Theorem~\ref{thm:H_4k_free_subquadratic} which applies to every $H$.

\subsection{Additional related work and discussions}
\label{app:additional-related work}

\paragraph*{Adapting known results, in particular from the Broadcast congest model}


There is a plethora of literature on subgraph detection in various distributed models. 
In general, results in these models are difficult to adapt to local certification.
Indeed, on the one hand, the non-constructive aspect of certificates makes the latter a stronger model, but on the other hand, local certification is a broadcast-type model, in the sense that all neighbors see the same information from a specific node, which implies that comparisons with unicast models such as \textsc{Congest} are rarely useful.\footnote{The topic of broadcast versus unicast in local certification has been explored in \cite{Patt-ShamirP22}.} 

The only model that is clearly useful for us is \textsc{Broadcast Congest} where, at each round, each node sends the same $O(\log n)$-bit message to all its neighbors.
In particular, as described in the observation below, an upper bound in \textsc{Broadcast Congest} implies an upper bound for local certification. 

\begin{observation}[Folklore]
    \label{obs:relationship-to-broadcast-congest}
    Let $\mathcal{P}$ be a fixed graph property.
    If there is a \textsc{Broadcast Congest}  algorithm deciding $\mathcal{P}$ running in $f(n)$ rounds, then there is a local certification for $\mathcal{P}$ using certificates of size $O(f(n)\log n )$. 
\end{observation}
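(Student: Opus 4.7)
The plan is to simulate the \textsc{Broadcast Congest} execution inside the certificates. Given the assumed $f(n)$-round algorithm $\mathcal{A}$ for $\mathcal{P}$, the prover produces, for each vertex $v$, a certificate consisting of the transcript of the $f(n)$ messages that $v$ would broadcast during an accepting execution of $\mathcal{A}$ on $G$: this is a sequence $(m_1^v, m_2^v, \ldots, m_{f(n)}^v)$ of $O(\log n)$-bit strings, for a total certificate size of $O(f(n) \log n)$.

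At verification time (radius $1$), each vertex $v$ reads its own certificate and the certificates of all its neighbors. Since in round $r$ of $\mathcal{A}$ the message broadcast by $v$ is a deterministic function of $v$'s identifier, its initial local input (which, in our setting, is just its neighborhood, known to $v$), and the messages $(m_{r-1}^u)_{u \in N(v)}$ received from neighbors in the previous round, vertex $v$ can inductively recompute the entire sequence of messages it \emph{should} have sent and compare it, round by round, with the certificate provided. Vertex $v$ accepts iff (i) every round of its own transcript matches this recomputation, and (ii) the local decision of $\mathcal{A}$ at $v$, computed from the neighbors' final messages, is ACCEPT.

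For correctness: if $G \in \mathcal{P}$, the prover assigns the genuine transcripts of an accepting run of $\mathcal{A}$, every consistency check passes, and every vertex outputs ACCEPT. Conversely, if all vertices accept, the collection of certificates is precisely the transcript of a valid run of $\mathcal{A}$ on $G$ (the round-by-round consistency checks guarantee this, by induction on the round number), and all vertices locally output ACCEPT in $\mathcal{A}$, so by correctness of $\mathcal{A}$ we have $G \in \mathcal{P}$.

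There is essentially no obstacle here, beyond being careful that a vertex $v$ really does have all the information to recompute its round-$r$ message locally at radius $1$: this is the defining feature of \textsc{Broadcast Congest}, since the round-$r$ message of $v$ depends only on the round-$(r-1)$ messages of its neighbors, which are stored in the neighbors' certificates. The factor $\log n$ in the certificate size comes from the per-round message length, and the $f(n)$ factor from the number of rounds being concatenated.
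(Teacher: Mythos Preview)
Your proof is correct and follows essentially the same approach as the paper: encode in each node's certificate the sequence of messages it sends during the run of the \textsc{Broadcast Congest} algorithm, and have each node locally verify consistency round by round using its neighbors' certificates. Your write-up is simply more detailed than the paper's brief sketch.
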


Here, by \emph{deciding a property}, we mean the same decision mechanism as in local certification: the graph is accepted if and only if all nodes accept. This works well with subgraph detection in \textsc{Broadcast Congest}, where, if the graph contains one or several copies of the subgraph, then at least one node will detect at least one copy.

\begin{proof}
Suppose that there is an upper bound of $f(n)$ rounds in this model for a deciding a property $\mathcal{P}$. Then we can derive an upper bound of $O(f(n)\log n )$ for certification, by encoding in the certificate of each node all the messages that it sends during the run of the algorithm.
Indeed, given this information, every node can check that the run is correct, and if so a node rejects this certification in the end, if and only if, the output of the original algorithm is a reject. 
\end{proof}

Let us use this to get the upper bound of the theorem of the introduction that we restate here. 

\ThmNonInducedTrees*

It is proved in~\cite{KorhonenR17, EvenFFGLMMOORT17} is that any tree on $k$ vertices can be detected in $O(k2^k)$ rounds, which directly translates to a $O(\log n)$ upper bound for certifying that the graph does not contain a tree $T$ as a (non-necessarily induced) subgraph. 

The matching lower bound can be proved by adapting the treedepth lower bound in~\cite{FeuilloleyBP22}. In the proof mentioned, the core result is that it requires $\Omega(\log n)$ bits to accept graphs of type (1) and reject those of type (2), where (1) is a collection of cycles of length at most 8, plus a universal vertex, and (2) is a collection of cycles one of which has size at least 16, and the others at least 8, plus a universal vertex. Note that graphs of type (1) are $P_{20}$ free while the ones of the type (2) do contain a $P_{20}$, which proves the bounds for such paths. And the same proof can be adapted to any larger path/tree.

For lower bounds, results in \textsc{Broadcast Congest} cannot be adapted directly, but the techniques can sometimes be borrowed. For example, it is proved in~\cite{CrescenziFP19}, that the lower bound from~\cite{DruckerKO13} for the broadcast congested clique can be adapted to get an $\Omega(n/e^{O(\sqrt{\log n})})$ bound for the certification of triangle-free graphs. Actually, we can get from another theorem of~\cite{DruckerKO13}, that $\Omega(n)$ bits are necessary for $K_k$-free when $k>3$. This gives the lower bound of the following theorem cited in the introduction. 

\ThmKkFree*

The upper bound is proved in~\cite{BousquetEFZ24} using a renaming theorem. Namely: at a cost of an additional $O(\log n)$ bits in the certificates, we can assume that the identifiers are in the range $[1,n]$ instead of $[1,n^c]$. Given that, it is sufficient for the prover to give to each vertex $v$ a bit vector $T[v]$ of size $n$, where $T[v]_i$ is equal to $1$ if and only if $v$ is a neighbor of the vertex of (new) identifier~$i$. With this information in the certificates, every vertex knows the adjacency of its neighbors, and can thus determine the size of the largest clique to which it belongs, and rejects if it is at least~$k$.

Let us mention a last result that uses the renaming technique of~\cite{BousquetEFZ24}, and which generalizes the upper bound technique of the previous Theorem~\ref{thm:Kk-free}.

\begin{observation}
    \label{obs:distance-minus-1}
    Consider a graph $H$ with a central vertex $w$ such that all vertices of $H$ are at distance at most $d$ from $w$. Then we can certify that a graph is $H$-free with a verification radius $d$ and certificates of size $O(n)$. 
\end{observation}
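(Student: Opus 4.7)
The plan is to generalize the upper bound of Theorem~\ref{thm:Kk-free} (which corresponds exactly to the case $d=1$, $H=K_k$). The key observation is that if $H$ occurs in $G$, then the vertex $v^{\star} \in V(G)$ playing the role of the central vertex $w$ has every other vertex of this $H$-copy inside its ball $B_d(v^{\star})$ of radius $d$. So it suffices that every vertex be able to reconstruct the adjacencies inside its radius-$d$ neighborhood, and then test locally whether a copy of $H$ containing itself in the role of $w$ occurs there.

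First I would apply the renaming technique of~\cite{BousquetEFZ24} at cost $O(\log n)$ extra bits per vertex to assume identifiers lie in $[1,n]$. The certificate of each vertex $v$ is then its adjacency vector $T[v] \in \{0,1\}^n$, where $T[v]_i = 1$ iff $v$ is adjacent to the vertex of renamed identifier $i$. The total certificate size is $n + O(\log n) = O(n)$. At verification, each vertex $v$ first checks that $T[v]$ correctly describes its actual neighbors (seen at radius~$1$) and rejects otherwise. Then, using verification radius $d$, $v$ reads the certificates of all $u \in B_d(v)$, reconstructs the subgraph induced on $B_d(v)$ by keeping the adjacencies $T[u]_{u'}$ for $u,u' \in B_d(v)$, performs the symmetry check $T[u]_{u'} = T[u']_u$, and finally tests whether a copy of $H$ (induced or not, depending on the variant) with $v$ playing the role of $w$ occurs in this reconstructed subgraph. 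If yes, $v$ rejects.

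For correctness, on an $H$-free graph the honest assignment of adjacency vectors makes every check pass. Conversely, if $G$ contains a copy of $H$, let $v^{\star}$ play the role of $w$; all other vertices of the copy lie in $B_d(v^{\star})$, so $v^{\star}$ will detect it \emph{provided} the reconstructed adjacencies on $B_d(v^{\star})$ are faithful. This faithfulness is guaranteed by the self-check performed by each vertex $u$: any false entry of $T[u]$ would contradict $u$'s direct view of its own neighbors at radius~$1$, so $u$ itself would reject. The main (and quite mild) technical point is precisely this interplay of local self-checks propagating trust to every witness $v^{\star}$; beyond that, the construction is a direct extension of the $K_k$-free proof to arbitrary centered $H$ and arbitrary verification radius~$d$.
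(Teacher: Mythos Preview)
Your proposal is correct and follows essentially the same approach as the paper: renaming identifiers to $[1,n]$ via~\cite{BousquetEFZ24}, giving each vertex its adjacency bit-vector $T[v]$, having every vertex self-check its own $T[v]$ against its actual neighborhood, and letting the copy of $w$ reconstruct the induced subgraph on its radius-$d$ ball to detect $H$. The paper phrases it slightly more tersely (it just says that with these certificates every node $v$ knows all edges incident to vertices at distance $d$, hence the copy of $w$ detects $H$), but the argument and the ingredients are identical.
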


Note that with verification radius $d+1$, one does not need certificates, since the whole subgraph would be in the view of the copy of $w$ in the graph.  

The proof consists again in using the identifiers from range $[1,n]$ given by~\cite{BousquetEFZ24}, and encoding the neighborhood of a vertex~$v$ in its certificate using the same bit vector~$T[v]$. With this information, every node~$v$ knows all the edges incident to the vertices at distance~$d$ from itself (whereas without certificates, $v$ does not see the edges between two vertices at distance exactly~$d$). Hence, if $H$ is present, the copy of $w$ in $G$ will detect it and reject.

\paragraph*{Certification size landscape}
A fruitful line of work in the LOCAL model consists in establishing the landscape of complexities for the classic family of problems called locally checkable languages, LCLs for short, see \emph{e.g.}~\cite{Suomela20}. 
In this perspective, one aims at characterizing the functions $f(n)$ for which there exists a problem whose optimal complexity is $f(n)$.
Developing the same kind of theory for local certification size is an exciting research direction. 
So far, the only sizes for which we have natural problems with tight bounds are: $O(1)$, $\Theta(\log n)$, $\tilde{\Theta}(n)$ and $\tilde{\Theta}(m)$, where $\tilde{\Theta}$ means up to subpolynomial factors. (One can actually build a problem for any size $f(n,m)$ above $\log n$, but it is artificial in the sense that the definition of the problem refers to the function $f$.)

The constant certification size regime contains local properties such as coloring (intuitively the LCLs), and the dependency in other parameters such as the maximum degree has been explored very recently~\cite{BousquetFZ24}.  
The size $\Theta(\log n)$ is very common, with the archetypal problem being acylicity. See for example the list of problems in~\cite{GoosSuomela2016}, and the recent series of papers establishing meta-theorems for this regime~\cite{FeuilloleyBP22, FraigniaudMRT22, FraigniaudM0RT23}. 
Actually, in some of these papers, the upper bounds are polylogs and not logs, and they are not matching the lower bounds, leaving open whether some of these properties have optimal certificate size $\Theta(\log^{c}n)$, with $c>1$, or not.\footnote{Minimum spanning trees with polynomial weights  have optimal certificate size $\Theta(\log^2 n)$~\cite{KormanKP10}, but one of the $\log n$'s originates from the encoding of the weights, hence this is not very satisfactory.}
Finally, in the polynomial regime, we know problems with complexity $\tilde{\Theta}(n)$ (\emph{e.g.} diameter $\leq 3$~\cite{Censor-HillelPP20}) and problems with complexity $\tilde{\Theta}(m)$ (\emph{e.g.} symmetric graphs~\cite{GoosSuomela2016} and non-3-colorable graphs~\cite{GoosSuomela2016}). 
Our paper explores the polynomial regime, but unfortunately since the bounds do not match we cannot conclude for sizes between $n$ and $m$. Nevertheless, we believe that the case of induced paths of various sizes is a promising direction in this landscape perspective. 

\paragraph*{Certification of graph classes and forbidden subgraphs}

A recent trend in local certification has been to focus on certifying graph properties (\emph{e.g.} planarity) instead of certifying the output of an algorithm (\emph{e.g.} that a set of pointers distributed on the nodes collectively form a spanning tree). Two motivations behind this focus are that the optimal certification size can be seen as a measure the locality of a graph property (raising interest from the graph theory community) and that algorithms tailored to work on specific graph classes make more sense if one can ensure that the graph indeed belongs to the class.  

Among the classes studied, many are defined by forbidden minors, as discussed previously. But some of these classes are not closed under minors, and are better described by families of forbidden induced subgraphs. 
For example, the authors of \cite{JaureguiMRR23} tackle the case of chordal graphs (that are characterized by forbidding all induced cycles of length at least 4) and several other classes whose subgraph characterizations are more cumbersome: interval graphs, circular arc graphs, trapezoid graphs and permutation graphs. For all these classes, \cite{JaureguiMRR23} establishes a $O(\log n)$-bit certification. 
This might come as a surprise, since in this paper all the certifications are in the polynomial regime. The reason for this contrast is that all the classes we have just mentioned are very structured, and in particular have geometric representations, which can be used for certification. 
Very recently, it was proved that some classes are hard to certify even though they do have geometric representation: 1-planar graphs, unit disk graphs and other related classes require (quasi)-linear-in-$n$ certificates~\cite{DefrainELMR24}.

\paragraph*{Techniques based on bucketing by degree}

Our layered maps are based on bucketing the vertices by degree, which is a classic strategy in distributed subgraph detection.
A classic canvas for detection algorithm is to first process the high-degree nodes, either arguing that there are few of them (see \emph{e.g.}~\cite{FischerGKO18}) or that many nodes are close to high degree nodes (see \emph{e.g.}~\cite{Censor-HillelFG20}), and then to process low-degree nodes, often using color coding (a large part of these algorithms is randomized). 
Another technique consists in computing an expander (or conductance) decomposition, to separate the graph into parts that are well-connected (where one can basically compute as if the cluster would be a clique) and parts of low degree (see \emph{e.g.}~\cite{ChangPZ19, EdenFFKO22}). 

Our understanding is that only general intuitions can be transferred from these algorithms to our setting, namely:  inside a cluster one should use the fact that it is easy to move/distribute information, whereas in low-degree parts one should enjoy the fact that there are fewer edges, hence fewer information to spread.
\section{Model and definitions}
\label{sec:model}

\subsection{Graph theory notions}

In this paper, the network is modeled by an undirected graph without loops or parallel edges. The number of nodes is denoted by $n$ and the number of edges is denoted by $m$. A graph is \emph{$d$-regular} if all its vertices have degree $d$. A graph is \emph{regular} if it is $d$-regular for some $d$.

We call $H$ an \emph{induced subgraph} of $G$ if $H$ is obtained from $G$ by deleting a subset of the vertices of $G$ and the edges incident to them.  We call a graph $H$-free if it does not contain $H$ as an induced subgraph.
We call $H$ a \emph{subgraph} (sometimes specified as \emph{non-necessarily induced}) of $G$ if $H$ is obtained from $G$ by deleting a subset of the vertices of $G$, the edges incident to them, and an arbitrary subset of edges of $G$.  
We let $P_k, C_k$ denote the path and cycle on $k$ vertices respectively. The length of a path or cycle is the number of its edges.

We say two disjoint sets of vertices $X, Y$ are \emph{complete} to each other if all possible edges between $X$ and $Y$ are present. If there are no edges between $X$ and $Y$ we say they are \emph{anticomplete}. 
A set of edges $M$ of a bipartite graph $G$ is a perfect matching if all the vertices of the graph are adjacent to exactly one edge of $M$. A set of edges in a bipartite graph is an \emph{antimatching}, if they form a complete bipartite graph without a perfect matching. 

\subsection{Local certification}
\label{subsec:model-certification}

In the networks we consider, the vertices are equipped with unique identifiers on $O(\log n)$ bits. Certificates are labels attached to the vertices. The view at distance $d$ of a vertex $v$ consists of: (1) the vertices at distance at most $d$ from $v$, (2) the identifiers and certificates of these vertices, and (3) the edges between these vertices, except the ones between two vertices at distance exactly $d$.

\begin{definition}
    We say that there exists a \emph{local certification} of size $s$ with verification radius $d$ for a property $P$ if there exists a local algorithm (called the \emph{verification algorithm}) taking as input on every node $v$ the view at distance $d$, and outputting accept/reject such that:
\begin{itemize}
    \item For every $n$-vertex graph that satisfies the property $P$, there exists a certificate assignment, with certificates of size at most $s(n)$, such that the verification algorithm accepts at every node. 
    \item For every $n$-vertex graph that does not satisfy the property $P$, for all certificate assignments, there exists at least one node where the verification algorithm rejects.
\end{itemize}
\end{definition}

In order to facilitate the writing, we say that the certificates are given by a \emph{prover}. 
We can specify how the certificates are assigned by the prover on correct instances (\emph{i.e.} graphs satisfying the property $P$), but we cannot control what happens on incorrect instances. 

As an example, let us describe a local certification with verification radius~$1$ and certificate size $O(\log n)$, for checking that the graph is acyclic. 
On a correct instance, the prover chooses a node to be the root, and gives as a certificate to every node its distance to the root. 
The verification algorithm checks that the distances are consistent (typically that one neighbor has been assigned a strictly smaller distance, and the others a strictly larger distance). Now if the graph has a cycle, for any certificate assignment, the vertex with the largest assigned distance in the cycle has at least two vertices with distance smaller than or equal to its distance, hence it rejects.

\section{Lower bounds for paths (and extensions)}
\label{sec:lower-bounds}

Our lower bound result is the following.

\ThmLBPaths*

In this section we give an overview of the proof techniques to design our lower bounds as well as the construction we use. 

\paragraph*{Discussion of the challenges of the proof, based on the communication complexity intuition}
At first glance, the proof of Theorem~\ref{thm:LB-paths} is a reduction from the problem of non-disjointness in  non-deterministic communication complexity, which is now a classic tool in the area (see for example \cite{Censor-HillelPP20, FeuilloleyFHPP21}). 
Let us quickly remind ourselves what such reductions look like, to be able to discuss the challenges. Readers who do not know this technique can safely skip this discussion, since in our formal proof we do not use communication complexity. 

In such a reduction, we use two graphs $G_A$ and $G_B$ having order of $n$ vertices, encoding the inputs $A$ and $B$ of Alice and Bob, and some ``middle part'' connecting them, in such a way that (1) our graph has the desired property if and only if $A \cap B = \emptyset$, and (2) the middle part is such that the nodes in $G_A$ cannot see $G_B$, and conversely.
Intuitively, since set disjointness for sets of size~$t$ requires Alice and Bob to communicate order of $t$ bits, and only the certificates can be used to transfer this information, any local certification will require certificates of order $t/c$, where $c$ is the size of the minimum cut in the middle part. Here, $t$ will be of order~$n^2$ (because $A$ and $B$ will be sets of edges of some $n$-vertices bipartite graphs), and $c$ will be of order $kn$, leading to the $\Omega\left(\frac{n}{k}\right)$ lower bound.

Adapting this framework to forbidden induced subgraphs certification is quite challenging.
In classic proofs of the same flavor, one can go from checkability radius 1 to checkability radius~$k$ in a very simple way: every edge in the middle part is replaced by a path of length~$k$, to separate the nodes controlled by Alice and Bob even further. This is not good for us, since we want to control very precisely the path structure of the graph.
Also, we need to care about non-edges everywhere in the construction and not only in the part ``encoding the disjointness'', because we care about induced subgraphs, and not subgraphs. (Remember that we know from Theorem~\ref{thm:non-induced-trees} that a high lower bound cannot hold for tree subgraphs.) 
To overcome these difficulties we use a construction that blends cliques, matchings and antimatchings, and whose analysis requires a lot of care to rule out the existence of unwanted paths in our construction.

\subsection{Our lower bound construction}

As said earlier, in our proofs we do not make the communication complexity appear explicitly, instead we use a simple counting argument (similar to fooling sets in the communication complexity proof, and to the arguments used~\cite{GoosSuomela2016}). This is more self-contained, and in our opinion, more elegant, and also avoids formal definition of non-deterministic communication complexity. 

Let $n \geq 1$ be an integer, and assume that the elements of $A,B$ are subsets of $[1,n]$ of size $2$. Let $G_A$ be the bipartite graph with two parts of $n$ vertices, labelled $\{1,\ldots,n\}$ and $\{1',\ldots,n'\}$ 
and the following edges:
\begin{itemize}
\item For each $\{i,j\} \notin A$, $ij', ji' \in E(G_A)$, and
\item for each $i \in \{1,2,3, \dots, n\}$, $ii' \in E(G_A)$.
\end{itemize}
We define $G_B$ similarly from $B$.

We now connect $G_A$ and $G_B$, by adding paths on $2k$ vertices from vertices labeled with $[1,n]$ (resp. $[1',n']$), called \emph{top vertices} (resp. \emph{bottom vertices}) in $G_A$ and $G_B$. This is done via adding a collection of cliques linked by matchings and antimatchings. This is the key feature that distinguishes our proof technique from those of pre-existing lower bounds for substructure detection.

  \begin{figure}[!h]
        \centering
        \scalebox{0.85}{
        \input{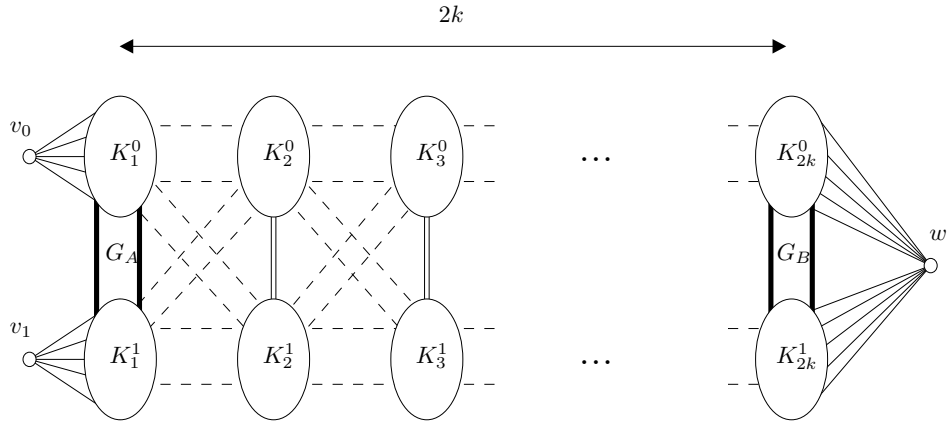}
        }
        \caption{The graph $G_{k,n}(A,B)$. Each blob is a clique on $n$ vertices. The double thick edges represent the graphs $G_A$ and $G_B$. The other double edges represent the matchings. The double dashed edges represent the antimatchings.}
        \label{fig:GHH'}
    \end{figure}

Let $T$ be path on at least $4k+3$ vertices or a tree of diameter at least $4k+2$ without degree two vertices. We construct a graph $G_{k,n}(A,B)$ on $4k(n-1)+|T|$ vertices, which will be $T$-free if and only if $G_A$ and $G_B$ do not have a common non-edge (and so if and only if $A$ and $B$ are disjoint). The construction is illustrated on Figure~\ref{fig:GHH'}.
    
We construct $4k$ cliques of size $n$, denoted by $K^0_1, \ldots K^0_{2k}, K^1_1, \ldots K^1_{2k}$. In each clique, we number the vertices from $1$ to $n$.\footnote{We leave the precise unique identifier assignment implicit, for simplicity, but it is easy to create one, having the vertices of the first clique named 1 to $n$, then the second $n+1$ to $2n+1$, etc.} For every clique $K$ and $i \in \{1, \ldots, n\}$, we denote by $K[i]$ the vertex numbered by $i$ in $K$. We add the following edges in $G_{k,n}(A,B)$:
    \begin{itemize}
        \item We put the edges of $G_A$ between $K^0_1$ and $K^1_1$, and of $G_B$ between $K^0_{2k}$ and $K^1_{2k}$.\footnote{That is, we identify the vertices of each side of the bipartite graph with the vertices of each blob, \emph{e.g.} in increasing order of identifiers.}
        \item For every $j \in \{2, \ldots, 2k-1\}$, we put the perfect matching $\{(K^0_j[i],K^1_j[i])\}_{1 \leqslant i \leqslant n}$.
        \item For every $j \in \{1, \ldots, 2k-1\}$ and for every pair of cliques $K, K'$ with $K \in \{K^0_j,K^1_j\}$, $K' \in \{K^0_{j+1},K^1_{j+1}\}$, we put the antimatching defined by the complement of $\{(K[i],K'[i])\}_{1 \leqslant i \leqslant n}$ between $K$ and $K'$ (that is, all the edges but the matching $\{(K[i],K'[i])\}_{1 \leqslant i \leqslant n}$).
    \end{itemize}
    
Finally, if $T=P_{4k+3}$, we add 3 more vertices $v_0,v_1,w$ in the following way. The vertex $v_0$ (resp. $v_1$) is complete to $K^0_1$ (resp. $K_1^1$). The vertex $w$ is complete to $K^0_{2k}$ and $K^1_{2k}$. Otherwise, we choose a path $P$ of length $4k$ in $T$ containing no leaf, label its vertices by $v_1^0,\ldots, v_{2k}^0,w,v_{2k}^1,\ldots,v_1^1$, and for each connected component $C$ of $T\setminus P$ adjacent with vertex $v_a^b$, we create a copy of $C$ in $G_{k,n}(A,B)$ where the vertex adjacent with $v_a^b$ becomes complete to $K_a^b$. Vertices in this copy are called \emph{pending vertices} of $K_a^b$. We also add a copy of the connected component of $w$ in $T\setminus E(P)$, and add all edges between $w$ and $K_{2k}^0\cup K_{2k}^1$. 

Note that applying the latter construction to $P_{4k+3}$ yields the former graph, hence we may use $G_{k,n}(A,B)$ in both cases.

We state the main property satisfied by this construction in \cref{prop:common non-edge} below. We prove \cref{prop:common non-edge} by looking carefully at the structure of $G_{k,n}(A,B)$ 
%
at the end of this section.

    \begin{proposition}
        \label{prop:common non-edge}
        The graph $G_{k,n}(A,B)$ is $T$-free if and only if $A$ and $B$ are disjoint.
    \end{proposition}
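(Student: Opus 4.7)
My plan is to prove Proposition~\ref{prop:common non-edge} by establishing the two implications separately, with the easy direction giving an explicit construction and the hard direction reducing to a structural analysis of an induced diametral path.

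\textbf{Easy direction ($A\cap B\neq\emptyset\Rightarrow G_{k,n}(A,B)$ contains an induced $T$).} Pick $\{i,j\}\in A\cap B$ with $i\neq j$ and embed the subpath $v_1^0,\ldots,v_{2k}^0,w,v_{2k}^1,\ldots,v_1^1$ of $T$ by $v_a^0\mapsto K_a^0[i]$ if $a$ is odd, $K_a^0[j]$ if $a$ is even, and $v_a^1\mapsto K_a^1[j]$ if $a$ is odd, $K_a^1[i]$ if $a$ is even (with $w\mapsto w$); the pending components of $T$ then embed into their ready-made copies attached to the corresponding cliques. Verifying that this is an induced copy is a routine case check: consecutive path vertices on each side are antimatching-adjacent because their indices $i,j$ differ; non-consecutive same-column different-side pairs are non-adjacent because the side-$0$ and side-$1$ indices always disagree (giving matching non-edges at interior columns, and the specific non-edges $\{i,j\}\in A$ at column $1$ and $\{i,j\}\in B$ at column $2k$); non-consecutive different-side adjacent-column pairs share their index and thus are non-adjacent by the antimatching; and each of $v_0,v_1,w$ (and the pending analogs) is adjacent only to vertices in the cliques to which it is complete.

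\textbf{Hard direction ($G_{k,n}(A,B)$ contains an induced $T\Rightarrow A\cap B\neq\emptyset$).} Consider a diametral path $Q=q_0,\ldots,q_L$ of the induced copy of $T$, so $L\geq 4k+2$. Associate to each vertex a column $c(v)\in\{0,\ldots,2k+1\}$ by $c(v_0)=c(v_1)=0$, $c(K_a^b[\cdot])=a$, $c(w)=2k+1$; since edges exist only within a column or between adjacent columns, $(c(q_i))$ is a walk whose consecutive differences are at most $1$. The heart of the proof is the structural claim that, up to reversing $Q$ and swapping sides, $Q$ takes the form $v_0,K_1^0[i_1],\ldots,K_{2k}^0[i_{2k}],w,K_{2k}^1[j_{2k}],\ldots,K_1^1[j_1],v_1$. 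Once this form is established, the induced constraints yield: (i) $i_a\neq i_{a+1}$ and $j_a\neq j_{a+1}$ from consecutive antimatching edges; (ii) $i_a=j_{a+1}$ and $j_a=i_{a+1}$ from non-consecutive antimatching non-edges across sides; (iii) $i_a\neq j_a$ at each interior column from matching non-edges; and (iv) $\{i_1,j_1\}\in A$, $\{i_{2k},j_{2k}\}\in B$ from $G_A$ and $G_B$ non-edges at columns $1$ and $2k$. The recurrence $i_{a+2}=j_{a+1}=i_a$ then forces $(i_a)$ to alternate between $i:=i_1$ and $j:=i_2$ (and symmetrically for $(j_a)$), so $\{i_1,j_1\}=\{i,j\}\in A$ and $\{i_{2k},j_{2k}\}=\{j,i\}\in B$, hence $\{i,j\}\in A\cap B$.

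\textbf{Main obstacle.} The structural form of $Q$ is the technical crux. The key combinatorial facts I will use are: (a) any clique contains at most two path vertices, which must be consecutive (three would form a triangle, not an induced $P_3$); (b) two consecutive path vertices in the same column use a matching, clique, or $G_A/G_B$ edge, so each column contributes at most four path vertices in a tightly packed pattern; (c) revisiting a column forces antimatching shortcut edges between non-consecutive path vertices. Combining (a)--(c) with the pigeonhole constraint that the $4k+3$ path vertices span only $2k+2$ column values should force the column profile of $Q$ to be essentially monotone, traversing $0,1,\ldots,2k+1$ and back, and hence $Q$ to take the desired form. For general $T$ (diameter at least $4k+2$, no degree-$2$ vertices), the same analysis applies to a diametral subpath of $T$; the no-degree-$2$ assumption ensures each pending subtree of $T$ has a unique embedding target, since an internal diametral vertex $v_a^b$ must be mapped to a clique vertex whose only non-diametral neighbors in the induced copy are pending vertices at the corresponding $K_a^b$, avoiding ambiguity with regular clique members.
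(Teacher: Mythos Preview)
Your easy direction and your post-structure deductions (the recurrence $i_{a+2}=j_{a+1}=i_a$ forcing a common pair $\{i,j\}\in A\cap B$) are correct and match the paper. The overall strategy for the hard direction is also right. However, the step you yourself flag as the ``technical crux'' --- forcing the structural form of $Q$ --- is where the real work lies, and your sketch (a)--(c) does not close it.

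The concrete issue is quantitative. Your item (b) gives at most four path vertices per column, which over the $2k$ clique-columns yields at most $8k$ clique vertices, not the $4k$ you need. With only that bound you cannot conclude that $v_0,v_1,w$ all lie on $Q$, and without knowing $w\in Q$ your pigeonhole (``$4k{+}3$ path vertices over $2k{+}2$ column values'') has slack and cannot force the profile $0,1,\ldots,2k{+}1,\ldots,1,0$. The paper fills exactly this gap via Lemma~\ref{lem:<4k vertices in the cliques}: any induced tree in $G_{k,n}(A,B)$ has at most $4k$ vertices in the cliques. Its proof groups the cliques into $k$ blocks $\{K^0_{2j-1},K^1_{2j-1},K^0_{2j},K^1_{2j}\}$ and shows (Claim~\ref{claim:properties of P}) that five tree-vertices in one block force a triangle or a $C_5$; the ingredient beyond your (a) is that two vertices in $K^i_j$ force at most one in each antimatched neighbor \emph{and} at most one in $K^{1-i}_j$. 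Once that lemma is in place, your column-walk observation becomes tight: $Q$ must use all of $v_0,v_1,w$, the profile $0,1,\ldots,2k{+}1,\ldots,1,0$ is the unique length-$(4k{+}2)$ walk from $0$ to $0$ through $2k{+}1$, and your index recurrence (essentially Claim~\ref{inner-claim}) goes through.

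Two smaller remarks. First, your structural form pins the ``up'' leg to side~$0$ and the ``down'' leg to side~$1$; this need not hold (antimatchings allow side flips between consecutive columns), but it is harmless for your deductions (i)--(iv): the antimatching non-edge condition is index-equality regardless of side, and the two visits to a given column must land on opposite sides anyway to avoid a clique shortcut. Second, your treatment of general $T$ is too thin: the diametral path of $T$ may have more than $4k{+}3$ vertices, your column function is undefined on pending vertices, and the paper again leans on Lemma~\ref{lem:<4k vertices in the cliques} (now forcing exactly one vertex per clique and all non-clique vertices into the copy) before re-invoking the inner claim.
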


  We prove \cref{prop:common non-edge} by carefully analyzing the structure of $G_{k,n}(A,B)$.
  Since its proof is quite long and technical, we postpone it to \cref{sec:common_non_edge} and we conclude this section by proving Theorem~\ref{thm:LB-paths} using Proposition~\ref{prop:common non-edge}. 

  \begin{proof}[Proof of \cref{thm:LB-paths}]
  Assume that $m$ bits are sufficient to certify $T$-free graphs with verification radius~$k$.
    For any $A,B$, let $V$ denote the vertex set of $G_{k,n}(A,B)$. 

    We first define a family of positive instances. 
    Let $\overline{H}$ denote the complement of $H$. By Proposition~\ref{prop:common non-edge}, for every set $A$, the graph $G_{k,n}(A,\overline{A})$ is $T$-free.
    Therefore, by hypothesis, there exists a certificate function $c_A : V \rightarrow \{0, \ldots, 2^m-1\}$ such that all the vertices in $G_{k,n}(A,\overline{A})$ accept. 
    \begin{claim}\label{4k-distinct}
       For any two distinct sets $A,B$, $c_A\neq c_B$.
    \end{claim}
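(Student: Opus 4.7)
The plan is a standard hybrid (indistinguishability) argument. Suppose for contradiction that $c_A = c_B$ for some distinct $A \neq B$. After possibly swapping $A$ and $B$, fix some $e \in A \setminus B$; then $e \in A \cap \overline{B}$, so by \cref{prop:common non-edge} the hybrid graph $G_{k,n}(A, \overline{B})$ contains $T$ as an induced subgraph. Since the three graphs $G_{k,n}(A, \overline{A})$, $G_{k,n}(B, \overline{B})$, and $G_{k,n}(A, \overline{B})$ share the same vertex set $V$ and identifier scheme, the function $c := c_A = c_B$ is a well-defined certificate assignment on $G_{k,n}(A, \overline{B})$. I will derive the desired contradiction by showing that every vertex accepts under $c$, which violates the soundness of the scheme.

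First, I would introduce the \emph{skeleton} $S$ of the construction, namely the subgraph obtained by deleting the $G_A$-edges between $K^0_1, K^1_1$ and the $G_B$-edges between $K^0_{2k}, K^1_{2k}$, and observe that $S$ depends on neither $A$ nor $B$. Define
\[
V_L = \{u : d_S(u, K^0_{2k}\cup K^1_{2k}) \geq k\},\qquad V_R = \{u : d_S(u, K^0_1\cup K^1_1) \geq k\}.
\]
Any $S$-path from $K^0_1 \cup K^1_1$ to $K^0_{2k} \cup K^1_{2k}$ must cross the $2k-1$ columns of cliques, so $d_S(K^0_1 \cup K^1_1,\, K^0_{2k} \cup K^1_{2k}) = 2k-1$, and the triangle inequality then forces $V_L \cup V_R = V$ (otherwise one would obtain $2k-1 \leq 2(k-1)$). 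A short case check also places the pending vertices correctly: one attached to $K^a_j$ at depth $r$ satisfies $d_S(\cdot, K^0_{2k}\cup K^1_{2k}) = r + (2k-j)$, so it lies in $V_L$ when $j \leq k$ and in $V_R$ when $j \geq k+1$.

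The heart of the argument is that for every $u \in V_L$, the radius-$k$ view of $u$ in $G_{k,n}(A, \overline{B})$ equipped with $c$ is \emph{literally equal} to its view in $G_{k,n}(A, \overline{A})$ equipped with $c_A$. The two graphs differ only on edges between $K^0_{2k}$ and $K^1_{2k}$, and a small case analysis shows that neither those ``right-end'' edges (both of whose endpoints already lie in the target set) nor the $G_A$-edges (which only shortcut within column~$1$) can reduce the actual graph distance from $u$ to $K^0_{2k}\cup K^1_{2k}$ below $d_S(u, K^0_{2k}\cup K^1_{2k}) \geq k$. Consequently every endpoint of a $G_{\overline{A}}$- or $G_{\overline{B}}$-edge that appears in the radius-$k$ ball of $u$ lies at distance exactly $k$ from $u$; by the view definition, such an edge is invisible. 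The remaining edges and the certificates agree verbatim (since $c = c_A$), so the two views coincide, and the soundness of $c_A$ forces $u$ to accept. A symmetric argument with $G_{k,n}(B, \overline{B})$ and $c_B$ handles $u \in V_R$, so every vertex of $G_{k,n}(A, \overline{B})$ accepts under $c$, the desired contradiction. The main obstacle I anticipate is the shortcut-freeness verification --- confirming that no shortest $u$-to-$K^0_{2k} \cup K^1_{2k}$ path in the full graph ever exploits a $G_A$- or $G_{B'}$-edge --- which calls for a careful walk through the matchings, antimatchings and pending subtrees, but is ultimately intuitive.
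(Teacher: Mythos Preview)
Your proof is correct and follows essentially the same hybrid argument as the paper: pick a bad hybrid graph (you use $G_{k,n}(A,\overline{B})$, the paper uses the symmetric $G_{k,n}(B,\overline{A})$), split $V$ into a left half and a right half, and argue that each half has the same view as in one of the two accepted instances. Your treatment is slightly more formal than the paper's --- you introduce the skeleton $S$ and define $V_L,V_R$ via $S$-distances, whereas the paper simply partitions by column index $j\le k$ versus $j\ge k+1$ and asserts the distance-$k$ claim directly --- but the underlying idea and the column-counting justification are identical.
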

    \begin{subproof}
    Suppose that there are two different sets $A$ and $B$ such that $c_A = c_B$.
    Since $A\neq B$, we have either 
    $E(G_A) \nsubseteq E(G_B)$ or $E(G_{\bar{A}}) \nsubseteq E(G_{\bar{B}})$ or both.
    By symmetry, assume that $E(G_A) \nsubseteq E(G_B)$. Then, the graph $G_{k,n}(B,\overline{A})$ is accepted with $c_A$. 
    Indeed, the vertices in the cliques $K^i_j$ for $i \in \{0,1\}$ and $j \in \{1, \ldots, k\}$, or pending from these cliques have the same view as their view in $G_{k,n}(B,\overline{B})$ since the only part of the graph which has changed between $G_{k,n}(B,\overline{A})$ and $G_{k,n}(B, \overline{B})$ is the bipartite graph between $K^0_{2k}$ and $K^1_{2k}$, which is at distance at least $k$ from them. Thus, these vertices accept in $G_{k,n}(B,\overline{A})$. Similarly, $w$ and the vertices in the cliques $K^i_j$ for $i \in \{0,1\}$ and $j \in \{k+1, \ldots, 2k\}$ or pending from them have the same view in $G_{k,n}(B,\overline{A})$ and $G_{k,n}(A,\overline{A})$, so they accept in the graph $G_{k,n}(B,\overline{A})$ as well. 
    However, since $A\neq B$, the graphs $G_B$ and $G_{\overline{A}}$ have a common non-edge, hence $G_{k,n}(B,\overline{A})$ is not $T$-free (by Proposition~\ref{prop:common non-edge}), which is a contradiction.
    \end{subproof}
    
    There are $2^{\frac{n(n-1)}{2}}$ such sets $A,B$, and there are at most $2^{m(4kn+|T|)}$ functions $V \rightarrow \{0, \ldots, 2^m-1\}$. 
    Thus, by \cref{4k-distinct}, we get $2^{\frac{n(n-1)}{2}} \leqslant 2^{m(4kn+|T|)}$, which finally gives us $m = \Omega(\frac{n}{k})$.
    \end{proof}

\subsection{Proof of Proposition~\ref{prop:common non-edge}}
\label{sec:common_non_edge}

    Let us now turn to the proof of Proposition~\ref{prop:common non-edge}. 

    Recall that for each set $A$, the graph $G_A$ satisfies the two following properties:
    \begin{enumerate}[(i)]
        \item For all $(i,j) \in \{1, \ldots, n\}^2$, $ij' \in E(G_A)$ if and only if $ji'\in E(G_A)$ \quad ($G_A$ is symmetric).
        \item For all $i \in \{1, \ldots, n\}$, $ii'\in E(G_A)$ \quad ($G_A$ is reflexive).
    \end{enumerate}

    We will need the following result about the shape of long paths in $G_{k,n}(A,B)$.
    For succinctness, we will call $K^0_1, \ldots K^0_{2k}, K^1_1, \ldots K^1_{2k}$ the \emph{cliques (of $G_{k,n}(A,B)$)} throughout this section. 
        \begin{lemma}
            \label{lem:<4k vertices in the cliques}
            Let $T$ be an induced tree in $G_{k,n}(A,B)$. Then, $T$ has at most $4k$ vertices in the cliques.
        \end{lemma}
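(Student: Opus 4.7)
The plan is to partition the $2k$ columns into the consecutive pairs $(1,2),(3,4),\dots,(2k-1,2k)$ and show that each pair contributes at most $4$ clique-vertices to $T$. Setting $n_j := |V(T) \cap (K^0_j \cup K^1_j)|$, the total is $\sum_{j=1}^{2k} n_j$, and it suffices to prove that $n_j + n_{j+1} \leq 4$ for every $j \in \{1,\dots,2k-1\}$: summing over the $k$ pairs then gives $\sum_j n_j \leq 4k$. Note that $n_j \leq 4$ trivially, since three vertices of a single clique $K^b_j$ would induce a triangle in $T$.

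The fundamental tool to establish $n_j + n_{j+1} \leq 4$ is the antimatching structure between adjacent columns: for any $b, b' \in \{0,1\}$, vertex $K^b_j[i]$ is adjacent to $K^{b'}_{j+1}[i']$ in $G$ if and only if $i \neq i'$. Combined with triangle-freeness of $T$, this forces strong row restrictions on any column-$(j+1)$ vertex that must coexist with several column-$j$ vertices in $V(T)$. If $n_j \leq 2$ and $n_{j+1} \leq 2$, the inequality is immediate; by symmetry in the roles of $j$ and $j+1$, I may therefore assume $n_j \in \{3,4\}$.

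Suppose $n_j = 3$. Up to swapping the two cliques, write the three vertices as $u_1 = K^0_j[a_1]$, $u_2 = K^0_j[a_2]$, $x = K^1_j[c]$, where the clique edge $u_1 u_2$ lies in $T$. For any $y = K^{b''}_{j+1}[r] \in V(T)$, avoiding the triangle $y u_1 u_2$ forces $r \in \{a_1, a_2\}$. A short case analysis on the edges among $\{u_1, u_2, x\}$ (which must induce a forest, so one or two edges) then shows that either $x$ is adjacent to every admissible $y$, in which case two such $y_1, y_2$ must satisfy $y_1 y_2 \notin E$ to avoid the triangle $x y_1 y_2$, but then $u_1, u_2, y_1, x, y_2$ form an induced $5$-cycle in $T$; or else $r = a_2$ would create a triangle with the extra edge $u_1 x$ (or $u_2 x$), so only $r = a_1$ is admissible, restricting $y$ to the two vertices of row $a_1$, which are themselves joined by a matching edge, hence only one can lie in $V(T)$. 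Either way $n_{j+1} \leq 1$, and so $n_j + n_{j+1} \leq 4$.

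Finally, suppose $n_j = 4$ with $u_1, u_2 \in K^0_j$ and $v_1, v_2 \in K^1_j$. Since the induced subgraph on these four vertices must be a forest and already contains the clique edges $u_1 u_2$ and $v_1 v_2$, it has at most one inter-clique edge, so the four vertices use either four distinct rows (two disjoint edges) or three distinct rows (an induced $P_4$). For any candidate $y = K^{b''}_{j+1}[r]$, enumerating $r$ against these rows shows that $y$ either creates a triangle with one of the two forced clique edges or closes a $5$-cycle with the induced $P_4$; hence $n_{j+1} = 0$, giving again $n_j + n_{j+1} \leq 4$. The main obstacle is this enumeration, together with checking that the argument transfers to the boundary columns $j \in \{1, 2k\}$ where the intra-column bipartite graph is $G_A$ or $G_B$ rather than a matching; since these graphs contain the matching, they only add edges to the induced subgraph on $\{u_1, u_2, x\}$ or $\{u_1, u_2, v_1, v_2\}$, which can only tighten the constraints above, so the same conclusions hold.
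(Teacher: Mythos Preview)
Your approach is correct and essentially mirrors the paper's: both arguments show that any block of four cliques spanning two adjacent columns can contain at most four vertices of $T$, by using the antimatching structure to force the column-$(j{+}1)$ vertices into rows $\{a_1,a_2\}$ and then exhibiting a triangle or an induced $5$-cycle whenever a fifth vertex appears. The only difference is organizational: the paper first isolates the single-column bound $n_j\le 3$ as a separate claim (whose proof borrows a vertex from an adjacent column via connectivity), whereas you bypass that lemma by handling the case $n_j=4$ directly and showing $n_{j+1}=0$.
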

        
        \begin{proof}[Proof of Lemma~\ref{lem:<4k vertices in the cliques}.]
            The main ingredient of the proof of Lemma~\ref{lem:<4k vertices in the cliques} is the following Claim~\ref{claim:properties of P}. 
            In the following, two cliques are said to be \emph{antimatched} if there is an antimatching between them (that is, between the first and the second clique, there is a complete bipartite graph except for $n$ independent non-edges).
            \begin{claim}
            \label{claim:properties of P}
            $T$ has at most two vertices in each clique and
            \begin{enumerate}[(i)]
                \item If $T$ has two vertices in a clique $K$, then for all the cliques $K'$ antimatched with $K$, $T$ has at most one vertex in $K'$, and 
                \label{pathclaim:antimatched}
                \item If $T$ has two vertices in a clique $K^i_j$, then $T$ has at most one vertex in $K^{1-i}_j$.\label{pathclaim:before}
            \end{enumerate}
            \end{claim}
            \begin{proof}
                Since $T$ is triangle-free, it contains at most two vertices in each clique.
                Suppose there is a clique $K:=K^i_j$ and two distinct integers $x,y \in \{1,2, \dots, n\}$ such that the vertices $K[x]$ and $K[y]$ are both in $T$.
                We prove \ref{pathclaim:antimatched} and \ref{pathclaim:before} separately:                
               \begin{itemize}
                \item For (\ref{pathclaim:antimatched}), 
                let $K'$ be a clique antimatched to $K$.
                For each $z \in \{1,2, \dots, n\}$, $K'[z]$ is adjacent to $K[x]$ (if $z \neq x$) or to $K[y]$ (if $z \neq y$).
                Thus, if there were at least two vertices in $T \cap K'$, $T$ would have either a $C_3$ or a $C_4$ as a subgraph, which is a contradiction with $T$ being a tree.
                \item We now prove property (\ref{pathclaim:before}).
                Let $K=K^i_j$ and $K'=K^{1-i}_j$. 
                Assume for a contradiction that there exist two distinct $s,t  \in \{1,2, \dots, n\}$ such that $K'[s]$ and $K'[t]$ are both in $T$. Note that since $T$ is triangle-free, if $j=1$ then $T$ cannot contain $v_0$ nor $v_1$, and if $j=2k$, then $T$ cannot contain $w$. Therefore, since $T$ has at least $5$ vertices,  
                it follows that there is some $K'' \in \{K^0_{j-1}, K^1_{j-1}, K^0_{j+1}, K^1_{j+1}\}$ such that $T$ contains $K''[z]$ for some $z \in \{1,2, \dots, n\}$.
                We have $x = z$ or $y = z$ otherwise, $K[x],K[y],K''[z]$ would be a triangle in $T$. Similarly, we have $s = z$ or $t = z$.
                Without loss of generality, we may assume that $x =s = z$.
                But then $K[z] \dd K[y] \dd K''[z] \dd K'[t] \dd K'[z] \dd K[z]$ is a $C_5$ subgraph with vertex set in $V(P)$, see Figure~\ref{fig:(iii)} for an illustration. (Note that the edge $K[z]\dd K'[z]$ arises even when $j\in\{1,2k\}$ since $H$ and $H'$ are reflexive.)
                This is a contradiction with the fact that $T$ is an induced tree. \qedhere
            \end{itemize}
            \begin{figure}[!ht]
                    \centering
                    \scalebox{0.8}{
                    \begin{tikzpicture}[x=0.75pt,y=0.75pt,yscale=-1,xscale=1]

\draw [color={rgb, 255:red, 155; green, 155; blue, 155 }  ,draw opacity=1 ] [dash pattern={on 4.5pt off 4.5pt}]  (228,231) -- (319,132) ;
\draw [color={rgb, 255:red, 155; green, 155; blue, 155 }  ,draw opacity=1 ] [dash pattern={on 4.5pt off 4.5pt}]  (229,253) -- (318,159) ;
\draw [color={rgb, 255:red, 155; green, 155; blue, 155 }  ,draw opacity=1 ] [dash pattern={on 4.5pt off 4.5pt}]  (222,149) -- (321,149) ;
\draw [color={rgb, 255:red, 155; green, 155; blue, 155 }  ,draw opacity=1 ] [dash pattern={on 4.5pt off 4.5pt}]  (222,116) -- (321,116) ;
\draw  [fill={rgb, 255:red, 255; green, 255; blue, 255 }  ,fill opacity=1 ] (200,134.25) .. controls (200,114.51) and (210.52,98.5) .. (223.5,98.5) .. controls (236.48,98.5) and (247,114.51) .. (247,134.25) .. controls (247,153.99) and (236.48,170) .. (223.5,170) .. controls (210.52,170) and (200,153.99) .. (200,134.25) -- cycle ;
\draw  [fill={rgb, 255:red, 255; green, 255; blue, 255 }  ,fill opacity=1 ] (200,254.25) .. controls (200,234.51) and (210.52,218.5) .. (223.5,218.5) .. controls (236.48,218.5) and (247,234.51) .. (247,254.25) .. controls (247,273.99) and (236.48,290) .. (223.5,290) .. controls (210.52,290) and (200,273.99) .. (200,254.25) -- cycle ;
\draw  [fill={rgb, 255:red, 255; green, 255; blue, 255 }  ,fill opacity=1 ] (300,134.25) .. controls (300,114.51) and (310.52,98.5) .. (323.5,98.5) .. controls (336.48,98.5) and (347,114.51) .. (347,134.25) .. controls (347,153.99) and (336.48,170) .. (323.5,170) .. controls (310.52,170) and (300,153.99) .. (300,134.25) -- cycle ;
\draw  [color={rgb, 255:red, 208; green, 2; blue, 27 }  ,draw opacity=1 ][fill={rgb, 255:red, 208; green, 2; blue, 27 }  ,fill opacity=1 ] (217.75,120.25) .. controls (217.75,117.9) and (219.65,116) .. (222,116) .. controls (224.35,116) and (226.25,117.9) .. (226.25,120.25) .. controls (226.25,122.6) and (224.35,124.5) .. (222,124.5) .. controls (219.65,124.5) and (217.75,122.6) .. (217.75,120.25) -- cycle ;
\draw [color={rgb, 255:red, 155; green, 155; blue, 155 }  ,draw opacity=1 ]   (225,170) -- (225,218.5)(222,170) -- (222,218.5) ;
\draw [color={rgb, 255:red, 208; green, 2; blue, 27 }  ,draw opacity=1 ]   (222,120.25) -- (323.5,130) ;
\draw [color={rgb, 255:red, 208; green, 2; blue, 27 }  ,draw opacity=1 ]   (223.5,268.25) -- (323.5,130) ;
\draw [color={rgb, 255:red, 208; green, 2; blue, 27 }  ,draw opacity=1 ]   (222,120.25) -- (222,149) ;
\draw [color={rgb, 255:red, 208; green, 2; blue, 27 }  ,draw opacity=1 ]   (222.5,240) .. controls (203.5,215) and (205.5,175) .. (222,149) ;
\draw [color={rgb, 255:red, 208; green, 2; blue, 27 }  ,draw opacity=1 ]   (223.5,268.25) -- (222.5,240) ;
\draw  [color={rgb, 255:red, 208; green, 2; blue, 27 }  ,draw opacity=1 ][fill={rgb, 255:red, 208; green, 2; blue, 27 }  ,fill opacity=1 ] (217.75,149) .. controls (217.75,146.65) and (219.65,144.75) .. (222,144.75) .. controls (224.35,144.75) and (226.25,146.65) .. (226.25,149) .. controls (226.25,151.35) and (224.35,153.25) .. (222,153.25) .. controls (219.65,153.25) and (217.75,151.35) .. (217.75,149) -- cycle ;
\draw  [color={rgb, 255:red, 208; green, 2; blue, 27 }  ,draw opacity=1 ][fill={rgb, 255:red, 208; green, 2; blue, 27 }  ,fill opacity=1 ] (319.25,130) .. controls (319.25,127.65) and (321.15,125.75) .. (323.5,125.75) .. controls (325.85,125.75) and (327.75,127.65) .. (327.75,130) .. controls (327.75,132.35) and (325.85,134.25) .. (323.5,134.25) .. controls (321.15,134.25) and (319.25,132.35) .. (319.25,130) -- cycle ;
\draw  [color={rgb, 255:red, 208; green, 2; blue, 27 }  ,draw opacity=1 ][fill={rgb, 255:red, 208; green, 2; blue, 27 }  ,fill opacity=1 ] (219.25,268.25) .. controls (219.25,265.9) and (221.15,264) .. (223.5,264) .. controls (225.85,264) and (227.75,265.9) .. (227.75,268.25) .. controls (227.75,270.6) and (225.85,272.5) .. (223.5,272.5) .. controls (221.15,272.5) and (219.25,270.6) .. (219.25,268.25) -- cycle ;
\draw  [color={rgb, 255:red, 208; green, 2; blue, 27 }  ,draw opacity=1 ][fill={rgb, 255:red, 208; green, 2; blue, 27 }  ,fill opacity=1 ] (218.25,240) .. controls (218.25,237.65) and (220.15,235.75) .. (222.5,235.75) .. controls (224.85,235.75) and (226.75,237.65) .. (226.75,240) .. controls (226.75,242.35) and (224.85,244.25) .. (222.5,244.25) .. controls (220.15,244.25) and (218.25,242.35) .. (218.25,240) -- cycle ;

\draw (231,142) node [anchor=north west][inner sep=0.75pt]   [align=left] {$z$};
\draw (329.75,133) node [anchor=north west][inner sep=0.75pt]   [align=left] {$z$};
\draw (227,229) node [anchor=north west][inner sep=0.75pt]   [align=left] {$z$};
\draw (215,71) node [anchor=north west][inner sep=0.75pt]   [align=left] {{\large $K$}};
\draw (215,297) node [anchor=north west][inner sep=0.75pt]   [align=left] {{\large $K'$}};
\draw (357,119) node [anchor=north west][inner sep=0.75pt]   [align=left] {{\large $K''$}};

\end{tikzpicture}
                    }
                    \caption{The case described in \ref{pathclaim:before} of Claim~\ref{claim:properties of P}.}
                    \label{fig:(iii)}
                \end{figure}
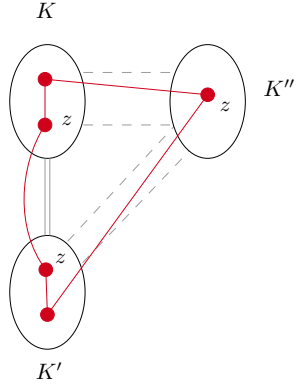
            \end{proof}

            We are now ready to conclude the proof of Lemma~\ref{lem:<4k vertices in the cliques}.
            Assume by contradiction that $T$ has at least $4k+1$ vertices in the cliques. 
            By the pigeonhole principle, there exists $j \in \{1, \ldots, k\}$ such that $T$ has at least 5 vertices in the cliques $K:=K^0_{2j-1},K':=K^1_{2j-1},K'':=K^0_{2j},K''':=K^1_{2j}$. 
            Again by the pigeonhole principle, $T$ has at least 2 vertices in one of these four cliques, and by Claim~\ref{claim:properties of P}, exactly 2.
            By symmetry, assume that $T$ has two vertices in $K$ (see Figure~\ref{fig:cases (iv)} for an illustration).
            Then, by \ref{pathclaim:antimatched} and \ref{pathclaim:before} of Claim~\ref{claim:properties of P}, $T$ has at most one vertex in each of the cliques $K',K'',K'''$, so exactly one in each (because $T$ contains five vertices in the union of these four cliques).
            Let us denote by $K[x],K[y]$ these two vertices in $K \cap T$, and by $K'[x'],K''[x''],K'''[x''']$ the vertices of $T$ in $K',K'',K'''$ respectively.
            Since $T$ is an induced tree, we have $x''=x$ or $x''=y$ (else, $\{x,y,x''\}$ would be a triangle, since $K''$ is antimatched with $K$).
            By symmetry, we can assume that $x''=x$.
            Similarly, we have $x''' = x$ or $x''' = y$. In the former case, $K[y]\dd K''[x]\dd K'''[x] \dd K[y]$ is a triangle in $T$. Hence, we have $x'''=y$.
                
                Finally, there are two cases, depicted in Figure~\ref{fig:cases (iv)}, which are the following:
                \begin{itemize}
                    \item If $x'\in\{x,y\}$, assume by symmetry that $x'=x$. Then, $K[x]\dd K'[x] \dd K'''[y] \dd K[x]$ is a triangle in $T$, a contradiction.
                    \item If $x'\notin\{x,y\}$, then $T$ has a $C_5$ as a subgraph. This is a contradiction.\qedhere
                \end{itemize}
                \begin{figure}[!ht]
                    \centering

                    \begin{subfigure}[t]{0.2\textwidth}
                        \scalebox{0.8}{
                        \begin{tikzpicture}[x=0.75pt,y=0.75pt,yscale=-1,xscale=1]

\draw [color={rgb, 255:red, 155; green, 155; blue, 155 }  ,draw opacity=1 ] [dash pattern={on 4.5pt off 4.5pt}]  (228,231) -- (319,132) ;
\draw [color={rgb, 255:red, 155; green, 155; blue, 155 }  ,draw opacity=1 ] [dash pattern={on 4.5pt off 4.5pt}]  (229,253) -- (318,159) ;
\draw [color={rgb, 255:red, 155; green, 155; blue, 155 }  ,draw opacity=1 ] [dash pattern={on 4.5pt off 4.5pt}]  (229,134) -- (328,235) ;
\draw [color={rgb, 255:red, 155; green, 155; blue, 155 }  ,draw opacity=1 ] [dash pattern={on 4.5pt off 4.5pt}]  (222,149) -- (323.5,254.25) ;
\draw [color={rgb, 255:red, 155; green, 155; blue, 155 }  ,draw opacity=1 ] [dash pattern={on 4.5pt off 4.5pt}]  (222,236) -- (321,236) ;
\draw [color={rgb, 255:red, 155; green, 155; blue, 155 }  ,draw opacity=1 ] [dash pattern={on 4.5pt off 4.5pt}]  (222,269) -- (321,269) ;
\draw [color={rgb, 255:red, 155; green, 155; blue, 155 }  ,draw opacity=1 ] [dash pattern={on 4.5pt off 4.5pt}]  (222,149) -- (321,149) ;
\draw [color={rgb, 255:red, 155; green, 155; blue, 155 }  ,draw opacity=1 ] [dash pattern={on 4.5pt off 4.5pt}]  (222,116) -- (321,116) ;
\draw  [fill={rgb, 255:red, 255; green, 255; blue, 255 }  ,fill opacity=1 ] (200,134.25) .. controls (200,114.51) and (210.52,98.5) .. (223.5,98.5) .. controls (236.48,98.5) and (247,114.51) .. (247,134.25) .. controls (247,153.99) and (236.48,170) .. (223.5,170) .. controls (210.52,170) and (200,153.99) .. (200,134.25) -- cycle ;
\draw  [fill={rgb, 255:red, 255; green, 255; blue, 255 }  ,fill opacity=1 ] (200,254.25) .. controls (200,234.51) and (210.52,218.5) .. (223.5,218.5) .. controls (236.48,218.5) and (247,234.51) .. (247,254.25) .. controls (247,273.99) and (236.48,290) .. (223.5,290) .. controls (210.52,290) and (200,273.99) .. (200,254.25) -- cycle ;
\draw  [fill={rgb, 255:red, 255; green, 255; blue, 255 }  ,fill opacity=1 ] (300,134.25) .. controls (300,114.51) and (310.52,98.5) .. (323.5,98.5) .. controls (336.48,98.5) and (347,114.51) .. (347,134.25) .. controls (347,153.99) and (336.48,170) .. (323.5,170) .. controls (310.52,170) and (300,153.99) .. (300,134.25) -- cycle ;
\draw  [fill={rgb, 255:red, 255; green, 255; blue, 255 }  ,fill opacity=1 ] (300,254.25) .. controls (300,234.51) and (310.52,218.5) .. (323.5,218.5) .. controls (336.48,218.5) and (347,234.51) .. (347,254.25) .. controls (347,273.99) and (336.48,290) .. (323.5,290) .. controls (310.52,290) and (300,273.99) .. (300,254.25) -- cycle ;
\draw  [color={rgb, 255:red, 208; green, 2; blue, 27 }  ,draw opacity=1 ][fill={rgb, 255:red, 208; green, 2; blue, 27 }  ,fill opacity=1 ] (227,118) .. controls (227,115.65) and (228.9,113.75) .. (231.25,113.75) .. controls (233.6,113.75) and (235.5,115.65) .. (235.5,118) .. controls (235.5,120.35) and (233.6,122.25) .. (231.25,122.25) .. controls (228.9,122.25) and (227,120.35) .. (227,118) -- cycle ;
\draw  [color={rgb, 255:red, 208; green, 2; blue, 27 }  ,draw opacity=1 ][fill={rgb, 255:red, 208; green, 2; blue, 27 }  ,fill opacity=1 ] (207.25,147.25) .. controls (207.25,144.9) and (209.15,143) .. (211.5,143) .. controls (213.85,143) and (215.75,144.9) .. (215.75,147.25) .. controls (215.75,149.6) and (213.85,151.5) .. (211.5,151.5) .. controls (209.15,151.5) and (207.25,149.6) .. (207.25,147.25) -- cycle ;
\draw  [color={rgb, 255:red, 208; green, 2; blue, 27 }  ,draw opacity=1 ][fill={rgb, 255:red, 208; green, 2; blue, 27 }  ,fill opacity=1 ] (319.25,254.25) .. controls (319.25,251.9) and (321.15,250) .. (323.5,250) .. controls (325.85,250) and (327.75,251.9) .. (327.75,254.25) .. controls (327.75,256.6) and (325.85,258.5) .. (323.5,258.5) .. controls (321.15,258.5) and (319.25,256.6) .. (319.25,254.25) -- cycle ;
\draw  [color={rgb, 255:red, 208; green, 2; blue, 27 }  ,draw opacity=1 ][fill={rgb, 255:red, 208; green, 2; blue, 27 }  ,fill opacity=1 ] (219.25,254.25) .. controls (219.25,251.9) and (221.15,250) .. (223.5,250) .. controls (225.85,250) and (227.75,251.9) .. (227.75,254.25) .. controls (227.75,256.6) and (225.85,258.5) .. (223.5,258.5) .. controls (221.15,258.5) and (219.25,256.6) .. (219.25,254.25) -- cycle ;
\draw [color={rgb, 255:red, 155; green, 155; blue, 155 }  ,draw opacity=1 ]   (225,170) -- (225,218.5)(222,170) -- (222,218.5) ;
\draw [color={rgb, 255:red, 155; green, 155; blue, 155 }  ,draw opacity=1 ]   (325,170) -- (325,218.5)(322,170) -- (322,218.5) ;
\draw  [color={rgb, 255:red, 208; green, 2; blue, 27 }  ,draw opacity=1 ][fill={rgb, 255:red, 208; green, 2; blue, 27 }  ,fill opacity=1 ] (319.25,134.25) .. controls (319.25,131.9) and (321.15,130) .. (323.5,130) .. controls (325.85,130) and (327.75,131.9) .. (327.75,134.25) .. controls (327.75,136.6) and (325.85,138.5) .. (323.5,138.5) .. controls (321.15,138.5) and (319.25,136.6) .. (319.25,134.25) -- cycle ;
\draw [color={rgb, 255:red, 208; green, 2; blue, 27 }  ,draw opacity=1 ]   (211.5,147.25) -- (231.25,118) ;
\draw [color={rgb, 255:red, 208; green, 2; blue, 27 }  ,draw opacity=1 ]   (231.25,118) -- (323.5,134.25) ;
\draw [color={rgb, 255:red, 208; green, 2; blue, 27 }  ,draw opacity=1 ]   (211.5,147.25) -- (323.5,254.25) ;
\draw [color={rgb, 255:red, 208; green, 2; blue, 27 }  ,draw opacity=1 ]   (211.5,147.25) -- (223.5,254.25) ;
\draw [color={rgb, 255:red, 208; green, 2; blue, 27 }  ,draw opacity=1 ]   (323.5,254.25) -- (223.5,254.25) ;

\draw (323,113) node [anchor=north west][inner sep=0.75pt]   [align=left] {\textcolor[rgb]{0.82,0.01,0.11}{$x$}};
\draw (329.75,257.25) node [anchor=north west][inner sep=0.75pt]   [align=left] {\textcolor[rgb]{0.82,0.01,0.11}{$y$}};
\draw (223,136.5) node [anchor=north west][inner sep=0.75pt]   [align=left] {\textcolor[rgb]{0.82,0.01,0.11}{$x$}};
\draw (213,104) node [anchor=north west][inner sep=0.75pt]   [align=left] {\textcolor[rgb]{0.82,0.01,0.11}{$y$}};
\draw (215,261) node [anchor=north west][inner sep=0.75pt]   [align=left] {\textcolor[rgb]{0.82,0.01,0.11}{$x$}};
\draw (216,70) node [anchor=north west][inner sep=0.75pt]   [align=left] {{\large $K$}};
\draw (215,296) node [anchor=north west][inner sep=0.75pt]   [align=left] {{\large $K'$}};
\draw (317,71) node [anchor=north west][inner sep=0.75pt]   [align=left] {{\large $K''$}};
\draw (315,296) node [anchor=north west][inner sep=0.75pt]   [align=left] {{\large $K'''$}};

\end{tikzpicture}
                        }
                        \caption{The first case\\ ($x' \in \{x,y\}$).}
                    \end{subfigure}
                    \hspace{2cm}
                    \begin{subfigure}[t]{0.2\textwidth}
                        \scalebox{0.8}{
                        \begin{tikzpicture}[x=0.75pt,y=0.75pt,yscale=-1,xscale=1]

\draw [color={rgb, 255:red, 155; green, 155; blue, 155 }  ,draw opacity=1 ] [dash pattern={on 4.5pt off 4.5pt}]  (228,231) -- (319,132) ;
\draw [color={rgb, 255:red, 155; green, 155; blue, 155 }  ,draw opacity=1 ] [dash pattern={on 4.5pt off 4.5pt}]  (229,253) -- (318,159) ;
\draw [color={rgb, 255:red, 155; green, 155; blue, 155 }  ,draw opacity=1 ] [dash pattern={on 4.5pt off 4.5pt}]  (229,134) -- (328,235) ;
\draw [color={rgb, 255:red, 155; green, 155; blue, 155 }  ,draw opacity=1 ] [dash pattern={on 4.5pt off 4.5pt}]  (222,149) -- (323.5,254.25) ;
\draw [color={rgb, 255:red, 155; green, 155; blue, 155 }  ,draw opacity=1 ] [dash pattern={on 4.5pt off 4.5pt}]  (222,236) -- (321,236) ;
\draw [color={rgb, 255:red, 155; green, 155; blue, 155 }  ,draw opacity=1 ] [dash pattern={on 4.5pt off 4.5pt}]  (222,269) -- (321,269) ;
\draw [color={rgb, 255:red, 155; green, 155; blue, 155 }  ,draw opacity=1 ] [dash pattern={on 4.5pt off 4.5pt}]  (222,149) -- (321,149) ;
\draw [color={rgb, 255:red, 155; green, 155; blue, 155 }  ,draw opacity=1 ] [dash pattern={on 4.5pt off 4.5pt}]  (222,116) -- (321,116) ;
\draw  [fill={rgb, 255:red, 255; green, 255; blue, 255 }  ,fill opacity=1 ] (200,134.25) .. controls (200,114.51) and (210.52,98.5) .. (223.5,98.5) .. controls (236.48,98.5) and (247,114.51) .. (247,134.25) .. controls (247,153.99) and (236.48,170) .. (223.5,170) .. controls (210.52,170) and (200,153.99) .. (200,134.25) -- cycle ;
\draw  [fill={rgb, 255:red, 255; green, 255; blue, 255 }  ,fill opacity=1 ] (200,254.25) .. controls (200,234.51) and (210.52,218.5) .. (223.5,218.5) .. controls (236.48,218.5) and (247,234.51) .. (247,254.25) .. controls (247,273.99) and (236.48,290) .. (223.5,290) .. controls (210.52,290) and (200,273.99) .. (200,254.25) -- cycle ;
\draw  [fill={rgb, 255:red, 255; green, 255; blue, 255 }  ,fill opacity=1 ] (300,134.25) .. controls (300,114.51) and (310.52,98.5) .. (323.5,98.5) .. controls (336.48,98.5) and (347,114.51) .. (347,134.25) .. controls (347,153.99) and (336.48,170) .. (323.5,170) .. controls (310.52,170) and (300,153.99) .. (300,134.25) -- cycle ;
\draw  [fill={rgb, 255:red, 255; green, 255; blue, 255 }  ,fill opacity=1 ] (300,254.25) .. controls (300,234.51) and (310.52,218.5) .. (323.5,218.5) .. controls (336.48,218.5) and (347,234.51) .. (347,254.25) .. controls (347,273.99) and (336.48,290) .. (323.5,290) .. controls (310.52,290) and (300,273.99) .. (300,254.25) -- cycle ;
\draw  [color={rgb, 255:red, 208; green, 2; blue, 27 }  ,draw opacity=1 ][fill={rgb, 255:red, 208; green, 2; blue, 27 }  ,fill opacity=1 ] (227,118) .. controls (227,115.65) and (228.9,113.75) .. (231.25,113.75) .. controls (233.6,113.75) and (235.5,115.65) .. (235.5,118) .. controls (235.5,120.35) and (233.6,122.25) .. (231.25,122.25) .. controls (228.9,122.25) and (227,120.35) .. (227,118) -- cycle ;
\draw  [color={rgb, 255:red, 208; green, 2; blue, 27 }  ,draw opacity=1 ][fill={rgb, 255:red, 208; green, 2; blue, 27 }  ,fill opacity=1 ] (207.25,147.25) .. controls (207.25,144.9) and (209.15,143) .. (211.5,143) .. controls (213.85,143) and (215.75,144.9) .. (215.75,147.25) .. controls (215.75,149.6) and (213.85,151.5) .. (211.5,151.5) .. controls (209.15,151.5) and (207.25,149.6) .. (207.25,147.25) -- cycle ;
\draw  [color={rgb, 255:red, 208; green, 2; blue, 27 }  ,draw opacity=1 ][fill={rgb, 255:red, 208; green, 2; blue, 27 }  ,fill opacity=1 ] (319.25,254.25) .. controls (319.25,251.9) and (321.15,250) .. (323.5,250) .. controls (325.85,250) and (327.75,251.9) .. (327.75,254.25) .. controls (327.75,256.6) and (325.85,258.5) .. (323.5,258.5) .. controls (321.15,258.5) and (319.25,256.6) .. (319.25,254.25) -- cycle ;
\draw  [color={rgb, 255:red, 208; green, 2; blue, 27 }  ,draw opacity=1 ][fill={rgb, 255:red, 208; green, 2; blue, 27 }  ,fill opacity=1 ] (219.25,254.25) .. controls (219.25,251.9) and (221.15,250) .. (223.5,250) .. controls (225.85,250) and (227.75,251.9) .. (227.75,254.25) .. controls (227.75,256.6) and (225.85,258.5) .. (223.5,258.5) .. controls (221.15,258.5) and (219.25,256.6) .. (219.25,254.25) -- cycle ;
\draw [color={rgb, 255:red, 155; green, 155; blue, 155 }  ,draw opacity=1 ]   (225,170) -- (225,218.5)(222,170) -- (222,218.5) ;
\draw [color={rgb, 255:red, 155; green, 155; blue, 155 }  ,draw opacity=1 ]   (325,170) -- (325,218.5)(322,170) -- (322,218.5) ;
\draw  [color={rgb, 255:red, 208; green, 2; blue, 27 }  ,draw opacity=1 ][fill={rgb, 255:red, 208; green, 2; blue, 27 }  ,fill opacity=1 ] (319.25,134.25) .. controls (319.25,131.9) and (321.15,130) .. (323.5,130) .. controls (325.85,130) and (327.75,131.9) .. (327.75,134.25) .. controls (327.75,136.6) and (325.85,138.5) .. (323.5,138.5) .. controls (321.15,138.5) and (319.25,136.6) .. (319.25,134.25) -- cycle ;
\draw [color={rgb, 255:red, 208; green, 2; blue, 27 }  ,draw opacity=1 ]   (211.5,147.25) -- (231.25,118) ;
\draw [color={rgb, 255:red, 208; green, 2; blue, 27 }  ,draw opacity=1 ]   (231.25,118) -- (323.5,134.25) ;
\draw [color={rgb, 255:red, 208; green, 2; blue, 27 }  ,draw opacity=1 ]   (211.5,147.25) -- (323.5,254.25) ;
\draw [color={rgb, 255:red, 208; green, 2; blue, 27 }  ,draw opacity=1 ]   (323.5,134.25) -- (223.5,254.25) ;
\draw [color={rgb, 255:red, 208; green, 2; blue, 27 }  ,draw opacity=1 ]   (323.5,254.25) -- (223.5,254.25) ;

\draw (323,113) node [anchor=north west][inner sep=0.75pt]   [align=left] {\textcolor[rgb]{0.82,0.01,0.11}{$x$}};
\draw (329.75,257.25) node [anchor=north west][inner sep=0.75pt]   [align=left] {\textcolor[rgb]{0.82,0.01,0.11}{$y$}};
\draw (223,136.5) node [anchor=north west][inner sep=0.75pt]   [align=left] {\textcolor[rgb]{0.82,0.01,0.11}{$x$}};
\draw (213,104) node [anchor=north west][inner sep=0.75pt]   [align=left] {\textcolor[rgb]{0.82,0.01,0.11}{$y$}};
\draw (215,261) node [anchor=north west][inner sep=0.75pt]   [align=left] {\textcolor[rgb]{0.82,0.01,0.11}{$x'$}};
\draw (216,70) node [anchor=north west][inner sep=0.75pt]   [align=left] {{\large $K$}};
\draw (215,296) node [anchor=north west][inner sep=0.75pt]   [align=left] {{\large $K'$}};
\draw (317,71) node [anchor=north west][inner sep=0.75pt]   [align=left] {{\large $K''$}};
\draw (315,296) node [anchor=north west][inner sep=0.75pt]   [align=left] {{\large $K'''$}};

\end{tikzpicture}
                        }
                        \caption{The second case ($x' \notin \{x,y\}$).}
                    \end{subfigure}
                    \caption{The two cases described at the end of the proof of Lemma~\ref{lem:<4k vertices in the cliques}.}
                    \label{fig:cases (iv)}
                \end{figure}
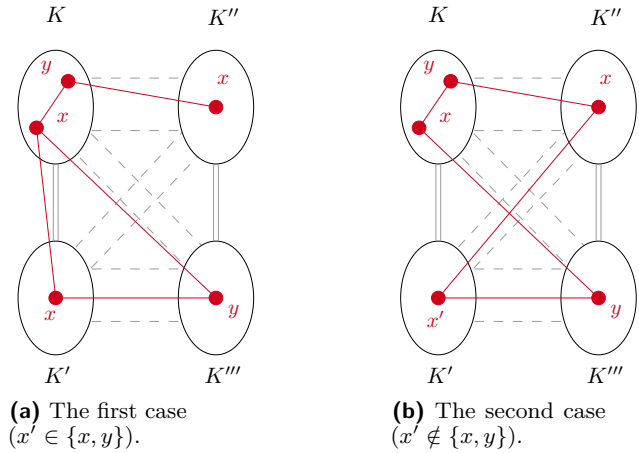
                
        \end{proof}

    We are finally able to prove Proposition~\ref{prop:common non-edge}. We start by the case of paths, and then extend the proof to handle the other case.
    \begin{proof}[Proof of Proposition~\ref{prop:common non-edge} for $P_{4k+3}$.]
        Assume first that $A$ and $B$ are not disjoint and contain a pair $\{i,j\}$. In particular, $G_A$ and $G_B$ have the common non-edges $ij'$ and $i'j$. Then, we can construct an induced path of length $4k+3$ as follows: $v_0,K^0_1[i],K^0_2[j],K^0_3[i], \ldots, K^0_{2k}[j]$ (alternating between $i$ and~$j$), $w$, $K^1_{2k}[i],K^1_{2k-1}[j],K^1_{2k-2}[i], \ldots, K^1_1[j],v_1$. This path is depicted on Figure~\ref{fig:P4k in G if common non edge}. 
        
    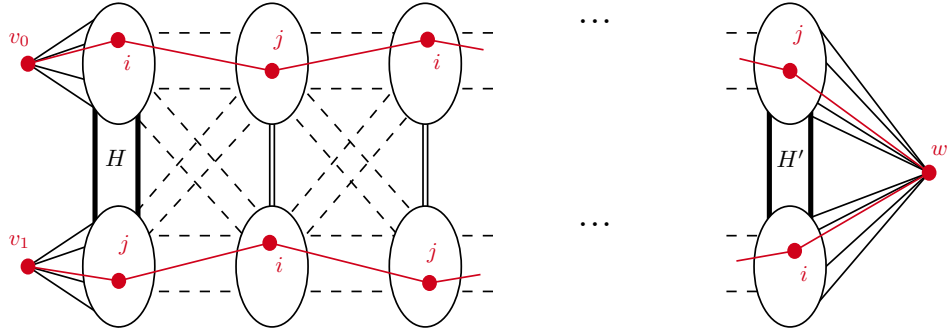
\begin{figure}
        \centering

        \scalebox{0.85}{
        \tikzset{every picture/.style={line width=0.75pt}} 

\begin{tikzpicture}[x=0.75pt,y=0.75pt,yscale=-1,xscale=.9]

\draw  [dash pattern={on 4.5pt off 4.5pt}]  (520,269) -- (567,269) ;
\draw  [dash pattern={on 4.5pt off 4.5pt}]  (520,236) -- (567,236) ;
\draw  [dash pattern={on 4.5pt off 4.5pt}]  (520,149) -- (567,149) ;
\draw  [dash pattern={on 4.5pt off 4.5pt}]  (520,116) -- (567,116) ;
\draw  [line width=2.25]  (548,152.5) -- (575,152.5) -- (575,239) -- (548,239) -- cycle ;
\draw  [line width=2.25]  (108,154.5) -- (136,154.5) -- (136,241) -- (108,241) -- cycle ;
\draw    (652.25,198.75) -- (572,284) ;
\draw    (567,269) -- (652.25,198.75) ;
\draw    (652.25,198.75) -- (557.5,243) ;
\draw    (564,229) -- (652.25,198.75) ;
\draw    (575,107) -- (652.25,198.75) ;
\draw    (568,162) -- (652.25,198.75) ;
\draw    (573,122) -- (652.25,198.75) ;
\draw    (555.5,144) -- (652.25,198.75) ;
\draw  [dash pattern={on 4.5pt off 4.5pt}]  (228,231) -- (319,132) ;
\draw  [dash pattern={on 4.5pt off 4.5pt}]  (229,253) -- (318,159) ;
\draw  [dash pattern={on 4.5pt off 4.5pt}]  (229,134) -- (328,235) ;
\draw  [dash pattern={on 4.5pt off 4.5pt}]  (222,149) -- (323.5,254.25) ;
\draw  [dash pattern={on 4.5pt off 4.5pt}]  (129,133) -- (223,230) ;
\draw  [dash pattern={on 4.5pt off 4.5pt}]  (137,168) -- (223.5,254.25) ;
\draw  [dash pattern={on 4.5pt off 4.5pt}]  (135,222) -- (215,138) ;
\draw  [dash pattern={on 4.5pt off 4.5pt}]  (123.5,254.25) -- (221,158) ;
\draw  [dash pattern={on 4.5pt off 4.5pt}]  (123,236) -- (222,236) ;
\draw  [dash pattern={on 4.5pt off 4.5pt}]  (123,269) -- (222,269) ;
\draw  [dash pattern={on 4.5pt off 4.5pt}]  (222,236) -- (321,236) ;
\draw  [dash pattern={on 4.5pt off 4.5pt}]  (222,269) -- (321,269) ;
\draw  [dash pattern={on 4.5pt off 4.5pt}]  (321,236) -- (368,236) ;
\draw  [dash pattern={on 4.5pt off 4.5pt}]  (321,269) -- (368,269) ;
\draw  [dash pattern={on 4.5pt off 4.5pt}]  (321,149) -- (368,149) ;
\draw  [dash pattern={on 4.5pt off 4.5pt}]  (321,116) -- (368,116) ;
\draw  [dash pattern={on 4.5pt off 4.5pt}]  (222,149) -- (321,149) ;
\draw  [dash pattern={on 4.5pt off 4.5pt}]  (222,116) -- (321,116) ;
\draw  [dash pattern={on 4.5pt off 4.5pt}]  (123,149) -- (222,149) ;
\draw  [dash pattern={on 4.5pt off 4.5pt}]  (123,116) -- (222,116) ;
\draw    (64.25,254.25) -- (123.5,218.5) ;
\draw    (64.25,254.25) -- (123.5,290) ;
\draw    (64.25,254.25) -- (123,236) ;
\draw    (64.25,254.25) -- (123,269) ;
\draw    (64.25,134.25) -- (123.5,98.5) ;
\draw    (64.25,134.25) -- (123.5,170) ;
\draw    (64.25,134.25) -- (123,116) ;
\draw    (64.25,134.25) -- (123,149) ;
\draw  [fill={rgb, 255:red, 255; green, 255; blue, 255 }  ,fill opacity=1 ] (100,254.25) .. controls (100,234.51) and (110.52,218.5) .. (123.5,218.5) .. controls (136.48,218.5) and (147,234.51) .. (147,254.25) .. controls (147,273.99) and (136.48,290) .. (123.5,290) .. controls (110.52,290) and (100,273.99) .. (100,254.25) -- cycle ;
\draw  [fill={rgb, 255:red, 255; green, 255; blue, 255 }  ,fill opacity=1 ] (100,134.25) .. controls (100,114.51) and (110.52,98.5) .. (123.5,98.5) .. controls (136.48,98.5) and (147,114.51) .. (147,134.25) .. controls (147,153.99) and (136.48,170) .. (123.5,170) .. controls (110.52,170) and (100,153.99) .. (100,134.25) -- cycle ;
\draw  [fill={rgb, 255:red, 255; green, 255; blue, 255 }  ,fill opacity=1 ] (200,134.25) .. controls (200,114.51) and (210.52,98.5) .. (223.5,98.5) .. controls (236.48,98.5) and (247,114.51) .. (247,134.25) .. controls (247,153.99) and (236.48,170) .. (223.5,170) .. controls (210.52,170) and (200,153.99) .. (200,134.25) -- cycle ;
\draw  [fill={rgb, 255:red, 255; green, 255; blue, 255 }  ,fill opacity=1 ] (200,254.25) .. controls (200,234.51) and (210.52,218.5) .. (223.5,218.5) .. controls (236.48,218.5) and (247,234.51) .. (247,254.25) .. controls (247,273.99) and (236.48,290) .. (223.5,290) .. controls (210.52,290) and (200,273.99) .. (200,254.25) -- cycle ;
\draw  [fill={rgb, 255:red, 255; green, 255; blue, 255 }  ,fill opacity=1 ] (300,134.25) .. controls (300,114.51) and (310.52,98.5) .. (323.5,98.5) .. controls (336.48,98.5) and (347,114.51) .. (347,134.25) .. controls (347,153.99) and (336.48,170) .. (323.5,170) .. controls (310.52,170) and (300,153.99) .. (300,134.25) -- cycle ;
\draw  [fill={rgb, 255:red, 255; green, 255; blue, 255 }  ,fill opacity=1 ] (300,254.25) .. controls (300,234.51) and (310.52,218.5) .. (323.5,218.5) .. controls (336.48,218.5) and (347,234.51) .. (347,254.25) .. controls (347,273.99) and (336.48,290) .. (323.5,290) .. controls (310.52,290) and (300,273.99) .. (300,254.25) -- cycle ;
\draw  [fill={rgb, 255:red, 255; green, 255; blue, 255 }  ,fill opacity=1 ] (538,134.25) .. controls (538,114.51) and (548.52,98.5) .. (561.5,98.5) .. controls (574.48,98.5) and (585,114.51) .. (585,134.25) .. controls (585,153.99) and (574.48,170) .. (561.5,170) .. controls (548.52,170) and (538,153.99) .. (538,134.25) -- cycle ;
\draw  [fill={rgb, 255:red, 255; green, 255; blue, 255 }  ,fill opacity=1 ] (538,254.25) .. controls (538,234.51) and (548.52,218.5) .. (561.5,218.5) .. controls (574.48,218.5) and (585,234.51) .. (585,254.25) .. controls (585,273.99) and (574.48,290) .. (561.5,290) .. controls (548.52,290) and (538,273.99) .. (538,254.25) -- cycle ;
\draw  [color={rgb, 255:red, 208; green, 2; blue, 27 }  ,draw opacity=1 ][fill={rgb, 255:red, 208; green, 2; blue, 27 }  ,fill opacity=1 ] (60,134.25) .. controls (60,131.9) and (61.9,130) .. (64.25,130) .. controls (66.6,130) and (68.5,131.9) .. (68.5,134.25) .. controls (68.5,136.6) and (66.6,138.5) .. (64.25,138.5) .. controls (61.9,138.5) and (60,136.6) .. (60,134.25) -- cycle ;
\draw  [color={rgb, 255:red, 208; green, 2; blue, 27 }  ,draw opacity=1 ][fill={rgb, 255:red, 208; green, 2; blue, 27 }  ,fill opacity=1 ] (60,254.25) .. controls (60,251.9) and (61.9,250) .. (64.25,250) .. controls (66.6,250) and (68.5,251.9) .. (68.5,254.25) .. controls (68.5,256.6) and (66.6,258.5) .. (64.25,258.5) .. controls (61.9,258.5) and (60,256.6) .. (60,254.25) -- cycle ;
\draw  [color={rgb, 255:red, 208; green, 2; blue, 27 }  ,draw opacity=1 ][fill={rgb, 255:red, 208; green, 2; blue, 27 }  ,fill opacity=1 ] (648,198.75) .. controls (648,196.4) and (649.9,194.5) .. (652.25,194.5) .. controls (654.6,194.5) and (656.5,196.4) .. (656.5,198.75) .. controls (656.5,201.1) and (654.6,203) .. (652.25,203) .. controls (649.9,203) and (648,201.1) .. (648,198.75) -- cycle ;
\draw  [color={rgb, 255:red, 208; green, 2; blue, 27 }  ,draw opacity=1 ][fill={rgb, 255:red, 208; green, 2; blue, 27 }  ,fill opacity=1 ] (118.75,120.25) .. controls (118.75,117.9) and (120.65,116) .. (123,116) .. controls (125.35,116) and (127.25,117.9) .. (127.25,120.25) .. controls (127.25,122.6) and (125.35,124.5) .. (123,124.5) .. controls (120.65,124.5) and (118.75,122.6) .. (118.75,120.25) -- cycle ;
\draw [color={rgb, 255:red, 208; green, 2; blue, 27 }  ,draw opacity=1 ]   (64.25,134.25) -- (123,120.25) ;
\draw [color={rgb, 255:red, 208; green, 2; blue, 27 }  ,draw opacity=1 ]   (123,120.25) -- (223.5,138.5) ;
\draw  [color={rgb, 255:red, 208; green, 2; blue, 27 }  ,draw opacity=1 ][fill={rgb, 255:red, 208; green, 2; blue, 27 }  ,fill opacity=1 ] (219.25,138.5) .. controls (219.25,136.15) and (221.15,134.25) .. (223.5,134.25) .. controls (225.85,134.25) and (227.75,136.15) .. (227.75,138.5) .. controls (227.75,140.85) and (225.85,142.75) .. (223.5,142.75) .. controls (221.15,142.75) and (219.25,140.85) .. (219.25,138.5) -- cycle ;
\draw  [color={rgb, 255:red, 208; green, 2; blue, 27 }  ,draw opacity=1 ][fill={rgb, 255:red, 208; green, 2; blue, 27 }  ,fill opacity=1 ] (320.75,120) .. controls (320.75,117.65) and (322.65,115.75) .. (325,115.75) .. controls (327.35,115.75) and (329.25,117.65) .. (329.25,120) .. controls (329.25,122.35) and (327.35,124.25) .. (325,124.25) .. controls (322.65,124.25) and (320.75,122.35) .. (320.75,120) -- cycle ;
\draw  [color={rgb, 255:red, 208; green, 2; blue, 27 }  ,draw opacity=1 ][fill={rgb, 255:red, 208; green, 2; blue, 27 }  ,fill opacity=1 ] (557.25,138.5) .. controls (557.25,136.15) and (559.15,134.25) .. (561.5,134.25) .. controls (563.85,134.25) and (565.75,136.15) .. (565.75,138.5) .. controls (565.75,140.85) and (563.85,142.75) .. (561.5,142.75) .. controls (559.15,142.75) and (557.25,140.85) .. (557.25,138.5) -- cycle ;
\draw [color={rgb, 255:red, 208; green, 2; blue, 27 }  ,draw opacity=1 ]   (223.5,138.5) -- (325,120) ;
\draw [color={rgb, 255:red, 208; green, 2; blue, 27 }  ,draw opacity=1 ]   (325,120) -- (361.5,126) ;
\draw [color={rgb, 255:red, 208; green, 2; blue, 27 }  ,draw opacity=1 ]   (528.5,131.25) -- (561.5,138.5) ;
\draw [color={rgb, 255:red, 208; green, 2; blue, 27 }  ,draw opacity=1 ][fill={rgb, 255:red, 208; green, 2; blue, 27 }  ,fill opacity=1 ]   (561.5,138.5) -- (652.25,198.75) ;
\draw  [color={rgb, 255:red, 208; green, 2; blue, 27 }  ,draw opacity=1 ][fill={rgb, 255:red, 208; green, 2; blue, 27 }  ,fill opacity=1 ] (560.25,245) .. controls (560.25,242.65) and (562.15,240.75) .. (564.5,240.75) .. controls (566.85,240.75) and (568.75,242.65) .. (568.75,245) .. controls (568.75,247.35) and (566.85,249.25) .. (564.5,249.25) .. controls (562.15,249.25) and (560.25,247.35) .. (560.25,245) -- cycle ;
\draw [color={rgb, 255:red, 208; green, 2; blue, 27 }  ,draw opacity=1 ]   (564.5,245) -- (652.25,198.75) ;
\draw [color={rgb, 255:red, 208; green, 2; blue, 27 }  ,draw opacity=1 ]   (526.5,251) -- (564.5,245) ;
\draw  [color={rgb, 255:red, 208; green, 2; blue, 27 }  ,draw opacity=1 ][fill={rgb, 255:red, 208; green, 2; blue, 27 }  ,fill opacity=1 ] (322,264) .. controls (322,261.65) and (323.9,259.75) .. (326.25,259.75) .. controls (328.6,259.75) and (330.5,261.65) .. (330.5,264) .. controls (330.5,266.35) and (328.6,268.25) .. (326.25,268.25) .. controls (323.9,268.25) and (322,266.35) .. (322,264) -- cycle ;
\draw  [color={rgb, 255:red, 208; green, 2; blue, 27 }  ,draw opacity=1 ][fill={rgb, 255:red, 208; green, 2; blue, 27 }  ,fill opacity=1 ] (217.75,240.25) .. controls (217.75,237.9) and (219.65,236) .. (222,236) .. controls (224.35,236) and (226.25,237.9) .. (226.25,240.25) .. controls (226.25,242.6) and (224.35,244.5) .. (222,244.5) .. controls (219.65,244.5) and (217.75,242.6) .. (217.75,240.25) -- cycle ;
\draw  [color={rgb, 255:red, 208; green, 2; blue, 27 }  ,draw opacity=1 ][fill={rgb, 255:red, 208; green, 2; blue, 27 }  ,fill opacity=1 ] (119.25,262.75) .. controls (119.25,260.4) and (121.15,258.5) .. (123.5,258.5) .. controls (125.85,258.5) and (127.75,260.4) .. (127.75,262.75) .. controls (127.75,265.1) and (125.85,267) .. (123.5,267) .. controls (121.15,267) and (119.25,265.1) .. (119.25,262.75) -- cycle ;
\draw [color={rgb, 255:red, 208; green, 2; blue, 27 }  ,draw opacity=1 ]   (64.25,254.25) -- (123.5,262.75) ;
\draw [color={rgb, 255:red, 208; green, 2; blue, 27 }  ,draw opacity=1 ]   (326.25,264) -- (359.5,259) ;
\draw [color={rgb, 255:red, 208; green, 2; blue, 27 }  ,draw opacity=1 ]   (222,240.25) -- (326.25,264) ;
\draw [color={rgb, 255:red, 208; green, 2; blue, 27 }  ,draw opacity=1 ]   (123.5,262.75) -- (222,240.25) ;
\draw    (225,170) -- (225,218.5)(222,170) -- (222,218.5) ;
\draw    (325,170) -- (325,218.5)(322,170) -- (322,218.5) ;

\draw (421,106.5) node [anchor=north west][inner sep=0.75pt]   [align=left] {{\huge ...}};
\draw (421,226.5) node [anchor=north west][inner sep=0.75pt]   [align=left] {{\huge ...}};
\draw (113,183.5) node [anchor=north west][inner sep=0.75pt]  [font=\large] [align=left] {$H$};
\draw (551,183.5) node [anchor=north west][inner sep=0.75pt]  [font=\large] [align=left] {$H'$};
\draw (125,127.5) node [anchor=north west][inner sep=0.75pt]   [align=left] {\textcolor[rgb]{0.82,0.01,0.11}{$i$}};
\draw (327,127.25) node [anchor=north west][inner sep=0.75pt]   [align=left] {\textcolor[rgb]{0.82,0.01,0.11}{$i$}};
\draw (224,247.5) node [anchor=north west][inner sep=0.75pt]   [align=left] {\textcolor[rgb]{0.82,0.01,0.11}{$i$}};
\draw (223,113) node [anchor=north west][inner sep=0.75pt]   [align=left] {\textcolor[rgb]{0.82,0.01,0.11}{$j$}};
\draw (122,234) node [anchor=north west][inner sep=0.75pt]   [align=left] {\textcolor[rgb]{0.82,0.01,0.11}{$j$}};
\draw (323,239) node [anchor=north west][inner sep=0.75pt]   [align=left] {\textcolor[rgb]{0.82,0.01,0.11}{$j$}};
\draw (562,110) node [anchor=north west][inner sep=0.75pt]   [align=left] {\textcolor[rgb]{0.82,0.01,0.11}{$j$}};
\draw (566.5,252.25) node [anchor=north west][inner sep=0.75pt]   [align=left] {\textcolor[rgb]{0.82,0.01,0.11}{$i$}};
\draw (50,114.5) node [anchor=north west][inner sep=0.75pt]   [align=left] {\textcolor[rgb]{0.82,0.01,0.11}{$v_0$}};
\draw (50,234) node [anchor=north west][inner sep=0.75pt]   [align=left] {\textcolor[rgb]{0.82,0.01,0.11}{$v_1$}};
\draw (652,181) node [anchor=north west][inner sep=0.75pt]   [align=left] {\textcolor[rgb]{0.82,0.01,0.11}{$w$}};

\end{tikzpicture}

}
        
        \caption{The path on $4k+3$ vertices in $G_{k,n}(A,B)$ if $A$ and $B$ both contain $\{i,j\}$.}
        \label{fig:P4k in G if common non edge}
    \end{figure}

    Conversely, assume that there is an induced path $P$ on $4k+3$ vertices in $G_{n,k}(A,B)$, and let us show that $A,B$ are not disjoint.
    By Lemma~\ref{lem:<4k vertices in the cliques}, $P$ has at most $4k$ vertices in the cliques, and since there are exactly $3$ vertices outside the cliques, $P$ has exactly $4k$ vertices in the cliques, and it uses all the vertices outside of the cliques, namely $v_0,v_1,w$. Note that the neighborhoods of $v_0$ and $v_1$ in $G_{n,k}(H,H')$ are cliques, hence they must have degree $1$ in $P$ so they are the two endpoints of $P$. So $P$ starts with $v_0, x_1 \in K^0_1$, and ends with $v_1,y_1 \in K^1_1$. 
    
    Since $w \in V(P)$ is not an end of $P$ it must have degree two in $P$.
    Hence, for some $i,j \in \{1,2, \dots, n\}$, $K^{0}_{2k}[j], K^{1}_{2k}[i]$ are the neighbors of $w$ in $V(P)$.
    $K^{0}_{2k}[i], K^{1}_{2k}[j]$ must be non-adjacent, so $ji'$ is not an edge of $G_B$.
    Since $G_A$ is reflexive it follows that $i \neq j$ and since it is symmetric $ij'$ is not an edge of $G_B$.
    \begin{claim}\label{inner-claim}
    For each $\ell \in \{1,2, \dots, 2k\}$ the vertices of $P$ in $K^0_\ell \cup K^1_\ell$ are exactly $K^{0}_{\ell}[i], K^{1}_{\ell}[j]$ or they are exactly $K^{0}_{\ell}[j], K^{1}_{\ell}[i]$.
    \end{claim}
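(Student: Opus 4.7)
The plan is to track the structure of $P$ column by column, where I call $\{K^0_\ell, K^1_\ell\}$ the $\ell$-th column. First, I would observe that in $G_{k,n}(A,B)$ every edge links vertices in the same column or in adjacent columns, so the column index along $P$ changes by at most one between consecutive vertices. The first half of $P$ (from $v_0$ to $w$) contains exactly $2k$ clique vertices, starts in column $1$ and ends in column $2k$, so it must visit each column in order exactly once with no intra-column move; the same holds for the second half (from $w$ to $v_1$). Denote the first-half vertex in column $\ell$ by $K^{a_\ell}_\ell[x_\ell]$ and the second-half vertex by $K^{b_\ell}_\ell[y_\ell]$; the boundary data gives $a_1=0$, $b_1=1$, $x_{2k}=j$, $y_{2k}=i$.

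Next, I would extract constraints from the inducedness of $P$. The path edges between columns $\ell$ and $\ell+1$ use antimatching edges and thus force $x_\ell\neq x_{\ell+1}$ and $y_\ell\neq y_{\ell+1}$. More importantly, the cross-pairs $(K^{a_\ell}_\ell[x_\ell], K^{b_{\ell+1}}_{\ell+1}[y_{\ell+1}])$ and $(K^{b_\ell}_\ell[y_\ell], K^{a_{\ell+1}}_{\ell+1}[x_{\ell+1}])$ are non-consecutive along $P$ and must therefore be non-edges; but between any clique of column $\ell$ and any clique of column $\ell+1$ the only non-edges are the diagonal ones (equal label on both sides), so this pins down $x_\ell=y_{\ell+1}$ and $y_\ell=x_{\ell+1}$ for every $\ell\in\{1,\ldots,2k-1\}$.

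Propagating these equalities backward from column $2k$, I get $x_{2k-1}=y_{2k}=i$ and $y_{2k-1}=x_{2k}=j$, and inductively $\{x_\ell,y_\ell\}=\{i,j\}$ for every $\ell$. It then remains to check that the two vertices of $P$ in each column lie in different rows, i.e., $a_\ell\neq b_\ell$. For $\ell\in\{2,\ldots,2k-1\}$, if $a_\ell=b_\ell$ the two vertices would live in the same clique $K^{a_\ell}_\ell$ and be joined by a clique edge, contradicting inducedness since they are not consecutive in $P$. For $\ell=1$ and $\ell=2k$ the boundary conditions already give $a_\ell\neq b_\ell$. Hence $\{a_\ell,b_\ell\}=\{0,1\}$, so the two vertices of $P$ in column $\ell$ are exactly $\{K^0_\ell[i], K^1_\ell[j]\}$ or $\{K^0_\ell[j], K^1_\ell[i]\}$, as claimed.

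The main obstacle will be making sure the antimatching non-edge characterization still applies at the extreme inter-column steps (between columns $1\dd 2$ and $(2k-1)\dd 2k$): these steps do use antimatchings, not the $G_A,G_B$ edges, which live \emph{inside} columns $1$ and $2k$, so the argument goes through. As a bonus, applying the inducedness constraint within columns $1$ and $2k$ (using that $G_A,G_B$ are reflexive and symmetric) then yields $\{i,j\}\in A\cap B$, which is exactly the converse direction of \cref{prop:common non-edge}.
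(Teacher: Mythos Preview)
Your argument is correct and follows essentially the same backward-propagation idea as the paper: both proofs pin down the labels in column $\ell$ from those in column $\ell+1$ by exploiting that the only inter-column non-edges are the antimatching diagonals. The main difference is that you first establish the global monotone column structure of each half of $P$ (which the paper's induction leaves implicit in the sentence ``both $x,y$ must have a neighbor in $K^0_\ell\cup K^1_\ell$''), making your version slightly more explicit; one cosmetic point is that your boundary assignment $x_{2k}=j,\ y_{2k}=i$ should really be stated as $\{x_{2k},y_{2k}\}=\{i,j\}$, but since the claim is symmetric in $i,j$ and in the two halves this is harmless.
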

    \begin{subproof}
        Let $\ell \in \{1,2, \dots, 2k-1\}$ be maximum such that the claim does not hold for $\ell$.
        By symmetry, we may assume that both $x:= K^{0}_{\ell+1}[j], y:= K^{1}_{\ell+1}[i]$ are elements of $V(P)$.
        Since $v_0, v_1$ are the ends of $P$, both $x,y$ must have a neighbor in $K^{0}_{\ell} \cup K^{1}_{\ell}$.
        In particular, there must be vertices $x', y' \in  K^{0}_{\ell} \cup K^{1}_{\ell}$ so that $xx', yy'$ are edges and there are no further edges between vertices in $\{x,x', y,y'\}$.    
        By construction for each $K \in \{K^{0}_{\ell}, K^{1}_{\ell} \}$ and $K' \in \{K^{0}_{\ell+1}, K^{1}_{\ell+1} \}$, there is an antimatching between $K$ and $K'$.
        So in particular, either $x' = K^{0}_{\ell}[i]$ and $y' =  K^{1}_{\ell}[j]$ or 
        $x' = K^{1}_{\ell}[i]$ and $y' = K^{0}_{\ell}[j]$. 
    \end{subproof}
    Thus by \cref{inner-claim} and since $G_A$ is symmetric, $ij'$ is a non-edge of $G_A$.
    So $G_A$ and $G_B$ have a common non-edge $ij'$, and $A,B$ both contain $\{i,j\}$, as desired. 
\end{proof}

We may now adapt this proof to handle the case of trees without vertices of degree $2$.
\begin{proof}[Proof of the remaining case of Proposition~\ref{prop:common non-edge}.]
If $A$ and $B$ are not disjoint, then by the previous proof, there is an induced path on $4k+3$ vertices in $G_{k,n}(A,B)$ as depicted on Figure~\ref{fig:P4k in G if common non edge}. Adding all the vertices outside of the cliques to this path yields a copy of $T$. 

Conversely, assume that there is an induced copy of $T$ in $G_{n,k}(A,B)$, and let us show that $G_A,G_B$ have a common non-edge. By Lemma~\ref{lem:<4k vertices in the cliques}, $T$ has at most $4k$ vertices in the cliques, and since there are exactly $|V(T)|-4k$ vertices outside the cliques, $T$ has exactly $4k$ vertices in the cliques, and it uses all the vertices outside of the cliques. In particular, by connectivity, $T$ must contain at least a vertex in each clique (hence exactly once). Using this, we can show that Claim~\ref{inner-claim} still holds in this case, which concludes.
\end{proof}
\section{Upper bound core technique: layered maps}
\label{app:eccs}

\subsection{Spread universal certification and graphs of large minimum degree}
\label{subsec:warm-up-app}

We have described in Subsection~\ref{subsec:results-techniques} the general idea of layered maps, which are a refinement of the technique of spreading the certificate of the universal scheme. 
We formalize the spread universal scheme (introduced in~\cite{FeuilloleyFHPP21}) in the following theorem. 

\begin{restatable}{theorem}{thmupperwarmup}
\label{thm:warm up} The following statements both hold:
\begin{enumerate}[(i)]
    \item Let $\delta < 1$. Any property can be certified with verification radius~$2$ and certificates of size $O(n^{2-\delta} \log n)$ on graphs of minimum degree $n^\delta$.
    \item Any property can be certified with verification radius~$2$ and certificates of size $O(n \log^2 n)$ on regular graphs.
\end{enumerate}
\end{restatable}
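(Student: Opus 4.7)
The plan is to implement a spread universal scheme. In both parts, on a correct instance the prover will commit to a single global description $M$ of the graph, cut $M$ into labelled pieces, and hand each vertex a small (random) subset of them. The verifier at each vertex will collect all the pieces visible in its radius-$2$ view, reassemble them into a candidate map, check that the reassembled map matches the edges the vertex can see and that any two pieces carrying the same label agree in its view, and finally check that the resulting graph satisfies the property.

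\textbf{Part (i).}
I will take $M$ to be the $n^2$-bit adjacency matrix of $G$, using a preliminary renaming step to force identifiers into $[1,n]$ as in~\cite{BousquetEFZ24}. I would cut $M$ into $k := n^\delta$ labelled pieces of $n^{2-\delta}$ bits each, and distribute them by the probabilistic method: place each piece at each vertex independently with probability $q := C\log n/n^\delta$ for a sufficiently large constant $C$. Since every vertex has at least $n^\delta$ neighbors, the probability that a given piece is absent from a given vertex's $1$-neighborhood is at most $(1-q)^{n^\delta}\leq n^{-C}$, so a union bound over the $nk = n^{1+\delta}$ vertex/piece pairs yields an assignment in which every vertex reconstructs $M$ already from its $1$-neighborhood. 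A Chernoff bound then caps the number of pieces per vertex at $O(qk) = O(\log n)$, for a total certificate size of $O(\log n)\cdot n^{2-\delta}+O(\log^2 n) = O(n^{2-\delta}\log n)$.

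\textbf{Part (ii).}
For a $d$-regular graph I would split into two cases. If $d\leq \log n$, the edge list of $G$ fits in $O(nd\log n)=O(n\log^2 n)$ bits, so the prover hands the full edge list to every vertex and the standard universal check works verbatim. If $d>\log n$, I would rerun the argument of~(i) using the edge list in place of the matrix: cut the $O(nd\log n)$-bit edge list into $k := d$ labelled pieces of $O(n\log n)$ bits and distribute each with probability $q := C\log n/d$. Each vertex has exactly $d$ neighbors, so the same union bound applies and, after Chernoff, each vertex receives $O(\log n)$ pieces, giving a certificate of $O(n\log^2 n)$ bits.

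\textbf{Main obstacle.}
The conceptually delicate point in both parts is soundness: a cheating prover could fabricate an assignment that looks coherent locally but whose reconstructions at different vertices encode different graphs. This is exactly what the verification radius of $2$ is meant to prevent. If two vertices $u,u'$ hold the same label but disagree on its content, then any vertex $w$ within distance $1$ of one of them and distance $2$ of the other sees both copies in its view and rejects. Because the probabilistic distribution makes every label appear many times in every neighborhood, this pairwise cross-checking propagates through the graph, and combined with the per-vertex local-edge check it forces every accepting assignment to describe the true graph. The main technical work will be to formalize this propagation of consistency and to pin down the constants in the probabilistic arguments.
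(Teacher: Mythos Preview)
Your scheme is the paper's: spread the universal map into $O(\log n)$ labelled pieces per vertex via the probabilistic method, and use radius~$2$ to cross-check. The paper certifies $n$ with an explicit spanning tree rather than renaming, and hands each vertex exactly $3\log n$ uniformly chosen pieces (so no Chernoff step), but these are cosmetic; your case split in~(ii) on $d\lessgtr\log n$ is likewise unnecessary but harmless.

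Where you should be more careful is the soundness argument. You justify propagation by saying ``the probabilistic distribution makes every label appear many times in every neighborhood'', but that is a property of the \emph{honest} prover, and soundness must hold against an adversarial one. Your verifier, as written, collects pieces from the radius-$2$ view, so it only enforces that every label occurs somewhere in $B_2(u)$; with that check alone, a cheating prover can place a label $\ell$ solely at two vertices $a,b$ with $d_G(a,b)=5$ carrying different contents (this is realizable even under the minimum-degree hypothesis, e.g.\ in a ``path of cliques''). Then $\{a,b\}$ $2$-dominates the $a$--$b$ geodesic, no single vertex has both in its radius-$2$ ball, and adjacent vertices reconstruct different maps without any of your checks firing. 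The fix---and what the paper does---is to have the verifier reject unless every label already appears in its \emph{radius-$1$} closed neighborhood, reconstruct $M_u$ from $B_1(u)$, and then use the radius-$2$ view to compute each neighbor's reconstruction $M_v$ and check $M_u=M_v$; connectivity then gives global agreement in one step, and the local-edge check finishes.
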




Before giving the proof, we give a short sketch for the case of the first item.
%
On correct instances, the prover computes the certificate of the universal scheme in the form of an adjacency matrix, cuts it into $n^{\delta}$ pieces of size $n^{2-\delta}$, and gives $O(\log n)$ pieces to every node, in such a way that each vertex can see all the pieces in its neighborhood. 
Such an assignment of pieces is proved to exist via probabilistic method and coupon collector. 
Note that we need the verification radius to be at least two, only to ensure that every node can check that the graph it reconstructs is the same as its neighbors.

Before proving Theorem~\ref{thm:warm up}, we show the following Lemma~\ref{lem:coupon collector} (inspired from the coupon collector theorem), that we will use several times.

\begin{restatable}{lemma}{lemmacouponcollector}
    \label{lem:coupon collector}
    Let $G$ be a $n$-vertex graph and $0 < d < n$. Let $\mathcal{P}$ be a set of size~$d$, called the set of \emph{pieces}. Then, there exists a way to assign $3 \log n$ pieces to every vertex of $G$ such that, for every vertex $u$ of degree at least $d$, each piece of $\mathcal{P}$ has been given to $u$ or one of its neighbors.
\end{restatable}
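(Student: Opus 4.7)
My plan is to apply the probabilistic method in the spirit of the classical coupon collector analysis. For each vertex $v \in V(G)$, I will independently sample a multiset $S_v$ of $3 \log n$ pieces from $\mathcal{P}$ uniformly at random with replacement, and then show that with positive probability, for every vertex $u$ of degree at least $d$ the union $\bigcup_{v \in N[u]} S_v$ covers all of $\mathcal{P}$. The assignment giving $S_v$ to each $v$ will then satisfy the conclusion of the lemma.

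The core estimate is a fixed-piece, fixed-vertex computation. For a vertex $u$ with $\deg(u) \geq d$ we have $|N[u]| \geq d+1$, so at least $3 d \log n$ independent uniform draws from $\mathcal{P}$ are performed by the vertices of $N[u]$. Each such draw equals a fixed piece $p \in \mathcal{P}$ with probability $1/d$, so
\[
\Pr\Bigl[p \notin \bigcup_{v \in N[u]} S_v\Bigr] \leq \left(1 - \tfrac{1}{d}\right)^{3 d \log n} \leq e^{-3 \log n} \leq n^{-3}
\]
(interpreting $\log$ as the natural logarithm; for any other base the constant $3$ can be enlarged accordingly). Taking a union bound over the $|\mathcal{P}| = d \leq n$ pieces and the at most $n$ vertices of degree at least $d$ gives a total failure probability of at most $n \cdot n \cdot n^{-3} = n^{-1} < 1$, so a good deterministic assignment exists.

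I do not anticipate a real obstacle, since the argument is essentially textbook. The only mild points of care are (i) ensuring that $u$ itself is counted inside $N[u]$, so that the effective number of draws is at least $3d \log n$ rather than $3(d-1)\log n$ (this matters in the degenerate case $d=1$, though that case is also trivial by assigning the single piece to every vertex), and (ii) making sure the constant $3$ in the exponent is large enough to beat the $O(n^2)$ loss from the union bound over pieces and vertices; any constant strictly larger than $2$ would in fact suffice. Both points are easy to arrange, so the formal proof should only take a few lines.
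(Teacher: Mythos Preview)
Your argument is correct and essentially identical to the paper's proof: both use the probabilistic method with independent uniform random piece assignments and a union bound over pieces and high-degree vertices to get failure probability at most $1/n$. The only cosmetic difference is that you sample with replacement (giving the bound $(1-1/d)^{3d\log n}$) while the paper samples a random subset (giving $(1-3\log n/d)^{d}$), but both estimates reduce to $e^{-3\log n}$.
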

\begin{proof}
    Let us consider a random assignation of the pieces to the vertices of $G$, where each vertex gets assigned a subset of $\mathcal{P}$ of size $3 \log n$ pieces (uniformly at random, independently of all the other vertices). Let us fix some piece $p \in \mathcal{P}$ and some vertex $u$ of degree at least $d$. Observe that the probability that $p$ has not been given to $u$ not to its  neighbors is at most:
    \[ \left(1 - \frac{3\log n}{d}\right)^d \leqslant e^{-3 \log n} \leqslant \frac{1}{n^3}.\]
    By union bound, the probability that such an event occurs for some vertex $u$ of degree at least $d$ and some piece $p \in \mathcal{P}$ is at most $\frac{1}{n}$. This is strictly smaller than $1$ for all $n \geqslant 2$. Thus, there exists some correct assignment of the pieces.
\end{proof}

We may now proceed with the proof of Theorem~\ref{thm:warm up}.

\begin{proof}[Proof of Theorem~\ref{thm:warm up}.]

Let us prove the cases (i) and (ii) one after the other.

\begin{enumerate}[(i)] 
\item Let us describe a certification scheme of size $O(n^{2-\delta} \log n)$ for any property, in graphs of minimum degree $n^\delta$, if the verification radius is at least~$2$. Let $G = (V,E)$ be a graph with $|V| = n$, and minimum degree $n^\delta$.

\medskip{}
\textbf{Certification.} The prover assigns to each vertex $u$ a certificate divided in two fields: one common to every node (denoted by $\spanningtree$), and another specific to $u$ (denoted by $\pieces(u)$).

    \begin{itemize}
        \item In $\spanningtree$, the prover writes a spanning tree $\mathcal{T}$ of $G$, that is: the identifiers of the vertices and of their parent in $\mathcal{T}$. This uses $O(n \log n)$ bits.

        \item For $\pieces$, the prover cuts the adjacency matrix of $G$ in $n^\delta$ parts, each of size $n^{2-\delta}$, and number these parts from $1$ to $n^{\delta}$.
        Then, using Lemma~\ref{lem:coupon collector}, since all the vertices have degree at least $n^\delta$, it may write $3 \log n$ numbered pieces in each $\pieces(u)$ such that every vertex sees each piece at least once in its closed neighborhood. This certificate has size $O(n^{2-\delta} \log n)$.
    \end{itemize}

\medskip{}
\textbf{Verification.}
    The verification algorithm of each vertex $u$ is done in two steps.
    \begin{itemize}
        \item The first step is to check the correctness of $\spanningtree$. To do so, $u$ checks if $\spanningtree$ is the same in its certificate and the certificate of all its neighbors. Then, it checks if it is indeed a tree, and if each of its neighbors in $\spanningtree$ is also a neighbor in $G$. If it is not the case, $u$ rejects.
        
        If no vertex rejects at this point, $\spanningtree$ is a correct spanning tree known by all the vertices. In particular, every vertex~$u$ knows~$n$, the number of vertices in~$G$.

        \item The second and main step of the verification is the following. Since $u$ knows $n$, $u$ knows the number pieces (which is $n^\delta$). The vertex $u$ checks if it sees each numbered piece appears in $\pieces(v)$ for some vertex $v$ in its closed neighborhood, and rejects if it is not the case. If $u$ did not reject, it may thus reconstruct the whole graph $G$, and can also recover the graph reconstructed by every of its neighbors since the verification radius is at least~$2$. If some of these reconstructed graphs are different, $u$ rejects. Then, $u$ checks if its neighborhood is correctly written in its reconstructed graph. Finally, if $u$ did not reject before, it accepts if and only if its reconstructed graph satisfies the property.
    \end{itemize}

\medskip{}
\textbf{Correctness.}
    If no vertex rejects in the verification procedure, all the vertices reconstructed the same graph. Moreover, this reconstructed graph is equal to $G$, because every vertex checked that its neighborhood is correct in it. Since every vertex accepts if and only if this graph satisfies the property, the scheme is correct.

    \medskip{}

    \item Let us now describe a certification scheme of size $O(n \log^2 n)$ for any property in regular graphs, if the verification radius is at least~$2$. Let $G$ be an $n$-vertex $d$-regular graph.

    \medskip{}
    \textbf{Certification.}
    The certification is similar to the proof of (i), except that since $G$ is a $d$-regular graph, it can be represented with $O(d n \log n)$ bits using adjacency lists (instead of the adjacency matrix). The prover cuts it in $d$ pieces, each of size $O(n \log n)$, numbers them from $1$ to $d$, and applies Lemma~\ref{lem:coupon collector} to assign $3 \log n$ pieces in the certificate of every vertex. This yields certificates of size $O(n \log^2 n)$.

    \medskip{}
    \textbf{Verification.}
    The verification algorithm is the following one. Every vertex $u$ knows its own degree, which is equal to the number of pieces. So $u$ can check if it sees all the pieces in its closed neighborhood. Then, as for (i), $u$ reconstructs the whole graph $G$ and checks if all its neighbors reconstructed the same graph. Finally, $u$ checks if its own neighborhood is correctly written, and if the property is satisfied.

    \medskip{}
    \textbf{Correctness.} As for (i), if no vertex rejects in the verification procedure, all the vertices computed the same graph, which is equal to $G$ and satisfies the property.
    \qedhere
\end{enumerate}
\end{proof}

\subsection{Definitions: extended connected components, ECC table, and more}
\label{sub:ecci-dfns-app}

We now introduce formally the layered map, that we have informally described in the introduction, along with other related notions.


\paragraph*{Extended connected components.}
Let $\varepsilon > 0$ and $N = \lceil \frac{1}{\varepsilon}\rceil$. We consider the following partition of the vertices depending on their degrees: for every $i \in \{1, \ldots, N\}$, let $V_i:=\{u \in V \; | \; n^{(i-1)\varepsilon} \leqslant \deg(u) < n^{i\varepsilon}\}$ (where $\deg(u)$ is the degree of the vertex $u$). 
Note that the sets $V_1, \ldots, V_N$ form a partition of $V$.
Let $L_i := \bigcup_{j\leqslant i} V_i$, that is, $L_i$ is the set of vertices having degree less than than $n^{i\varepsilon}$. And let $H_i := \bigcup_{j \geqslant i} V_i$, that is, $H_i$ is the set of vertices having degree at least $n^{(i-1)\varepsilon}$ (it is also $V\setminus L_{i-1}$). Note that $H_1 = L_N = V$.

Let $k \geqslant 2$ be the verification radius.
For each $i \in \{1, \ldots, N\}$, we partition $H_i$ in subsets called \emph{extended connected components of $H_i$} (abbreviated by ECC$_i$).
For $u,v \in H_i$, we say $u,v$ are \emph{$i$-linked} if there is a $u,v$ path in $G$ which does not have $2k-2$ consecutive vertices in $L_{i-1}$ (in other words, such a path should regularly contain vertices in $H_i$). Note that being $i$-linked is an equivalence relation.
An ECC$_i$ is an equivalence class for this relation.
By definition, two vertices in two different ECC$_i$'s are at distance at least $2k-1$ from each other in $G$.

Let us denote by $\mathcal{E}_i$ the set of all ECC$_i$'s (which is a partition of $H_i$). For each $C \in \cE_i$, we define the \emph{identifier of C}, denoted by $id(C)$, as the smallest identifier of a vertex in $C$.
Finally, we denote by $C(d)$ the set of vertices at distance exactly $d$ from $C$ (by convention, we set $C(0):=C$). Note that $C(0) \subseteq H_i$, and for all $d \geqslant 1$, we have $C(d) \subseteq L_{i-1} = V(G) \setminus H_i$. 

Note that for every $u \not \in H_i$ there is at most one $ECC_i$ $C \in \cE_i$ such that $u$ is at distance at most $k-1$ from $C$.

\paragraph*{ECC-table.}
Let $T_G$ be the following table, with $n$ rows and $N$ columns, called the \emph{ECC-table of $G$}. The rows are indexed by the identifiers of the vertices, and the columns by $\{1, \ldots, N\}$. Let $u \in V$ and $i \in \{1, \ldots, N\}$. Let us describe the entry $T_G[u,i]$.
\begin{itemize}
    \item 
    If $u$ is at distance at most $k-1$ from some ECC$_i$ $C_u$, 
    we set $T_G[u,i]$ as $(id(C_u), d_u)$ where $d_u$ is the distance from $u$ to $C_u$. 
    Note that $u \in H_i$ if and only if $d_u = 0$. 
    \item Otherwise, we set $T_G[u,i] = \bot$.
\end{itemize}

In other words, for each $i \in \{1, \ldots, N\}$, the partition of $H_i$ in ECC$_i$'s is written in the table $T_G$, and for every vertex $u \in L_{i-1}$ "close" to some $C_u \in \cE_i$, the distance from $u$ to $C_u$ is also stored in~$T_G$. 

\paragraph*{Witnessed graph.}\label{pre-curly-dfn-app}
We define a relation $\preccurlyeq$ on $V$ in the following way. Let $u,v \in V$. We say that $v \preccurlyeq u$ if there exists $i \in \{1, \ldots, N\}$ such that $v \in V_i$, $u \in H_i$, and $u,v$ are in the same ECC$_i$. Note that, by definition, $\preccurlyeq$ is transitive. For every $u \in V$, let $V_{\preccurlyeq u}$ be the set $\{v \in V \; | \; v \preccurlyeq u\}$, called the \emph{set of vertices witnessed by $u$}. Let $G_{\preccurlyeq u}$ be the subgraph of $G$ obtained by keeping only edges having at least one endpoint in $V_{\preccurlyeq u}$, which is called the \emph{graph witnessed  by $u$}. 
Intuitively, if $v \in V_{\preccurlyeq u}$, then $u$ can check all the adjacencies of $v$ thanks to the certificates (see Theorem~\ref{thm:computation scheme for g_<u}). 


\paragraph*{Local computation scheme.}
Now, let us introduce the notion of \emph{local computation scheme}. Informally, it is a tool allowing the vertices to perform a pre-computation at the beginning of their verification in a certification scheme\footnote{This new definition allows us to compose different computations using certificates, which is not possible in general with the standard certification definition, because of its binary output.}. We will use it so that each vertex $u$ will pre-compute the ECC-table $T_G$ and its witnessed graph $G_{\preccurlyeq u}$.

\begin{restatable}{definition}{defcomputationscheme}
    \label{def:local-computation-scheme}
    {\sffamily\normalshape (Local computation scheme).} Let $f$ be a function taking as input a graph $G$ and a vertex $u$ of $G$. A \emph{local computation scheme for $f$ with verification radius~$k$ and size~$s$} is a scheme where the prover gives certificates of size $s$ to the vertices, and each vertex either rejects, or outputs something, depending only on its view at distance~$k$. We also require the two following soundness conditions to be satisfied for every graph $G=(V,E)$:
\begin{enumerate}[(i)]
    \item if no vertex rejects, then every vertex $u \in V$ outputs $f(G,u)$;
    \item there exists a certificate assignment such that no vertex rejects.
\end{enumerate}
\end{restatable}

\subsection{Technical results}

We now design local computation schemes to compute the ECC-table and the witnessed graphs, as stated in the following Theorems~\ref{thm:computation scheme for T_G} and~\ref{thm:computation scheme for g_<u}.

\begin{restatable}{theorem}{ThmComputationTG}
    \label{thm:computation scheme for T_G}
    There exists $c>0$ such that, for all $0 < \varepsilon < 1$, there exists a local computation scheme for $f(G,u):=T_G$ with verification radius~$k$ and certificates of size $\frac{c}{\varepsilon} \cdot n \log n$.
\end{restatable}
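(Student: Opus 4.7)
The plan is to have the prover write the full ECC-table $T_G$ in every vertex's certificate, together with a small amount of auxiliary data certifying each ECC$_i$ partition. The table itself already fits the budget: it has $n$ rows and $N = \lceil 1/\varepsilon \rceil$ columns with entries of size $O(\log n)$, giving $O(n \log n / \varepsilon)$ bits. Since $k$ is a fixed constant here, factors depending on $k$ are absorbed into $c$. Each vertex checks that its copy of $T_G$ coincides with every neighbor's (so the table is globally consistent on each connected component), and uses its own degree to confirm that $T_G[u,i]$ has second coordinate $0$ exactly when $u \in H_i$. For entries with distance $d \geq 1$, a standard BFS-style check (a neighbor with distance $d-1$ to the same claimed ECC) certifies the distance field.

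The crux is certifying the first coordinate of $T_G[u,i]$ for $u \in H_i$, i.e., verifying the ECC$_i$ partition itself. For this I would have the prover add, for every $i$ and every $u \in H_i$, a parent pointer $p_u^i$ and a distance label $\delta_u^i$ forming a rooted spanning tree of the claimed ECC$_i$ in the auxiliary graph on $H_i$ whose edges are $G$-paths of length $\leq 2k-2$ with internal vertices in $L_{i-1}$. For each such auxiliary edge $(u,p_u^i)$, the prover additionally ships the explicit realizing $G$-path. The overall overhead is $O(n \log n / \varepsilon) + O(nk \log n / \varepsilon)$ bits, which fits in the budget. For each realizing path $u = v_0, \ldots, v_\ell = p_u^i$ with $\ell \leq 2k-2$, the midpoint $m = v_{\lfloor \ell/2 \rfloor}$ lies within distance $k-1$ of every $v_j$, so its radius-$k$ view contains the entire path; $m$ verifies that the path really exists in $G$, that interior vertices lie in $L_{i-1}$ (readable from their $T_G$ rows), and that both endpoints lie in $H_i$ with identical claimed ECC identifier. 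Global tree checks on $\delta_u^i$ (parent has distance one less) and the root condition (min-id vertex in its class, with $\delta = 0$ and matching first coordinate) complete this step.

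To prevent the prover from splitting a true ECC$_i$ (claiming that two $i$-linked vertices live in different components), I add the local rule that every vertex $m$ rejects if its radius-$k$ view contains two $H_i$ vertices joined by a $G$-path of length $\leq 2k-2$ with internal vertices in $L_{i-1}$ whose claimed ECC identifiers differ. Together, these checks pin the claimed partition to the true one: merges are impossible because two distinct true ECC$_i$'s are at $G$-distance $\geq 2k-1$, so no realizing path of length $\leq 2k-2$ exists and any fake path claim is caught by its midpoint; splits are detected because, along any true $i$-linking path between wrongly separated vertices, two consecutive $H_i$ vertices at $G$-distance $\leq 2k-2$ must disagree on their claimed identifier, and the midpoint of their subpath sees both.

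The main obstacle is selecting rules whose radius-$k$ locality suffices to certify a global partition. The midpoint trick works precisely because any path of length $\leq 2k-2$ fits inside a radius-$k$ view at its midpoint, so the $2k-2$ threshold in the definition of $i$-linking is exactly tight for this approach. A secondary difficulty is fitting the certificate within $O(n \log n / \varepsilon)$ bits despite replicating explicit paths in every vertex; this is only possible because $k$ is a constant folded into $c$, and because $\sum_i |H_i| = O(n/\varepsilon)$ bounds the total number of auxiliary edges.
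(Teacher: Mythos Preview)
Your approach is essentially the paper's: replicate $T_G$ everywhere, and for each $i$ and each claimed ECC$_i$ ship a spanning tree of the auxiliary graph on that class (auxiliary edges $=$ pairs of $H_i$-vertices at $G$-distance $\le 2k-2$), with each tree edge accompanied by a locally checkable witness. The paper labels each tree edge $(u,v)$ with a single witness vertex $w$ at distance $\le k-1$ from both, and has \emph{each endpoint} verify that $w$ lies in its own view; you instead ship the full realizing path and delegate the check to its midpoint. The split check and the distance-field check are likewise the same in spirit.

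The delegation to the midpoint is where your merge-direction argument leaks. Nothing in your scheme forces the identifier $v_{\lfloor \ell/2\rfloor}$ to name a real vertex of $G$. A cheating prover can merge two genuine ECC$_i$'s $C_1,C_2$ by inserting a tree edge $(a,b)$ with $a\in C_1$, $b\in C_2$ and attaching a fabricated ``realizing path'' whose listed midpoint is a nonexistent identifier; then no vertex ever runs the path check, and your global $\delta$-checks are purely symbolic (they never touch the actual graph), so neither $a$ nor $b$ has a rule that makes it reject. The paper avoids this precisely by anchoring the check at the endpoints: $u$ itself must see the witness $w$ at distance $\le k-1$, so a fake or faraway $w$ makes $u$ reject. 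The fix on your side is the same idea: have $u$ additionally verify, in its own view, the first $\lceil \ell/2\rceil$ steps of its realizing path (equivalently, that $v_{\lfloor\ell/2\rfloor}$ appears within distance $k-1$ of $u$). With that patch your argument goes through and coincides with the paper's.
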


    Note that the output of the function $f$ in Theorem~\ref{thm:computation scheme for T_G} does not depend on $u$, but only on~$G$ and~$\varepsilon$.

\begin{proof}
    Let $0 < \varepsilon < 1$ and $N = \lceil \frac{1}{\varepsilon} \rceil$. Let us describe a local computation scheme for $f(G,u)=T_G$ of size $O(\frac{n}{\varepsilon} \log n)$.  
    
    \medskip{}
    \textbf{Certification.}
    The certificates of the vertices consists of information stored in three fields, denoted by $\spanningtree$, $\tab$ and $\comp$. The certificate is the same for all the vertices.
    
    \begin{itemize}
        \item The prover chooses a spanning tree $\mathcal{T}$ of $G$ and writes it in $\spanningtree$. More precisely: it writes the identifiers of all the vertices and of their parents in $\mathcal{T}$ using $O(n \log n)$ bits.
        
        \item In $\tab$, the prover writes the table $T_G$. Since it has $n$ rows, $N$ columns, and $O(\log n)$ bits per cell, it has size $O(N n \log n)$.
        
        \item In $\comp$, the prover gives information to the nodes to check the correctness of the partition in $ECC_i$'s written in $\tab$.
        For each $i \in \{1, \ldots, N\}$, and for each $C \in \cE_i$, the prover constructs the graph $G_C$, where the vertices of $G_C$ are the vertices of $C$, and there is an edge between two vertices in $G_C$ if and only if they are at distance at most $2k-2$ in $G$. By definition of an ECC$_i$, $G_C$ is connected. Then, the prover chooses a spanning tree $\mathcal{T}_C$ of $G_C$ and writes its structure in $\comp$, with the identifier of the corresponding vertices. For each edge $(u,v)$ in $\T_C$, there exists a vertex $w \in V$ at distance at most $k-1$ from both $u$ and $v$ in $G$. The prover labels the edge $(u,v)$ in $\T_C$ by the identifier of $w$.
        For a given $C \in \cE_i$, $O(|C| \log n)$ bits are required. In total, since the prover does this for each $C \in \cE_i$ and every $i \in \{1, \ldots, N\}$, it uses $O(N n \log n)$ bits.
    \end{itemize}
    Since $N = \lceil\frac{1}{\varepsilon}\rceil$, the overall size of the certificate is thus $O(\frac{n}{\varepsilon} \log n)$.

    The verification consists in a simple consistency check, with the spanning trees enforcing the global correctness. 
    
    \medskip
\textbf{Verification.}
    The vertices perform the following verification procedure.

    \begin{enumerate}[(i)]
        \item First, every vertex checks that its certificate is the same as the certificates of its neighbors, and rejects if it is not the case.

        \item To check the correctness of $\spanningtree$, every vertex checks if it is indeed a tree, and if each of its neighbors written in $\spanningtree$ is indeed a neighbor in $G$. It rejects if it is not the case.
        
        If no vertex rejects at this point, the spanning tree of~$G$ written in $\spanningtree$ is correct. In particular, all the vertices know~$n$, and the whole set of the identifiers of vertices in~$G$.
        
        \item The next two steps of the verification consist in checking if the partition in ECC$_i$'s written in $T_G$ is correct. Every vertex~$u$ checks that for all $i \in \{1, \ldots, N\}$, and for all vertices $v, w \in H_i$ at distance at most $k-1$ from $u$, we have $\tab[v,i]=\tab[w,i]$, and if this common value is of the form $(id(C),0)$. If it is not the case, $u$ rejects.
        
        If no vertex rejects at this point, then for all
        $i \in \{1, \ldots, N\}$ and for every pair of vertices $v,w \in H_i$ which are at distance at most $2k-2$, $v$ and $w$ are written to be in the same ECC$_i$ in $\tab$. By transitivity: if two vertices are in the same ECC$_i$ in $G$, then they appear in the same ECC$_i$ in $\tab$.
        
        \item Every vertex $u \in V$ determines the index $j \in \{1, \ldots, N\}$ such that $u \in V_j$. For every $i \leqslant j$, $u$ does the following. We have $u \in H_i$. Let $C_u \in \cE_i$ be the ECC$_i$ of $u$. The vertex $u$ checks that $\T_{C_u}$ written in $\comp$ is indeed a tree. Moreover, for each edge $(u,v)$ in $\T_{C_u}$ labeled with the identifier of a vertex $w$, $u$ checks if it sees indeed $w$ at distance at most $k-1$ (and rejects if it is not the case).
        
        If no vertex rejects at this point, for all vertices $u,v \in V$, if $\tab[u,i] = \tab[v,i]$ and if this value is of the form $(id(C),0)$, then $u$ and $v$ are indeed two vertices of $H_i$ which are in the same ECC$_i$ (indeed, $\T_C$ is connected, so there exists a path from $u$ to $v$ in $\T_C$, which corresponds to a path in $G$ which does not have $2k-1$ consecutive vertices in $L_{i-1}$).
        
        Thus, together with step~(iii) of the verification, the partition in ECC$_i$'s written in $\tab$ is correct.

        \item The next step of the verification consists in checking if the distances written in $\tab$ are also correct. To do so, every vertex $u$ does the following. For each $i \in \{1, \ldots, n\}$, if $u \notin H_i$, $u$ checks that $\tab[u,i] = \bot$ if and only if $u$ does not see any vertex of $H_i$ at distance at most $k-1$. And if $u$ sees vertices in some $C_u \in \cE_i$ at distance $d_u$, $u$ checks that $\tab[u,i]=(id(C_u),d_u)$. If it is not the case, $u$ rejects.
    \end{enumerate}

    \medskip{}
    \textbf{Computation.} If a vertex $u$ did not reject during the verification phase, it outputs $\tab$. If no vertex rejects, we have $\tab=T_G$ (since both the partition in ECC$_i$'s and the distances to the ECC$_i$'s written in $\tab$ are correct). Moreover, if the prover gives the certificates as described above, no vertex will reject. Thus, the computation scheme is correct.
\end{proof}


\begin{restatable}{theorem}{ThmComputationGu}
    \label{thm:computation scheme for g_<u}
    There exists $c>0$ such that, for all $0 < \varepsilon < 1$, there exists a local computation scheme for $f(G,u):=G_{\preccurlyeq u}$ of verification radius~$k$ and size $\frac{c}{\varepsilon} \cdot n^{1+\varepsilon} \cdot \log^2 n$.
\end{restatable}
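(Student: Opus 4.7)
The plan is to build on Theorem~\ref{thm:computation scheme for T_G}, so that every vertex starts the verification already knowing the ECC-table $T_G$, and then add new certificates allowing each $u \in V_j$ to reconstruct the edges incident to $V_{\preccurlyeq u} = \bigcup_{i \le j}(C_i \cap V_i)$, where $C_i$ denotes the ECC$_i$ of $u$. For each level $i \in \{1,\ldots,N\}$ and each $C \in \cE_i$, the prover will certify the subgraph $S_{i,C}$ formed by the edges of $G$ incident to $C \cap V_i$: since vertices of $V_i$ have degree less than $n^{i\varepsilon}$, an adjacency-list encoding of $S_{i,C}$ uses $O(|C \cap V_i| \cdot n^{i\varepsilon} \log n)$ bits.

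Because every vertex of $C \cap H_i$ has degree at least $n^{(i-1)\varepsilon}$, this encoding can be distributed using the spread universal idea of Theorem~\ref{thm:warm up}: cut it into $n^{(i-1)\varepsilon}$ numbered pieces of size $O(|C \cap V_i| \cdot n^{\varepsilon} \log n)$ each, and invoke Lemma~\ref{lem:coupon collector} with $d = n^{(i-1)\varepsilon}$ to assign $3 \log n$ pieces per vertex so that every vertex of $C \cap H_i$ sees all $n^{(i-1)\varepsilon}$ pieces in its closed neighborhood. Summing the per-vertex cost over $(i,C)$, and using that $\sum_i \sum_{C \in \cE_i} |C \cap V_i| = \sum_i |V_i| = n$, amounts to $O(n^{1+\varepsilon} \log^2 n)$ bits, which together with the $O(n \log n/\varepsilon)$ bits of the ECC-table stays within $\frac{c}{\varepsilon} \cdot n^{1+\varepsilon} \log^2 n$ for a suitable $c$. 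On the verification side, each vertex $u$ runs the scheme of Theorem~\ref{thm:computation scheme for T_G}, reads from $T_G$ its ECC$_i$ $C_i$ for every $i \le j$, gathers the pieces labelled $(i,C_i)$ from its closed neighborhood, concatenates them to recover $S_{i,C_i}$, and outputs the union as $G_{\preccurlyeq u}$.

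The main obstacle is to guarantee that two vertices of the same $C \cap H_i$ lying far apart in $G$ reconstruct exactly the \emph{same} subgraph $S_{i,C}$: otherwise a high-degree $u \notin V_i$ would have no way of detecting a wrongly reconstructed edge, since only the vertices of $V_i$ self-check the adjacency stored under their own identifier. To propagate global consistency, I would reuse the spanning tree $\T_C$ of the auxiliary graph $G_C$ already built in Theorem~\ref{thm:computation scheme for T_G}, together with the witness vertex labelling each of its edges at distance at most $k-1$ from both endpoints: for every edge $(v_1, v_2)$ of $\T_C$, the witness $w$ has both certificates inside its radius-$k$ view and can verify that any piece ID carried by both $v_1$ and $v_2$ stores the same content. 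Combined with the wide distribution of pieces across $C \cap H_i$ guaranteed by Lemma~\ref{lem:coupon collector}, this forces all stored copies of each piece to agree, so that every vertex of $C \cap H_i$ reconstructs the same $S_{i,C}$, which by the self-checks of the $V_i$-vertices must then be the true subgraph. Making this propagation fully rigorous---in particular handling pieces that happen not to be shared along a given tree edge, which may require slightly increasing the number of pieces given per vertex or adding lightweight pointers from each piece-holder to a canonical representative inside $C$---is the place where most of the technical work is expected to sit.
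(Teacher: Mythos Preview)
Your outline is viable, but it takes a noticeably more intricate route than the paper, and the difficulty you isolate at the end is one the paper simply avoids.

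The paper does \emph{not} spread a separate subgraph $S_{i,C}$ per ECC. Instead, for each level $i\in\{1,\dots,N\}$ it spreads a single global object $G_i$, the graph consisting of all edges of $G$ incident to $L_i$. Since $|L_i|\cdot n^{i\varepsilon}$ bounds the number of such edges, the adjacency list of $G_i$ has size $O(n^{1+i\varepsilon}\log n)$; it is cut into $n^{(i-1)\varepsilon}$ pieces of size $O(n^{1+\varepsilon}\log n)$ and distributed to \emph{all} vertices by Lemma~\ref{lem:coupon collector}. The per-vertex cost is then $O(N\cdot n^{1+\varepsilon}\log^2 n)=O(\tfrac{1}{\varepsilon}n^{1+\varepsilon}\log^2 n)$, matching the statement.

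Because the pieces of $G_i$ carry only the pair (level, piece-number) and not an ECC identifier, the consistency check collapses to a single local rule: every vertex compares any two pieces with the same label that it sees within distance $k-1$ and rejects if they differ. If $x,y\in H_i$ lie in the same ECC$_i$, they are joined by a chain of $H_i$-vertices at pairwise distance at most $2k-2$; each vertex of the chain has every piece of $G_i$ in its closed neighbourhood, and the midpoint of each link compares overlapping copies, so the reconstructions propagate and agree along the whole chain. Finally, each $v\in V_i$ checks that its own adjacency is faithfully recorded in $G_i$. Hence for any $v\preccurlyeq u$ (so $v\in V_i$, $u\in H_i$, same ECC$_i$), $u$'s reconstruction of $G_i$ coincides with $v$'s, and $v$ has certified its row --- which is exactly the edges incident to $v$.

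Your ECC-indexed decomposition $S_{i,C}$ is more parsimonious (your size bound even shaves the $1/\varepsilon$ from the $\pieces$ part), but it creates precisely the obstacle you describe: two endpoints of a $\T_C$-edge need not hold any common piece number, so the witness cannot compare their copies directly, and patching this with extra pointers or more pieces is genuine additional work. The paper's level-global pieces make the whole $\T_C$-based mechanism unnecessary. If you kept your $S_{i,C}$'s but replaced your tree-edge comparison by the paper's simple rule (compare any two same-labelled pieces in the view), the ECC chain would still propagate consistency and your proof would close --- at which point you would have essentially the paper's argument with an extra layer of bookkeeping.
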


\begin{proof}[Proof of Theorem~\ref{thm:computation scheme for g_<u}.]
    Let $0 < \varepsilon < 1$ and $N = \lceil \frac{1}{\varepsilon} \rceil$. We describe the local computation scheme.

    \medskip{}
    \textbf{Certification.}
    Let us describe the certificates given by the prover to the vertices on correct instances. First, it gives to every vertex $u$ its certificate in the local computation scheme for $T_G$ given by Theorem~\ref{thm:computation scheme for T_G}, which has size $O(\frac{n}{\varepsilon} \log n)$.
    Then, the prover gives other additional information to $u$, denoted by $\pieces(u)$, which is the following.  For every $i \in \{1, \ldots, N\}$, the prover constructs the graph $G_i$ obtained from~$G$ by keeping only edges having at least one endpoint in $L_i$. Since each vertex in $L_i$ has degree at most $n^{i \varepsilon}$, $G_i$ has at most $n^{1+i\varepsilon}$ edges. Then, the prover cuts the adjacency list of~$G_i$ (which has size $n^{1+i\varepsilon}\log n$) in $n^{(i-1)\varepsilon}$ pieces of size $n^{1+\varepsilon} \log n$, and number them from~$1$ to~$n^{(i-1)\varepsilon}$. Finally, for every vertex $u$, it writes $3 \log n$ numbered pieces of $G_i$ in $\pieces(u)$ in such a way: for each $u \in H_i$, $u$ sees every piece $P$ of $G_i$ in the certificate of at least one of its neighbors. Since each vertex $u \in H_i$ has degree at least $n^{(i-1)\varepsilon}$, and since there are $n^{(i-1)\varepsilon}$ different pieces of $G_i$, this is possible (see Lemma~\ref{lem:coupon collector} 
    ).
    In total, since the prover does this for every $i \in \{1, \ldots, N\}$, $\pieces(u)$ has size $O(N n^{1+\varepsilon} \log^2 n)$, so the size of the overall certificate is $O(\frac{n^{1+\varepsilon}}{\varepsilon} \log^2 n)$.

    \medskip{}
    \textbf{Verification.} The vertices perform the following verification procedure.
    \begin{enumerate}[(i)]
        \item First, each vertex applies the verification of the local computation scheme for $T_G$ given by Theorem~\ref{thm:computation scheme for T_G}.
        
        \item To verify $\pieces(u)$, each vertex $u$ does the following: if it sees two numbered pieces $P \in \pieces(v)$ and $P' \in \pieces(w)$ for some $v,w$ at distance at most $k-1$, such that $P$ and $P'$ are two pieces of $G_j$ for some $j \in \{1,\ldots,N\}$ which are numbered the same, $u$ checks that these two pieces are indeed the same. If it is not the case, $u$ rejects.
        
        \item Finally, for every $i \in \{1, \ldots, N\}$, each $u \in V_i$ does the following verification. For every $j \leqslant i$, and for every numbered piece $P$ of $G_j$, $u$ checks if it sees $P$ in a certificate $\pieces(v)$ for some $v$ in its closed neighborhood. If it is not the case, $u$ rejects.
        And if $u$ sees that its own edges and non-edges are not correctly written in the piece $P$ of $G_i$ where it should be, it rejects.
    \end{enumerate}

    \medskip{}
    \textbf{Computation.}
    Let us describe the computation of every vertex $u$. If no vertex rejected at step~(i), $u$ computed the ECC-table $T_G$. In particular, $u$ knows the partition of $H_i$ in ECC$_i$'s for every $i \in \{1, \ldots, N\}$. So $u$ can compute the set $V_{\preccurlyeq u}$.
    
    \begin{claim}
        \label{claim:v<u}
        Let $u,v \in V$ such that $v \preccurlyeq u$. If no vertex rejected during the verification phase, then $u$ can compute all the edges and non-edges having $v$ as an endpoint.
    \end{claim}
    \begin{proof}
        Let $i \in \{1, \ldots, N\}$ such that $v \in V_i$ and $u \in H_i$.
        Because of step (iii) of the verification, the vertex $u$ sees all the parts of $G_i$ in its certificate or in the certificate of some of its neighbors. However, nothing ensures that these parts are correct. But thanks to step (ii), any two vertices in $H_i$ at distance at most $2k-2$ get the same pieces of $G_i$ in their neighborhoods. So by transitivity, two vertices which are in the same ECC$_i$ have the same pieces of $G_i$ in their neighborhood. In particular, since $u$ and $v$ are in the same ECC$_i$, and since the edges and non-edges of $v$ are correctly written in the pieces which are in the closed neighborhood of $v$ (otherwise it would have rejected at step (iii)), then $u$ knows the edges and non-edges of~$v$.
    \end{proof}

    Thus, if no vertex rejects in the verification phase, thanks to Claim~\ref{claim:v<u}, every vertex $u$ is able to reconstruct the graph $G_{\preccurlyeq u}$ thanks to the certificates $\pieces$ it sees its neighborhood. Moreover, if the prover gives the certificates as described above, no vertex will reject. So the computation scheme is correct.
\end{proof}

\begin{restatable}{remark}{RemECCs}
    \label{rem:computation witnessed graph other vertices}
    Note that, with the certificates given in the certification scheme of Theorem~\ref{thm:computation scheme for g_<u}, each vertex $v$ can also output the witnessed graph $G_{\preccurlyeq u}$ of any vertex $u$ at distance at most $k-1$ from itself. Indeed, $v$ can see all the pieces of $G_{\preccurlyeq u}$ which are spread in certificates of vertices in $N[u]$.
\end{restatable}
\section{Formal proofs of the forbidden subgraph certifications}
\label{app:advanced-upper-bounds}

In this section, we prove all our upper bounds: Theorem~\ref{thm:p_4k_free_subquadratic}, Theorem~\ref{thm:H_4k_free_subquadratic}, Theorem~\ref{thm:3k-quasilinear} and Theorem~\ref{thm:P_143k_free_subquadratic}. These are presented by increasing level of technicality. 
In this section, we prove all our upper bounds: Theorem~\ref{thm:p_4k_free_subquadratic}, Theorem~\ref{thm:H_4k_free_subquadratic}, Theorem~\ref{thm:3k-quasilinear} and Theorem~\ref{thm:P_143k_free_subquadratic}. These are presented by increasing level of technicality. 

\subsection{Upper bound for $P_{4k-1}$ in $\Tilde{O}(n^{3/2})$}

\thmPFourkFreeSubquadratic*

\begin{proof}[Proof of Theorem~\ref{thm:p_4k_free_subquadratic}.]
    Let $G = (V,E)$ be an $n$-vertex graph.

    \medskip{}
    \textbf{Certification.}
    The certification of the prover is the following. First, it gives to each vertex its certificates in the local computation schemes given by Theorems~\ref{thm:computation scheme for T_G} and~\ref{thm:computation scheme for g_<u}, with $\varepsilon=\frac{1}{2}$, so that each vertex~$u$ can compute $T_G$ and $G_{\preccurlyeq u
    }$. These certificates have size $O(n^{3/2}\log^2 n)$.

    Then, the prover adds some additional information in the certificates, in another field, which will be identical in the certificates of all the vertices. We will denote this field by $\lp$, and it is defined as follows (note that it has size $O(n \log n)$). 

    \begin{definition}[$P_v, \lp$]\label{def:lp}
    For every vertex $v \in V_1$ such that $T_G[v,2]=(id(C_v),d_v)$, let $P_v$ be a longest {induced} path starting from $v$ and having all its other vertices in $\bigcup_{0 \leq d < d_v} C_v(d)$,
    and let $\ell(P_v)$ be its length. 
    We define $\lp$ as being the array indexed by the identifiers of the vertices, such that $\lp[v]=\ell(P_v)$  for every $v \in \bigcup_{C \in \cE_2} \bigcup_{1 \leqslant d \leqslant k-1} C(d)$.
    \end{definition}

    \medskip{}
    \textbf{Verification.}
   We define for each $m \geq 1$ a protocol we call \emph{$m$-pathcheck} below.
      (We define the $m$-pathcheck for arbitrary $m$, because we will use it again when we prove \cref{thm:P_143k_free_subquadratic}.)
   In our verification phase each vertex runs the $m$-pathcheck for $m = 4k-1$.
    
\begin{definition}[$m$-pathcheck] \label{def:mpathcheck}
    Each vertex $u \in V$ takes as input its view at distance $k$ (with the certificates described above) and peforms the following steps:
    \begin{enumerate}[(i)]
    \item First, $u$ applies the verification of the local computation schemes of Theorems~\ref{thm:computation scheme for T_G} and~\ref{thm:computation scheme for g_<u}.
    If no vertex rejects during this verification phase, then every vertex $u$ computed $T_G$ and~$G_{\preccurlyeq u}$.
    \label{step:precompute}
    \item Then, $u$ checks that $\lp$ is the same in its certificate and the certificates of all its neighbors. If it is not the case, $u$ rejects.
    If no vertex rejects at this point, then $\lp$ is the same in the certificates of all vertices in $G$.
    %
    %

    

    \item Suppose $u \in V_2$. Let $C_u$ denote the ECC$_2$ containing $u$. 
    By definition, of ECC and $V_{\preccurlyeq u}$, $\bigcup_{1 \leqslant d \leqslant k-1} C_u(d) \subseteq V_1 \subseteq V_{\preccurlyeq u}$
    Since $u$ computed $G_{\preccurlyeq u}$ in Step \ref{step:precompute}, 
    for each $v \in \bigcup_{1 \leqslant d \leqslant k-1} C_u(d)$, $u$ can compute the length of the path $P_v$, and check if it is correctly written in $\lp$. If this verification fails, $u$ rejects. 
    If no vertex rejects at this point, then $\lp$ is correct in the certificate of every vertex. \label{step:check-lp}

    

    \end{enumerate}

    We may assume no vertex has rejected during steps \ref{step:precompute} to \ref{step:check-lp}, and thus every vertex $u$ has access to a (correct) $T_G$, ~$G_{\preccurlyeq u}$ and $\lp$.
    
    \begin{enumerate}[(i)]
    \setcounter{enumi}{3}
    \item If $G_{\preccurlyeq u}$ has an induced path $P$ on $m$ vertices, such that all the vertices in $P \setminus V_{\preccurlyeq u}$ are in distinct ECC$_2$'s, then $u$ rejects.
   \label{step:visiblempath} 


    \item 
    \label{step:uselp-simple}
    If $u \in V_2$, let us denote by $C_u$ its ECC$_2$. By definition, $V_{\preccurlyeq u} = V_1 \cup C_u$ so by step \ref{step:precompute}
$u$ knows the graph induced by $V_1 \cup C_u$.    
    It rejects if it sees an induced path $P_{\text{start}}$ in $G[C_u \cup V_1]$ satisfying all of the following:
        \begin{itemize}
            \item $P_{\text{start}}$ contains at least one vertex from $C_u$
            \item $P_{\text{start}}$ ends in a vertex $v$, such that $T_G[v,2] = (id(C_v), d_v)$ with $id(C_v) \neq id(C_u)$.  
            This implies that $v$ is at distance $d_v \leq k-1$ from $C_v \in \cE_2 \setminus C_u$.
            (Recall we assumed that no vertex rejected at step \ref{step:precompute} and so $u$ has access to $T_G$.)
            \item $v$ is the unique vertex in $P_{\text{start}}$ that is at distance at most $d_v$ from $C_v$. (again $u$ can check this because it has access to $T_G$.)
            \item $|V(P_{\text{start}})| + (\lp[v]) \geqslant m$
        \end{itemize}

    This case corresponds to the case represented on Figure~\ref{fig:P_start P_end}.
    
\end{enumerate}
\end{definition}

    \medskip{}
    \textbf{Properties of the $m$-pathcheck.}
    We begin with some observations about the $m$-pathcheck. 
     \begin{lemma}
        \label{claim:4k-1 2}
        Suppose $G$ contains an induced path $P$ on $m$ vertices.
        If there is at most one $C \in \cE_2$ such that $|C \cap P| \geqslant 2$, some vertex rejects during the $m$-pathcheck.
    \end{lemma}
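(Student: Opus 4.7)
The plan is to split into two subcases according to the hypothesis: either no ECC$_2$ contains two vertices of $P$, or there is exactly one such ECC$_2$ $C$. In each subcase I aim to exhibit a vertex $u \in V(P)$ whose witnessed graph $G_{\preccurlyeq u}$ contains $P$ as an induced subgraph with $V(P) \setminus V_{\preccurlyeq u}$ lying in pairwise distinct ECC$_2$'s; step~\ref{step:visiblempath} of the $m$-pathcheck will then force $u$ to reject, using the $G_{\preccurlyeq u}$ and $T_G$ reconstructed at step~\ref{step:precompute}. The workhorse observation throughout is that an edge between two vertices of $V_2$ certifies them as $2$-linked (the single-edge path contains no $L_1$ vertex at all, let alone $2k-2$ consecutive ones), so any two adjacent vertices of $V_2$ share an ECC$_2$.

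In the first subcase, the observation rules out $V(P) \subseteq V_2$ (the entire connected path would collapse into a single ECC$_2$, contradicting the hypothesis), so I pick $u \in V_1 \cap V(P)$. Since $u \in V_1 \setminus H_2$, the definition of $\preccurlyeq$ yields $V_{\preccurlyeq u} = \{v \in V_1 : v \text{ in the connected component of } u\}$, which contains all of $V_1 \cap V(P)$. Hence $V(P) \setminus V_{\preccurlyeq u} \subseteq V_2 \cap V(P)$, and these vertices lie in distinct ECC$_2$'s by hypothesis. The observation further guarantees that no edge of $P$ has both endpoints in $V_2$, so every edge of $P$ has at least one endpoint in $V_{\preccurlyeq u}$ and thus belongs to $G_{\preccurlyeq u}$.

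For the second subcase, let $C$ be the unique ECC$_2$ with $|C \cap V(P)| \geq 2$ and pick $u \in C \cap V(P)$, so that $V_{\preccurlyeq u} = V_1 \cup C$. Then $V(P) \setminus V_{\preccurlyeq u} \subseteq V_2 \setminus C$; any two such vertices sharing an ECC$_2$ would produce an ECC$_2 \neq C$ containing at least two vertices of $P$, contradicting the uniqueness of $C$, so they lie in distinct ECC$_2$'s. By the same argument, no edge of $P$ has both endpoints in $V_2 \setminus C$, so every edge of $P$ lies in $G_{\preccurlyeq u}$. In both subcases $P$ remains induced in $G_{\preccurlyeq u}$ because $G_{\preccurlyeq u}$ is a spanning subgraph of $G$ and deleting edges can only turn adjacencies into non-adjacencies; thus $u$ sees the forbidden configuration at step~\ref{step:visiblempath} and rejects. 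The only subtlety worth flagging is precisely this preservation of induced-ness, which follows immediately from the spanning-subgraph property.
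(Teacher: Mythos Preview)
Your proof is correct and follows essentially the same approach as the paper: exhibit a vertex $u$ for which $P$ survives as an induced path in $G_{\preccurlyeq u}$ with $V(P)\setminus V_{\preccurlyeq u}$ in distinct ECC$_2$'s, so that step~\ref{step:visiblempath} fires. Your case split (no ECC$_2$ with two $P$-vertices versus exactly one) differs cosmetically from the paper's ($P\subseteq V_1$ versus $P\cap V_2\neq\emptyset$), and you are more explicit than the paper about why the edges of $P$ persist in $G_{\preccurlyeq u}$, but the argument is the same.
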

    \begin{proof}
       We may assume that no vertex rejects during the first three steps of the $m$-pathcheck.
        If $P \subseteq V_1$, then every vertex $u \in P$ rejects at step (iv). 
        Else, we have $P \cap V_2 \neq \emptyset$, and let $C \in \cE_2$ such that $|P \cap C|$ is maximized. By assumption, for every $C' \in \cE_2$ such that $C' \neq C$, we have $|P \cap C'| \leqslant 1$, so every $u \in C$ rejects at step (iv).
    \end{proof}

\begin{lemma}\label{lem:4k-2ECC-1-uniquedistance}
    Suppose $G$ contains an induced path $P$ on $m$ vertices with the following properties:
    \begin{itemize}
        \item there are exactly two distinct ECC$_2$'s $C_u, C_v \in \cE_2$ which contain vertices of $P$
        \item for some $d \in \{1,2, \dots, k-1\}$, there is a unique vertex $v$ at distance exactly $d$ from $C_v$
    \end{itemize}
    Then some vertex rejects during the $m$-pathcheck.
\end{lemma}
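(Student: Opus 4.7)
The plan is to assume no vertex rejects during the preliminary steps~\ref{step:precompute}--\ref{step:check-lp} of the $m$-pathcheck (otherwise we are already done), so that every vertex has correct access to $T_G$, $\lp$, and its own witnessed graph $G_{\preccurlyeq u}$. I will then exhibit a vertex $u' \in C_u$ which rejects at step~\ref{step:uselp-simple} by choosing $P_{\text{start}}$ to be an appropriate subpath of $P$.

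The key structural step uses the uniqueness hypothesis to split $P$ at $v$. Since distances to $C_v$ change by at most one between consecutive vertices of $P$ and $v$ is the only vertex of $P$ at distance exactly $d$ from $C_v$, the path $P$ can only cross the ``distance $d$ to $C_v$'' level at $v$. I conclude that $P$ decomposes at $v$ into two subpaths $P_{\text{close}}$ and $P_{\text{far}}$ (sharing only $v$), one with all other vertices at distance $<d$ from $C_v$ and the other with all other vertices at distance $>d$. Because $P\cap C_v$ sits at distance $0<d$ and $P\cap C_u$ sits at distance $\ge 2k-1>d$ from $C_v$ (distinct ECC$_2$'s being $2k-1$ apart), the $C_v$-vertices of $P$ all lie on the close side and the $C_u$-vertices all lie on the far side; in particular both subpaths are non-empty and $v$ is internal to $P$.

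I then take any $u'\in C_u$ and set $P_{\text{start}}:=P_{\text{far}}$, $v':=v$. The inclusion $V(P_{\text{far}})\subseteq V_1\cup C_u$ follows because any $V_2$-vertex of $P_{\text{far}}$ must belong to some ECC$_2$ touching $P$ (hence to $C_u$ or $C_v$), and $C_v$ is ruled out by the distance argument. Since $V_{\preccurlyeq u'}=V_1\cup C_u$ for $u'\in C_u$, the witnessed graph $G_{\preccurlyeq u'}$ lets $u'$ reconstruct $G[V_1\cup C_u]$ and therefore see $P_{\text{far}}$ as an induced path. The first three bullets of step~\ref{step:uselp-simple} will then follow directly: $P_{\text{far}}$ contains a $C_u$-vertex, ends at $v$ with $T_G[v,2]=(id(C_v),d)$ and $id(C_v)\ne id(C_u)$, and $v$ is the unique vertex of $P_{\text{far}}$ at distance $\le d$ from $C_v$ (the others lie at distance $>d$).

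The main obstacle is the length bullet $|V(P_{\text{far}})|+\lp[v]\ge m$. My plan is to observe that $P_{\text{close}}$ is itself an admissible candidate for $P_v$ in Definition~\ref{def:lp}, since it starts at $v$ and all its remaining vertices lie at distance $<d$ from $C_v$. This gives $\lp[v]\ge |V(P_{\text{close}})|-1$, which combined with $|V(P_{\text{far}})|+|V(P_{\text{close}})|-1=|V(P)|=m$ yields the inequality. This step is precisely where the uniqueness hypothesis on $v$ is essential: without it, the close side of $v$ in $P$ could oscillate back through distance $d$ and fail to be admissible for $P_v$. With all four conditions met, $u'$ rejects, completing the proof.
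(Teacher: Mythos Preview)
Your proof is correct and follows essentially the same approach as the paper: decompose $P$ at the unique vertex $v$ into a ``far'' piece $P_{\text{start}}\subseteq C_u\cup V_1$ and a ``close'' piece contained (except for $v$) in $\bigcup_{d'<d}C_v(d')$, then verify the four bullets of step~\ref{step:uselp-simple} for some $u'\in C_u$, using that the close piece is an admissible candidate for $P_v$. Your write-up is in fact more careful than the paper's in justifying why the decomposition exists (via the distance-changes-by-one argument and the observation that $P\cap C_v$ and $P\cap C_u$ lie on opposite sides of level $d$), which the paper asserts without proof.
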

\begin{proof}
      As depicted in Figure~\ref{fig:P_start P_end}, we can decompose $P$ in two consecutive parts $P_{\text{start}},P_{\text{end}}$ such that:
    \begin{itemize}
        \item $P_{\text{start}} \cap P_{\text{end}} = \{v\}$;
        \item $P_{\text{start}} \subseteq C_u \cup V_1$;
        \item $P_{\text{end}}\setminus \{v\} \subseteq \bigcup_{0 \leqslant d < d_v} C_v(d)$.
    \end{itemize}
    We may assume that no vertex rejects during the first four steps of the $m$-pathcheck.
    We claim that every $u \in P \cap C_u$ will reject at step (v).
    
    By definition, $|V(P_{\text{start}})| = m- |V(P_{\text{end}})| -1$.
    Let $P_v$ denote the the longest induced path starting at $v$ and having all its other vertices strictly closer from $C_v$ than $v$.
    Hence, $|V(P_v)| \geqslant |V(P_{\text{end}})|$, 
    so $\ell(P_{\text{start}}) + \lp[v] \geqslant m$.
    $P_{\text{start}}$ is an induced path of $G_{\preccurlyeq u}$, included in $V_{\preccurlyeq u}$, which satisfies the conditions making $u$ reject at step~\ref{step:uselp-simple}.
\end{proof}

    \medskip{}
    \textbf{Correctness.}
    Let us show that this certification scheme is correct. 
    Assume first by contradiction that $G$ contains an induced path $P$ of length $4k-1$. Let us show that for every assignment of certificates, at least one vertex rejects.
    If no vertex rejects at step (i) of the verification,  then every vertex $u$ knows its witnessed graph $G_{\preccurlyeq u}$ and the ECC-table $T_G$ (note that $u$ also knows $V_{\preccurlyeq u}$, because it can be computed directly from $T_G$). If no vertex rejects in step (iii), then $\lp$ is also correct.
    Then, let us prove the two following Claims~\ref{claim:4k-1 1} and~\ref{claim:4k-1 2}.
    
    \begin{claim}
        \label{claim:4k-1 1}
        There are at most two $C \in \cE_2$ such that $|C \cap P| \geqslant 2$. And if there are exactly two, then $P$ does not contain any vertex from any other ECC$_2$.
    \end{claim}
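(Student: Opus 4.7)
The plan is to exploit the fact, established in Subsection~\ref{sub:ecci-dfns-app}, that two vertices lying in distinct ECC$_2$'s are at distance at least $2k-1$ in $G$. Since $P$ is an induced path, the $G$-distance between any two vertices of $P$ is at most their distance along $P$, so any two vertices of $P$ belonging to different ECC$_2$'s must be at $P$-distance at least $2k-1$.

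First I would set up the notation: write $P = v_1 v_2 \cdots v_{4k-1}$ and, for any $C \in \cE_2$ that meets $P$, let $i_C$ and $j_C$ denote respectively the smallest and largest index of a vertex of $C$ on $P$. If $C$ contains at least two vertices of $P$ then $j_C - i_C \geq 1$. If $C \neq C'$ are two distinct ECC$_2$'s both meeting $P$, with $j_C < i_{C'}$, then $i_{C'} - j_C \geq 2k-1$ by the separation property recalled above.

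For the first part of the claim, suppose towards contradiction that three distinct ECC$_2$'s $C_1, C_2, C_3$ each meet $P$ in at least two vertices; order them so that $j_{C_1} < i_{C_2}$ and $j_{C_2} < i_{C_3}$. A direct chain of inequalities gives $i_{C_1} \geq 1$, $j_{C_1} \geq 2$, $i_{C_2} \geq 2k+1$, $j_{C_2} \geq 2k+2$, $i_{C_3} \geq 4k+1$, and $j_{C_3} \geq 4k+2$, contradicting $j_{C_3} \leq 4k-1$.

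For the second part, assume exactly two ECC$_2$'s $C_1, C_2$ meet $P$ in at least two vertices each, with $j_{C_1} < i_{C_2}$, and suppose some third ECC$_2$ $C_3$ contains a vertex $v_p$ of $P$. I would split into three cases according to whether $p < i_{C_1}$, $j_{C_1} < p < i_{C_2}$, or $p > j_{C_2}$, and in each case run the same kind of arithmetic as above, using that $v_p$ is at $P$-distance at least $2k-1$ from the closest vertex of $C_1$ or $C_2$ on $P$. Each case forces $j_{C_2} \geq 4k$ or $i_{C_1} \leq 0$, contradicting the length of $P$. I do not expect any real obstacle here: the whole claim is a straightforward pigeonhole along $P$ using the $2k-1$ separation between ECC$_2$'s, and the arithmetic is tight (which is consistent with $4k-1$ being the exact threshold at which this argument succeeds).
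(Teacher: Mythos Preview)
Your argument is correct and is exactly the elaboration of the paper's one-line proof, which just invokes the fact that any two vertices in distinct ECC$_2$'s are at distance at least $2k-1$ in $G$ (hence at $P$-distance at least $2k-1$). The only point you leave implicit is why the index intervals $[i_C,j_C]$ of the relevant ECC$_2$'s can be linearly ordered along $P$ (and, in the second part, why $p$ cannot lie inside $[i_{C_1},j_{C_1}]$ or $[i_{C_2},j_{C_2}]$); this follows from the same separation property, since an interleaving would force some $j_C-i_C\geq 4k-2$ and leave no room on a $(4k-1)$-vertex path for the remaining ECC$_2$'s.
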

    \begin{proof}
        This simply follows from the fact that $P$ has length $4k-1$, and two different ECC$_2$'s are at distance at least $2k-1$.
    \end{proof}
    

    Applying Claim~\ref{claim:4k-1 1} and~\cref{claim:4k-1 2}, we can assume that there are exactly two different ECC$_2$'s which intersect $P$, and both on at least two vertices. Let us denote them by $C_u$ and $C_v$. Since $P$ is a path going through $C_u$ and $C_v$ which are at distance at least $2k-1$ from each other, for all $d \in \{1, \ldots, k-1\}$, we have \mbox{$P \cap C_u(d) \neq \emptyset$}, and \mbox{$P \cap C_v(d) \neq \emptyset$}. Moreover, since $P$ has length $4k-1$, and since $P \cap C_u(0)$, $P \cap C_v(0)$ both contain at least two vertices, there exists a set $P \cap C_u(d)$ or $P \cap C_v(d)$ which contains only one vertex, for some $d \in \{1, \ldots, k-1\}$. 
    Hence, by applying \cref{lem:4k-2ECC-1-uniquedistance} for $m = 4k-1$ we obtain that a some vertex rejects during our verification protocol as desired. 

    \begin{figure}
        \centering
        \scalebox{0.5}{
        \begin{tikzpicture}[x=0.75pt,y=0.75pt,yscale=-1,xscale=1]

\draw  [color={rgb, 255:red, 255; green, 255; blue, 255 }  ,draw opacity=1 ][fill={rgb, 255:red, 74; green, 144; blue, 226 }  ,fill opacity=0.29 ] (223,47.5) .. controls (234,90.5) and (218,179.5) .. (109,185.5) .. controls (0,191.5) and (26,108.5) .. (76,65.5) .. controls (126,22.5) and (212,4.5) .. (223,47.5) -- cycle ;
\draw  [color={rgb, 255:red, 255; green, 255; blue, 255 }  ,draw opacity=1 ][fill={rgb, 255:red, 74; green, 144; blue, 226 }  ,fill opacity=0.29 ] (598,415.5) .. controls (574,469.5) and (454,524.5) .. (440,448.5) .. controls (426,372.5) and (534,310.5) .. (577,313.5) .. controls (620,316.5) and (622,361.5) .. (598,415.5) -- cycle ;
\draw  [dash pattern={on 4.5pt off 4.5pt}]  (129,197.5) .. controls (237,190.5) and (251,100.5) .. (243,59.5) ;
\draw  [dash pattern={on 4.5pt off 4.5pt}]  (151,211.5) .. controls (259,204.5) and (273,114.5) .. (265,73.5) ;
\draw  [dash pattern={on 4.5pt off 4.5pt}]  (420,432.5) .. controls (408,356.5) and (516,292.5) .. (557,297.5) ;
\draw  [dash pattern={on 4.5pt off 4.5pt}]  (393,415.5) .. controls (381,339.5) and (489,275.5) .. (530,280.5) ;
\draw  [dash pattern={on 4.5pt off 4.5pt}]  (182,270.5) .. controls (295,282.5) and (347,202.5) .. (341,133.5) ;
\draw [color={rgb, 255:red, 25; green, 25; blue, 116 }  ,draw opacity=1 ][line width=1.5]    (138,104) .. controls (143,100.25) and (246,189) .. (235,202) .. controls (224,215) and (149,151.5) .. (141,161.5) .. controls (133,171.5) and (173,204) .. (189,216) .. controls (205,228) and (249,249.5) .. (269,234.5) .. controls (289,219.5) and (179,125.5) .. (204,100.5) .. controls (229,75.5) and (330.44,206.65) .. (356,241.5) .. controls (381.56,276.35) and (381,300.5) .. (365,307.5) .. controls (349,314.5) and (310,293.5) .. (311,327.5) .. controls (312,361.5) and (501,429.5) .. (526,404.5) .. controls (551,379.5) and (414,310.5) .. (425,294.5) .. controls (436,278.5) and (517,354.5) .. (533,343.5) .. controls (549,332.5) and (441,270.5) .. (466,245.5) ;
\draw  [fill={rgb, 255:red, 208; green, 2; blue, 27 }  ,fill opacity=1 ] (331,203.5) .. controls (331,201.01) and (328.99,199) .. (326.5,199) .. controls (324.01,199) and (322,201.01) .. (322,203.5) .. controls (322,205.99) and (324.01,208) .. (326.5,208) .. controls (328.99,208) and (331,205.99) .. (331,203.5) -- cycle ;
\draw  [fill={rgb, 255:red, 208; green, 2; blue, 27 }  ,fill opacity=1 ] (582,372.5) .. controls (582,370.01) and (579.99,368) .. (577.5,368) .. controls (575.01,368) and (573,370.01) .. (573,372.5) .. controls (573,374.99) and (575.01,377) .. (577.5,377) .. controls (579.99,377) and (582,374.99) .. (582,372.5) -- cycle ;
\draw  [draw opacity=0][fill={rgb, 255:red, 208; green, 2; blue, 27 }  ,fill opacity=0.1 ] (141,334.5) .. controls (47,219.5) and (99,204.5) .. (157,191.5) .. controls (215,178.5) and (221.81,148.94) .. (228.63,139.88) .. controls (235.44,130.81) and (252.5,95.5) .. (241,53.5) .. controls (229.5,11.5) and (323.56,1.5) .. (425,40.5) .. controls (526.44,79.5) and (598,244.5) .. (628,325.5) .. controls (658,406.5) and (613,499.5) .. (508,501.5) .. controls (403,503.5) and (235,449.5) .. (141,334.5) -- cycle ;

\draw (100,62.4) node [anchor=north west][inner sep=0.75pt]    {\huge $C_v\ =\ C_v( 0)$};
\draw (456,422) node [anchor=north west][inner sep=0.75pt]    {\huge $C_u\ =\ C_u( 0)$};
\draw (72,185) node [anchor=north west][inner sep=0.75pt]    {\huge $C_v(1)$};
\draw (97,204.4) node [anchor=north west][inner sep=0.75pt]    {\huge $C_v(2)$};
\draw (560,291.4) node [anchor=north west][inner sep=0.75pt]    {\huge $C_u(1)$};
\draw (533,267.4) node [anchor=north west][inner sep=0.75pt]    {\huge $C_u(2)$};
\draw (422,171) node [anchor=north west][inner sep=0.75pt]    {\Huge $V_{1}$};
\draw (118,261.4) node [anchor=north west][inner sep=0.75pt]    {\huge $C_v(d_{v})$};
\draw (283,104.4) node [anchor=north west][inner sep=0.75pt]    {\Huge $\textcolor[rgb]{0.1,0.1,0.45}{ P_{\text{end}}}$};
\draw (320,377.7) node [anchor=north west][inner sep=0.75pt]    {\Huge $\textcolor[rgb]{0.1,0.1,0.45}{P_{\text{start}}}$};
\draw (342,186.4) node [anchor=north west][inner sep=0.75pt]    {\huge $\textcolor[rgb]{0.82,0.01,0.11}{v}$};
\draw (585,353.4) node [anchor=north west][inner sep=0.75pt]    {\huge $\textcolor[rgb]{0.82,0.01,0.11}{u}$};
\draw (130,380.380) node [anchor=north west][inner sep=0.75pt]    {\Huge $\textcolor[rgb]{0.82,0.01,0.11}{V_{\preccurlyeq u}}$};

\end{tikzpicture}
        }
        \caption{The decomposition of an induced path in two consecutive paths $P_{\text{start}}$ and $P_{\text{end}}$ with the properties mentioned in the proof of Theorem~\ref{thm:p_4k_free_subquadratic}.}
        \label{fig:P_start P_end}
        
    \end{figure}
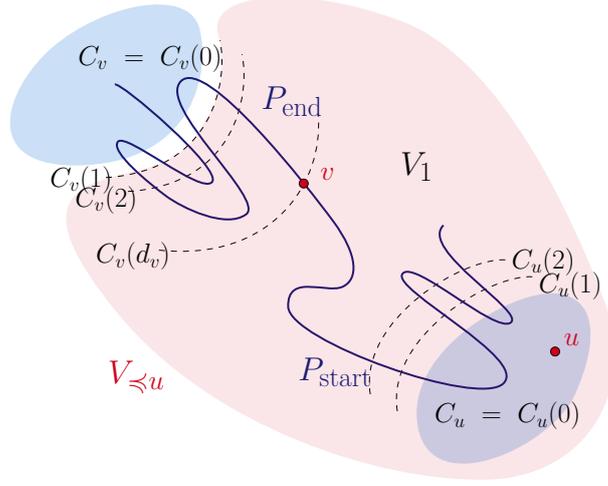
       
    \smallskip
    Conversely, assume that $G$ does not have an induced path of length $4k-1$. Let us show that there exists an assignment of the certificates such that every vertex accepts. This assignment is the following one: the prover attributes the certificates such that no vertex reject in the local computation schemes of  Theorems~\ref{thm:computation scheme for T_G} and~\ref{thm:computation scheme for g_<u}, and gives the correct value of $\lp$ to every vertex. With such a certificate, it is straightforward that a vertex can not reject at steps (i), (ii) and~(iii).
    
    Assume by contradiction that a vertex $u$ rejects at step (iv). Then, there exists an induced path $P$ of length $4k-1$ in $G_{\preccurlyeq u}$, such that the vertices in $P \setminus V_{\preccurlyeq u}$ are in distinct ECC$_2$'s. Since $G$ does not have an induced path of length $4k-1$, $P$ is not induced in $G$. It implies that there are two vertices of $P$ linked by an edge in $G$, an a non-edge in $G_{\preccurlyeq u}$. Thus, these two vertices are in $V \setminus V_{\preccurlyeq u}$ (because edges having an endpoint in $V_{\preccurlyeq u}$ are in $G_{\preccurlyeq u}$). So by definition of step~(iv), these two vertices are in distinct ECC$_2$'s. In particular, they can not be neighbors in $G$, which is a contradiction.

    Finally, assume by contradiction that a vertex $u$ rejects at step (v). It means that there exists an induced path $P_{\text{start}}$ in $G_{\preccurlyeq u}$, included in $V_{\preccurlyeq u}$, ending in some vertex $v$ satisfying the conditions of step (v). Since $P_{\text{start}} \subseteq V_{\preccurlyeq u}$, it is also induced in $G$. Moreover, there exists an induced path $P_v$ in $G$ of length $\ell(P_v)$ starting in $v$ and having all its other vertices in $\bigcup_{d < d_v}C_v(d)$. Since $v$ is the only vertex in $P_{\text{start}} \cap (\bigcup_{d \leqslant d_v} C_v(d))$, then the concatenation of $P_{\text{start}}$ and $P_v$ is still an induced path in $G$. Moreover, it has length $\ell(P_{\text{start}}) + \ell(P_v) - 1 \geqslant 4k-1$. It is a contradiction.
    Thus, no vertex rejects at step~(v), so all the vertices accept at step (vi), which concludes the proof.
\end{proof}

\subsection{Upper bound for general graphs of size $4k-1$ in $\Tilde{O}(n^{3/2})$}
\label{sec:H_4k}

\thmHFourkFreeSubquadratic*

Before proving Theorem~\ref{thm:H_4k_free_subquadratic}, let us give some informal intuition on the proof.
The general idea is the same as in the proof of Theorem~\ref{thm:p_4k_free_subquadratic}: first, the prover uses the local computation schemes of Theorems~\ref{thm:computation scheme for T_G} and~\ref{thm:computation scheme for g_<u} with $\varepsilon = \frac{1}{2}$ (such that each vertex $u$ computes $G_{\preccurlyeq u}$ and $T_G$), and then it adds some additional information in the certificates.
The main difference is the following one. Theorem~\ref{thm:p_4k_free_subquadratic} is the particular case where the graph $H$ is a path. In this case, the information the prover adds in the certificates is, for each vertex $v \in V$ close to some $C_v \in \cE_2$, the length of the longest induced path $P_v$ starting from $v$ and having all its other vertices closer from $C_v$ than $v$. This enables a vertex $u$ in an ECC$_2$ $C_u \neq C_v$ to check, for every induced path $P$ in $G_{\preccurlyeq u}$, if it can be extended into a path of length $4k-1$ using $P_v$.
Here, for an arbitrary graph $H$, we generalize it: the information the prover will add is, intuitively, for each vertex $v$ close from an ECC$_2$ $C_v$, and for every induced subgraph $H'$ of $H$, if $H'$ is induced in the part of $G$ consisting in $v$ and all the vertices closer from $C_v$ than $v$. With this information, every vertex $u$ in an ECC$_2$ $C_u \neq C_v$ will be able to detect if some induced graph in $G_{\preccurlyeq u}$ can be extended into $H$ using $H'$.

\paragraph*{Pointed graphs.}
Let us introduce some definitions about pointed graphs that will be useful in the proof of Theorem~\ref{thm:H_4k_free_subquadratic}.
A \emph{pointed graph} is a tuple $(H, S)$ where $H$ is a graph and $S \subseteq V(H)$ is a set of \emph{pointed} vertices.
If $(H_1,S_1)$, $(H_2,S_2)$ are two pointed graphs, their \emph{disjoint union} is the pointed graph $(H_1 \cup H_2, S_1 \cup S_2)$, where $H_1 \cup H_2$ is the graph obtained by taking a copy of $H_1$ and a copy of $H_2$, and no edge between them.
Finally, let $G$ be a graph, $H$ an induced subgraph of $G$ and $S \subseteq V(H)$. The \emph{complement} of $(H,S)$ in $G$ is the pointed subgraph $(\overline{H},S)$ where $\overline{H}$ is the subgraph of $G$ induced by $(V(G) \setminus V(H)) \cup S$.

\begin{proof}[Proof of Theorem~\ref{thm:H_4k_free_subquadratic}.]

Let $H$ be a graph with $4k-1$ vertices. Let us describe a certification scheme for $H$-free graphs with verification radius~$k$ and certificates of size \mbox{$O(k \cdot 2^{4k} \cdot n + n^{3/2} \log^2 n)$}.

\medskip{}
\textbf{Certification.}
The certification of the prover is the following. First,  it gives to each vertex its certificates in the local computation schemes given by Theorems~\ref{thm:computation scheme for T_G} and~\ref{thm:computation scheme for g_<u}, with $\varepsilon = \frac{1}{2}$. This part of the certificate has size $O(n^{3/2} \log^2 n)$.

Then, the prover adds some additional information in the certificates, in a field which will be identical in the certificates of all the vertices. We will denote this field by $\htab$, and it is defined as follows.
Let $\mathcal{P}(H)$ be the set containing all the pointed graphs $(H',\{h\})$, where $h \in V(H)$, and $H'$ is a subgraph of $H$ induced by $\{h\}$ and an union of connected components of $H \setminus \{h\}$. An example is shown on Figure~\ref{fig:P(H)}.
The field $\htab$ contains a table which has at most $n$ rows and $(4k-1) \cdot 2^{4k-1}$ columns. The rows are indexed by the identifier of the vertices which are at distance at most $k-1$ from $V_2$, and the columns are indexed by the pointed graphs $(H',\{h\}) \in \mathcal{P}(H)$.
Let $v \in V$, such that $v$ is at distance $d_v \leqslant k-1$ from some $C_v \in \cE_2$. $\htab[v, (H',\{h\})]$ is equal to $1$ if $H'$ is an induced subgraph of $\bigcup_{d \leqslant d_v} C_v(d)$, where all the vertices in $H' \setminus \{h\}$ are mapped in $\bigcup_{d < d_v} C_v(d)$, and $h$ is mapped to $v$. Otherwise, $\htab[v, (H,\{h\})]$ is equal to $0$.
An example with the pointed graph $H'$ of Figure~\ref{fig:P(H)} is shown on Figure~\ref{fig:1 in Htab}.

\begin{figure}
    \centering
    \scalebox{0.8}{
    \begin{tikzpicture}[x=0.75pt,y=0.75pt,yscale=-1,xscale=1]

\draw    (116.25,171) -- (154.5,171) ;
\draw    (185.5,153) -- (223.75,153) ;
\draw    (154.5,171) -- (185.5,153) ;
\draw    (154.5,171) -- (185.5,189) ;
\draw    (116.25,136.75) -- (116.25,171) ;
\draw    (96,107) -- (116.25,136.75) ;
\draw    (116.25,77.25) -- (136.5,107) ;
\draw    (116.25,77.25) -- (96,107) ;
\draw    (136.5,107) -- (116.25,136.75) ;
\draw    (78,171) -- (116.25,171) ;
\draw    (116.25,171) -- (116.25,205.25) ;
\draw    (116.25,205.25) -- (99,233) ;
\draw    (116.25,205.25) -- (133.5,233) ;
\draw    (133.5,233) -- (99,233) ;
\draw  [fill={rgb, 255:red, 255; green, 255; blue, 255 }  ,fill opacity=1 ] (128.25,233) .. controls (128.25,230.1) and (130.6,227.75) .. (133.5,227.75) .. controls (136.4,227.75) and (138.75,230.1) .. (138.75,233) .. controls (138.75,235.9) and (136.4,238.25) .. (133.5,238.25) .. controls (130.6,238.25) and (128.25,235.9) .. (128.25,233) -- cycle ;
\draw  [fill={rgb, 255:red, 255; green, 255; blue, 255 }  ,fill opacity=1 ] (93.75,233) .. controls (93.75,230.1) and (96.1,227.75) .. (99,227.75) .. controls (101.9,227.75) and (104.25,230.1) .. (104.25,233) .. controls (104.25,235.9) and (101.9,238.25) .. (99,238.25) .. controls (96.1,238.25) and (93.75,235.9) .. (93.75,233) -- cycle ;
\draw  [fill={rgb, 255:red, 255; green, 255; blue, 255 }  ,fill opacity=1 ] (111,205.25) .. controls (111,202.35) and (113.35,200) .. (116.25,200) .. controls (119.15,200) and (121.5,202.35) .. (121.5,205.25) .. controls (121.5,208.15) and (119.15,210.5) .. (116.25,210.5) .. controls (113.35,210.5) and (111,208.15) .. (111,205.25) -- cycle ;
\draw  [fill={rgb, 255:red, 255; green, 255; blue, 255 }  ,fill opacity=1 ] (149.25,171) .. controls (149.25,168.1) and (151.6,165.75) .. (154.5,165.75) .. controls (157.4,165.75) and (159.75,168.1) .. (159.75,171) .. controls (159.75,173.9) and (157.4,176.25) .. (154.5,176.25) .. controls (151.6,176.25) and (149.25,173.9) .. (149.25,171) -- cycle ;
\draw  [fill={rgb, 255:red, 255; green, 255; blue, 255 }  ,fill opacity=1 ] (218.5,153) .. controls (218.5,150.1) and (220.85,147.75) .. (223.75,147.75) .. controls (226.65,147.75) and (229,150.1) .. (229,153) .. controls (229,155.9) and (226.65,158.25) .. (223.75,158.25) .. controls (220.85,158.25) and (218.5,155.9) .. (218.5,153) -- cycle ;
\draw  [fill={rgb, 255:red, 255; green, 255; blue, 255 }  ,fill opacity=1 ] (180.25,153) .. controls (180.25,150.1) and (182.6,147.75) .. (185.5,147.75) .. controls (188.4,147.75) and (190.75,150.1) .. (190.75,153) .. controls (190.75,155.9) and (188.4,158.25) .. (185.5,158.25) .. controls (182.6,158.25) and (180.25,155.9) .. (180.25,153) -- cycle ;
\draw  [fill={rgb, 255:red, 255; green, 255; blue, 255 }  ,fill opacity=1 ] (180.25,189) .. controls (180.25,186.1) and (182.6,183.75) .. (185.5,183.75) .. controls (188.4,183.75) and (190.75,186.1) .. (190.75,189) .. controls (190.75,191.9) and (188.4,194.25) .. (185.5,194.25) .. controls (182.6,194.25) and (180.25,191.9) .. (180.25,189) -- cycle ;
\draw  [fill={rgb, 255:red, 255; green, 255; blue, 255 }  ,fill opacity=1 ] (111,77.25) .. controls (111,74.35) and (113.35,72) .. (116.25,72) .. controls (119.15,72) and (121.5,74.35) .. (121.5,77.25) .. controls (121.5,80.15) and (119.15,82.5) .. (116.25,82.5) .. controls (113.35,82.5) and (111,80.15) .. (111,77.25) -- cycle ;
\draw  [fill={rgb, 255:red, 255; green, 255; blue, 255 }  ,fill opacity=1 ] (131.25,107) .. controls (131.25,104.1) and (133.6,101.75) .. (136.5,101.75) .. controls (139.4,101.75) and (141.75,104.1) .. (141.75,107) .. controls (141.75,109.9) and (139.4,112.25) .. (136.5,112.25) .. controls (133.6,112.25) and (131.25,109.9) .. (131.25,107) -- cycle ;
\draw  [fill={rgb, 255:red, 255; green, 255; blue, 255 }  ,fill opacity=1 ] (90.75,107) .. controls (90.75,104.1) and (93.1,101.75) .. (96,101.75) .. controls (98.9,101.75) and (101.25,104.1) .. (101.25,107) .. controls (101.25,109.9) and (98.9,112.25) .. (96,112.25) .. controls (93.1,112.25) and (90.75,109.9) .. (90.75,107) -- cycle ;
\draw  [fill={rgb, 255:red, 255; green, 255; blue, 255 }  ,fill opacity=1 ] (111,136.75) .. controls (111,133.85) and (113.35,131.5) .. (116.25,131.5) .. controls (119.15,131.5) and (121.5,133.85) .. (121.5,136.75) .. controls (121.5,139.65) and (119.15,142) .. (116.25,142) .. controls (113.35,142) and (111,139.65) .. (111,136.75) -- cycle ;
\draw  [fill={rgb, 255:red, 255; green, 255; blue, 255 }  ,fill opacity=1 ] (111,171) .. controls (111,168.1) and (113.35,165.75) .. (116.25,165.75) .. controls (119.15,165.75) and (121.5,168.1) .. (121.5,171) .. controls (121.5,173.9) and (119.15,176.25) .. (116.25,176.25) .. controls (113.35,176.25) and (111,173.9) .. (111,171) -- cycle ;
\draw  [fill={rgb, 255:red, 255; green, 255; blue, 255 }  ,fill opacity=1 ] (72.75,171) .. controls (72.75,168.1) and (75.1,165.75) .. (78,165.75) .. controls (80.9,165.75) and (83.25,168.1) .. (83.25,171) .. controls (83.25,173.9) and (80.9,176.25) .. (78,176.25) .. controls (75.1,176.25) and (72.75,173.9) .. (72.75,171) -- cycle ;
\draw    (362.25,137.75) -- (362.25,172) ;
\draw    (342,108) -- (362.25,137.75) ;
\draw    (362.25,78.25) -- (382.5,108) ;
\draw    (362.25,78.25) -- (342,108) ;
\draw    (382.5,108) -- (362.25,137.75) ;
\draw    (324,172) -- (362.25,172) ;
\draw  [fill={rgb, 255:red, 255; green, 255; blue, 255 }  ,fill opacity=1 ] (357,78.25) .. controls (357,75.35) and (359.35,73) .. (362.25,73) .. controls (365.15,73) and (367.5,75.35) .. (367.5,78.25) .. controls (367.5,81.15) and (365.15,83.5) .. (362.25,83.5) .. controls (359.35,83.5) and (357,81.15) .. (357,78.25) -- cycle ;
\draw  [fill={rgb, 255:red, 255; green, 255; blue, 255 }  ,fill opacity=1 ] (377.25,108) .. controls (377.25,105.1) and (379.6,102.75) .. (382.5,102.75) .. controls (385.4,102.75) and (387.75,105.1) .. (387.75,108) .. controls (387.75,110.9) and (385.4,113.25) .. (382.5,113.25) .. controls (379.6,113.25) and (377.25,110.9) .. (377.25,108) -- cycle ;
\draw  [fill={rgb, 255:red, 255; green, 255; blue, 255 }  ,fill opacity=1 ] (336.75,108) .. controls (336.75,105.1) and (339.1,102.75) .. (342,102.75) .. controls (344.9,102.75) and (347.25,105.1) .. (347.25,108) .. controls (347.25,110.9) and (344.9,113.25) .. (342,113.25) .. controls (339.1,113.25) and (336.75,110.9) .. (336.75,108) -- cycle ;
\draw  [fill={rgb, 255:red, 255; green, 255; blue, 255 }  ,fill opacity=1 ] (357,137.75) .. controls (357,134.85) and (359.35,132.5) .. (362.25,132.5) .. controls (365.15,132.5) and (367.5,134.85) .. (367.5,137.75) .. controls (367.5,140.65) and (365.15,143) .. (362.25,143) .. controls (359.35,143) and (357,140.65) .. (357,137.75) -- cycle ;
\draw  [fill={rgb, 255:red, 245; green, 166; blue, 35 }  ,fill opacity=1 ][line width=1.5]  (357,172) .. controls (357,169.1) and (359.35,166.75) .. (362.25,166.75) .. controls (365.15,166.75) and (367.5,169.1) .. (367.5,172) .. controls (367.5,174.9) and (365.15,177.25) .. (362.25,177.25) .. controls (359.35,177.25) and (357,174.9) .. (357,172) -- cycle ;
\draw  [fill={rgb, 255:red, 255; green, 255; blue, 255 }  ,fill opacity=1 ] (318.75,172) .. controls (318.75,169.1) and (321.1,166.75) .. (324,166.75) .. controls (326.9,166.75) and (329.25,169.1) .. (329.25,172) .. controls (329.25,174.9) and (326.9,177.25) .. (324,177.25) .. controls (321.1,177.25) and (318.75,174.9) .. (318.75,172) -- cycle ;

\draw (57,77.4) node [anchor=north west][inner sep=0.75pt]    {\LARGE $H$};
\draw (305,75.4) node [anchor=north west][inner sep=0.75pt]    {\LARGE $H'$};
\draw (369.5,175.4) node [anchor=north west][inner sep=0.75pt]    {\LARGE $h$};
\draw (123.5,174.4) node [anchor=north west][inner sep=0.75pt]    {\LARGE $h$};

\end{tikzpicture}

    }
    \caption{An example of a graph $H$ and a pointed graph $(H',\{h\}) \in \mathcal{P}(H)$.}
    \label{fig:P(H)}
\end{figure}
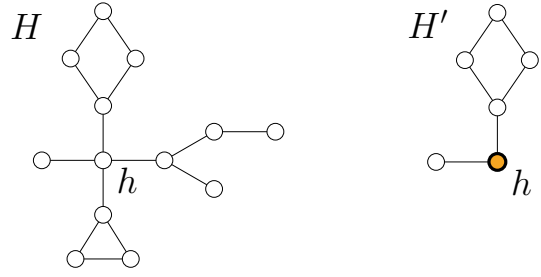

\begin{figure}
    \centering
    \scalebox{0.8}{
    \begin{tikzpicture}[x=0.75pt,y=0.75pt,yscale=-1,xscale=1]

\draw  [color={rgb, 255:red, 255; green, 255; blue, 255 }  ,draw opacity=1 ][fill={rgb, 255:red, 74; green, 144; blue, 226 }  ,fill opacity=0.29 ] (255.5,145) .. controls (289.5,187) and (281.5,241) .. (211.5,243) .. controls (141.5,245) and (126.5,196) .. (143.5,159) .. controls (160.5,122) and (221.5,103) .. (255.5,145) -- cycle ;
\draw  [dash pattern={on 4.5pt off 4.5pt}]  (324.5,144) .. controls (384.5,222) and (373.5,296) .. (256.5,306) ;
\draw    (333.5,223) -- (359.25,246) ;
\draw    (310.5,230) -- (333.5,223) ;
\draw    (290.5,213) -- (315.5,204) ;
\draw    (310.5,230) -- (290.5,213) ;
\draw    (315.5,204) -- (333.5,223) ;
\draw    (322.5,253) -- (359.25,246) ;
\draw  [fill={rgb, 255:red, 255; green, 255; blue, 255 }  ,fill opacity=1 ] (285.25,213) .. controls (285.25,210.1) and (287.6,207.75) .. (290.5,207.75) .. controls (293.4,207.75) and (295.75,210.1) .. (295.75,213) .. controls (295.75,215.9) and (293.4,218.25) .. (290.5,218.25) .. controls (287.6,218.25) and (285.25,215.9) .. (285.25,213) -- cycle ;
\draw  [fill={rgb, 255:red, 255; green, 255; blue, 255 }  ,fill opacity=1 ] (310.25,204) .. controls (310.25,201.1) and (312.6,198.75) .. (315.5,198.75) .. controls (318.4,198.75) and (320.75,201.1) .. (320.75,204) .. controls (320.75,206.9) and (318.4,209.25) .. (315.5,209.25) .. controls (312.6,209.25) and (310.25,206.9) .. (310.25,204) -- cycle ;
\draw  [fill={rgb, 255:red, 255; green, 255; blue, 255 }  ,fill opacity=1 ] (305.25,230) .. controls (305.25,227.1) and (307.6,224.75) .. (310.5,224.75) .. controls (313.4,224.75) and (315.75,227.1) .. (315.75,230) .. controls (315.75,232.9) and (313.4,235.25) .. (310.5,235.25) .. controls (307.6,235.25) and (305.25,232.9) .. (305.25,230) -- cycle ;
\draw  [fill={rgb, 255:red, 255; green, 255; blue, 255 }  ,fill opacity=1 ] (328.25,223) .. controls (328.25,220.1) and (330.6,217.75) .. (333.5,217.75) .. controls (336.4,217.75) and (338.75,220.1) .. (338.75,223) .. controls (338.75,225.9) and (336.4,228.25) .. (333.5,228.25) .. controls (330.6,228.25) and (328.25,225.9) .. (328.25,223) -- cycle ;
\draw  [fill={rgb, 255:red, 245; green, 166; blue, 35 }  ,fill opacity=1 ][line width=1.5]  (354,246) .. controls (354,243.1) and (356.35,240.75) .. (359.25,240.75) .. controls (362.15,240.75) and (364.5,243.1) .. (364.5,246) .. controls (364.5,248.9) and (362.15,251.25) .. (359.25,251.25) .. controls (356.35,251.25) and (354,248.9) .. (354,246) -- cycle ;
\draw  [fill={rgb, 255:red, 255; green, 255; blue, 255 }  ,fill opacity=1 ] (317.25,253) .. controls (317.25,250.1) and (319.6,247.75) .. (322.5,247.75) .. controls (325.4,247.75) and (327.75,250.1) .. (327.75,253) .. controls (327.75,255.9) and (325.4,258.25) .. (322.5,258.25) .. controls (319.6,258.25) and (317.25,255.9) .. (317.25,253) -- cycle ;

\draw (366.5,249.4) node [anchor=north west][inner sep=0.75pt]    {\LARGE $v$};
\draw (132,229.4) node [anchor=north west][inner sep=0.75pt]    {\LARGE $C_{v}$};
\draw (203,298.4) node [anchor=north west][inner sep=0.75pt]    {\LARGE $C_{v}(d_{v})$};

\end{tikzpicture}

    }
    \caption{The illustration of $\htab[v,(H',\{h\})]=1$, where $(H',\{h\})$ is the pointed graph of Figure~\ref{fig:P(H)}.}
    \label{fig:1 in Htab}
\end{figure}
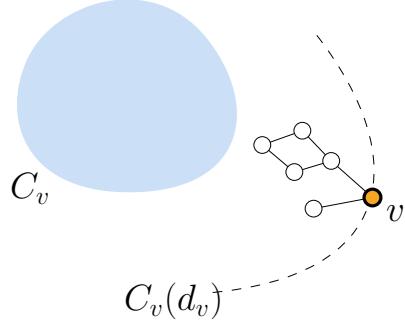

Since the table~$\htab$ has at most $n$ rows and $(4k-1) \cdot 2^{4k-1}$ columns, with each entry of constant size, its has size $O(k \cdot 2^{4k} \cdot n)$. The overall size of the certificate is thus $O(k \cdot 2^{4k} \cdot n + n^{3/2} \log^2 n)$.

\medskip{}
\textbf{Verification.}
The verification of each vertex $u \in V$ consists in the following.

\begin{enumerate}[(i)]
    \item First, $u$ applies the verification of the local computation schemes of Theorems~\ref{thm:computation scheme for T_G} and~\ref{thm:computation scheme for g_<u}. If no vertex rejects during this first verification phase, then every vertex $u$ computed~$T_G$ and~$G_{\preccurlyeq u}$.

    \item Then, $u$ checks that $\htab$ is the same in its certificate and in the certificate of all its neighbors. If it is not the case, $u$ rejects.

    If no vertex rejects, $\htab$ is the same in the certificates of all the vertices in $G$. The next step will consist in checking its correctness.

    \item If $u \in V_2$, it does the following. Let $C_u$ be its ECC$_2$. We have $\bigcup_{0 \leqslant d \leqslant k-1} C_u(d) \subseteq V_{\preccurlyeq u}$.
    Thus, for every $v \in \bigcup_{0 \leqslant d \leqslant k-1} C_u(d)$ and every $(H,\{h\}) \in \mathcal{P}(H)$, $u$ can check if $\htab[v,(H',\{h\})]$ is correct. If it is not correct, $u$ rejects.

    If no vertex rejects at this point, then $\htab$ is correct.

    \item If $G_{\preccurlyeq u}$ has an induced $H$, such that all the vertices in $H \setminus V_{\preccurlyeq u}$ are in distinct ECC$_2$'s, then $u$ rejects (in particular, this is the case if $H \subseteq V_{\preccurlyeq u}$).

    The last steps are similar to step~(v) in the verification of Theorem~\ref{thm:p_4k_free_subquadratic}: $u$ will check that it can not obtain a graph $H$ by extending a subgraph of $G_{\preccurlyeq u}$ into $H$, using a subgraph $H'$ such that $\htab[v,(H',\{h\})]=1$ for some vertices $v$ and $h$.
    
    \item If $u \in V_2$, let $C_u \in \cE_2$ such that $u \in C_u$. The vertex $u$ rejects if there exists two pointed graphs $(H_\text{start},\{h\}), (H_\text{end},\{h\}) \in \mathcal{P}(H)$ which are complementary of each other in $H$, and a vertex $v \in C_v(d_v)$ for some $C_v \neq C_u$, $d_v \leqslant k-1$, such that the following conditions are satisfied:
    \begin{itemize}
        \item $H_\text{start}$ is an induced subgraph of $G_{\preccurlyeq u}$, included in $V_{\preccurlyeq u}$, which contains at least one vertex from $C_u$
        \item $h$ is mapped to $v$, and it is the only vertex in $H_\text{start} \cap (\bigcup_{d \leqslant d_v} C_v(d))$
        \item $\htab[v, (H_\text{end}, \{h\})]=1$, where $(H_\text{end}, \{h\})$ is the complementary of $(H_\text{start},\{h\})$ in~$H$
    \end{itemize}
    An illustration is shown on Figure~\ref{fig:H-free verification (v)}.

    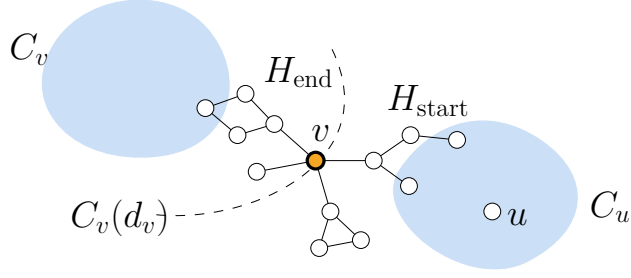
\begin{figure}
        \centering
        \scalebox{0.8}{
        \begin{tikzpicture}[x=0.75pt,y=0.75pt,yscale=-1,xscale=1]

\draw  [color={rgb, 255:red, 255; green, 255; blue, 255 }  ,draw opacity=1 ][fill={rgb, 255:red, 74; green, 144; blue, 226 }  ,fill opacity=0.29 ] (367.5,171) .. controls (400.5,194) and (397.5,228) .. (358.5,248) .. controls (319.5,268) and (257.5,220) .. (278.5,189) .. controls (299.5,158) and (334.5,148) .. (367.5,171) -- cycle ;
\draw    (226.25,184) -- (235.5,216) ;
\draw    (254.5,234) -- (235.5,216) ;
\draw    (228.5,239) -- (235.5,216) ;
\draw    (254.5,234) -- (228.5,239) ;
\draw    (226.25,184) -- (262.5,184) ;
\draw    (285.5,168) -- (262.5,184) ;
\draw    (262.5,184) -- (284.5,200) ;
\draw    (285.5,168) -- (314.5,171) ;
\draw  [color={rgb, 255:red, 255; green, 255; blue, 255 }  ,draw opacity=1 ][fill={rgb, 255:red, 74; green, 144; blue, 226 }  ,fill opacity=0.29 ] (158.5,100) .. controls (184.5,129) and (176.5,177) .. (126.5,182) .. controls (76.5,187) and (42.5,153) .. (59.5,116) .. controls (76.5,79) and (132.5,71) .. (158.5,100) -- cycle ;
\draw  [dash pattern={on 4.5pt off 4.5pt}]  (236.5,114) .. controls (270.5,185) and (185.5,224) .. (126.5,218) ;
\draw    (200.5,161) -- (226.25,184) ;
\draw    (177.5,168) -- (200.5,161) ;
\draw    (157.5,151) -- (182.5,142) ;
\draw    (177.5,168) -- (157.5,151) ;
\draw    (182.5,142) -- (200.5,161) ;
\draw    (189.5,191) -- (226.25,184) ;
\draw  [fill={rgb, 255:red, 255; green, 255; blue, 255 }  ,fill opacity=1 ] (152.25,151) .. controls (152.25,148.1) and (154.6,145.75) .. (157.5,145.75) .. controls (160.4,145.75) and (162.75,148.1) .. (162.75,151) .. controls (162.75,153.9) and (160.4,156.25) .. (157.5,156.25) .. controls (154.6,156.25) and (152.25,153.9) .. (152.25,151) -- cycle ;
\draw  [fill={rgb, 255:red, 255; green, 255; blue, 255 }  ,fill opacity=1 ] (177.25,142) .. controls (177.25,139.1) and (179.6,136.75) .. (182.5,136.75) .. controls (185.4,136.75) and (187.75,139.1) .. (187.75,142) .. controls (187.75,144.9) and (185.4,147.25) .. (182.5,147.25) .. controls (179.6,147.25) and (177.25,144.9) .. (177.25,142) -- cycle ;
\draw  [fill={rgb, 255:red, 255; green, 255; blue, 255 }  ,fill opacity=1 ] (172.25,168) .. controls (172.25,165.1) and (174.6,162.75) .. (177.5,162.75) .. controls (180.4,162.75) and (182.75,165.1) .. (182.75,168) .. controls (182.75,170.9) and (180.4,173.25) .. (177.5,173.25) .. controls (174.6,173.25) and (172.25,170.9) .. (172.25,168) -- cycle ;
\draw  [fill={rgb, 255:red, 255; green, 255; blue, 255 }  ,fill opacity=1 ] (195.25,161) .. controls (195.25,158.1) and (197.6,155.75) .. (200.5,155.75) .. controls (203.4,155.75) and (205.75,158.1) .. (205.75,161) .. controls (205.75,163.9) and (203.4,166.25) .. (200.5,166.25) .. controls (197.6,166.25) and (195.25,163.9) .. (195.25,161) -- cycle ;
\draw  [fill={rgb, 255:red, 245; green, 166; blue, 35 }  ,fill opacity=1 ][line width=1.5]  (221,184) .. controls (221,181.1) and (223.35,178.75) .. (226.25,178.75) .. controls (229.15,178.75) and (231.5,181.1) .. (231.5,184) .. controls (231.5,186.9) and (229.15,189.25) .. (226.25,189.25) .. controls (223.35,189.25) and (221,186.9) .. (221,184) -- cycle ;
\draw  [fill={rgb, 255:red, 255; green, 255; blue, 255 }  ,fill opacity=1 ] (184.25,191) .. controls (184.25,188.1) and (186.6,185.75) .. (189.5,185.75) .. controls (192.4,185.75) and (194.75,188.1) .. (194.75,191) .. controls (194.75,193.9) and (192.4,196.25) .. (189.5,196.25) .. controls (186.6,196.25) and (184.25,193.9) .. (184.25,191) -- cycle ;
\draw  [fill={rgb, 255:red, 255; green, 255; blue, 255 }  ,fill opacity=1 ] (249.25,234) .. controls (249.25,231.1) and (251.6,228.75) .. (254.5,228.75) .. controls (257.4,228.75) and (259.75,231.1) .. (259.75,234) .. controls (259.75,236.9) and (257.4,239.25) .. (254.5,239.25) .. controls (251.6,239.25) and (249.25,236.9) .. (249.25,234) -- cycle ;
\draw  [fill={rgb, 255:red, 255; green, 255; blue, 255 }  ,fill opacity=1 ] (223.25,239) .. controls (223.25,236.1) and (225.6,233.75) .. (228.5,233.75) .. controls (231.4,233.75) and (233.75,236.1) .. (233.75,239) .. controls (233.75,241.9) and (231.4,244.25) .. (228.5,244.25) .. controls (225.6,244.25) and (223.25,241.9) .. (223.25,239) -- cycle ;
\draw  [fill={rgb, 255:red, 255; green, 255; blue, 255 }  ,fill opacity=1 ] (230.25,216) .. controls (230.25,213.1) and (232.6,210.75) .. (235.5,210.75) .. controls (238.4,210.75) and (240.75,213.1) .. (240.75,216) .. controls (240.75,218.9) and (238.4,221.25) .. (235.5,221.25) .. controls (232.6,221.25) and (230.25,218.9) .. (230.25,216) -- cycle ;
\draw  [fill={rgb, 255:red, 255; green, 255; blue, 255 }  ,fill opacity=1 ] (257.25,184) .. controls (257.25,181.1) and (259.6,178.75) .. (262.5,178.75) .. controls (265.4,178.75) and (267.75,181.1) .. (267.75,184) .. controls (267.75,186.9) and (265.4,189.25) .. (262.5,189.25) .. controls (259.6,189.25) and (257.25,186.9) .. (257.25,184) -- cycle ;
\draw  [fill={rgb, 255:red, 255; green, 255; blue, 255 }  ,fill opacity=1 ] (279.25,200) .. controls (279.25,197.1) and (281.6,194.75) .. (284.5,194.75) .. controls (287.4,194.75) and (289.75,197.1) .. (289.75,200) .. controls (289.75,202.9) and (287.4,205.25) .. (284.5,205.25) .. controls (281.6,205.25) and (279.25,202.9) .. (279.25,200) -- cycle ;
\draw  [fill={rgb, 255:red, 255; green, 255; blue, 255 }  ,fill opacity=1 ] (280.25,168) .. controls (280.25,165.1) and (282.6,162.75) .. (285.5,162.75) .. controls (288.4,162.75) and (290.75,165.1) .. (290.75,168) .. controls (290.75,170.9) and (288.4,173.25) .. (285.5,173.25) .. controls (282.6,173.25) and (280.25,170.9) .. (280.25,168) -- cycle ;
\draw  [fill={rgb, 255:red, 255; green, 255; blue, 255 }  ,fill opacity=1 ] (309.25,171) .. controls (309.25,168.1) and (311.6,165.75) .. (314.5,165.75) .. controls (317.4,165.75) and (319.75,168.1) .. (319.75,171) .. controls (319.75,173.9) and (317.4,176.25) .. (314.5,176.25) .. controls (311.6,176.25) and (309.25,173.9) .. (309.25,171) -- cycle ;
\draw  [fill={rgb, 255:red, 255; green, 255; blue, 255 }  ,fill opacity=1 ] (331.25,216) .. controls (331.25,213.1) and (333.6,210.75) .. (336.5,210.75) .. controls (339.4,210.75) and (341.75,213.1) .. (341.75,216) .. controls (341.75,218.9) and (339.4,221.25) .. (336.5,221.25) .. controls (333.6,221.25) and (331.25,218.9) .. (331.25,216) -- cycle ;

\draw (222.5,160.4) node [anchor=north west][inner sep=0.75pt]    {\LARGE $v$};
\draw (34,103.4) node [anchor=north west][inner sep=0.75pt]    {\LARGE $C_{v}$};
\draw (72,207.4) node [anchor=north west][inner sep=0.75pt]    {\LARGE $C_{v}( d_{v})$};
\draw (396,202.4) node [anchor=north west][inner sep=0.75pt]    {\LARGE $C_{u}$};
\draw (344.5,213.4) node [anchor=north west][inner sep=0.75pt]    {\LARGE $u$};
\draw (269.5,135.4) node [anchor=north west][inner sep=0.75pt]    {\LARGE $H_\text{start}$};
\draw (192.5,117.4) node [anchor=north west][inner sep=0.75pt]    {\LARGE $H_\text{end}$};

\end{tikzpicture}

        }
        \caption{Illustration of the condition making $u$ reject at step~(v) of the verification. Here, the graph $H$ is the one of Figure~\ref{fig:P(H)}. The pointed graph $(H_{\text{end}},\{h\})$ is the pointed graph $(H',\{h\})$ of Figure~\ref{fig:P(H)}.
        \label{fig:H-free verification (v)}.}
    \end{figure}

    \item If $u \in V_2$, let $C_u \in \cE_2$ such that $c \in C_u$. The vertex $u$ rejects if there exists three pointed graphs $(H_\text{start},\{h_1,h_2\})$, $(H_\text{end}^{(1)},\{h_1\})$, $(H_\text{end}^{(2)},\{h_2\})$, and two vertices $v_1 \in C_{v_1}(d_{v_1})$, $v_2 \in C_{v_2}(d_{v_2})$ for some $d_{v_1},d_{v_2} \leqslant k-1$ where $C_{v_1},C_{v_2}$ are two different ECC$_2$'s different from $C_u$, such that the following conditions are satisfied:
    \begin{itemize}
        \item $H_\text{end}^{(1)}$, $H_\text{end}^{(2)}$ are two disjoint induced subgraphs of~$H$
        \item $(H_{\text{start}},\{h_1,h_2\})$ is the complementary in~$H$ of the disjoint union of $(H_\text{end}^{(1)},\{h_1\})$ and $(H_\text{end}^{(2)},\{h_2\})$
        \item $H_\text{start}$ is an induced subgraph of $G_{\preccurlyeq u}$, included in $V_{\preccurlyeq u}$, which contains at least vertex from $C_u$
        \item $h_1$ (resp. $h_2$) is mapped to $v_1$ (resp. $v_2$), and it is the only vertex in $H_\text{start} \cap (\bigcup_{d \leqslant d_{v_1}} C_{v_1}(d))$ (resp. in $H_\text{start} \cap (\bigcup_{d \leqslant d_{v_2}} C_{v_2}(d))$)
        \item $\htab[v_1, (H_\text{end}^{(1)})] = 1$ and $\htab[v_2, (H_\text{end}^{(2)})] = 1$
    \end{itemize}
    An illustration is shown on Figure~\ref{fig:H-free verification (vi)}.

    \begin{figure}
        \centering
        \scalebox{0.8}{
        \begin{tikzpicture}[x=0.75pt,y=0.75pt,yscale=-1,xscale=1]

\draw  [color={rgb, 255:red, 255; green, 255; blue, 255 }  ,draw opacity=1 ][fill={rgb, 255:red, 74; green, 144; blue, 226 }  ,fill opacity=0.29 ] (428.5,232) .. controls (455.5,266) and (441.5,297) .. (414.5,311) .. controls (387.5,325) and (354.5,314) .. (335.5,290) .. controls (316.5,266) and (328.5,241) .. (346.5,229) .. controls (364.5,217) and (401.5,198) .. (428.5,232) -- cycle ;
\draw    (288.25,136) -- (255.5,110) ;
\draw    (285.5,182) -- (331.5,163) ;
\draw    (288.25,136) -- (285.5,182) ;
\draw    (233.5,204) -- (216.5,181) ;
\draw    (201.5,210) -- (233.5,204) ;
\draw  [color={rgb, 255:red, 255; green, 255; blue, 255 }  ,draw opacity=1 ][fill={rgb, 255:red, 74; green, 144; blue, 226 }  ,fill opacity=0.29 ] (364.5,54) .. controls (390.5,84) and (358.5,111) .. (308.5,116) .. controls (258.5,121) and (237.5,70) .. (254.5,45) .. controls (271.5,20) and (338.5,24) .. (364.5,54) -- cycle ;
\draw    (285.5,182) -- (311.25,205) ;
\draw    (311.25,205) -- (344.5,215) ;
\draw    (316.5,238) -- (311.25,205) ;
\draw    (233.5,204) -- (285.5,182) ;
\draw  [fill={rgb, 255:red, 255; green, 255; blue, 255 }  ,fill opacity=1 ] (306,205) .. controls (306,202.1) and (308.35,199.75) .. (311.25,199.75) .. controls (314.15,199.75) and (316.5,202.1) .. (316.5,205) .. controls (316.5,207.9) and (314.15,210.25) .. (311.25,210.25) .. controls (308.35,210.25) and (306,207.9) .. (306,205) -- cycle ;
\draw  [fill={rgb, 255:red, 255; green, 255; blue, 255 }  ,fill opacity=1 ] (280.25,182) .. controls (280.25,179.1) and (282.6,176.75) .. (285.5,176.75) .. controls (288.4,176.75) and (290.75,179.1) .. (290.75,182) .. controls (290.75,184.9) and (288.4,187.25) .. (285.5,187.25) .. controls (282.6,187.25) and (280.25,184.9) .. (280.25,182) -- cycle ;
\draw  [fill={rgb, 255:red, 255; green, 255; blue, 255 }  ,fill opacity=1 ] (326.25,163) .. controls (326.25,160.1) and (328.6,157.75) .. (331.5,157.75) .. controls (334.4,157.75) and (336.75,160.1) .. (336.75,163) .. controls (336.75,165.9) and (334.4,168.25) .. (331.5,168.25) .. controls (328.6,168.25) and (326.25,165.9) .. (326.25,163) -- cycle ;
\draw  [fill={rgb, 255:red, 255; green, 255; blue, 255 }  ,fill opacity=1 ] (228.25,204) .. controls (228.25,201.1) and (230.6,198.75) .. (233.5,198.75) .. controls (236.4,198.75) and (238.75,201.1) .. (238.75,204) .. controls (238.75,206.9) and (236.4,209.25) .. (233.5,209.25) .. controls (230.6,209.25) and (228.25,206.9) .. (228.25,204) -- cycle ;
\draw  [fill={rgb, 255:red, 255; green, 255; blue, 255 }  ,fill opacity=1 ] (211.25,181) .. controls (211.25,178.1) and (213.6,175.75) .. (216.5,175.75) .. controls (219.4,175.75) and (221.75,178.1) .. (221.75,181) .. controls (221.75,183.9) and (219.4,186.25) .. (216.5,186.25) .. controls (213.6,186.25) and (211.25,183.9) .. (211.25,181) -- cycle ;
\draw  [color={rgb, 255:red, 255; green, 255; blue, 255 }  ,draw opacity=1 ][fill={rgb, 255:red, 74; green, 144; blue, 226 }  ,fill opacity=0.29 ] (156.5,167) .. controls (189.5,190) and (187.5,243) .. (126.5,238) .. controls (65.5,233) and (59.5,210) .. (67.5,185) .. controls (75.5,160) and (123.5,144) .. (156.5,167) -- cycle ;
\draw  [dash pattern={on 4.5pt off 4.5pt}]  (143.5,134) .. controls (196.5,138) and (220.5,206) .. (185.5,240) ;
\draw  [dash pattern={on 4.5pt off 4.5pt}]  (378.5,115) .. controls (319.5,142) and (263.5,150) .. (236.5,109) ;
\draw    (150.5,218) -- (201.5,210) ;
\draw  [fill={rgb, 255:red, 245; green, 166; blue, 35 }  ,fill opacity=1 ][line width=1.5]  (196.25,210) .. controls (196.25,207.1) and (198.6,204.75) .. (201.5,204.75) .. controls (204.4,204.75) and (206.75,207.1) .. (206.75,210) .. controls (206.75,212.9) and (204.4,215.25) .. (201.5,215.25) .. controls (198.6,215.25) and (196.25,212.9) .. (196.25,210) -- cycle ;
\draw  [fill={rgb, 255:red, 255; green, 255; blue, 255 }  ,fill opacity=1 ] (145.25,218) .. controls (145.25,215.1) and (147.6,212.75) .. (150.5,212.75) .. controls (153.4,212.75) and (155.75,215.1) .. (155.75,218) .. controls (155.75,220.9) and (153.4,223.25) .. (150.5,223.25) .. controls (147.6,223.25) and (145.25,220.9) .. (145.25,218) -- cycle ;
\draw    (255.5,110) -- (299.5,94) ;
\draw  [fill={rgb, 255:red, 255; green, 255; blue, 255 }  ,fill opacity=1 ] (250.25,110) .. controls (250.25,107.1) and (252.6,104.75) .. (255.5,104.75) .. controls (258.4,104.75) and (260.75,107.1) .. (260.75,110) .. controls (260.75,112.9) and (258.4,115.25) .. (255.5,115.25) .. controls (252.6,115.25) and (250.25,112.9) .. (250.25,110) -- cycle ;
\draw    (316.5,238) -- (348.5,248) ;
\draw    (344.5,215) -- (348.5,248) ;
\draw  [fill={rgb, 255:red, 255; green, 255; blue, 255 }  ,fill opacity=1 ] (311.25,238) .. controls (311.25,235.1) and (313.6,232.75) .. (316.5,232.75) .. controls (319.4,232.75) and (321.75,235.1) .. (321.75,238) .. controls (321.75,240.9) and (319.4,243.25) .. (316.5,243.25) .. controls (313.6,243.25) and (311.25,240.9) .. (311.25,238) -- cycle ;
\draw  [fill={rgb, 255:red, 255; green, 255; blue, 255 }  ,fill opacity=1 ] (339.25,215) .. controls (339.25,212.1) and (341.6,209.75) .. (344.5,209.75) .. controls (347.4,209.75) and (349.75,212.1) .. (349.75,215) .. controls (349.75,217.9) and (347.4,220.25) .. (344.5,220.25) .. controls (341.6,220.25) and (339.25,217.9) .. (339.25,215) -- cycle ;
\draw  [fill={rgb, 255:red, 255; green, 255; blue, 255 }  ,fill opacity=1 ] (343.25,248) .. controls (343.25,245.1) and (345.6,242.75) .. (348.5,242.75) .. controls (351.4,242.75) and (353.75,245.1) .. (353.75,248) .. controls (353.75,250.9) and (351.4,253.25) .. (348.5,253.25) .. controls (345.6,253.25) and (343.25,250.9) .. (343.25,248) -- cycle ;
\draw  [fill={rgb, 255:red, 255; green, 255; blue, 255 }  ,fill opacity=1 ] (389.25,287) .. controls (389.25,284.1) and (391.6,281.75) .. (394.5,281.75) .. controls (397.4,281.75) and (399.75,284.1) .. (399.75,287) .. controls (399.75,289.9) and (397.4,292.25) .. (394.5,292.25) .. controls (391.6,292.25) and (389.25,289.9) .. (389.25,287) -- cycle ;
\draw    (299.5,94) -- (288.25,136) ;
\draw  [fill={rgb, 255:red, 245; green, 166; blue, 35 }  ,fill opacity=1 ][line width=1.5]  (283,136) .. controls (283,133.1) and (285.35,130.75) .. (288.25,130.75) .. controls (291.15,130.75) and (293.5,133.1) .. (293.5,136) .. controls (293.5,138.9) and (291.15,141.25) .. (288.25,141.25) .. controls (285.35,141.25) and (283,138.9) .. (283,136) -- cycle ;
\draw  [fill={rgb, 255:red, 255; green, 255; blue, 255 }  ,fill opacity=1 ] (294.25,94) .. controls (294.25,91.1) and (296.6,88.75) .. (299.5,88.75) .. controls (302.4,88.75) and (304.75,91.1) .. (304.75,94) .. controls (304.75,96.9) and (302.4,99.25) .. (299.5,99.25) .. controls (296.6,99.25) and (294.25,96.9) .. (294.25,94) -- cycle ;

\draw (208.75,213.4) node [anchor=north west][inner sep=0.75pt]    {\LARGE $v_{1}$};
\draw (33,192.4) node [anchor=north west][inner sep=0.75pt]    {\LARGE $C_{v_{1}}$};
\draw (81,124.4) node [anchor=north west][inner sep=0.75pt]    {\LARGE $C_{v_{1}}( d_{v_{1}})$};
\draw (435,299.4) node [anchor=north west][inner sep=0.75pt]    {\LARGE $C_{u}$};
\draw (403.5,277.4) node [anchor=north west][inner sep=0.75pt]    {\LARGE $u$};
\draw (245.5,220.4) node [anchor=north west][inner sep=0.75pt]    {\LARGE $H_\text{start}$};
\draw (128.5,241.4) node [anchor=north west][inner sep=0.75pt]    {\LARGE $H_\text{end}^{(1)}$};
\draw (295.5,139.4) node [anchor=north west][inner sep=0.75pt]    {\LARGE $v_{2}$};
\draw (383,102.4) node [anchor=north west][inner sep=0.75pt]    {\LARGE $C_{v_{2}}( d_{v_{2}})$};
\draw (324,9.4) node [anchor=north west][inner sep=0.75pt]    {\LARGE $C_{v_{2}}$};
\draw (206.5,81.4) node [anchor=north west][inner sep=0.75pt]    {\LARGE $H_\text{end}^{(2)}$};

\end{tikzpicture}

        }
        \caption{Illustration of the condition making $u$ reject at step~(vi) of the verification. The graph $H$ is the one of Figure~\ref{fig:P(H)}. The graph $H_\text{end}^{(1)}$ is a path on two vertices and the graph $H_\text{end}^{(2)}$ is a triangle.}
        \label{fig:H-free verification (vi)}
    \end{figure}
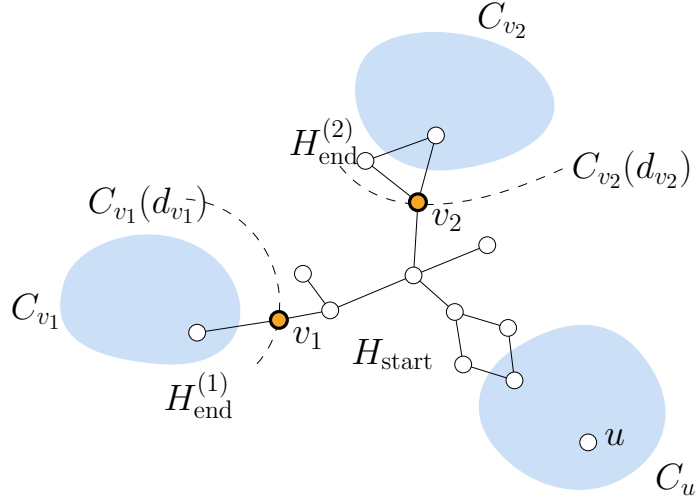

    \item If $u$ did not reject at this point, it accepts.
\end{enumerate}

\medskip{}
\textbf{Correctness.} Let us show that this certification scheme is correct. First, assume that $G$ contains an induced $H$, ant let us show that for every assignment of certificates, at least one vertex rejects. If no vertex rejects at step~(i) of the verification, then every vertex $u$ computed the ECC-table $T_G$ and its witnessed graph $G_{\preccurlyeq u}$ ($u$ knows also $V_{\preccurlyeq u}$ because it can be computed directly from $T_G$). If no vertex rejects at step~(iii), then~$\htab$ is also correct. Then, we have the following Claim~\ref{claim:number of ECCs intersecting H}, and its proof is analogous as proofs of Claims~\ref{claim:4k-1 1} and~\ref{claim:4k-1 2}.

\begin{claim}
    \label{claim:number of ECCs intersecting H}
    \begin{enumerate}
        \item There are at most three $C \in \cE_2$ such that $C \cap H \neq \emptyset$.
        \item If there is at most one $C \in \cE_2$ such that $|C \cap H| \geqslant 2$, then there exists a vertex which rejects at step~(iv) of the verification.
    \end{enumerate}
\end{claim}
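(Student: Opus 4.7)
My plan for Part~1 will argue by contradiction from the key distance lower bound: any two vertices belonging to distinct ECC$_2$'s lie at $G$-distance at least $2k-1$. Assuming that four distinct ECC$_2$'s $C_1,\dots,C_4$ each meet the induced copy of $H$, I pick $v_i\in C_i\cap V(H)$. Since $H$ is an induced subgraph of $G$, distances in $H$ dominate the corresponding distances in $G$, so the four $v_i$ are pairwise at distance at least $2k-1$ within $H$; the same inequality holds after passing to any spanning tree of the connected component of $H$ containing them. The claim then reduces to the purely tree-combinatorial fact that a tree containing four vertices pairwise at distance at least $2k-1$ must have at least $4k$ vertices, strictly exceeding $|V(H)|\leqslant 4k-1$.

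To establish this tree fact, I will perform a short case analysis on the shape of the Steiner subtree of $\{v_1,\dots,v_4\}$: either it is a \emph{star} with one Steiner vertex and four arms of lengths $a_1,\dots,a_4$, or an \emph{H-shape} with two Steiner vertices joined by a bar of length $m\geqslant 1$. In the star case, the constraints $a_i+a_j\geqslant 2k-1$ for all $i\neq j$ together with integrality force $\sum a_i\geqslant 4k-1$, hence at least $4k$ vertices. In the H-shape case, $a_1+a_2\geqslant 2k-1$ and $a_3+a_4\geqslant 2k-1$ immediately give $\sum a_i\geqslant 4k-2$, and adding the bar vertices yields at least $4k$ vertices as well. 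The main obstacle I anticipate is handling the case where $H$ is disconnected: I will apply the argument to each connected component separately, using that a connected piece intersecting $r\geqslant 2$ distinct ECC$_2$'s already requires at least $2k$ vertices (to realise two of its ECC-vertices at distance $\geqslant 2k-1$), so splitting four ECC-memberships across components quickly exhausts the $4k-1$ vertex budget.

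For Part~2, my plan is to exhibit an explicit vertex $u$ that triggers the rejection at step~(iv). If no ECC$_2$ contains two vertices of $H$, I choose any $u\in V_1$: then $V_{\preccurlyeq u}=V_1$, so the vertices of $V(H)\setminus V_{\preccurlyeq u}$ are exactly the $V_2$-vertices of $H$, which are singletons in pairwise distinct ECC$_2$'s by assumption. If exactly one ECC$_2$ $C^\ast$ contains two or more vertices of $H$, I choose $u\in C^\ast$: then $V_{\preccurlyeq u}=V_1\cup C^\ast$, and the vertices of $V(H)\setminus V_{\preccurlyeq u}$ are the $V_2$-vertices of $H$ outside $C^\ast$, which by hypothesis are again singletons in pairwise distinct ECC$_2$'s.

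The last step is to verify that, in both cases, the chosen copy of $H$ is still induced in $G_{\preccurlyeq u}$, that is, every edge of this copy has an endpoint in $V_{\preccurlyeq u}$. Here I will reuse the distance bound one final time: two vertices in distinct ECC$_2$'s are at $G$-distance at least $2k-1\geqslant 3$ and hence non-adjacent in $G$, so no edge of the copy can have both endpoints in $V_2\setminus V_{\preccurlyeq u}$. Therefore every edge of the copy touches $V_1\subseteq V_{\preccurlyeq u}$ in the first case and $V_1\cup C^\ast=V_{\preccurlyeq u}$ in the second, so the rejection condition of step~(iv) fires at $u$, which completes the plan.
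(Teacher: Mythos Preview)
Your Part~2 is correct and mirrors the paper's intended analogy with the corresponding lemma for paths: choose $u$ in the unique ECC$_2$ containing two or more vertices of the copy (or, when no such ECC$_2$ exists, any vertex $u$), observe that the vertices of $V(H)\setminus V_{\preccurlyeq u}$ then lie in pairwise distinct ECC$_2$'s, and use the $2k{-}1$ distance bound between distinct ECC$_2$'s to conclude that no edge of the copy is lost in~$G_{\preccurlyeq u}$. One small caveat: ``any $u\in V_1$'' presumes $V_1\neq\emptyset$; when $V_1=\emptyset$ you should instead pick $u\in V(H)$, and the same argument goes through.

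For Part~1 with $H$ connected, your Steiner-tree analysis is a correct and more explicit version of what the paper dismisses in one line as ``analogous'' to the path case. (Your star/H-shape dichotomy technically omits the degenerate topologies where some $v_i$ is not a leaf of the Steiner subtree---for instance when all four $v_i$ lie on a single path---but those cases follow immediately from the same pairwise-distance inequalities.)

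The genuine gap is your handling of disconnected $H$. The assertion that ``splitting four ECC-memberships across components quickly exhausts the $4k-1$ vertex budget'' fails: a $2{+}1{+}1$ split forces only at least $2k+2$ vertices, a $3{+}1$ split only at least $3k+1$, and a $1{+}1{+}1{+}1$ split only four---all comfortably within $4k-1$. In fact Part~1 as stated is simply \emph{false} for disconnected $H$: if $H$ contains four isolated vertices, an induced copy in $G$ can place each of them in a different ECC$_2$. The paper's ``analogous'' proof, like the path case it invokes, tacitly treats $H$ as connected; you should restrict to that setting rather than attempt an argument that cannot succeed.
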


By Claim~\ref{claim:number of ECCs intersecting H}, we can assume that there are at least two $C \in \cE_2$ such that $|C \cap H| \geqslant 2$. Now, there are two cases.

\begin{itemize}
    \item If there are exactly two ECC$_2$'s intersecting $H$, denoted by $C_1$ and $C_2$, since $H$ has $4k-1$ vertices and that, by assumption, we have $|C_i \cap H| \geqslant 2$ for $i \in \{1,2\}$, there exists $1 \leqslant d \leqslant k-1$ and $i \in \{1,2\}$ such that $|C_i(d)|=1$. Without loss of generality, assume that $i=1$, and let us denote by $v$ the unique vertex in $C_1(d)$. Let $u \in C_2$. Then, we can decompose $H$ in two parts $H_\text{start}$ and $H_\text{end}$ satisfying the conditions making $u$ reject at step~(v).

    \item If there are exactly three ECC$_2$'s intersecting $H$, denoted by $C_1$, $C_2$ and $C_3$, there is at most one $i \in \{1,2,3\}$ such that, for all $d \in \{1, \ldots, k-1\}$, $|C_i(d)| \geqslant 2$. Without loss of generality, assume that there exists $d_1, d_2 \in \{1, \ldots, k-1\}$ such that $|C_i(d_i)|=1$ for all $i \in \{1,2\}$. Let us denote by $v_1$ (resp. by $v_2$) the only vertex in $C_1(d_1)$ (resp. in $C_2(d_2)$). Let $u \in C_3$. Then, we can decompose $H$ in three parts $H_\text{start}$, $H_\text{end}^{(1)}$, $H_\text{end}^{(2)}$ satisfying the conditions making $u$ reject at step~(vi).
\end{itemize}

Conversely, assume that $G$ does not have an induced $H$, and let us show that there exists an assignment of the certificates such that every vertex accepts. This assignment is the following one: the prover attributes the certificates such that no vertex rejects in the local computation schemes of Theorems~\ref{thm:computation scheme for T_G} and~\ref{thm:computation scheme for g_<u}, and gives the correct value of $\htab$ to every vertex. With such a certificate, a vertex can not reject at steps~(i), (ii) and (iii). With exactly the same argument as in the proof of Theorem~\ref{thm:H_4k_free_subquadratic}, a vertex can not reject at step~(iv).

By contradiction, assume that a vertex $u$ rejects at step~(v). Then, there exists two pointed graphs $(H_\text{start}, \{h\})$, $(H_\text{end},\{h\})$ satisfying the conditions making $u$ reject at step~(v). By the properties they satisfy, we can glue them in $h$ to obtain a graph $H$, which is still induced in $G$, which is a contradiction. Similarly, if a vertex $u$ rejects at step~(vi), there exists three pointed graphs $(H_\text{start}, \{h_1,h_2\})$, $(H_\text{end}^{(1)},h_1)$, $(H_\text{end}^{(2)},h_2)$ satisfying the conditions making $u$ reject at step~(vi). We can again glue them on $h_1$, $h_2$ to obtain an induced $H$ in $G$, which is a contradiction. Thus, all the vertices accept at step~(vii), which concludes the proof.
\end{proof}

 We stated Theorem~\ref{thm:H_4k_free_subquadratic} with an induced subgraph perspective but one can easily remark that the proof technique is general enough to be directly extended to subgraphs rather than induced subgraphs. Indeed, we simply have to change the definition of $\htab$ in such a way $\htab[v, (H',\{h\})]$ is equal to $1$ if $H'$ is a subgraph of $\bigcup_{d \leqslant d_v} C_v(d)$ instead of an induced subgraph in order to get the same result. We actually think that our proof technique can be used more generally to get subquadratic bounds in the polynomial regime.

\subsection{Upper bound for paths of length at most $3k-1$ in $\tilde{O}(n)$}
\label{sec:P3k}

\thmThreekQuasilinear*

To prove this result, we will in fact prove the following Proposition~\ref{prop:1+epsilon}, and Theorem~\ref{thm:3k-quasilinear} will be a simple corollary. While, until now, all the results where based on $\varepsilon=\frac 12$ and the partition of the vertex set into two parts, we will use the whole power of our machinery in this proof by choosing an arbitrarily small value of $\varepsilon$; this leads to new technicalities we have to deal with in the proof. 

\begin{proposition}\label{prop:1+epsilon}
    There exists $c > 0$ such that, for all $0 < \varepsilon < 1$, there exists a certification scheme for $P_{3k-1}$-free graphs using at most $\frac{c}{\varepsilon} \cdot \log^2 n \cdot n^{1+\varepsilon}$ bits.
\end{proposition}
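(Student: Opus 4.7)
I would extend the two-level scheme of \cref{thm:p_4k_free_subquadratic} to the full $N$-level hierarchy with $N = \lceil 1/\varepsilon\rceil$. First, the prover distributes the certificates of \cref{thm:computation scheme for T_G} and \cref{thm:computation scheme for g_<u} with parameter $\varepsilon$, so that every vertex $u$ can reconstruct $T_G$ and $G_{\preccurlyeq u}$; this uses $O(\frac{1}{\varepsilon} n^{1+\varepsilon} \log^2 n)$ bits. On top of this, for each level $i \in \{2, \dots, N\}$ the prover appends a common global table $\lp^{(i)}$ of size $O(n \log n)$: for every vertex $v$ with $T_G[v,i] = (id(C_v), d_v)$ and $1 \leq d_v \leq k-1$, the entry $\lp^{(i)}[v]$ records the length of the longest induced path in $G$ starting at $v$ and having its remaining vertices in $\bigcup_{0 \leq d < d_v} C_v(d)$, i.e.\ the level-$i$ analogue of \cref{def:lp}. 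Summing over the $N = O(1/\varepsilon)$ levels adds $O(\frac{n \log n}{\varepsilon})$, keeping the total within $O(\frac{1}{\varepsilon} n^{1+\varepsilon} \log^2 n)$.

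\textbf{Verification.} Each vertex $u$ first runs the verifiers of \cref{thm:computation scheme for T_G} and \cref{thm:computation scheme for g_<u}. Then, for every level $i \in \{2, \dots, N\}$, $u$ runs the level-$i$ analogue of the $m$-pathcheck from \cref{def:mpathcheck} with $m = 3k-1$, where every occurrence of ``$\mathrm{ECC}_2$'' is replaced by ``$\mathrm{ECC}_i$'' and $\lp$ by $\lp^{(i)}$. Consistency of $\lp^{(i)}[v]$ is verified by any $u \in V_i$ whose $\mathrm{ECC}_i$ contains $v$, since then $\bigcup_{0 \leq d \leq k-1} C_u(d) \subseteq V_{\preccurlyeq u}$ and $u$ can recompute $\lp^{(i)}[v]$ directly from $G_{\preccurlyeq u}$. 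Soundness, i.e.\ the existence of an accepting certificate assignment whenever $G$ is $P_{3k-1}$-free, follows level by level from the same argument as in the proof of \cref{thm:p_4k_free_subquadratic}.

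\textbf{Completeness and main obstacle.} Assume $G$ contains an induced $P_{3k-1}$ called $P$. The plan is to take $i^\star$ to be the largest level at which some $\mathrm{ECC}_{i^\star}$ contains at least two vertices of $P$. If only one such $\mathrm{ECC}$, call it $C_u$, exists at level $i^\star$, then the maximality of $i^\star$ forces $V(P) \subseteq V_{\preccurlyeq u}$ for any $u \in C_u \cap P$, so $u$ rejects at step~(iv) of the level-$i^\star$ pathcheck, as in \cref{claim:4k-1 2}. Otherwise, two $\mathrm{ECC}_{i^\star}$'s $C_u, C_v$ each contain at least two vertices of $P$, and the main technical work is the level-$i^\star$ analogue of \cref{lem:4k-2ECC-1-uniquedistance}: exhibit a cut-vertex $v \in C_v(d)$ with $d \in \{1, \dots, k-1\}$ and $\{v\} = V(P) \cap C_v(d)$, yielding a decomposition $P = P_{\text{start}} \cup P_{\text{end}}$ with $|V(P_{\text{start}})| + \lp^{(i^\star)}[v] \geq 3k-1$ and forcing rejection at step~(v). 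The key counting goes as follows: since two distinct $\mathrm{ECC}_{i^\star}$'s are at $G$-distance at least $2k-1$, the slabs $C_u(1), \dots, C_u(k-1)$ and $C_v(1), \dots, C_v(k-1)$ are each visited at least once by $P$; together with the $\geq 2$ vertices inside each of $C_u$ and $C_v$, this already accounts for $\geq 2k+2$ vertices, leaving only $k-3$ ``free'' vertices to distribute among the $2k-2$ slabs, and pigeonhole produces a singly-occupied slab. The principal obstacle is then to confirm that the chosen singleton slab can always be realized as $C_v(d)$ for some $d \geq 1$ (so that $\lp^{(i^\star)}[v]$ is indeed defined) despite the possible presence of further $\mathrm{ECC}_{i^\star}$'s meeting $P$ in exactly one vertex, and to check that concatenating $P_{\text{start}}$ with the path witnessed by $\lp^{(i^\star)}[v]$ remains induced in $G$; both points follow the pattern of \cref{thm:p_4k_free_subquadratic} but must be handled uniformly across all levels.
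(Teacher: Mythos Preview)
Your plan—run the $m$-pathcheck of \cref{def:mpathcheck} at every level $i$ and analyse via the largest level $i^\star$ at which some $\mathrm{ECC}_{i^\star}$ meets $P$ twice—has a genuine gap, not just a missing detail. The difficulty is that step~\ref{step:uselp-simple} of the $m$-pathcheck has no well-defined ``level-$i$ analogue''. In the two-level setting of \cref{thm:p_4k_free_subquadratic} one has $V_{\preccurlyeq u}=V_1\cup C_u$, so $u$ can inspect all of $G[C_u\cup V_1]$ and look for $P_{\text{start}}$ there. In the $N$-level setting this identity fails: for $u\in H_i$ with $\mathrm{ECC}_i$ equal to $C_u$, the set $V_{\preccurlyeq u}$ is in general a \emph{strict} subset of $C_u\cup L_{i-1}$; a vertex $w\in V_j$ with $j<i$ that lies in a different $\mathrm{ECC}_j$ than $u$ is in $L_{i-1}$ but not in $V_{\preccurlyeq u}$, and $u$ has no certified information about the edges among such vertices. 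Concretely, your one-$\mathrm{ECC}$ claim ``maximality of $i^\star$ forces $V(P)\subseteq V_{\preccurlyeq u}$'' is false: take $k=10$, $P=p_1\cdots p_{29}$ with $p_1,p_2\in V_2$, $p_{28},p_{29}\in V_3$ and all other $p_j\in V_1$; then $i^\star=3$ with unique $\mathrm{ECC}_3$ $C_u\ni p_{28},p_{29}$, yet $p_1,p_2\notin V_{\preccurlyeq p_{29}}$ (they lie in a different $\mathrm{ECC}_2$), they are not in $H_3$ so the ``distinct $\mathrm{ECC}_3$'' clause of step~(iv) is ill-posed for them, and at level~$2$ they form a pair in the \emph{same} $\mathrm{ECC}_2$ so step~(iv) does not fire there either. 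The two-$\mathrm{ECC}$ case has the same problem: nothing guarantees $P_{\text{start}}\subseteq V_{\preccurlyeq u}$, and without that you can neither see $P_{\text{start}}$ nor certify it is induced.

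The paper does \emph{not} try to push a long $P_{\text{start}}$ through $G_{\preccurlyeq u}$. Instead it cuts $P$ positionally into three blocks $P_1,P_2,P_3$ of sizes $k,k-1,k$, takes the maximum-degree vertex $x_\ell$ in each, and uses transitivity of $\preccurlyeq$ to show that whenever $i_2\ge\min(i_1,i_3)$ (or $x_1,x_3$ share an $\mathrm{ECC}_{i_3}$) the entire path lies in some $V_{\preccurlyeq x_\ell}$. In the remaining case it locates \emph{two} singleton slabs $u_1\in C_{x_1}(d_1)$ and $u_3\in C_{x_3}(d_3)$ at distance at most $2k-1$ along $P$, and then the decisive new verification rule kicks in: a vertex $u$ roughly midway between $u_1$ and $u_3$ inspects the short segment $P'=u_1\cdots u_3$ \emph{in its raw view at distance $k$} (not in $G_{\preccurlyeq u}$) and combines it with $\lp[u_1,i]$ and $\lp[u_3,i]$. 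Using the view rather than the witnessed graph is exactly what sidesteps the obstruction above, and it is the idea your proposal is missing.
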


\begin{proof}[Proof of Theorem~\ref{thm:3k-quasilinear} assuming Proposition~\ref{prop:1+epsilon}]
    The certification in size $O(n \log^3 n)$ is the following one. Let $n$ be the number of vertices in the graph. The prover writes $n$ in the certificate of every vertex of the graph, and certifies the correctness of this information with $O(n \log n)$ bits (by coding a spanning tree). Then, it uses the certification scheme given by Proposition~\ref{prop:1+epsilon}, with $\varepsilon=1/\log n $. This uses $O(n \log^3 n)$ bits in total, since $n^{1/\log n}$ is a constant.
\end{proof}

\begin{proof}[Proof of Proposition~\ref{prop:1+epsilon}]
    Let $0 < \varepsilon < 1$ and $N = \left\lceil \frac{1}{\varepsilon} \right\rceil$. Let us describe a certification scheme for $P_{3k-1}$-free graphs. We will show that it is correct and uses at most $\frac{c}{\varepsilon} \cdot \log^2 n \cdot n^{1+\varepsilon}$ bits, where $c$ is a constant independent of $\varepsilon$. Let $G = (V,E)$ be a graph and $n = |V|$.

\medskip{}
\textbf{Certification.}
    Let us describe the certificates given by the prover to the vertices. First, the prover gives to every vertex its certificates in the local computation schemes of Theorems~\ref{thm:computation scheme for T_G} and~\ref{thm:computation scheme for g_<u}, so that each vertex $u$ can compute $T_G$ and $G_{\preccurlyeq u}$. These certificates have size $O(\frac{n^{1+\varepsilon}}{\varepsilon} \log^2 n)$.
    Then, as in the certification scheme of Theorem~\ref{thm:p_4k_free_subquadratic}, the prover adds some information in the certificates, in another field, denoted by $\lp$, which is identical in the certificates of all the vertices.
    In~$\lp$, the prover writes a table which has $n$ rows and $N$ columns.
    The rows are indexed with the identifiers of the vertices, and the columns by $\{1, \ldots, N\}$. Let $v \in V$ and $i \in \{1, \ldots, N\}$.
    If $v \notin \bigcup_{C \in \cE_i} \bigcup_{0 \leqslant d \leqslant k-1} C(d)$, we set $\lp[v,i]:=\bot$.
    Else, there exists a unique $C_v \in \cE_i$ and $d_v \leqslant k-1$ such that $v \in C_v(d_v)$.
    In this case, let $P_v$ denote the longest induced path starting from $v$ and having all its other vertices in $\bigcup_{0 \leqslant d < d_v} C_v(d)$, and let us denote its length by $\ell(P_v)$. We set $\lp[v,i]:=\ell(P_v)$.

    Since $\lp$ has $n$ rows, $N$ columns, and each entry is written on at most $\log n$ bits, its size it $O(N n \log n)$. Thus, the overall size of each certificate remains $O(\frac{n^{1+\varepsilon}}{\varepsilon} \log^2 n)$.

\medskip{}
\textbf{Verification.}
    The verification of each vertex $u \in V$ consists in the following steps:

    \begin{enumerate}[(i)]
        \item First, $u$ applies the verification of the local computation schemes given by Theorems~\ref{thm:computation scheme for T_G} and~\ref{thm:computation scheme for g_<u}.

        If no vertex rejects, then every vertex $u$ computed its witnessed graph $G_{\preccurlyeq u}$ and the ECC-table $T_G$ of $G$.

        \item Then, $u$ checks that the field $\lp$ is written the same in its certificate and the certificates of all its neighbors. If it is not the case, $u$ rejects.
        
        If no vertex rejects, $\lp$ is the same in the certificates of all the vertices in $G$. The next step will consist in checking its correctness.

        \item Let $i_u \in \{1, \ldots, N\}$ be such that $u \in V_{i_u}$. For each $i \in \{1, \ldots, i_u\}$, $u$ does the following.
        We have $u \in H_i$. Let us denote by $C_u$ its ECC$_i$. Let $d \in \{1, \ldots, k-1\}$, and let $v \in C_u(d)$.
        Since $v \notin C_u$, we have $v \in L_{i-1}$ (indeed, if $v \in H_i$, it would imply $v \in C_u$ because $v$ is at distance at most $k-1$ from $C_u$).
        Let $i_v \in \{1, \ldots, i-1\}$ be such that $v \in V_{i_v}$.
        We have $C_u \subseteq H_i \subseteq H_{i_v}$, so $u,v$ are in the same ECC$_{i_v}$.
        Thus, we have $v \preccurlyeq u$.
        Since this true for all $d \in \{1, \ldots, k-1\}$ and $v \in C_u(d)$, we have $\bigcup_{1 \leqslant d \leqslant k-1} C_u(d) \subseteq V_{\preccurlyeq u}$.
        Thus, for all $v \in \bigcup_{1 \leqslant d \leqslant k-1} C_u(d)$, $u$ can check if $\lp[v,i]$ is correct, and rejects if it is not the case.

        If no vertex rejects at this point, then $\lp$ is correct.

        \item If $G_{\preccurlyeq u}$ has an induced path $P$ of length $3k-1$, which is included in $V_{\preccurlyeq u}$, then $u$ rejects. 
        \item Finally, $u$ rejects if it sees, in its view at distance~$k$ 
        , an induced path $P$ of some length $\ell(P)$, which goes from a vertex $v_1$ to a vertex $v_2$, such that the following conditions are satisfied:
        \begin{itemize}
            \item at most one vertex in $P$ is at distance exactly~$k$ from $u$
            \item there exists $i \in \{1, \ldots, N\}$ and 
            two distinct $C_{v_1}, C_{v_2} \in \cE_i$ and $d_{v_1},d_{v_2} \in \{0, \ldots, k-1\}$ such that $v_1 \in C_{v_1}(d_{v_1})$ and $v_2 \in C_{v_2}(d_{v_2})$
            \item $v_1$ (resp. $v_2$) is the only vertex in $P \cap (\bigcup_{0 \leqslant d \leqslant d_{v_1}} C_{v_1}(d))$ (resp. in $P \cap (\bigcup_{0 \leqslant d \leqslant d_{v_2}} C_{v_2}(d))$)
            \item $\ell(P) + (\lp[v_1,i]-1) + (\lp[v_2,i]-1) \geqslant 3k-1$ 
        \end{itemize}

        \item If $u$ did not reject previously, it accepts.
    \end{enumerate}

    \medskip{}
    \textbf{Correctness.}
    Let us show that this certification scheme is correct. First, assume that $G$ has an induced path $P$ of length $3k-1$, and let us show that for every assignment of certificates, at least one vertex rejects.
    If no vertex rejects at step~(i), then every vertex~$u$ knows the ECC-table $T_G$, and its witnessed graph $G_{\preccurlyeq u}$.
    If no vertex rejects in steps~(ii) and~(iii), then $\lp$ is correctly written in the certificates.
    
    Let us decompose $P$ in three consecutive parts $P_1, P_2, P_3$, having respectively $k,k-1,k$ vertices. Let $x_1$ (resp. $x_2,x_3$) be the vertex of maximum degree in $P_1$ (resp. in $P_2,P_3$). Let $i_1, i_2, i_3 \in \{1, \ldots, N\}$ be such that $x_1 \in V_{i_1}$, $x_2 \in V_{i_2}$ and $x_3 \in V_{i_3}$ (see Figure~\ref{fig:P1P2P3} for an illustration). Note that the distance between $x_1$ and $x_2$ is at most $2k-2$. Similarly, the distance between $x_2$ and $x_3$ is at most $2k-2$.
    The two following Claims~\ref{claim:i2>=min(i1,i3)} and~\ref{claim:i2<=min(i3,i1)} show that, in all cases, at least one vertex rejects.

    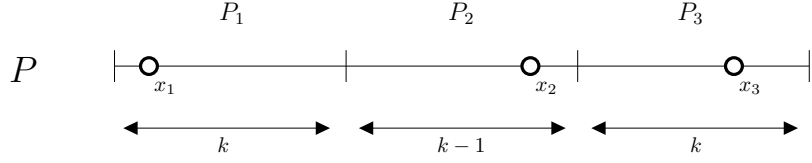
\begin{figure}
        \centering
        \scalebox{0.8}{
        \begin{tikzpicture}[x=0.75pt,y=0.75pt,yscale=-1,xscale=1]

\draw    (103,106) -- (247.5,106) ;
\draw    (103,96) -- (103,106) ;
\draw    (247.5,106) -- (392,106) ;
\draw    (392,106) -- (536.5,106) ;
\draw    (103,106) -- (103,116) ;
\draw    (247.5,106) -- (247.5,116) ;
\draw    (247.5,96) -- (247.5,106) ;
\draw    (392,96) -- (392,106) ;
\draw    (392,106) -- (392,116) ;
\draw    (536.5,96) -- (536.5,106) ;
\draw    (536.5,106) -- (536.5,116) ;
\draw  [fill={rgb, 255:red, 255; green, 255; blue, 255 }  ,fill opacity=1 ][line width=1.5]  (119.25,106) .. controls (119.25,103.1) and (121.6,100.75) .. (124.5,100.75) .. controls (127.4,100.75) and (129.75,103.1) .. (129.75,106) .. controls (129.75,108.9) and (127.4,111.25) .. (124.5,111.25) .. controls (121.6,111.25) and (119.25,108.9) .. (119.25,106) -- cycle ;
\draw  [fill={rgb, 255:red, 255; green, 255; blue, 255 }  ,fill opacity=1 ][line width=1.5]  (357.25,106) .. controls (357.25,103.1) and (359.6,100.75) .. (362.5,100.75) .. controls (365.4,100.75) and (367.75,103.1) .. (367.75,106) .. controls (367.75,108.9) and (365.4,111.25) .. (362.5,111.25) .. controls (359.6,111.25) and (357.25,108.9) .. (357.25,106) -- cycle ;
\draw  [fill={rgb, 255:red, 255; green, 255; blue, 255 }  ,fill opacity=1 ][line width=1.5]  (484.25,106) .. controls (484.25,103.1) and (486.6,100.75) .. (489.5,100.75) .. controls (492.4,100.75) and (494.75,103.1) .. (494.75,106) .. controls (494.75,108.9) and (492.4,111.25) .. (489.5,111.25) .. controls (486.6,111.25) and (484.25,108.9) .. (484.25,106) -- cycle ;
\draw    (111.5,145) -- (234.5,145) ;
\draw [shift={(237.5,145)}, rotate = 180] [fill={rgb, 255:red, 0; green, 0; blue, 0 }  ][line width=0.08]  [draw opacity=0] (8.93,-4.29) -- (0,0) -- (8.93,4.29) -- cycle    ;
\draw [shift={(108.5,145)}, rotate = 0] [fill={rgb, 255:red, 0; green, 0; blue, 0 }  ][line width=0.08]  [draw opacity=0] (8.93,-4.29) -- (0,0) -- (8.93,4.29) -- cycle    ;
\draw    (258.5,145) -- (381.5,145) ;
\draw [shift={(384.5,145)}, rotate = 180] [fill={rgb, 255:red, 0; green, 0; blue, 0 }  ][line width=0.08]  [draw opacity=0] (8.93,-4.29) -- (0,0) -- (8.93,4.29) -- cycle    ;
\draw [shift={(255.5,145)}, rotate = 0] [fill={rgb, 255:red, 0; green, 0; blue, 0 }  ][line width=0.08]  [draw opacity=0] (8.93,-4.29) -- (0,0) -- (8.93,4.29) -- cycle    ;
\draw    (403.5,145) -- (526.5,145) ;
\draw [shift={(529.5,145)}, rotate = 180] [fill={rgb, 255:red, 0; green, 0; blue, 0 }  ][line width=0.08]  [draw opacity=0] (8.93,-4.29) -- (0,0) -- (8.93,4.29) -- cycle    ;
\draw [shift={(400.5,145)}, rotate = 0] [fill={rgb, 255:red, 0; green, 0; blue, 0 }  ][line width=0.08]  [draw opacity=0] (8.93,-4.29) -- (0,0) -- (8.93,4.29) -- cycle    ;

\draw (167,65) node [anchor=north west][inner sep=0.75pt]    {\Large $P_{1}$};
\draw (310,65) node [anchor=north west][inner sep=0.75pt]    {\Large $P_{2}$};
\draw (453,65) node [anchor=north west][inner sep=0.75pt]    {\Large $P_{3}$};
\draw (126.5,114.65) node [anchor=north west][inner sep=0.75pt]    {$x_{1}$};
\draw (364.5,114.65) node [anchor=north west][inner sep=0.75pt]    {$x_{2}$};
\draw (491.5,114.65) node [anchor=north west][inner sep=0.75pt]    {$x_{3}$};
\draw (166,149.4) node [anchor=north west][inner sep=0.75pt]    {$k$};
\draw (303,149.4) node [anchor=north west][inner sep=0.75pt]    {$k-1$};
\draw (461,149.4) node [anchor=north west][inner sep=0.75pt]    {$k$};
\draw (36,99) node [anchor=north west][inner sep=0.75pt]    {\LARGE $P$};

\end{tikzpicture}
        }

        \caption{The decomposition of $P$ in three consecutive parts $P_1$, $P_2$, $P_3$. For each $i \in \{1,2,3\}$, $x_i$ is the vertex of maximum degree in $P_i$.}
        \label{fig:P1P2P3}
    \end{figure}

    \begin{claim}
        \label{claim:i2>=min(i1,i3)}
        If $i_2 \geqslant \min(i_1,i_3)$, then at least one vertex reject at step (iv) of the verification.
    \end{claim}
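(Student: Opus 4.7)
The plan is to find, in each case of the hypothesis, a single vertex $u \in V(P)$ such that $V(P) \subseteq V_{\preccurlyeq u}$. Once this is done, $P$ is an induced path of $G_{\preccurlyeq u}$ on $3k-1$ vertices included in $V_{\preccurlyeq u}$ (its edges survive because both endpoints lie in $V_{\preccurlyeq u}$, and its non-edges are inherited from $G \supseteq G_{\preccurlyeq u}$), so $u$ rejects at step~(iv).

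By symmetry of $P$ (reversing it swaps the roles of $P_1$ and $P_3$, hence of $i_1$ and $i_3$), I may assume $i_1 \leq i_3$, so the hypothesis $i_2 \geq \min(i_1, i_3)$ becomes $i_2 \geq i_1$. I then split according to whether $i_2 \geq i_3$ or $i_2 < i_3$.

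If $i_2 \geq i_3$, I take $u := x_2$. Every $v \in V(P)$ has level $j$ bounded by $\max(i_1,i_2,i_3) = i_2$, so both $v \in V_j$ and $x_2 \in V_{i_2}$ belong to $H_j$. The geometric key is that $x_2 \in P_2$, the central $k-1$ vertices of $P$, so the subpath $Q$ of $P$ from $v$ to $x_2$ has at most $2k-1$ vertices. Since both endpoints of $Q$ lie in $H_j$ (hence outside $L_{j-1}$), every maximal run of consecutive vertices of $Q$ lying in $L_{j-1}$ sits strictly in the interior of $Q$ and therefore has at most $|V(Q)|-2 \leq 2k-3$ vertices. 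This rules out $2k-2$ consecutive $L_{j-1}$-vertices, so $v$ is $j$-linked to $x_2$ and $v \preccurlyeq x_2$.

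If $i_2 < i_3$, I take $u := x_3$. For $v \in P_2 \cup P_3$ the same length-and-endpoint argument goes through directly, because the subpath from such a $v$ to $x_3 \in P_3$ still has at most $2k-1$ vertices whose endpoints lie in $H_j$. The delicate case, which is the main obstacle, is $v \in P_1$: the subpath of $P$ from $v$ to $x_3$ can span all of $P$ and thus contain up to $3k-1$ vertices, far too many for the naive bound. Here I will exploit that this subpath necessarily passes through $P_2$, hence through $x_2$, and that $x_2 \in V_{i_2} \subseteq H_j$ because $j \leq i_1 \leq i_2$ — this is exactly where the hypothesis $i_2 \geq i_1$ is used. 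The vertex $x_2$ then splits the subpath into two pieces, each with at most $2k-1$ vertices and with both endpoints outside $L_{j-1}$, so any maximal run of consecutive $L_{j-1}$-vertices on the full subpath is confined to the interior of one piece and has length at most $2k-3 < 2k-2$. Hence $v \preccurlyeq x_3$, and $u = x_3$ witnesses all of $P$, concluding the proof.
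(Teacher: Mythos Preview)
Your proof is correct and follows essentially the same approach as the paper: in both cases the witness vertex is $x_2$ (when $i_2 \geq \max(i_1,i_3)$) or $x_3$ (otherwise), and one shows $V(P) \subseteq V_{\preccurlyeq u}$ so that $u$ rejects at step~(iv). The only difference is presentational: the paper first observes $P_j \subseteq V_{\preccurlyeq x_j}$ for each $j$ (since $x_j$ has maximum degree in $P_j$ and $|P_j| \leq k$) and then chains $x_1 \preccurlyeq x_2$ and, if needed, $x_2 \preccurlyeq x_3$ via transitivity of $\preccurlyeq$, whereas you verify $v \preccurlyeq u$ directly for each $v$ by exhibiting the subpath of $P$ from $v$ to $u$ and using $x_2$ as an intermediate $H_j$-checkpoint to bound runs of $L_{j-1}$-vertices; the underlying geometry (pairwise distances $\leq 2k-2$ among $x_1,x_2,x_3$ along $P$) is identical.
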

    \begin{proof}
        Assume that $i_2 \geqslant i_1$. Then, we have $x_2 \in H_{i_1}$. Since $x_1$ and $x_2$ are at distance at most $2k-2$, there are in the same ECC$_{i_1}$. Thus, we have $x_1 \preccurlyeq x_2$. Note also that, since $x_1$ is the vertex of maximum degree in $P_1$, we have $P_1 \subseteq V_{\preccurlyeq x_1}$. Thus, by transitivity, we have $P_1 \subseteq V_{\preccurlyeq x_2}$. Moreover, we also have $P_2 \subseteq V_{\preccurlyeq x_2}$.
        Finally, there are two cases:
        \begin{itemize}
            \item If $i_2 \geqslant i_3$, by the same argument, we have $P_3 \subseteq V_{\preccurlyeq x_2}$. So $P$ is included in $V_{\preccurlyeq x_2}$, and $x_2$ rejects at step~(iv).
            \item If $i_3 > i_2$, we have $x_2 \preccurlyeq x_3$. By transitivity we have $P_1 \subseteq V_{\preccurlyeq x_3}$ and $P_2 \subseteq V_{\preccurlyeq x_3}$. Since we also have $P_3 \subseteq V_{\preccurlyeq x_3}$, $P$ is included in $V_{\preccurlyeq x_3}$ and $x_3$ rejects at step~(iv). \qedhere
        \end{itemize}
    \end{proof}

    \begin{claim}
        \label{claim:i2<=min(i3,i1)}
        If $i_2 \leqslant \min(i_1,i_3)$, then at least one vertex rejects.
    \end{claim}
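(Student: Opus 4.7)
Since the case $i_2 = \min(i_1, i_3)$ is already handled by Claim~\ref{claim:i2>=min(i1,i3)}, I may assume the strict inequality $i_2 < \min(i_1, i_3)$. Without loss of generality, $i_1 \leq i_3$, so $i_2 < i_1 \leq i_3$. I may also assume that every vertex has correctly computed $T_G$, its witnessed graph $G_{\preccurlyeq u}$, and the table $\lp$, since otherwise some vertex would already have rejected during steps~(i)--(iii) of the verification. The argument then splits according to whether $x_1$ and $x_3$ lie in the same ECC$_{i_1}$.

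In the \emph{same-ECC} case, the strategy mimics Claim~\ref{claim:i2>=min(i1,i3)} using transitivity of $\preccurlyeq$. From $x_1 \in V_{i_1}$ and $x_3 \in H_{i_1}$ sharing an ECC$_{i_1}$ I obtain $x_1 \preccurlyeq x_3$. For any $y \in P_2 \cup P_3$, the subpath of $P$ joining $y$ to $x_3$ has length at most $2k-2$, hence strictly fewer than $2k-2$ internal vertices, so it cannot form a $(2k-2)$-run of $L_{i_y-1}$ vertices and $y \preccurlyeq x_3$ follows. For $y \in P_1$, the same argument inside $P_1$ yields $y \preccurlyeq x_1$, which combined with $x_1 \preccurlyeq x_3$ and transitivity of $\preccurlyeq$ gives $y \preccurlyeq x_3$; here I use that sharing an ECC$_i$ implies sharing an ECC$_{i'}$ for every $i' \leq i$, because the defining equivalence becomes coarser as $i$ decreases (since $L_{i-1}$ shrinks). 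Hence $P \subseteq V_{\preccurlyeq x_3}$ and $x_3$ rejects at step~(iv).

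In the \emph{different-ECC} case, I apply step~(v). Let $j \leq i_1$ be the smallest level at which $x_1$ and $x_3$ already lie in different ECC$_j$, with $C_1 \ni x_1$ and $C_3 \ni x_3$ the corresponding components. Since $C_1$ and $C_3$ lie at $G$-distance at least $2k-1$, no vertex of $P$ is within distance $k-1$ of both, so the ``near-$C_1$'' and ``near-$C_3$'' portions of $P$ form two disjoint intervals. I pick $v_1$ at the boundary of the near-$C_1$ interval on the side facing $x_3$, $v_2$ symmetrically, and $u$ in the intermediate region so that its view at distance $k$ covers the subpath $P'$ of $P$ from $v_1$ to $v_2$ (which is possible because $P'$ has at most $2k+1$ vertices). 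The $\lp[v_1, j]$ path then extends $v_1$ backward through the entire near-$C_1$ intersection of $P$ (each such vertex being strictly closer to $C_1$ than $v_1$), with the symmetric statement for $\lp[v_2, j]$, and a direct count yields $\ell(P') + (\lp[v_1,j]-1) + (\lp[v_2,j]-1) \geq 3k-1$, triggering step~(v) at $u$. The delicate point is to reconcile the uniqueness requirement on $v_1, v_2$ with the $\lp$-extensions covering all remaining vertices of $P$: when $C_1 \cap P$ (or $C_3 \cap P$) contains vertices beyond $x_1$ (or $x_3$), $v_1$ must be shifted one position outward so that $d_{v_1} \geq 1$ and $\lp[v_1, j]$ can include all the $C_1 \cap P$ vertices as interior vertices of its certified path; a small case analysis on $|C_1 \cap P|$ and $|C_3 \cap P|$ completes the proof.
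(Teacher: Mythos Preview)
Your same-ECC case is correct and is essentially the paper's argument with the roles of $x_1$ and $x_3$ swapped (you take $i_1\le i_3$, the paper takes $i_3\le i_1$); the appeal to transitivity of $\preccurlyeq$ is exactly what is needed, and your parenthetical about ECC$_i$'s coarsening as $i$ decreases, while true, is not actually required once transitivity is granted.

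The different-ECC case, however, is only a sketch and has real gaps. Two concrete problems:
\begin{itemize}
\item The ``near-$C_1$'' and ``near-$C_3$'' vertex sets of $P$ are disjoint, but there is no reason they should be \emph{intervals} of $P$: the path can leave the $(k{-}1)$-neighbourhood of $C_1$ and re-enter it. Your ``boundary of the near-$C_1$ interval on the side facing $x_3$'' is therefore not well-defined, and the bound $|V(P')|\le 2k+1$ is unjustified.
\item Even granting intervals, your $v_1$ would sit at distance $d_{v_1}=k{-}1$ from $C_1$, and nothing prevents the near-$C_1$ interval from containing \emph{other} vertices also at distance $k{-}1$. Then the portion of $P$ beyond $v_1$ is not contained in $\bigcup_{d<d_{v_1}} C_1(d)$, so $\lp[v_1,j]$ need not dominate its length, and the inequality required by step~(v) can fail. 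Your closing ``small case analysis on $|C_1\cap P|$ and $|C_3\cap P|$'' does not address this and is not carried out.
\end{itemize}

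What you are missing is the paper's pigeonhole step. The path $P$ has only $3k-1$ vertices but must meet each of the $2k$ pairwise-disjoint layers $C_{x_1}(0),\dots,C_{x_1}(k{-}1),C_{x_3}(0),\dots,C_{x_3}(k{-}1)$; hence at least $k+1$ of these layers are met \emph{exactly once} by $P$. This directly produces $d_1,d_3$ with $|P\cap C_{x_1}(d_1)|=|P\cap C_{x_3}(d_3)|=1$, and the corresponding unique vertices $u_1,u_3$ can be chosen at mutual distance at most $2k-1$ along $P$, so some $u$ sees the whole subpath $P'$ from $u_1$ to $u_3$ in its radius-$k$ view. Crucially, because $u_1$ is the \emph{only} vertex of $P$ at distance $d_1$ from $C_{x_1}$, every vertex of $P$ on the far side of $u_1$ must lie at distance strictly less than $d_1$ from $C_{x_1}$ (the path cannot cross level $d_1$ a second time), which is exactly what makes $\lp[u_1,\cdot]$ bound that tail and the inequality in step~(v) go through. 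This singleton-layer argument is the core idea your sketch lacks.
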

    \begin{proof}
        Without loss of generality, assume that $i_2 \leqslant i_3 \leqslant i_1$.
        We have $x_1 \in H_{i_3}$. There are two cases, depending on whether $x_1$ and $x_3$ are in the same ECC$_{i_3}$ or not.
        \begin{itemize}
            \item If $x_1$ and $x_3$ are in the same ECC$_{i_3}$, we have $x_3 \preccurlyeq x_1$. Thus, by transitivity, $P_3 \subseteq V_{\preccurlyeq x_1}$. Since we also have $P_1 \subseteq P_{\preccurlyeq x_1}$ and $P_2 \in V_{\preccurlyeq x_1}$ (because $i_2 \leqslant i_1$), then $P$ is included in $V_{\preccurlyeq x_1}$ so $x_1$ rejects at step~(iv).

            \item If $x_1$ and $x_3$ are not in the same ECC$_{i_3}$, let us prove that some vertex rejects at step~(v). Let $C_{x_1}, C_{x_3} \in \cE_{i_3}$ be such that $x_1 \in C_{x_1}$ and $x_3 \in C_{x_3}$.
            Since $P$ passes through $C_{x_1}$ and $C_{x_3}$, in the part between $x_1$ and $x_3$, it passes also through $C_{x_1}(d)$ and $C_{x_3}(d)$ for all $d \in \{0, \ldots, k-1\}$.
            Moreover, since $P$ has length $3k-1$, there exists at least $k+1$ sets among $\{C_{x_1}(0), \ldots, C_{x_1}(k-1), C_{x_3}(0), \ldots, C_{x_3}(k-1)\}$ through which $P$ passes only once.
            Thus, in the part of $P$ between $x_1$ and $x_3$, there exists $d_1, d_3 \in \{0, \ldots, k-1\}$ such that $P$ passes only once through $C_{x_1}(d_1)$ and $C_{x_3}(d_3)$, in some vertices denoted by $u_1, u_3$, and $u_1,u_3$ are at distance at most $2k-1$ from each other.
            Let $P'$ denote the part of $P$ between $u_1$ and $u_3$.
            Let $u \in P'$ such that $u$ is at distance at most $k-1$ from every vertex in $P'$, except $u_1$, and $u,u_1$ are at distance at most $k$. This is depicted on Figure~\ref{fig:u1u3u}.
            
            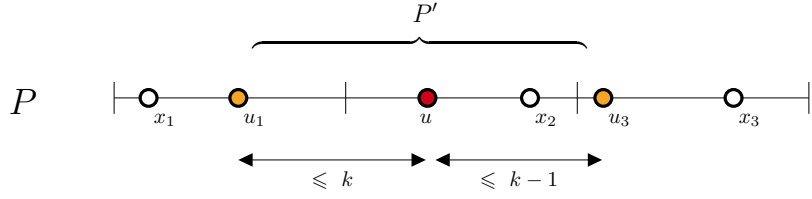
\begin{figure}
                \centering
                \scalebox{0.8}{
                \begin{tikzpicture}[x=0.75pt,y=0.75pt,yscale=-1,xscale=1]

\draw    (103,106) -- (247.5,106) ;
\draw    (103,96) -- (103,106) ;
\draw    (247.5,106) -- (392,106) ;
\draw    (392,106) -- (536.5,106) ;
\draw    (103,106) -- (103,116) ;
\draw    (247.5,106) -- (247.5,116) ;
\draw    (247.5,96) -- (247.5,106) ;
\draw    (392,96) -- (392,106) ;
\draw    (392,106) -- (392,116) ;
\draw    (536.5,96) -- (536.5,106) ;
\draw    (536.5,106) -- (536.5,116) ;
\draw  [fill={rgb, 255:red, 255; green, 255; blue, 255 }  ,fill opacity=1 ][line width=1.5]  (119.25,106) .. controls (119.25,103.1) and (121.6,100.75) .. (124.5,100.75) .. controls (127.4,100.75) and (129.75,103.1) .. (129.75,106) .. controls (129.75,108.9) and (127.4,111.25) .. (124.5,111.25) .. controls (121.6,111.25) and (119.25,108.9) .. (119.25,106) -- cycle ;
\draw  [fill={rgb, 255:red, 255; green, 255; blue, 255 }  ,fill opacity=1 ][line width=1.5]  (357.25,106) .. controls (357.25,103.1) and (359.6,100.75) .. (362.5,100.75) .. controls (365.4,100.75) and (367.75,103.1) .. (367.75,106) .. controls (367.75,108.9) and (365.4,111.25) .. (362.5,111.25) .. controls (359.6,111.25) and (357.25,108.9) .. (357.25,106) -- cycle ;
\draw  [fill={rgb, 255:red, 255; green, 255; blue, 255 }  ,fill opacity=1 ][line width=1.5]  (484.25,106) .. controls (484.25,103.1) and (486.6,100.75) .. (489.5,100.75) .. controls (492.4,100.75) and (494.75,103.1) .. (494.75,106) .. controls (494.75,108.9) and (492.4,111.25) .. (489.5,111.25) .. controls (486.6,111.25) and (484.25,108.9) .. (484.25,106) -- cycle ;
\draw  [fill={rgb, 255:red, 245; green, 166; blue, 35 }  ,fill opacity=1 ][line width=1.5]  (175.25,106) .. controls (175.25,103.1) and (177.6,100.75) .. (180.5,100.75) .. controls (183.4,100.75) and (185.75,103.1) .. (185.75,106) .. controls (185.75,108.9) and (183.4,111.25) .. (180.5,111.25) .. controls (177.6,111.25) and (175.25,108.9) .. (175.25,106) -- cycle ;
\draw  [fill={rgb, 255:red, 208; green, 2; blue, 27 }  ,fill opacity=1 ][line width=1.5]  (293.25,106) .. controls (293.25,103.1) and (295.6,100.75) .. (298.5,100.75) .. controls (301.4,100.75) and (303.75,103.1) .. (303.75,106) .. controls (303.75,108.9) and (301.4,111.25) .. (298.5,111.25) .. controls (295.6,111.25) and (293.25,108.9) .. (293.25,106) -- cycle ;
\draw  [fill={rgb, 255:red, 245; green, 166; blue, 35 }  ,fill opacity=1 ][line width=1.5]  (403,106) .. controls (403,103.1) and (405.35,100.75) .. (408.25,100.75) .. controls (411.15,100.75) and (413.5,103.1) .. (413.5,106) .. controls (413.5,108.9) and (411.15,111.25) .. (408.25,111.25) .. controls (405.35,111.25) and (403,108.9) .. (403,106) -- cycle ;
\draw    (183.5,145) -- (285.5,145) -- (294.5,145) ;
\draw [shift={(297.5,145)}, rotate = 180] [fill={rgb, 255:red, 0; green, 0; blue, 0 }  ][line width=0.08]  [draw opacity=0] (8.93,-4.29) -- (0,0) -- (8.93,4.29) -- cycle    ;
\draw [shift={(180.5,145)}, rotate = 0] [fill={rgb, 255:red, 0; green, 0; blue, 0 }  ][line width=0.08]  [draw opacity=0] (8.93,-4.29) -- (0,0) -- (8.93,4.29) -- cycle    ;
\draw    (306.5,145) -- (404.5,145) ;
\draw [shift={(407.5,145)}, rotate = 180] [fill={rgb, 255:red, 0; green, 0; blue, 0 }  ][line width=0.08]  [draw opacity=0] (8.93,-4.29) -- (0,0) -- (8.93,4.29) -- cycle    ;
\draw [shift={(303.5,145)}, rotate = 0] [fill={rgb, 255:red, 0; green, 0; blue, 0 }  ][line width=0.08]  [draw opacity=0] (8.93,-4.29) -- (0,0) -- (8.93,4.29) -- cycle    ;

\draw (126.5,114.65) node [anchor=north west][inner sep=0.75pt]    {$x_{1}$};
\draw (364.5,114.65) node [anchor=north west][inner sep=0.75pt]    {$x_{2}$};
\draw (491.5,114.65) node [anchor=north west][inner sep=0.75pt]    {$x_{3}$};
\draw (182.5,114.65) node [anchor=north west][inner sep=0.75pt]    {$u_{1}$};
\draw (410.25,114.65) node [anchor=north west][inner sep=0.75pt]    {$u_{3}$};
\draw (292.5,114.65) node [anchor=north west][inner sep=0.75pt]    {$u$};
\draw (330,151) node [anchor=north west][inner sep=0.75pt]    {$\leqslant \ k-1$};
\draw (225,151) node [anchor=north west][inner sep=0.75pt]    {$\leqslant \ k$};
\draw (288,45) node [anchor=north west][inner sep=0.75pt]    {\Large $P'$};
\draw (36,99) node [anchor=north west][inner sep=0.75pt]    {\LARGE $P$};
\draw (188,65) node [anchor=north west][inner sep=0.75pt]    {\Huge $\overbrace{\hspace{5.5cm}}$};

\end{tikzpicture}
                }
                \caption{The vertex $u_1$ (resp. $u_3$) is the only vertex in $P \cap C_{x_1}(d_1)$ (resp. in $P \cap C_{x_3}(d_3)$). The vertices~$u_1$ and~$u_3$ are at distance at most $2k-1$ from each other. The vertex~$u$ is such that its distance to~$u_1$ is at most~$k$, and its distance to~$u_3$ is at most~$k-1$.}
                \label{fig:u1u3u}
            \end{figure}
            
            So $P'$ is an induced path in the view of $u$ at distance $k$. Let us denote its length by $\ell(P')$.
            Since $P$ has length $3k-1$, by definition of $\lp$, we have: $$(\lp[u_1,i_3]-1) + \ell(P') + (\lp[u_3,i_3]-1) \geqslant 3k-1$$
            Thus, the vertex $u$ rejects at step~(vi). \qedhere
        \end{itemize}
    \end{proof}
    
    Conversely, assume that there is no induced $P_{3k-1}$ in $G$. Then, let us prove that with the certificates described above, no vertex rejects. It is straightforward that no vertex rejects at steps~(i), (ii) and~(iii).
    By contradiction, assume that a vertex $u$ rejects at step~(iv). Then, there exists an induced path $P$ of length $3k-1$ in $G_{\preccurlyeq u}$ which is included in $V_{\preccurlyeq u}$. So $P$ is also induced in $G$, which is a contradiction.
    Finally, assume that a vertex~$u$ rejects at step~(v).
    Then, there is a path $P$ induced in the view of~$u$ at distance $k$ and two vertices $v_1$, $v_2$ satisfying the conditions making $u$ reject at step~(v).
    Let $i \in \{1, \ldots, N\}$ be such that $v_1$ and $v_2$ are at distance at most $k-1$ from two distinct ECC$_i$'s, denoted by $C_{v_1}$ and $C_{v_2}$. Let $d_{v_1}$ (resp. $d_{v_2}$) be such that $v_1 \in C_{v_1}(d_{v_1})$ (resp. $v_2 \in C_{v_2}(d_{v_2})$).
    By definition of $\lp$, there exists a path $P_{v_1}$ (resp. $P_{v_2}$) which starts in $v_1$ (resp. $v_2$), which has all its other vertices in $\bigcup_{0 \leqslant d < d_{v_1}} C_{v_1}(d)$ (resp. in $\bigcup_{0 \leqslant d < d_{v_2}} C_{v_2}(d)$), and such that $\ell(P) + (\ell(P_{v_1})-1) + (\ell(P_{v_2})-1) \geqslant 3k-1$.
    So we can glue the paths $P_{v_1}$, $P$ and $P_{v_2}$ to obtain a path of length $3k-1$ in $G$, which is still induced. This is a contradiction. Thus, all the vertices accept at step~(vi).
\end{proof}

\subsection{Upper bound for paths of length $\frac{14}{3}k-2$ 
in $\tilde{O}(n^{3/2})$}
\label{sec:P14/3}


In this subsection we prove the following theorem.
\thmPImprovedPathsFreeSubquadratic* 


\begin{proof}
    Let $m = \left \lceil \frac{14}{3} k \right \rceil - 1$. Note that $m < \frac{14}{3} k$. Let us describe a certification scheme for $P_m$-free graphs, with certificates of size $O(n^{3/2}\log^2 n)$. Let $G = (V,E)$ be a graph and $n = |V|$.
    Let $\varepsilon = \frac{1}{2}$. We will use the definitions of \cref{sub:ecci-dfns-app} with respect to $\varepsilon$.
    We remind the reader of the following key pieces of notation.
    %
    %
    \begin{itemize}
    \item $V_1$ is the set of vertices with degree less than $\sqrt{n}$. $V_2 = V \setminus V_1$.
   \item $H_1 = V$, $H_2 = V_2$. 
   \item $\mathcal{E}_2$ is the partition of $H_2$ into ECC$_2$'s. By definition all of $V$ is in one ECC$_1$.
   \item For $C \in \cE_2$, we denote by $C(d)$ the set of vertices at distance exactly $d$ from $C$
   \item If $u$ is at distance at most $k-1$ from some ECC$_2$ $C_u$, 
    $T_G[u,2]$ is $(id(C_u), d_u)$ where $d_u$ is the distance from $u$ to $C_u$. (Recall, $C_u$ is uniquely defined.) Otherwise, $T_G[u,2] = \bot$.
    \item For each $u \in V_1$, by definition $V_{\preccurlyeq u} = V_1$. For each $u \in V_2$, by definition $V_{\preccurlyeq u} = V_1 \cup C_u$ where $C_u$ denotes the ECC$_2$ containing $u$. 
    \item $G_{\preccurlyeq u}$ is the graph obtained from $G$ by deleting all edges that do not have at least one endpoint in $V_{\preccurlyeq u}$. 
     \item For every vertex $v \in V_1$ such that $v$ is at distance at most $k-1$ from a (unique) ECC$_2$ $C_v$, we let $P_v$ be the longest induced path starting from $v$ with other vertices in $P_v$ strictly closer to $C_v$ than $v$. Let $\ell(P_v)$ be the length of $P$.
     \item We define $\lp$ to be an array indexed by identifiers of vertices, such that $\lp[v] = \ell(P_v)$ for every vertex $v \in \bigcup_{C \in \cE_2} \bigcup_{1 \leqslant d \leqslant k-1} C(d)$, and it is $\bot$ otherwise. 
   \end{itemize}


    \medskip{}
    \textbf{Certification.} 
    Let us describe the certificates given by the prover to the vertices.
    
    We give every vertex $v$ the certificates of \cref{thm:p_4k_free_subquadratic} and some additional information denoted by $\lcp$.
    We repeat the definition of the certificates of \cref{thm:p_4k_free_subquadratic} for the readers convenience:
   First, it gives to every vertex its certificates in the local computation schemes of Theorems~\ref{thm:computation scheme for T_G} and~\ref{thm:computation scheme for g_<u}, with $\varepsilon$, so that each vertex $u$ can compute $T_G$ and $G_{\preccurlyeq u}$. This part of the certificate has size $O(n^{3/2} \log^2 n)$.
   Then the prover adds the field $\lp$ to each vertex. It has size $O(n \log n)$.

    For every $v \in V$, let us describe what information is stored in $\lcp(v)$.
    If $v \in V_2$ or $v \in V_1$ and $v$ has distance at least $k$ from every ECC$_2$, then we set $\lcp(v)=\bot$.
    Otherwise, $\lcp(v)$ consists in a table having at most $n$ rows, and $4$ columns. The rows are indexed by vertices of~$G$. The content of each cell is the following. Let $v' \in C_v(d_v) \setminus \{v\}$.
    Let $Q$ denote the set $\bigcup_{0 \leqslant d < d_v} C_v(d)$.
    In other words, $Q$ is the set of vertices in $G$ with that are strictly closer to $C_v$ than $v$.
    \begin{itemize}
        \item In $\lcp(v)[v',1]$, the prover writes the length of the longest induced path starting from $v$ and having all the other vertices in $Q \cup \{v'\}$.

        \item In $\lcp(v)[v',2]$, the prover writes the length of the longest induced path starting from $v'$ and having all the other vertices in $Q \setminus N[v]$.

        \item In $\lcp(v)[v',3]$, the prover writes the length of the longest induced path starting at $v$, ending in $v'$, and having all the other vertices in $Q$.

        \item In $\lcp(v)[v',3]$, the prover writes the maximum sum of the lengths of two disjoint induced paths, which are anti-complete one to each other, which start respectively at $v$ and $v'$, and which have all the other vertices in $Q$.
    \end{itemize}

    Note that $\lp$ and $\lcp$ both have size $O(n \log n)$. Thus, the overall size of each certificate is $O(n^{3/2}\log^2 n)$.

\medskip{}
\textbf{Verification.}
The verification of each vertex $u \in V$ consists in the following steps.
First $u$ performs the $m$-pathcheck as described in \cref{def:mpathcheck}.
(Recall $m = \lceil 14k/3 \rceil  -1$.)
We will refer to the steps of \cref{def:mpathcheck} as step (i) - step \ref{step:uselp-simple}.

Then $u$ performs the following additional steps:
    \begin{enumerate}[(i)]
    \setcounter{enumi}{5}
    
    \item $u$ checks the correctness of $\lcp(u)$. 
    \begin{itemize}
    \item Since $V_1, V_2$ are defined only by the degrees of vertices, $u$ knows whether it is in $V_1$ or $V_2$.
        If $u \in V_2$, and $T_G[u,2] \neq \bot$, $u$ rejects. 
        If $u \in V_1$, $u$ checks whether $\lcp(u) = \bot$ if and only if $T_G[u,2] = \bot$. If not $u$, rejects.
        \label{step:check-constrained-parti one}
    \item  Suppose $\lcp(u) \neq \bot$. Then by the previous item and the definition of $T_u$, there is a (unique) $C_u \in \cE_2$ and $d_u \in \{1, \ldots, k-1\}$ such that $u$ is at distance exactly $d_u$ from $C_u$.
    Let $u' \in C_u$ at distance $d_u$ from $u$.
    By Remark~\ref{rem:computation witnessed graph other vertices}, $u$ can compute $G_{\preccurlyeq u'}$. 
    In particular, $u$ knows the subgraph of $G$ induced by $\bigcup_{0 \leqslant d \leqslant k-1} C_u(d)$.
    and this is sufficient for $u$ to check the correctness of $\lcp(u)$ by definition of $\lcp$ (and $u$ rejects if it is not correct.) \label{step:check-constrained} 
    \end{itemize}
    

    \end{enumerate}

    We may assume no vertex has rejected at this point, and thus $u$ has access to a (correct) $T_G$, ~$G_{\preccurlyeq u}$,  $\lp$ and $\lcp(u)$.
    
    \begin{enumerate}[(i)]
    \setcounter{enumi}{6}
    \item If $u \in V_2$, let us denote by $C_u$ its ECC$_2$. It rejects if it sees an induced path $P_{\text{start}}$ in $G[C_u \cup V_1]$, and such that the following conditions are satisfied:
        \begin{itemize}
            \item $P_{\text{start}}$ contains at least one vertex from $C_u$
            \item $P_{\text{start}}$ goes from a vertex $v$ to a vertex $w$, such that $T_G[v,2] = (id(C_v), d_v)$, $T_G[w,2] = (id(C_w), d_w)$, 
            and $\id(C_u), \id(C_v), \id(C_w)$ are all distinct. 
            Since $T_G$ is correct by step \ref{step:precompute},  $C_u, C_v, C_w$ are all distinct, $v$ is at distance at most $d_v \leq k-1$ from $C_v$ and $w$ is at distance at most $d_w \leq k-1$ from $C_w$.
            \item $v$ is the only vertex in $P_{\text{start}} \cap (\bigcup_{d \leqslant d_v} C_v(d))$, and $w$ is the only vertex in $P_{\text{start}} \cap (\bigcup_{d \leqslant d_w} C_w(d))$.
            In other words all other vertices of $P_{\text{start}}$ are further from $C_v$ (resp. $C_w$) than $v$ (resp. $w$). 
            $u$ checks this by looking at the entries of $T_G$ for the vertices in $P$.
            \item $\ell(P_{\text{start}}) + (\lp[v]-1) + (\lp[w]-1) \geqslant m$
        \end{itemize}

    This case is represented on Figure~\ref{fig:P14/3k case three ECCs}.
\label{step:threeECCS}
    \begin{figure}
        \centering
        \scalebox{0.5}{
            \begin{tikzpicture}[x=0.75pt,y=0.75pt,yscale=-1,xscale=1]

\draw  [color={rgb, 255:red, 255; green, 255; blue, 255 }  ,draw opacity=1 ][fill={rgb, 255:red, 74; green, 144; blue, 226 }  ,fill opacity=0.29 ] (913.5,173) .. controls (945.5,205) and (954.5,286) .. (873.5,299) .. controls (792.5,312) and (737.5,226) .. (779.5,177) .. controls (821.5,128) and (881.5,141) .. (913.5,173) -- cycle ;
\draw  [color={rgb, 255:red, 255; green, 255; blue, 255 }  ,draw opacity=1 ][fill={rgb, 255:red, 74; green, 144; blue, 226 }  ,fill opacity=0.29 ] (236,166.5) .. controls (247,209.5) and (231,298.5) .. (122,304.5) .. controls (13,310.5) and (39,227.5) .. (89,184.5) .. controls (139,141.5) and (225,123.5) .. (236,166.5) -- cycle ;
\draw  [color={rgb, 255:red, 255; green, 255; blue, 255 }  ,draw opacity=1 ][fill={rgb, 255:red, 74; green, 144; blue, 226 }  ,fill opacity=0.29 ] (611,534.5) .. controls (587,588.5) and (467,643.5) .. (453,567.5) .. controls (439,491.5) and (547,429.5) .. (590,432.5) .. controls (633,435.5) and (635,480.5) .. (611,534.5) -- cycle ;
\draw  [dash pattern={on 4.5pt off 4.5pt}]  (142,316.5) .. controls (250,309.5) and (264,219.5) .. (256,178.5) ;
\draw  [dash pattern={on 4.5pt off 4.5pt}]  (164,330.5) .. controls (272,323.5) and (286,233.5) .. (278,192.5) ;
\draw  [dash pattern={on 4.5pt off 4.5pt}]  (433,551.5) .. controls (421,475.5) and (529,411.5) .. (570,416.5) ;
\draw  [dash pattern={on 4.5pt off 4.5pt}]  (406,534.5) .. controls (394,458.5) and (502,394.5) .. (543,399.5) ;
\draw  [dash pattern={on 4.5pt off 4.5pt}]  (195,389.5) .. controls (308,401.5) and (360,321.5) .. (354,252.5) ;
\draw [color={rgb, 255:red, 25; green, 25; blue, 116 }  ,draw opacity=1 ][line width=1.5]    (151,223) .. controls (156,219.25) and (259,308) .. (248,321) .. controls (237,334) and (162,270.5) .. (154,280.5) .. controls (146,290.5) and (186,323) .. (202,335) .. controls (218,347) and (262,368.5) .. (282,353.5) .. controls (302,338.5) and (192,244.5) .. (217,219.5) .. controls (242,194.5) and (343.44,325.65) .. (369,360.5) .. controls (394.56,395.35) and (394,419.5) .. (378,426.5) .. controls (362,433.5) and (323,412.5) .. (324,446.5) .. controls (325,480.5) and (514,548.5) .. (539,523.5) .. controls (564,498.5) and (427,429.5) .. (438,413.5) .. controls (449,397.5) and (530,473.5) .. (546,462.5) .. controls (562,451.5) and (454,389.5) .. (479,364.5) .. controls (504,339.5) and (536.5,388) .. (561.5,363) .. controls (586.5,338) and (607.5,300) .. (667.5,302) .. controls (727.5,304) and (786.5,211) .. (806.5,226) .. controls (826.5,241) and (756.5,327) .. (761.5,344) .. controls (766.5,361) and (781.5,349) .. (806.5,324) ;
\draw  [fill={rgb, 255:red, 208; green, 2; blue, 27 }  ,fill opacity=1 ] (344,322.5) .. controls (344,320.01) and (341.99,318) .. (339.5,318) .. controls (337.01,318) and (335,320.01) .. (335,322.5) .. controls (335,324.99) and (337.01,327) .. (339.5,327) .. controls (341.99,327) and (344,324.99) .. (344,322.5) -- cycle ;
\draw  [fill={rgb, 255:red, 208; green, 2; blue, 27 }  ,fill opacity=1 ] (595,491.5) .. controls (595,489.01) and (592.99,487) .. (590.5,487) .. controls (588.01,487) and (586,489.01) .. (586,491.5) .. controls (586,493.99) and (588.01,496) .. (590.5,496) .. controls (592.99,496) and (595,493.99) .. (595,491.5) -- cycle ;
\draw  [dash pattern={on 4.5pt off 4.5pt}]  (743.5,199) .. controls (714.5,234) and (765.5,329) .. (844.5,316) ;
\draw  [dash pattern={on 4.5pt off 4.5pt}]  (718.5,215) .. controls (689.5,250) and (740.5,345) .. (819.5,332) ;
\draw  [dash pattern={on 4.5pt off 4.5pt}]  (651.5,261) .. controls (622.5,296) and (673.5,391) .. (752.5,378) ;
\draw  [fill={rgb, 255:red, 208; green, 2; blue, 27 }  ,fill opacity=1 ] (651,303.5) .. controls (651,301.01) and (648.99,299) .. (646.5,299) .. controls (644.01,299) and (642,301.01) .. (642,303.5) .. controls (642,305.99) and (644.01,308) .. (646.5,308) .. controls (648.99,308) and (651,305.99) .. (651,303.5) -- cycle ;

\draw (137,181.4) node [anchor=north west][inner sep=0.75pt]    {\huge $C_{v} \ $};
\draw (508,553.4) node [anchor=north west][inner sep=0.75pt]    {\huge $C_{u}$};
\draw (86,306.9) node [anchor=north west][inner sep=0.75pt]    {\huge $C_{v}( 1)$};
\draw (109,324.4) node [anchor=north west][inner sep=0.75pt]    {\huge $C_{v}( 2)$};
\draw (573,410.4) node [anchor=north west][inner sep=0.75pt]    {\huge $C_{u}( 1)$};
\draw (546,389.4) node [anchor=north west][inner sep=0.75pt]    {\huge $C_{u}( 2)$};
\draw (133,379.4) node [anchor=north west][inner sep=0.75pt]    {\huge $C_{v}( d_{v})$};
\draw (296,223.4) node [anchor=north west][inner sep=0.75pt]    {\Huge $\textcolor[rgb]{0.1,0.1,0.45}{P_v}$};
\draw (666,255.4) node [anchor=north west][inner sep=0.75pt]    {\Huge $\textcolor[rgb]{0.1,0.1,0.45}{P_w}$};
\draw (330,484.4) node [anchor=north west][inner sep=0.75pt]    {\Huge $\textcolor[rgb]{0.1,0.1,0.45}{P_{\text{start}}}$};
\draw (355,304.4) node [anchor=north west][inner sep=0.75pt]    {\huge $\textcolor[rgb]{0.82,0.01,0.11}{v}$};
\draw (598,472.4) node [anchor=north west][inner sep=0.75pt]    {\huge $\textcolor[rgb]{0.82,0.01,0.11}{u}$};
\draw (617,282.4) node [anchor=north west][inner sep=0.75pt]    {\huge $\textcolor[rgb]{0.82,0.01,0.11}{w}$};
\draw (757,371.4) node [anchor=north west][inner sep=0.75pt]    {\huge $C_{w}( d_{w})$};
\draw (848,305.9) node [anchor=north west][inner sep=0.75pt]    {\huge $C_{w}( 1)$};
\draw (824,327.4) node [anchor=north west][inner sep=0.75pt]    {\huge $C_{w}( 2)$};
\draw (840,205.4) node [anchor=north west][inner sep=0.75pt]    {\huge $C_{w} \ $};

\end{tikzpicture}}

        \caption{The case making $u$ reject at step~(vii) in the certification scheme of Theorem~\ref{thm:P_143k_free_subquadratic}}
        \label{fig:P14/3k case three ECCs}
    \end{figure}
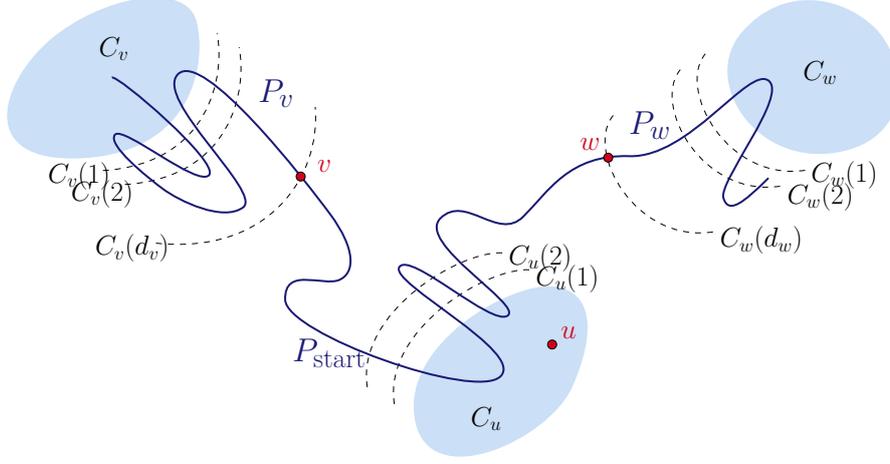
    
    \item \label{step:14/3:finalcase} Finally, if $u \in V_1$ and $u$ is at distance $d_u \in \{1, \dots, k-1\}$ from some vertex $u'$ in some ECC$_2$ $C_u$ we perform the following (recall $u$ can check this condition because $u$ has access to $T_G$).

    
    The vertex $u$ computes $G_{\preccurlyeq u'}$ (this is possible by Remark~\ref{rem:computation witnessed graph other vertices}), so $u$ knows the graph $G[C_u \cup V_1]$. 
    Then, $u$ rejects if there exists an induced path $P_{\text{start}}$ in $G[C_u \bigcup V_1]$, $C_v \in \cE_2 \setminus \{C_u\}$, $d_v \in \{1, \ldots, k-1\}$ and two vertices $v,v' \in C_v(d_v)$, with $v$ at distance at most $k$ from $u$, such that any of the following set of conditions is satisfied (see Figure~\ref{fig:P14/3k cases (viii)} for an illustration):
    \begin{enumerate}
        \item 
        \begin{itemize}
            \item $P_{\text{start}}$ ends in $v$ and is anti-complete to $v'$
            \item $v$ is the only vertex in $P_{\text{start}} \cap \left(\bigcup_{d \leqslant d_v} C_v(d)\right)$
            \item $\ell(P_{\text{start}}) + (\lcp(v)[v',1]-1) \geqslant m$
        \end{itemize}
        \vspace{0.2cm}
        \item
        \begin{itemize}
            \item $P_{\text{start}}$ ends in $v'$ and passes through $v$
            \item $v,v'$ are the only vertices in $P_{\text{start}} \cap \left(\bigcup_{d \leqslant d_v} C_v(d)\right)$
            \item $\ell(P_{\text{start}}) + (\lcp(v)[v',2]-1) \geqslant m$
        \end{itemize}
        \vspace{0.2cm}
        \item
        \begin{itemize}
            \item $P_{\text{start}}$ ends in $v$, and there exists an induced path in $G[C_u \cup V_1]$, denoted by $P_{\text{end}}$, which starts in $v'$ and is anti-complete to $P_{\text{start}}$
            \item $v,v'$ are the only vertices in $(P_{\text{start}} \cup P_{\text{end}}) \cap \left(\bigcup_{d \leqslant d_v} C_v(d)\right)$
            \item $\ell(P_{\text{start}}) + (\lcp(v)[v',3]-2) + \ell(P_{\text{end}}) \geqslant m$
        \end{itemize}
        \vspace{0.2cm}
        \item
        \begin{itemize}
            \item $P_{\text{start}}$ starts in $v$ and ends in $v'$
            \item $v,v'$ are the only vertices in $P_{\text{start}} \cap \left(\bigcup_{d \leqslant d_v} C_v(d)\right)$
            \item $\ell(P_{\text{start}}) + (\lcp(v)[v',4]-2) \geqslant m$
        \end{itemize}
    \end{enumerate}

    \begin{figure}[!ht]
                    \centering
                    \input{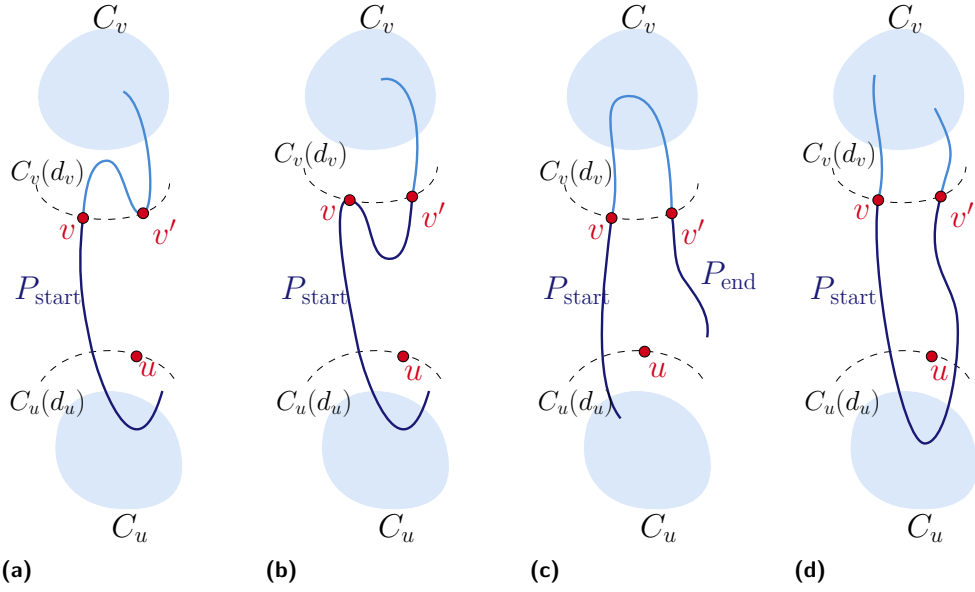}
                    \caption{The four cases making $u$ reject at step~(viii) of the verification algorithm in  Theorem~\ref{thm:P_143k_free_subquadratic}. In each of these cases, the vertices $u$ and $v$ are at distance at most $k$ (which enables $u$ to see the certificate of $v$).}
                    \label{fig:P14/3k cases (viii)}
                \end{figure}
    
    \item If $u$ did not reject in the previous steps, it accepts.
\end{enumerate}

\medskip{}
\textbf{Correctness.}
Let us prove that this certification scheme is correct. 
First let us assume for a contradiction, that $G$ contains an induced path $P$ of length $m$ and that every vertex accepts.
Using the $m$-pathcheck, every vertex $u$ can correctly compute $T_G$, $G_{\preccurlyeq u}$, and knows that the values of $\lp$ and $\lcp(u)$ written by the prover in the certificates are correct.
Then, let us prove the following Claims~\ref{claim:14/3k 1}, \ref{claim:14/3k 2} and~\ref{claim:14/3k 3}.

\begin{claim}
    \label{claim:14/3k 1}
    There are three subpaths $P_1, P_2, P_3$ of $P$ such that $P = P_1 \cup P_2 \cup P_3$ and for each $i \in \{1,2,3\}$ there is a $C_i \in \cE_2$ such that every vertex of $V(P_i) \subseteq C_i \cup V_1$.
\end{claim}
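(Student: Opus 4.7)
The plan is to analyze how the $V_2$-vertices of $P$ distribute among the ECC$_2$'s. Since $V_2 = \bigcup_{C \in \cE_2} C$ (disjoint union), each $V_2$-vertex of $P$ lies in a unique ECC$_2$. Listing the $V_2$-vertices in the order they appear along $P$ and grouping maximal runs of consecutive vertices belonging to the same ECC, I obtain maximal \emph{blocks} $B_1, \ldots, B_t$ with associated ECCs $C^{(1)}, \ldots, C^{(t)}$ satisfying $C^{(j)} \neq C^{(j+1)}$ for all $j$.

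The first step is to show that between two consecutive blocks $B_j$ and $B_{j+1}$ on $P$, there are at least $2k-2$ vertices of $V_1$ strictly between them. Indeed, let $u$ (resp.\ $v$) be the last vertex of $B_j$ (resp.\ the first vertex of $B_{j+1}$). All intermediate vertices between $u$ and $v$ on $P$ lie in $V_1 = L_1$, because $u$ and $v$ are consecutive $V_2$-vertices along $P$. If the number of such intermediate vertices were at most $2k-3$, then the subpath of $P$ from $u$ to $v$ would be a $u$-$v$ path in $G$ that does not contain $2k-2$ consecutive vertices of $L_1$; by the definition of $2$-linked, this would force $u$ and $v$ into the same ECC$_2$, contradicting $C^{(j)} \neq C^{(j+1)}$.

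The second step is to bound $t$ by edge counting. Each of the $t-1$ inter-block gaps contributes at least $2k-1$ edges to $P$, so $m - 1 = |E(P)| \ge (t-1)(2k-1)$. Combined with $m \le 14k/3$, this gives
\[ t - 1 \le \frac{14k/3 - 1}{2k-1} = \frac{14k - 3}{6k - 3}, \]
a quantity strictly less than $3$ for every integer $k \ge 2$; hence $t \le 3$.

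To conclude, I split $P$ into at most $t \le 3$ consecutive subpaths by cutting at an arbitrary $V_1$-vertex chosen in each of the $t-1$ inter-block gaps (such a vertex exists by the first step). Each subpath then contains exactly one block, so its $V_2$-vertices all lie in a single $C_i \in \cE_2$, giving $V(P_i) \subseteq C_i \cup V_1$. If $t < 3$, I pad with empty subpaths, choosing the corresponding $C_i$'s arbitrarily. The proof is essentially arithmetic; the only mildly subtle point is verifying that the constant $14/3$ in $m$ is tight for this block-counting argument, since a ratio strictly above $7/3$ would permit $t = 4$ blocks and break the three-piece decomposition.
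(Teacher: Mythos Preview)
Your argument is correct and is essentially a fully fleshed-out version of the paper's one-sentence proof, which simply invokes $m<\tfrac{14}{3}k$ together with the fact that distinct ECC$_2$'s are at distance at least $2k-1$. One small quibble: your closing remark about tightness misstates the threshold---your own inequality $(t-1)(2k-1)\le m-1$ allows $t=4$ only once $m\ge 6k-2$, so the relevant constant in front of $k$ is roughly $6$, not $7/3$; but this is just commentary and does not affect the proof itself.
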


\begin{subproof}
    This simply follows from the fact that $P$ has length $m < \frac{14}{3}k$ and that two distinct ECC$_2$'s are at distance at least $2k$ from each other by definition of extended connected component. 
\end{subproof}

It follows from \cref{claim:14/3k 1} that there are at most three distinct $C \in \cE_2$ such that $P \cap C \neq \emptyset$.

\begin{claim}\label{claim:14/3k v2}
    $P\cap V_2 \neq \emptyset$
\end{claim}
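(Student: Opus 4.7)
My plan is to proceed by contradiction. Suppose $P \cap V_2 = \emptyset$, so that $V(P) \subseteq V_1$. I will show that some vertex of $P$ must then reject at step~\ref{step:visiblempath} of the $m$-pathcheck, contradicting our standing assumption that every vertex accepts.

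The first step is to fix any vertex $u \in V(P)$ and unpack $V_{\preccurlyeq u}$ using the definition on page~\pageref{pre-curly-dfn-app}. Since $u \in V_1$ we have $u \in H_1 = V$ but $u \notin H_2 = V_2$, so only the index $i=1$ contributes to $V_{\preccurlyeq u}$. With $\varepsilon = \tfrac12$ we have $L_0 = \emptyset$, so the ``no $2k-2$ consecutive vertices in $L_{i-1}$'' condition in the definition of ECC$_1$ is vacuous; hence two vertices are $1$-linked iff they lie in the same connected component of $G$. Consequently $V_{\preccurlyeq u}$ contains every vertex of $V_1$ lying in the connected component of $u$ in $G$.

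The second step exploits the fact that $P$ is connected and contains $u$: every vertex of $P$ is in the same connected component as $u$, and by assumption $V(P) \subseteq V_1$, so $V(P) \subseteq V_{\preccurlyeq u}$. This has two consequences. First, $P$ is still an induced path in $G_{\preccurlyeq u}$: every edge of $P$ has both endpoints in $V_{\preccurlyeq u}$ and is therefore preserved in $G_{\preccurlyeq u}$, while the non-edges of $P$ are inherited from $G$ since $G_{\preccurlyeq u}$ is a spanning subgraph of~$G$. Second, $V(P) \setminus V_{\preccurlyeq u} = \emptyset$, so the requirement ``all vertices in $P \setminus V_{\preccurlyeq u}$ are in distinct ECC$_2$'s'' at step~\ref{step:visiblempath} holds vacuously.

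It follows that $u$ rejects at step~\ref{step:visiblempath} of the $m$-pathcheck, contradicting the assumption that every vertex accepts and proving the claim. I do not foresee any real obstacle: the argument is just a definition-chase, and the only subtle point is confirming $V_{\preccurlyeq u} \supseteq V_1 \cap \mathrm{CC}(u)$, which reduces to observing $L_0 = \emptyset$.
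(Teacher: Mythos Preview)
Your proof is correct and follows essentially the same route as the paper: assume $V(P)\subseteq V_1$, observe that then $V(P)\subseteq V_{\preccurlyeq u}$ for $u\in P$, and conclude that $u$ rejects at step~\ref{step:visiblempath}. The paper is simply terser, invoking the fact (recorded earlier in this proof) that $V_1\subseteq V_{\preccurlyeq u}$ for every $u$; your explicit unpacking of ECC$_1$ via $L_0=\emptyset$ and the connectedness of $P$ is a valid way to recover that inclusion without appealing to connectedness of $G$.
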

\begin{subproof}
    Recall $V_1 \subseteq V_{\preccurlyeq u}$ for every vertex $u \in V$.
    In step \ref{step:visiblempath} of the $m$-pathcheck, every vertex checks whether there is an induced path of length $m$ with vertex set in  $V_{\preccurlyeq u}$.
\end{subproof}

By applying \cref{claim:4k-1 2} for $m$, we obtain that:
\begin{claim}
    \label{claim:14/3k 2}
    There are at least two distinct $C \in \cE_2$ such that $|P \cap C| \geqslant 2$.
\end{claim}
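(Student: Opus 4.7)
My plan is to invoke Lemma~\ref{claim:4k-1 2} directly, in its contrapositive form. We are in the middle of the correctness proof for Theorem~\ref{thm:P_143k_free_subquadratic}, assuming for contradiction that $G$ contains an induced path $P$ on $m = \lceil 14k/3\rceil - 1$ vertices while no vertex rejects. Because the $m$-pathcheck (Definition~\ref{def:mpathcheck}) is explicitly the first stage of the verification protocol used in Theorem~\ref{thm:P_143k_free_subquadratic}, no vertex can have rejected during the $m$-pathcheck itself.

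Lemma~\ref{claim:4k-1 2} asserts that if $G$ has an induced path on $m$ vertices and at most one $C \in \cE_2$ satisfies $|C \cap P| \geqslant 2$, then some vertex rejects during the $m$-pathcheck. The first verification to carry out is therefore that the value of $m$ used to instantiate the $m$-pathcheck in the verification of Theorem~\ref{thm:P_143k_free_subquadratic} coincides with the number of vertices of $P$, i.e.\ $\lceil 14k/3\rceil - 1$; this follows from the convention that $P_t$ denotes the path on $t$ vertices together with the explicit instantiation of the $m$-pathcheck in the verification protocol. Once this matching is made, applying the contrapositive of Lemma~\ref{claim:4k-1 2} to the induced path $P$ yields exactly the desired statement: there must be at least two distinct $C \in \cE_2$ with $|C \cap P| \geqslant 2$.

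I expect no real obstacle here. All of the combinatorial work is already packaged inside Lemma~\ref{claim:4k-1 2}: the case $P \subseteq V_1$ (where every $u \in V(P)$ rejects at step~(iv) of the $m$-pathcheck because $V_1 \subseteq V_{\preccurlyeq u}$), and the case of a single dense ECC$_2$ $C$ (where any $u \in C$ rejects at step~(iv) since every vertex of $P \setminus V_{\preccurlyeq u}$ belongs to a distinct ECC$_2$). The only thing to be mindful of is that the hypotheses of the lemma transfer cleanly to the current context, which is straightforward since both refer to the same induced path $P$ and the same instance of the $m$-pathcheck subroutine.
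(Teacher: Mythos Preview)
Your proposal is correct and takes essentially the same approach as the paper, which proves Claim~\ref{claim:14/3k 2} in a single sentence by applying Lemma~\ref{claim:4k-1 2} with this value of $m$. Your extra care in checking that the hypotheses transfer (same $m$, same $m$-pathcheck subroutine, contrapositive direction) is fine but not required.
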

\begin{claim}
    \label{claim:14/3k 3}
    If there are exactly three $C \in \cE_2$ such that $P \cap C \neq \emptyset$, at least one vertex rejects.
\end{claim}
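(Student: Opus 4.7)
The plan is to apply step~\ref{step:threeECCS} of the verification at a well-chosen vertex $u \in P \cap C_2$, where $C_1, C_2, C_3$ denote the three ECC$_2$'s intersecting $P$, labeled in the order of their first appearance along $P$. First I would fix this labeling and orient $P = (p_0, \ldots, p_{m-1})$ accordingly. Because any two distinct ECC$_2$'s are at graph distance at least $2k-1$ and $d(\cdot, C_i)$ changes by at most one along any path, the close-regions $\{j : d(p_j, C_i) \leq k-1\}$ of $P$ are pairwise disjoint and appear in the order $C_1, C_2, C_3$. Any $u \in P \cap C_2$ lies in $V_2$, so $C_u := C_2$ satisfies $V_{\preccurlyeq u} = V_1 \cup C_2$.

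Next, I would define $v = p_{i_v}$ as the last vertex of $P$ at distance $\leq k-1$ from $C_1$, define $w = p_{i_w}$ symmetrically on the $C_3$-side, and take $P_{\text{start}}$ as the subpath of $P$ from $v$ to $w$. A short argument, using that the successor of $v$ on $P$ has $d(\cdot, C_1) \geq k$ while adjacent distances differ by at most one, gives $v, w \in V_1$, $d_v = d_w = k-1$, and $i_v < i_w$. The structural conditions of step~\ref{step:threeECCS} will then follow by inspection: intermediate vertices of $P_{\text{start}}$ avoid $C_1$ and $C_3$ by the extremal choice of $v, w$ and avoid every other ECC$_2$ by the hypothesis of the claim, hence lie in $C_u \cup V_1$; $P_{\text{start}}$ contains the first $C_2$-vertex of $P$; uniqueness of $v$ (resp. $w$) on $P_{\text{start}}$ at distance $\leq d_v$ (resp. $\leq d_w$) follows from extremality; and $C_u, C_v, C_w$ are pairwise distinct.

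The core of the proof will be the counting inequality $\ell(P_{\text{start}}) + (\lp[v] - 1) + (\lp[w] - 1) \geq m$. Since $\ell(P_{\text{start}}) = i_w - i_v + 1$, I would exhibit the backward subpath $(v, p_{i_v - 1}, \ldots, p_0)$ as a valid $\lp[v]$-candidate, giving $\lp[v] \geq i_v + 1$, and symmetrically $\lp[w] \geq m - i_w$; these three estimates add up to exactly $m$. The validity of the backward candidate requires every $p_j$ with $j < i_v$ to satisfy $d(p_j, C_1) < d_v = k-1$; this is where the threshold $m = \lceil 14k/3 \rceil - 1$ enters, since the two transitions $C_1 \rightsquigarrow C_2$ and $C_2 \rightsquigarrow C_3$ consume at least $2(2k-1) = 4k-2$ edges of $P$, leaving the five non-transition portions (pre-$C_1$, $C_1$-visit, $C_2$-visit, $C_3$-visit and post-$C_3$) with at most $(m-1) - (4k-2) \leq \lceil 2k/3 \rceil$ edges in total.

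The hard part will be handling the case in which $d(\cdot, C_1)$ is non-monotone along $P$ and reaches $k-1$ at some index $j < i_v$, since such a \say{spike} would truncate the backward $\lp[v]$-candidate. I would resolve this by a case analysis on the number and location of spikes: every spike of $d(\cdot, C_1)$ up to $k-1$ and back below $k-1$ consumes at least two extra edges that are charged to the transition budget (beyond the mandatory $2k-1$), so by the global accounting only a bounded number of spikes can occur. A short argument then shows that in the presence of spikes one can either slide the extremal choice of $v$ earlier along $P$ without losing uniqueness on $P_{\text{start}}$, or the extra edges inflate $\ell(P_{\text{start}})$ by exactly the amount needed to compensate for the shortened backward candidate, so the counting inequality is preserved in every case and $u$ rejects at step~\ref{step:threeECCS}.
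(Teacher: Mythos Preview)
Your overall plan—trigger step~\ref{step:threeECCS} at some $u$ in the middle ECC$_2$—is the same as the paper's. The gap is in your choice of $v$ and $d_v$. You take $d_v=k-1$ and $v$ the last vertex of $P$ with $d(\cdot,C_1)\le k-1$, and then need every $p_j$ with $j<i_v$ to satisfy $d(p_j,C_1)<k-1$ so that the prefix of $P$ witnesses $\lp[v]\ge i_v+1$. But $|P\cap C_1(k-1)|$ can be $\ge 2$. Take $k=4$ (so $m=18$) and a path whose distance-to-$C_1$ profile starts $0,1,2,3,2,3,4,\ldots$, reaching $C_2$ at $p_9$ and $C_3$ at $p_{16}$: then $v=p_5$, and the only backward step along $P$ into distance $<3$ from $C_1$ is to $p_4$, so in the worst case $\lp[v]=2$; with $w=p_{13}$ one gets $\ell(P_{\text{start}})+(\lp[v]-1)+(\lp[w]-1)=9+1+4=14<18$, and your chosen witnesses do not make $u$ reject.

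Neither proposed fix works on this example. Sliding $v$ back to the first level-$(k-1)$ vertex $p_3$ places $p_5$ (also at level $k-1$) inside the new $P_{\text{start}}$, destroying the uniqueness condition at level $d_v=k-1$. And the two extra edges of the dip sit in the prefix $p_0\ldots p_5$, not in $P_{\text{start}}$, so they do not inflate $\ell(P_{\text{start}})$; the shortfall in $\lp[v]$ is uncompensated. The paper avoids all of this by not insisting on $d_v=k-1$: a short count gives $\bigl|V(P)\cap\bigcup_{1\le d\le k-1}C_i(d)\bigr|<2(k-1)$ for each of the two \emph{outer} ECC$_2$'s (using Claim~\ref{claim:14/3k 2} together with the pairwise distance $\ge 2k-1$ between ECC$_2$'s), so by pigeonhole some level $d_i\in\{1,\ldots,k-1\}$ has $|P\cap C_i(d_i)|=1$. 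Taking $v,w$ at those unique levels makes both the uniqueness requirement on $P_{\text{start}}$ and the lower bounds on $\lp[v],\lp[w]$ automatic, and the three pieces then sum to at least $m$.
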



\begin{proof}
    Assume that there are exactly three $C \in \cE_2$ such that $P \cap C \neq \emptyset$. Let us denote these three ECC$_2$'s by $C_1$, $C_2$, $C_3$.
    Without loss of generality, there is a subpath of $P$ passing through $C_1, C_3, C_2$, in that order. 
    For $x \in \{1,2,3\}$, let $B_x$ denote the set of vertices of $V(P)$ in $\bigcup_{1 \leqslant d \leqslant k-1} C_x(d)$.
    Since any two distinct ECC$_2$'s are at distance at least $2k{-1}$ from each other, $|B_1|, |B_2| \geqslant k-1$ and $|B_3| \geq 2k-2$.
    By definition of ECC, $B_1, B_2, B_3$ are pairwise disjoint.
    By \cref{claim:14/3k 2}, we can also assume that $P$ has at least $5$ vertices in $C_1\cup C_2 \cup C_3$.
    And by definition, $C_1 \cup C_2 \cup C_3$ is disjoint from $B_1 \cup B_2 \cup B_3$.
    By definition, $|V(P)| = m \leq  \frac{14}{3} k$.
    So by combining these observations we obtain:
    \begin{equation}\label{eq:b1b2}
        |B_1|,|B_2| \leq V(P) - 3k{+}3  - 5 < 2k-2
    \end{equation}   
  Let $i \in \{1,2\}$.
    From the definition of ECC, there must be a vertex at distance $k-1$ from $C_i$ in $B_i$.
    Since $P$ contains a vertex of $C_i$ it follows that for each $d \in \{1,2, \dots, k-1\}$ there is some vertex $b \in B_i$ at distance exactly $d$ from $C_i$.
    Hence, it follows from \eqref{eq:b1b2} that there exists some 
     $d_i \in \{1, \ldots, k-1\}$ such that $|P \cap C_i(d_i)| = 1$. 
    Let $v_i$ denote the unique element of $P \cap C_i(d_i)$. 
    Let $P_{\text{start}}$ denote the $v_1v_2$-path of $P$.
    
     Let $u \in C_3$ be arbitrary. We will show that $u$ rejects at step (\ref{step:threeECCS}).
    By definition of ECC$_2$, the only vertices of $V_2$ in $P_{\text{start}}$ are contained in $C_3$.
    So in particular, $V(P_{\text{start}}) \subseteq P_{\preccurlyeq u}$
    Hence, $u$ will consider $P_{\text{start}}$ in step \ref{step:threeECCS} and it will reject if $\ell(P_{\text{start}}) + (\lp[v_1]-1) + (\lp[v_2]-1) \geqslant m$.

     By definition, there the graph obtained from $P$ by deleting all vertices of $P_{\text{start}}$ except for $v_1, v_2$ consists of two paths $P_1, P_2$ such that $v_1$ is an end of $P_1$ and $v_2$ is an end of $P_2$.
     Let $i \in \{1,2\}$.
     By \cref{claim:14/3k 1} we have that $P_i \subseteq V_i \cup C_i$.
     Since $P_i$ contains a vertex of $C_i$, by the definition of distance, there can be no vertex in $V(P_i)$ at distance strictly greater than $d_i$ from $v_i$.
     It follows that $V(P_i) \setminus \{v_i\} \subseteq \bigcup_{0 \leq d < d_i} C_i(d)$.
     Hence by definition, $\lp(v_i) \geq |V(P_i)| - 1$.

     Since $P = P_1 \cup P_2 \cup P_3$, it follows that $\ell(P_{\text{start}}) + (\lp[v_1]-1) + (\lp[v_2]-1) \geqslant m$ and therefore $u$ rejects in step~(\ref{step:threeECCS}).         

\end{proof}

Applying Claims~\ref{claim:14/3k 1}, \ref{claim:14/3k 2} and~\ref{claim:14/3k 3}, we can assume that there exactly two ECC$_2$'s which intersect~$P$, both on at least two vertices.
Let us denote these two ECC$_2$'s by $C_u$ and $C_v$.
\begin{claim}\label{claim:14/3 no unique distances}
    For all $d \in \{1, \ldots, k-1\}$, we have $|P \cap C_v(d)| \geqslant 2$, and similarly $|P \cap C_u(d)| \geqslant 2$.
\end{claim}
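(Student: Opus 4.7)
The plan is to prove the claim by contradiction: I will assume $|P \cap C_v(d)| \leq 1$ for some $d \in \{1, \ldots, k-1\}$ and derive a rejection at step~\ref{step:uselp-simple} of the $m$-pathcheck (Definition~\ref{def:mpathcheck}), contradicting the standing assumption that no vertex rejects. The bound $|P \cap C_u(d)| \geq 2$ will then follow by symmetry.

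First I would rule out $|P \cap C_v(d)| = 0$. Introduce $f \colon V(P) \to \mathbb{Z}_{\geq 0}$ with $f(x) := d(x, C_v)$. Consecutive vertices of $P$ differ in $f$-value by at most $1$, while $P$ contains vertices at $f = 0$ (those in $C_v$) and at $f \geq 2k-1 > d$ (those in $C_u$, since distinct ECC$_2$'s are at pairwise distance $\geq 2k-1$); hence $f$ attains every integer value in between, so $|P \cap C_v(d)| = 1$. Call the unique such vertex $v_d$.

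Next I would pin down the local structure of $P$ around $v_d$. Every $P$-neighbor of $v_d$ has $f$-value in $\{d-1, d+1\}$ by uniqueness. If $v_d$ were an endpoint of $P$, then $P \setminus \{v_d\}$ would be a single subpath that, by path continuity of $f$, could not touch both $f = 0$ and $f \geq 2k-1$ without revisiting $f = d$; contradiction. Likewise, if both $P$-neighbors of $v_d$ had the same $f$-value (both $d-1$ or both $d+1$), then each side of $P$ around $v_d$ would stay strictly on one side of the level $f = d$, forcing $C_v$ and $C_u$ to lie on the same side, which is impossible. Hence $v_d$ has exactly two $P$-neighbors, one at $f = d-1$ and one at $f = d+1$, and $P \setminus \{v_d\}$ splits into a ``$C_v$-side'' subpath $P_a$ contained in $\bigcup_{d' < d} C_v(d')$ and a ``$C_u$-side'' subpath $P_b$ disjoint from $\bigcup_{d' \leq d} C_v(d')$; in particular $P \cap C_v \subseteq V(P_a)$ and $P \cap C_u \subseteq V(P_b)$.

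Finally I would pick any $u \in P \cap C_u$ and show that $P_{\text{start}} := \{v_d\} \cup P_b$ witnesses a rejection of $u$ at step~\ref{step:uselp-simple}. The key checks are: $v_d \in V_1$ (since $v_d \notin C_v$, and any other ECC$_2$ containing it would be at distance $\leq k-1 < 2k-1$ from $C_v$, which is forbidden); $P_b \cap V_2 \subseteq C_u$ (because only $C_u$ and $C_v$ meet $P$ in $V_2$, and $P_b \cap C_v = \emptyset$); so $P_{\text{start}} \subseteq V_1 \cup C_u$ is induced in $G[C_u \cup V_1]$ and contains a vertex of $C_u$. It ends at $v_d$, and $T_G[v_d, 2] = (id(C_v), d)$ with $id(C_v) \neq id(C_u)$, while $v_d$ is the unique vertex of $P_{\text{start}}$ at distance $\leq d$ from $C_v$ (every other vertex of $P_{\text{start}}$ lies in $P_b$, hence at $f > d$). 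For the final inequality, the subpath $\{v_d\} \cup P_a$ is itself an induced path from $v_d$ staying in $\bigcup_{d' < d} C_v(d')$, so $\lp[v_d] \geq |V(P_a)|$ and $|V(P_{\text{start}})| + \lp[v_d] \geq (|V(P_b)| + 1) + |V(P_a)| = |V(P)| = m$. Thus $u$ rejects, giving the desired contradiction. The delicate part of the argument will be the structural step above, where path continuity of $f$ must be combined carefully with the hypotheses $|P \cap C_v|, |P \cap C_u| \geq 2$ to force the ``one-sided'' shape of $P$ around $v_d$.
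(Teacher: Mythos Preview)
Your proof is correct and follows essentially the same approach as the paper: both rule out $|P\cap C_v(d)|=0$ by path continuity of the distance-to-$C_v$ function along $P$, then show that if $|P\cap C_v(d)|=1$ the unique vertex $v_d$ separates $P$ into a $C_v$-side and a $C_u$-side, triggering a rejection at step~\ref{step:uselp-simple} of the $m$-pathcheck. The only difference is packaging: the paper has already isolated this argument as Lemma~\ref{lem:4k-2ECC-1-uniquedistance} during the proof of Theorem~\ref{thm:p_4k_free_subquadratic}, so the claim is dispatched in two lines by invoking that lemma, whereas you reprove its content inline (in fact more carefully than the lemma's own proof, which simply asserts the decomposition of $P$ into $P_{\text{start}}$ and $P_{\text{end}}$ without the explicit $f$-level analysis you give).
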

\begin{subproof} 
Since $C_1, C_2$ both meet $V(P)$, by definition of ECC for each $C \in \{C_v, C_u\}$ there is at least one vertex in $P$ at distance exactly $d$ from $C$.
If there exists $d \in \{1, \ldots, k-1\}$ and $x \in \{u,v\}$ such that $|P \cap C_x(d)| = 1$, then some vertex will reject by \cref{lem:4k-2ECC-1-uniquedistance}.
\end{subproof}

For brevity, let $\mathcal{B}_u := \bigcup_{0 \leqslant d \leqslant k-1} C_u(d)$ and let $\mathcal{B}_v := \bigcup_{0 \leqslant d \leqslant k-1} C_v(d)$.
Let us denote by $\alpha$ the number of vertices in $P \setminus (\mathcal{B}_u \cup \mathcal{B}_v)$.
In other words, $\alpha$ denotes the number of vertices in $P$ that are not \say{close} (at most $k-1$ away from) $C_u$ or $C_v$.
Thus by \cref{claim:14/3 no unique distances}, 
\begin{equation}\label{equation:alpha-2k/3}
\alpha \leq m - 4(k-1) \leq \frac{14}{3}k - 4k -4 < \frac{2}{3}k.
\end{equation}
Recall, that by definition $\mathcal{B}_u$ and $\mathcal{B}_v$ are disjoint.
So, $|V(P)| = \lceil 14k / 3 \rceil -1 = \alpha + |P \cap \mathcal{B}_u| + |P \cap \mathcal{B}_v|$.
Hence, by the pigeonhole principle and the symmetry between $C_u$ and $C_v$
we may assume that 


\begin{equation}
\label{eq:14/3k}
\left|P \cap \mathcal{B}_v \right| < \frac{7}{3}k - \frac{\alpha}{2}
\end{equation}

Let us prove the following:

\begin{claim}
    \label{claim:14/3k 4}
    There exists $u \in C_u(k-1)$, $d_v \in \{1, \ldots, k-1\}$ and $v \in P \cap C_v(d_v)$ such that $v$ is at distance at most $k$ from $u$, and $|P \cap C_v(d_v)|=2$.
\end{claim}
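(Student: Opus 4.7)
I plan to prove Claim~\ref{claim:14/3k 4} via a counting argument that refines the bound \eqref{eq:14/3k} on $|P \cap \mathcal{B}_v|$. The first step is to use $a_d := |P \cap C_v(d)| \geq 2$ for every $d \in \{0,\dots,k-1\}$ (given by Claims~\ref{claim:14/3k 2} and~\ref{claim:14/3 no unique distances}) to rewrite \eqref{eq:14/3k} as $\sum_{d=0}^{k-1}(a_d - 2) < k/3 - \alpha/2$. From this bound, the set $D' := \{d \in \{1,\dots,k-1\} : a_d = 2\}$ of \emph{good} layers is nonempty (otherwise the excess would be $\geq k-1$, contradicting the bound for $k \geq 2$), and the total number of $\mathcal{B}_v$-vertices of $P$ lying in \emph{bad} layers (where $a_d \geq 3$) is bounded by $\sum_{\text{bad}}a_d = \sum_{\text{bad}}(a_d-2) + 2|\text{bad}| < k - 3\alpha/2$.

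Next, I would define $v^{**}$ as the last $\mathcal{B}_v$-vertex on $P$ (orienting $P$ toward the $C_u$-side) and $u^*$ as the first $\mathcal{B}_u$-vertex of $P$ after $v^{**}$. The same layer analysis as in the proof of Theorem~\ref{thm:p_4k_free_subquadratic} forces $v^{**} \in C_v(k-1)$, $u^* \in C_u(k-1)$, and their along-$P$ distance to be at most $\alpha + 1 \leq k$ (using $\alpha < 2k/3$). I would then examine the $k+1$ positions of $P$ in the window $W := [j_* - k, j_*]$ around $u^*$; since $\alpha + 1 \leq k$, the vertex $v^{**}$ lies in $W$. Writing $b, \mu, a$ for the number of positions of $W$ lying in $\mathcal{B}_u$, the middle region, and $\mathcal{B}_v$ respectively, one has $a \geq k+1-b-\alpha$ with $b \geq 1$. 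Subtracting the global bad-layer bound, the number of good-layer $\mathcal{B}_v$-vertices in $W$ is at least $a - (k - 3\alpha/2 - 1) \geq 2 + \alpha/2 - b$, which is $\geq 1$ whenever $b \leq 1 + \alpha/2$ (in particular when $u^*$ is the only $\mathcal{B}_u$-vertex of $W$). Taking $v$ to be any such good-layer vertex and $u = u^*$ gives along-$P$-distance, hence graph distance, at most $k$.

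The main obstacle is twofold. First, one must certify $d_v \geq 1$ (the claim excludes $d_v = 0$): a separate short counting argument rules out the possibility that the exhibited good vertex lies in layer $0$, since reaching layer $0$ from $v^{**} \in C_v(k-1)$ along $P$ requires at least $k-1$ intermediate steps through bad-layer positions (otherwise a closer good vertex would have been selected first), contributing $\geq k-1$ to the excess sum and contradicting the bound $< k/3$ for $k \geq 2$. Second, when $W$ contains extra $\mathcal{B}_u$-vertices beyond $u^*$ (i.e., $b > 1 + \alpha/2$), the path $P$ must have multiple $\mathcal{B}_u$-blocks, which requires several transitions between $\mathcal{B}_u$ and $\mathcal{B}_v$; such interleavings are severely restricted by the bound $\alpha < 2k/3$ on the size of the middle region. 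In that case I would replace $u^*$ by the first vertex of another $\mathcal{B}_u$-block on $P$ (still a vertex of $C_u(k-1)$) and apply the same counting to its window. I expect this case analysis around non-simple path structure to be the most delicate step of the argument.
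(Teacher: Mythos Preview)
Your global counting is sound: with $a_d:=|P\cap C_v(d)|\ge 2$ and \eqref{eq:14/3k}, one does get $\sum_{d}(a_d-2)<k/3-\alpha/2$, hence the total number of $P$-vertices sitting in \emph{bad} layers ($a_d\ge 3$) is strictly less than $k-3\alpha/2$. This is even a little cleaner than the paper's $\max(3|D|,k-\alpha)$ case split. The real problem is your choice of anchor $u^*$.

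Your definition of $v^{**}$ as ``the last $\mathcal B_v$-vertex orienting $P$ toward the $C_u$-side'' is not well-defined when $P$ interleaves several times between $\mathcal B_u$ and $\mathcal B_v$, and even once you fix an orientation, nothing prevents the window $[j_*-k,j_*]$ from containing further $\mathcal B_u$-vertices lying \emph{before} $v^{**}$; your own case ``$b>1+\alpha/2$'' is genuinely possible, and your fallback (``replace $u^*$ by the first vertex of another $\mathcal B_u$-block'') is not an argument yet. Separately, your justification that $d_v\ge 1$ is incorrect: going along $P$ from $v^{**}\in C_v(k-1)$ down to layer $0$ traverses layers $k-2,\ldots,1$, but those layers need not be bad, so this does not contribute $k-1$ to the excess.

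The paper sidesteps both issues by a different choice of $u$. It fixes a subpath $P'$ of $P$ from a vertex of $C_u$ to a vertex of $C_v$ and takes $u$ to be the \emph{last} vertex of $P'\cap C_u(k-1)$ before $P'$ reaches $C_v$. Then the next $k$ vertices $X$ of $P'$ after $u$ cannot lie in $\mathcal B_u$: re-entering $\mathcal B_u$ would force another crossing of $C_u(k-1)$ before reaching $C_v$, contradicting the choice of $u$. This gives your ``$b=0$'' for free, hence $|X\cap\mathcal B_v|\ge k-\alpha$. Plugging this into your own global bound already finishes: if every vertex of $X\cap\mathcal B_v$ were in a bad layer, then $k-\alpha\le\sum_{\text{bad}}a_d<k-3\alpha/2$, which is impossible for $\alpha\ge 0$. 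So the missing idea is precisely this ``last $C_u(k-1)$-vertex along a $C_u$-to-$C_v$ subpath'' trick; once you adopt it, your counting goes through without the $b>1$ case analysis.
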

\begin{subproof}
    Since $P$ passes through $C_u$ and $C_v$, let us consider a connected subgraph of $P$, which is a path denoted by $P'$, such that $P'$ starts in $C_u$ and ends in $C_v$.
    Let $u$ be the last vertex in $P' \cap C_u(k-1)$ before $P'$ passes through $C_v$.
    Since $P'$ goes from $u \in C_u(k-1)$ to $C_v$, then it contains at least $k$ other vertices after $u$ (at least one in $C_v(d)$ for each $d \in \{0, \ldots, k-1\}$).
    Let us denote by $X$ the set of the $k$ vertices following $u$ in $P'$.
    All the vertices in $X$ are at distance at most $k$ from $u$.
    By definition of~$u$, we have $X \cap \left( \bigcup_{0 \leqslant d \leqslant k-1} C_u(d) \right) = \emptyset$.
    And since $\alpha < \frac{2}{3}k$ and $|X| = k$, we have that $X \cap \left( \bigcup_{1 \leqslant d \leqslant k-1} C_v(d) \right) \neq \emptyset$.
    We may assume that for each vertex $x_1 \in X \cap \mathcal{B}_v$ 
    there are at least two other distinct vertices $x_2, x_3 \in V(P) \cap \mathcal{B}_v$ such that $x_1, x_2, x_3$ are all the same distance from $C_v$.
    
    Let $D$ be the set of distances $d \in \{1,2, \dots, k\}$ for which some $x \in X$ is at distance $d$ from $C_v$ (i.e. $x \in C_v(d)$).
    For each $d \in D$, we have $|P \cap C_v(d)| \geqslant 3$.
    And, by definition of $X$, $D$ and $\alpha$, we also have $\left|\left(\bigcup_{d \in D} C_v(d) \right) \cap X\right| \geqslant |X| - |X \setminus (\mathcal{B}_u \cup \mathcal{B}_v)| \geqslant
    k-\alpha$.
    Hence, $\left|\left(\bigcup_{d \in D} C_v(d) \right) \cap V(P)\right| \geqslant 
    \max (3|D|, k - \alpha )$
    Moreover for all $d \notin D$, we have $|V(P) \cap C_v(d)| \geqslant 2$.
    Thus, we get:
    \begin{equation}
          |V(P) \cap \mathcal{B}_v| 
             = \left|\bigcup_{d \in D} C_v(d) \cap V(P) \right| + \left|\bigcup_{d \notin D} C_v(d) \cap V(P) \right| 
             \geqslant \max(3|D|,k-\alpha) + 2(k-|D|)
    \end{equation}
    
    There are two cases:
    \begin{itemize}
        \item if $|D| \geqslant \frac{k}{3} - \frac{\alpha}{2}$, then we have:
        \begin{align*}
            \left|\mathcal{B}_v \cap V(P) \right|
            & \geqslant 3|D| + 2(k-|D|) \\
            & = 2k + |D| \\
            & \geqslant \frac{7}{3}k - \frac{\alpha}{2}
        \end{align*}
        which contradicts~(\ref{eq:14/3k}).
        \item if $|D| \leqslant \frac{k}{3} - \frac{\alpha}{2}$, then we have:
        \begin{align*}
           |V(P) \cap \mathcal{B}_v| 
            & \geqslant (k - \alpha) + 2(k-|D|) \\
            & \geqslant (k - \alpha) + 2\left(\frac{2k}{3} + \frac{\alpha}{2}\right) \\
            & \geqslant \frac{7}{3}k
        \end{align*}
        which also contradicts~(\ref{eq:14/3k}).
    \end{itemize}
    In all cases, we get a contradiction. Thus, there exists $d_v \in \{1, \ldots, k-1\}$ and a vertex $v \in X \cap C_v(d_v)$ such that $|P \cap C_v(d_v)| = 2$. 
\end{subproof}
    
Using Claim~\ref{claim:14/3k 4}, let us finish the proof by showing that $u$ rejects at step \ref{step:14/3:finalcase}. 

Let $u \in C_u(k-1)$, $d_v \in \{1, \ldots, k-1\}$ and $v \in P \cap C_v(d_v)$ at distance at most $k$ from $u$ such that $|P \cap C_v(d_v)| = 2$. Let $v'$ be the other vertex in $P \cap C_v(d_v)$. Then, $u$ rejects at step~(viii) of the verification. Indeed, there are several cases (see Figure~\ref{fig:P14/3k cases (viii)} for an illustration):
\begin{itemize}
    \item If $P$ has both endpoints in $V \setminus \left( \bigcup_{d \leqslant d_v} C_v(d) \right)$, the set of conditions~(c) is satisfied.
    \item If $P$ has both endpoints in $\bigcup_{d \leqslant d_v} C_v(d)$, the set of conditions~(d) is satisfied.
    \item If $P$ has exactly one endpoint in $\bigcup_{d \leqslant d_v} C_v(d)$, one of the sets of conditions~(a) or~(b) is satisfied. Indeed,  Claim~\ref{claim:14/3k 4} proves more precisely that $u \in P \cap C_u(k-1)$, and that $v$ is among the $k$ next vertices after $u$ in $P$. Let $w$ be the vertex after $v$ in $P$ (which exists, otherwise $v$ is an endpoint of $P$ which would have both endpoints in $\bigcup_{d \leqslant d_v} C_v(d)$). 
    If $w \in \bigcup_{d \leqslant d_v} C_v(d)$, then we are in case~(a), else we are in case~(b).
\end{itemize}

Conversely, let us show that if $G$ is $P_m$-free, then no vertex rejects with the certificates described above. By definition, no vertex will reject in steps~(i), (ii), (iii) and~\ref{step:check-constrained-parti one}.
By contradiction, assume that there exists $u \in V$ which 
\ref{step:visiblempath}, \ref{step:uselp-simple}, \ref{step:threeECCS}   or \ref{step:14/3:finalcase}. 
\begin{itemize}
    \item If $u$ rejects at \ref{step:visiblempath}, there exists an induced path $P$ of length $m$ in $G_{\preccurlyeq u}$ such that all the vertices in $P \setminus V_{\preccurlyeq u}$ are in distinct ECC$_2$'s. So $P$ is also induced in $G$.

    \item If $u$ rejects at step~\ref{step:uselp-simple}, with the same proof as in Theorem~\ref{thm:p_4k_free_subquadratic}, we can reconstruct an induced path $P$ of length $m$ in $G$, which is a contradiction.
    \item If $u$ rejects at step~\ref{step:threeECCS}, we can also reconstruct an induced path of length~$m$ in~$G$. Indeed, let $P_v$ (resp. $P_w$) be the path of length $\lp[v]$ (resp. $\lp[w]$) which starts at $v$ (resp. $w$) and has all its other vertices in $\bigcup_{d<d_v}C_v(d)$ (resp. in $\bigcup_{d<d_w}C_w(d)$). Then, we obtain an induced path of length $m$ by gluing $P_v$, $P_{\text{start}}$ and $P_w$.

    \item Finally, if $u$ rejects at step~\ref{step:14/3:finalcase}, we can also construct an induced path $P$ of length $m$ in $G$, as depicted on Figure~\ref{fig:P14/3k cases (viii)}. In each case, the definition of~$\lcp[v]$ ensures that the path obtained by gluing the parts is still induced in $G$.
\end{itemize}
In all cases, we obtain an induced path $P$ of length $m$ in~$G$, which is a contradiction. Thus, if $G$ is $P_m$-free, all vertices accept with the certificates described above. 
\end{proof}


\newpage{}
\bibliography{biblio}

















\end{document}